\newcommand{\mythesistitle}{Quantum Security of\\Cryptographic Primitives}
	\newcommand{\url}[1]{#1}
	\definecolor{UrlPurple}{RGB}{56,4,112}
	\definecolor{CiteGreen}{RGB}{0,97,58}
\newcites{my}{List of Publications}
	\newread\file
\newlength{\titleoffset}
	\newcommand{\change}[1]{$\blacksquare$ {\color{Red}#1}}
\let\mp\marginpar\renewcommand{\marginpar}[1]{\mp{\raggedright\footnotesize\sffamily\color{red}#1}}
\def\printchaptertitle##1{\raggedleft\Huge\sffamily\bfseries ##1}
\newtheorem{definition}{Definition}[chapter]
\newtheorem{proposition}[definition]{Proposition}
\newtheorem{construction}[definition]{Construction}
\newtheorem{theorem}[definition]{Theorem}
\newtheorem{lemma}[definition]{Lemma}
\newtheorem{corollary}[definition]{Corollary}
\newtheorem{experiment}[definition]{Experiment}
\newcommand{\A}{\ensuremath{\mathcal{A}}\xspace}
\newcommand{\B}{\ensuremath{\mathcal{B}}\xspace}
\newcommand{\C}{\ensuremath{\mathcal{C}}\xspace}
\newcommand{\D}{\ensuremath{\mathcal{D}}\xspace}
\newcommand{\E}{\ensuremath{\mathcal{E}}\xspace}
\newcommand{\F}{\ensuremath{\mathcal{F}}\xspace}
\newcommand{\G}{\ensuremath{\mathcal{G}}\xspace}
\renewcommand{\H}{\ensuremath{\mathcal{H}}\xspace}
\newcommand{\I}{\ensuremath{\mathcal{I}}\xspace}
\newcommand{\J}{\ensuremath{\mathcal{J}}\xspace}
\newcommand{\K}{\ensuremath{\mathcal{K}}\xspace}
\renewcommand{\L}{\ensuremath{\mathcal{L}}\xspace}
\newcommand{\M}{\ensuremath{\mathcal{M}}\xspace}
\renewcommand{\O}{\ensuremath{\mathcal{O}}\xspace}
\renewcommand{\P}{\ensuremath{\mathcal{P}}\xspace}
\newcommand{\R}{\ensuremath{\mathcal{R}}\xspace}
\renewcommand{\S}{\ensuremath{\mathcal{S}}\xspace}
\newcommand{\T}{\ensuremath{\mathcal{T}}\xspace}
\newcommand{\U}{\ensuremath{\mathcal{U}}\xspace}
\newcommand{\V}{\ensuremath{\mathcal{V}}\xspace}
\newcommand{\W}{\ensuremath{\mathcal{W}}\xspace}
\newcommand{\X}{\ensuremath{\mathcal{X}}\xspace}
\newcommand{\Y}{\ensuremath{\mathcal{Y}}\xspace}
\newcommand{\Z}{\ensuremath{\mathcal{Z}}\xspace}
\renewcommand{\a}{\ensuremath{\mathscr{a}}\xspace}
\renewcommand{\c}{\ensuremath{\mathscr{c}}\xspace}
\renewcommand{\d}{\ensuremath{\mathscr{d}}\xspace}
\newcommand{\f}{\ensuremath{\mathscr{f}}\xspace}
\newcommand{\g}{\ensuremath{\mathscr{g}}\xspace}
\newcommand{\h}{\ensuremath{\mathscr{h}}\xspace}
\renewcommand{\k}{\ensuremath{\mathscr{k}}\xspace}
\renewcommand{\l}{\ensuremath{\ell}\xspace}
\newcommand{\m}{\ensuremath{\mathscr{m}}\xspace}
\newcommand{\p}{\ensuremath{\mathscr{p}}\xspace}
\newcommand{\q}{\ensuremath{\mathscr{q}}\xspace}
\renewcommand{\r}{\ensuremath{\mathscr{r}}\xspace}
\newcommand{\s}{\ensuremath{\mathscr{s}}\xspace}
\renewcommand{\t}{\ensuremath{\mathscr{t}}\xspace}
\newcommand{\x}{\ensuremath{\mathscr{x}}\xspace}
\newcommand{\y}{\ensuremath{\mathscr{y}}\xspace}
\renewcommand{\epsilon}{\varepsilon}
\renewcommand{\phi}{\varphi}
\newcommand{\QS}{\ensuremath{\mathbf{QS}}}
\newcommand{\secpa}{\ensuremath{n}}
\newcommand{\secpar}{\ensuremath{n}\xspace}
\newcommand{\secparam}{\ensuremath{1^n}\xspace}
\newcommand{\xor}{\oplus}
\newcommand{\negl}{\ensuremath{\mathsf{negl}}\xspace}
\newcommand{\poly}{\ensuremath{\mathsf{poly}}\xspace}
\newcommand{\half}{\ensuremath{\frac{1}{2}}\xspace}
\newcommand{\sqhalf}{\ensuremath{\frac{1}{\sqrt{2}}}\xspace}
\renewcommand{\iff}{\ensuremath{\Leftrightarrow}\xspace}
\newcommand{\nimplies}{\ensuremath{\centernot\implies}\xspace}
\newcommand{\foral}{\, \forall \: }
\newcommand{\hc}{\ensuremath{\mathsf{hc}}\xspace}
\newcommand{\im}{\ensuremath{\mathrm{i}}\xspace} 
\renewcommand{\set}[1]{\ensuremath{ \left\{ #1 \right\} }\xspace}
\newcommand{\bin}{\ensuremath{\set{0,1}\xspace}}
\newcommand{\words}{\ensuremath{\bin^*\xspace}}
\newcommand{\Supp}[1]{\ensuremath{\text{Supp}\left(#1\right) }\xspace}
\newcommand{\card}[1]{\ensuremath{ \left| #1 \right| }\xspace}
\newcommand{\family}[2][n]{\ensuremath{ \left( {#2}_{#1} \right)_{#1} }\xspace}
\renewcommand{\emptyset}{\varnothing}
\newcommand{\project}[1]{\Big|_{#1}}
\newcommand{\NN}{\mathbb{N}}
\newcommand{\ZZ}{\mathbb{Z}}
\newcommand{\RR}{\mathbb{R}}
\newcommand{\CC}{\mathbb{C}}
\newcommand{\from}{\ensuremath{\leftarrow}}
\newcommand{\fromdist}[1]{\ensuremath{\stackrel{{#1}}{\longleftarrow}\xspace}}
\newcommand{\rand}{\raisebox{-1pt}{\ensuremath{\,\xleftarrow{\raisebox{-1pt}{$\scriptscriptstyle\$$}}}\xspace}}
\DeclareMathOperator{\tr}{tr}
\renewcommand{\vec}{\mathbf}
\newcommand{\Hilbert}{\ensuremath{{\mathfrak{H}}}\xspace}
\newcommand{\Hilb}[1]{\ensuremath{{\Hilbert_{#1}}}\xspace}
\newcommand{\states}[1]{\ensuremath{\mathfrak{D}\left({#1}\right)}\xspace}
\newcommand{\conj}[1]{\ensuremath{{\overline{#1}}}\xspace}
\newcommand{\zerovec}{\ensuremath{{\mathbf{0}}}\xspace}
\newcommand{\Id}{\ensuremath{{\mathbb{I}}}\xspace}
\newcommand{\nullop}{\ensuremath{{\mathbb{O}}}\xspace}
\newcommand{\ketbra}[1]{\ensuremath{\ket{{#1}}\!\!\bra{{#1}}}\xspace}
\newcommand{\pauli}{\ensuremath{{\mathfrak{P}}}\xspace}
\newcommand{\game}{\ensuremath{\mathsf{Game}}}
\newcommand{\adv}{\ensuremath{\mathsf{Adv}}}
\newcommand{\lookuptable}{\ensuremath{\mathsf{LookupTable}}}
\newcommand{\DPT}{\ensuremath{\mathsf{DPT}}\xspace}
\newcommand{\PPT}{\ensuremath{\mathsf{PPT}}\xspace}
\newcommand{\QPT}{\ensuremath{\mathsf{QPT}}\xspace}
\newcommand{\PPP}{\ensuremath{\mathsf{P}}\xspace}
\newcommand{\NP}{\ensuremath{\mathsf{NP}}\xspace}
\newcommand{\BPP}{\ensuremath{\mathsf{BPP}}\xspace}
\newcommand{\BQP}{\ensuremath{\mathsf{BQP}}\xspace}
\newcommand{\PRNG}{\ensuremath{\G}\xspace}
\newcommand{\PRF}{\ensuremath{\F}\xspace}
\newcommand{\PRP}{\ensuremath{\P}\xspace}
\newcommand{\Gen}{\ensuremath{\mathsf{Gen}}\xspace}
\newcommand{\Invert}{\ensuremath{\mathsf{Invert}}\xspace}
\newcommand{\Eval}{\ensuremath{\mathsf{Eval}}\xspace}
\newcommand{\RO}{\ensuremath{{\O_\h}}\xspace}
\newcommand{\state}{\ensuremath{\mathsf{state}}\xspace}
\newcommand{\Enc}{\ensuremath{\mathsf{Enc}}\xspace}
\newcommand{\Dec}{\ensuremath{\mathsf{Dec}}\xspace}
\newcommand{\KGen}{\ensuremath{\mathsf{KGen}}\xspace}
\newcommand{\sk}{\ensuremath{\mathsf{sk}}\xspace}
\newcommand{\pk}{\ensuremath{\mathsf{pk}}\xspace}
\newcommand{\Kp}{\ensuremath{\K^{\p}}\xspace}
\newcommand{\Ks}{\ensuremath{\K^{\s}}\xspace}
\newcommand{\gameSEMA}[1][\A]{\ensuremath{\game^{\mathsf{SEM}}_{\E,{#1}}}\xspace}
\newcommand{\gameSEMS}[1][\S]{\ensuremath{\game^{\mathsf{SEM*}}_{\E,{#1}}}\xspace}
\newcommand{\gameIND}[1][\A]{\ensuremath{\game^{\mathsf{IND}}_{\E,{#1}}}\xspace}
\newcommand{\advIND}{\ensuremath{\adv^{\mathsf{IND}}_{\E,\A}}\xspace}
\newcommand{\gameINDCPA}[1][\A]{\ensuremath{\game^{\mathsf{IND-CPA}}_{\E,{#1}}}\xspace}
\newcommand{\advINDCPA}{\ensuremath{\adv^{\mathsf{IND-CPA}}_{\E,\A}}\xspace}
\newcommand{\gameINDCCA}{\ensuremath{\game^{\mathsf{IND-CCA1}}_{\E,\A}}\xspace}
\newcommand{\advINDCCA}{\ensuremath{\adv^{\mathsf{IND-CCA1}}_{\E,\A}}\xspace}
\newcommand{\gameINDCCAA}{\ensuremath{\game^{\mathsf{IND-CCA2}}_{\E,\A}}\xspace}
\newcommand{\advINDCCAA}{\ensuremath{\adv^{\mathsf{IND-CCA2}}_{\E,\A}}\xspace}
\newcommand{\Sigscheme}{\ensuremath{\mathscr{Sig}}\xspace}
\newcommand{\Sign}{\ensuremath{\mathsf{Sign}}\xspace}
\newcommand{\sig}{\ensuremath{\mathsf{sig}}\xspace}
\newcommand{\SVer}{\ensuremath{\mathsf{SigVerify}}\xspace}
\newcommand{\gameEUFCMA}[1][\A]{\ensuremath{\game^{\mathsf{EUF-CMA}}_{\Sigscheme,{#1}}}\xspace}
\newcommand{\advEUFCMA}{\ensuremath{\adv^{\mathsf{EUF-CMA}}_{\Sigscheme,\A}}\xspace}
\newcommand{\gameEUFCMARO}[1][\A]{\ensuremath{\game^{\mathsf{EUF-CMA-RO}}_{\Sigscheme,{#1}}}\xspace}
\newcommand{\advEUFCMARO}{\ensuremath{\adv^{\mathsf{EUF-CMA-RO}}_{\Sigscheme,\A}}\xspace}
\newcommand{\com}{\ensuremath{\mathsf{com}}\xspace}
\newcommand{\ch}{\ensuremath{\mathsf{ch}}\xspace}
\newcommand{\resp}{\ensuremath{\mathsf{resp}}\xspace}
\newcommand{\sigmaproto}{\ensuremath{{\left(\P,\V\right)}}\xspace}
\newcommand{\FSSigma}{\ensuremath{\mathscr{Sig}_{\mathsf{FS}}^\RO\sigmaproto }\xspace}
\newcommand{\Inst}{\ensuremath{{\mathsf{Inst}}}\xspace}
\newcommand{\hardL}{\ensuremath{{\L_{\W,\R,\Inst}}}\xspace}
\newcommand{\idscheme}{\sigmaproto}
\newcommand{\bsize}{{\ensuremath{n_\mathsf{blk}}}\xspace}
\newcommand{\dbsize}{{\ensuremath{n_\mathsf{db}}}\xspace}
\newcommand{\dsize}{{\ensuremath{n_\mathsf{dat}}}\xspace}
\newcommand{\msize}{{\ensuremath{n_\mathsf{msg}}}\xspace}
\newcommand{\tsize}{{\ensuremath{n_\mathsf{tree}}}\xspace}
\newcommand{\ksize}{{\ensuremath{n_\mathsf{tag}}}\xspace}
\newcommand{\zsize}{{\ensuremath{n_\mathsf{bkt}}}\xspace}
\newcommand{\maxsize}{{\ensuremath{n_\mathsf{Max}}}\xspace}
\newcommand{\oram}{\ensuremath{\mathsf{ORAM}}\xspace}
\newcommand{\init}{\ensuremath{\mathsf{Init}}\xspace}
\newcommand{\access}{\ensuremath{\mathsf{Access}}\xspace}
\newcommand{\op}{\mathsf{op}}
\newcommand{\ap}{\mathsf{ap}}
\newcommand{\DB}{\texttt{DB}}
\newcommand{\pathoram}{\texttt{PathORAM}\xspace}
\newcommand{\leaf}{\texttt{Leaf}}
\newcommand{\node}{\texttt{Node}}
\newcommand{\treeroot}{\texttt{Root}}
\newcommand{\branch}{\texttt{Branch}}
\newcommand{\stash}{\texttt{Stash}}
\newcommand{\newbranch}{\texttt{NewBranch}}
\newcommand{\dr}{\mathsf{dr}}
\newcommand{\block}{\texttt{block}}
\newcommand{\data}{\mathsf{data}}
\newcommand{\Data}{\mathsf{Data}}
\newcommand{\gameORAM}[1][\A]{\ensuremath{\game_{\oram,#1}^{\mathsf{AP-IND-CQA}}}\xspace}
\newcommand{\gamePORAM}[1][\A]{\ensuremath{\game_{\pathoram,#1}^{\mathsf{AP-IND-CQA}}}\xspace}
\newcommand{\advORAM}[1][\A]{\ensuremath{\adv^{\mathsf{AP-IND-CQA}}_{\oram,#1}}\xspace}
\newcommand{\QRO}{\ensuremath{{\ket{\O_\h}}}\xspace}
\newcommand{\scDist}{\ensuremath{{\U^\delta}}\xspace}
\newcommand{\scQRO}{\ensuremath{{\ket{\O_\h^\delta}}}\xspace}
\newcommand{\PRNGBM}{\ensuremath{\G_{BM}}\xspace}
\newcommand{\gamepqIND}[1][\A]{\ensuremath{\game^{\mathsf{pq-IND}}_{\E,{#1}}}\xspace}
\newcommand{\advpqIND}{\ensuremath{\adv^{\mathsf{pq-IND}}_{\E,\A}}\xspace}
\newcommand{\gamepqINDCPA}[1][\A]{\ensuremath{\game^{\mathsf{pq-IND-CPA}}_{\E,{#1}}}\xspace}
\newcommand{\advpqINDCPA}{\ensuremath{\adv^{\mathsf{pq-IND-CPA}}_{\E,\A}}\xspace}
\newcommand{\gamepqEUFCMA}[1][\A]{\ensuremath{\game^{\mathsf{pq-EUF-CMA}}_{\Sigscheme,{#1}}}\xspace}
\newcommand{\advpqEUFCMA}{\ensuremath{\adv^{\mathsf{pq-EUF-CMA}}_{\Sigscheme,\A}}\xspace}
\newcommand{\gameEUFCMAQRO}[1][\A]{\ensuremath{\game^{\mathsf{EUF-CMA-QRO}}_{\Sigscheme,{#1}}}\xspace}
\newcommand{\advEUFCMAQRO}{\ensuremath{\adv^{\mathsf{EUF-CMA-QRO}}_{\Sigscheme,\A}}\xspace}
\newcommand{\Com}{\ensuremath{\mathsf{Com}}\xspace}
\newcommand{\crs}{\ensuremath{\mathsf{crs}}\xspace}
\newcommand{\SmplRnd}{\ensuremath{\mathsf{SmplRnd}}\xspace}
\newcommand{\PL}{\ensuremath{\P_\Lambda}\xspace}
\newcommand{\VL}{\ensuremath{\V_\Lambda}\xspace}
\newcommand{\lambdaproto}{\ensuremath{\left(\PL,\VL\right)}\xspace}
\newcommand{\FSLambda}{\ensuremath{\mathscr{Sig}_{\mathsf{FS}}^\RO\lambdaproto }\xspace}
\newcommand{\pathoramBM}{\texttt{PathORAM}_{BM}\xspace}
\newcommand{\qPRF}{\ensuremath{\F}\xspace}
\newcommand{\qPRP}{\ensuremath{\P}\xspace}
\newcommand{\gameINDqCPA}[1][\A]{\ensuremath{\game^{\mathsf{IND-qCPA}}_{\E,{#1}}}\xspace}
\newcommand{\advINDqCPA}{\ensuremath{\adv^{\mathsf{IND-qCPA}}_{\E,\A}}\xspace}
\newcommand{\Encq}{\ensuremath{{\ket{\Enc_k}_{\!(2)}}}\xspace}
\newcommand{\Encqd}{\ensuremath{{\ket{\Enc_k}_{\!(2)}^{\dagger}}}\xspace}
\newcommand{\Decq}{\ensuremath{{\ket{\Dec_k}_{\!(2)}}}\xspace}
\newcommand{\Encqdr}{\ensuremath{{\bra{\Enc_k}_{(2)}^{\dagger}}}\xspace}
\newcommand{\Decqdr}{\ensuremath{{\bra{\Dec_k}_{(2)}^{\dagger}}}\xspace}
\newcommand{\gameqIND}[1][\A]{\ensuremath{\game^{\mathsf{qIND}}_{\E,{#1}}}\xspace}
\newcommand{\advqIND}{\ensuremath{\adv^{\mathsf{qIND}}_{\E,\A}}\xspace}
\newcommand{\gameqINDqCPA}[1][\A]{\ensuremath{\game^{\mathsf{qIND-qCPA}}_{\E,{#1}}}\xspace}
\newcommand{\advqINDqCPA}{\ensuremath{\adv^{\mathsf{qIND-qCPA}}_{\E,\A}}\xspace}
\newcommand{\gamewqINDqCPA}[1][\A]{\ensuremath{\game^{\mathsf{wqIND-qCPA}}_{\E,{#1}}}\xspace}
\newcommand{\advwqINDqCPA}{\ensuremath{\adv^{\mathsf{wqIND-qCPA}}_{\E,\A}}\xspace}
\newcommand{\desc}[1]{\ensuremath{\mathsf{Dsc}({#1})}\xspace}
\newcommand{\qbuild}{\ensuremath{\mathsf{Qbuild}}\xspace}
\newcommand{\gamewqSEMA}[1][\A]{\ensuremath{\game^{\mathsf{wqSEM}}_{\E,{#1}}}\xspace}
\newcommand{\gamewqSEMS}[1][\S]{\ensuremath{\game^{\mathsf{wqSEM*}}_{\E,{#1}}}\xspace}
\newcommand{\QEnc}{\ensuremath{\mathsf{QEnc}}\xspace}
\newcommand{\QDec}{\ensuremath{\mathsf{QDec}}\xspace}
\newcommand{\HX}{\ensuremath{{\Hilbert_\X}}\xspace}
\newcommand{\HY}{\ensuremath{{\Hilbert_\Y}}\xspace}
\newcommand{\Env}{\ensuremath{\mathsf{Env}}\xspace}
\newcommand{\HEnv}{\ensuremath{{\Hilbert_\Env}}\xspace}
\newcommand{\QX}{\ensuremath{\states{\HX}}\xspace}
\newcommand{\QY}{\ensuremath{\states{\HY}}\xspace}
\newcommand{\QEnv}{\ensuremath{\states{\HEnv}}\xspace}
\newcommand{\QOTP}[1][k]{\ensuremath{\mathsf{QOTP}_{{#1}}}\xspace}
\newcommand{\gameQIND}[1][\A]{\ensuremath{\game^{\mathsf{QIND}}_{\E,{#1}}}\xspace}
\newcommand{\advQIND}{\ensuremath{\adv^{\mathsf{QIND}}_{\E,\A}}\xspace}
\newcommand{\gameQINDCPA}[1][\A]{\ensuremath{\game^{\mathsf{QIND-CPA}}_{\E,{#1}}}\xspace}
\newcommand{\advQINDCPA}{\ensuremath{\adv^{\mathsf{QIND-CPA}}_{\E,\A}}\xspace}
\newcommand{\gameQINDCCA}{\ensuremath{\game^{\mathsf{QIND-CCA1}}_{\E,\A}}\xspace}
\newcommand{\advQINDCCA}{\ensuremath{\adv^{\mathsf{QIND-CCA1}}_{\E,\A}}\xspace}
\newcommand{\qoram}{\ensuremath{\mathsf{QORAM}}\xspace}
\newcommand{\QDB}{\ensuremath{{\ket{\texttt{QDB}}}}\xspace}
\newcommand{\qdr}{\ensuremath{\ket{\mathsf{qdr}}}\xspace}
\newcommand{\qcom}{\ensuremath{{\ket{\mathsf{qcom}}}}\xspace}
\newcommand{\QData}{\ensuremath{\ket{\mathsf{QData}}}\xspace}
\newcommand{\qinit}{\ensuremath{\mathsf{QInit}}\xspace}
\newcommand{\qaccess}{\ensuremath{\mathsf{QAccess}}\xspace}
\newcommand{\safex}{\ensuremath{\B}\xspace}
\newcommand{\qap}{\ensuremath{{\ket{\mathsf{qap}}}}\xspace}
\newcommand{\pathqoram}{\ensuremath{\texttt{PathQORAM}}\xspace}
\newcommand{\qleaf}{\ensuremath{\ket{\texttt{QLeaf}}}\xspace}
\newcommand{\qnode}{\ensuremath{\ket{\texttt{QNode}}}\xspace}
\newcommand{\qtreeroot}{\ensuremath{\ket{\texttt{QRoot}}}\xspace}
\newcommand{\qbranch}{\ensuremath{\ket{\texttt{QBranch}}}\xspace}
\newcommand{\qstash}{\ensuremath{\ket{\texttt{QStash}}}\xspace}
\newcommand{\newqbranch}{\ensuremath{\ket{\texttt{NewQBranch}}}\xspace}
\newcommand{\gameQORAM}[1][\A^\safex]{\ensuremath{\game_{\qoram,#1}^{\mathsf{QAP-IND-CQA}}}\xspace}
\newcommand{\advQORAM}[1][\A^\safex]{\ensuremath{\adv^{\mathsf{QAP-IND-CQA}}_{\qoram,#1}}\xspace}
\begin{document}
\frontmatter
\chapterstyle{madsenplain}
\begin{titlingpage}
\begin{adjustwidth}{\the\titleoffset - 2cm}{-\the\titleoffset - 2cm}
\begin{center}
{\scshape\huge \mythesistitle \par}
\vspace{4.5em}

Faculty of Computer Science

\vspace{.2em}

of the Technical University of Darmstadt, Germany

\vspace{1.5em}

\textbf{Dissertation}

\vspace{1.5em}

for the achievement of the title

\vspace{.2em}

Doctor rerum naturalium (Dr.\@ rer.\@ nat.)

\vspace{.2em}

of

\vspace{1.5em}

\textbf{Tommaso Gagliardoni, M.Sc.}

\vspace{.2em}

born in Perugia, Italy

\vspace{10em}

\begin{tabular}{rl}
Supervisor: & Prof.\@ Dr.\@ Marc Fischlin \\
Second reviewer:  & Prof.\@ Dr.\@ Christian Schaffner \\ 
            & \\
Submission Date: & 2016-12-16 \\
Defense Date: & 2017-02-13 \\
 \hspace*{6cm} & \hspace*{6cm}\\
\end{tabular}

\vspace{2.6em}

Darmstadt, 2017
\end{center}
\end{adjustwidth}
\end{titlingpage}

\ifdefined\printversion
	\newcommand{\tturl}[1]{\texttt{#1}}
	\newcommand{\href}[2]{#2}
\else
	\newcommand{\tturl}[1]{\href{#1}{\texttt{#1}}}
\fi
\thispagestyle{empty}
{

	\hbadness=10000
	\vspace*{12.5cm}
	\footnotesize
	\noindent This document is an electronic version with minor modifications of the original, published through the E-Publishing-Service of the TU Darmstadt. \\
	\tturl{http://tuprints.ulb.tu-darmstadt.de} \\
	\href{mailto:tuprints@ulb.tu-darmstadt.de}{\texttt{tuprints@ulb.tu-darmstadt.de}}

	\bigskip

	\bigskip

	\noindent This document is released under the following Creative Commons license:\\
	Attribution -- NonCommercial -- NoDerivatives -- International 4.0\\
	\tturl{http://creativecommons.org/licenses/by-nc-nd/4.0/}

	\bigskip

	\noindent \includegraphics[width=2.5cm]{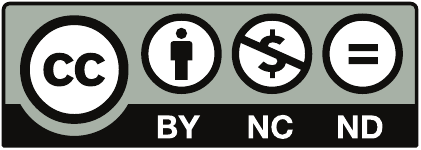}
	\vfill
}

\chapter*[Acknowledgments]{Acknowledgments}

Being a PhD student is a strange experience. I am sure that everyone who goes through this experience has their own personal stories, difficult moments to remember, and funny anecdotes to tell. I, for one, can truly say that these last five years have been exciting, funny, and productive. In short, they have been intense, and I can really say at the end that I have grown up a lot, both from an academic and from a personal perspective.

All of this I owe to my advisor, Marc Fischlin. If I could travel back in time and I were given the choice of applying as a PhD student again, at {\em any one} research group I possibly wished for, I would still spam ruthlessly my application to Marc. He taught me a lot of things which go well beyond academic matters, and I value his guidance immensely. When I was accepted in Marc's group in 2011, I was not aware at the time of how privileged I was. Now I am, and for this I will owe forever a debt of gratitude to Marc.

Being part of the group was a great experience, and I really would like to thank a lot my present and former colleagues for this. I am particularly grateful to Andrea for being always there to help me with the bureaucracy, to Giorgia for helping me to support the thesis that Hawaii Pizza is a mortal sin, to Özgür for taking care of me during my first months in Darmstadt, and to Paul for sharing with me a lot of good time, laugh, and hate for pigeons. I am also very grateful to Arno, Chris, Christian, Cristina, Felix, Jacqueline, Pooya, Sogol, and Victoria, for their friendship and support. Thank you all!

I would also like to thank all my coauthors for many successful collaborations and for having helped me a lot into expanding my scientific knowledge. Sometimes collaboration turned into sincere friendship as well, and therefore I would like to thank in particular Gorjan Alagic, Andreas Hülsing, Nikolaos Karvelas, and Christian Schaffner for the priceless time spent together.

Finally, I would like to thank my family for their endless love and support. I will always look at you as an example and a guidance, and I strive to make you proud of me every day of my life. Thanks.

\vspace{1cm}

\hfill Tommaso Gagliardoni

\hfill Darmstadt, December 2016

\chapter*[Abstract]{Abstract}

We call {\em quantum security} the area of IT security dealing with scenarios where one or more parties have access to quantum hardware. This encompasses both the fields of {\em post-quantum cryptography} (that is, traditional cryptography engineered to be resistant against quantum adversaries), and {\em quantum cryptography} (that is, security protocols designed to be natively run on a quantum infrastructure, such as {\em quantum key distribution}). Moreover, there exist also {\em hybrid models}, where traditional cryptographic schemes are somehow `mixed' with quantum operations in certain scenarios. Even if a fully-fledged, scalable quantum computer has yet to be built, recent results and the pace of research in its realization call for attention, lest we suddenly find ourselves one day with an obsolete security infrastructure. For this reason, in the last two decades research in quantum security has experienced an exponential growth in interest and investments.

In this work, we propose the first systematic {\em classification} of quantum security scenarios, and for each of them we recall the main tools and results, as well as presenting new ones. We achieve this goal by identifying four distinct {\em quantum security classes}, or {\em domains}, each of them encompassing the security notions and constructions related to a particular scenario. We start with the class \QS0, which is `classical cryptography' (meaning that no quantum scenario is considered), where we present some classical constructions and results as a preliminary step.

Regarding post-quantum cryptography, we introduce the class \QS1, where we discuss in detail the problems arising when designing a classical cryptographic object meant to be resistant against adversaries with local quantum computing power, and we provide a classification of the possible quantum security reductions in this scenario when considering provable security. Moreover, we present results about the quantum security and insecurity of the {\em Fiat-Shamir transformation} (a useful tool used to turn interactive identification schemes into digital signatures), and ORAMs (protocols used to outsource a database in a private way).

In respect to hybrid classical-quantum models, in the security class \QS2 we discuss in detail the possible scenarios where these scenarios arise, and what a correct formalization should be in terms of {\em quantum oracle access}. We also provide a novel framework for the quantum security (both in terms of indistinguishability and semantic security) of {\em secret-key encryption schemes}, and we give explicit secure constructions, as well as impossibility results.

Finally, in the class \QS3 we consider all those cryptographic constructions designed to run natively on quantum hardware. We give constructions for {\em quantum encryption schemes} (both in the secret- and public-key scenario), and we introduce transformations for obtaining such schemes by conceptually simpler schemes from the class \QS2. Moreover, we introduce a quantum version of ORAM, called {\em quantum ORAM (QORAM)}, aimed at outsourcing in a private way a database composed of quantum data. In proposing a suitable security model and an explicit construction for QORAMs, we also introduce a technique of independent interest which models a quantum adversary able to extract information from a quantum system without disturbing it `too much'.

We believe that the framework we introduce in this work will be a valuable tool for the scientific community in addressing the challenges arising when formalizing sound constructions and notions of security in the quantum world.

\cleardoublepage
\tableofcontents*

\mainmatter

\chapterstyle{madsen}
\DeclareFontShape{T1}{lmr}{bx}{sc}{<->ssub*lmr/bx/n}{}
\DeclareFontShape{T1}{lmss}{bx}{sc}{<->ssub*lmss/bx/n}{}

\chapter{Introduction}

{\em Cryptography} is the subdiscipline of mathematics studying {\em information security}, that is, the processing of information in presence of an adversary. This includes goals such as {\em communication secrecy, message authentication, identity verification, multiparty computation,} and much more. In the modern era of electronic information processing, cryptography is an area of crucial importance, and its applications are ubiquitous.

Modern cryptography is based on {\em provable security}. This is a methodological approach to assessing the security of a cryptosystem, where rigorous mathematical models and proofs are required in order to show that the security of the cryptosystem can be formally validated. Arguably the most important branch of provable security, from a practical standpoint, is {\em computational security}, which aims at {\em reducing} the security of a cryptosystem to some basic {\em hardness assumptions} in a mathematically sound way. Hardness assumptions are inherent to the difficulty of solving certain mathematical problems (such as integer factorization) which, for theoretical or historical reasons, are widely considered to be very hard to solve even with the help of the most powerful supercomputers known today. If a given cryptosystem is computationally secure, this means that on one hand it is always {\em theoretically} possible for an adversary with enough computational resources to break the security of that cryptosystem. But on the other hand, doing so would reguire either an unreasonable amount of time (modern standards of security often refers to many times the age of the universe), or an unreasonable amount of computational resources (storage, memory, power, etc.), or both.

The advantage of having a provably secure cryptosystem is that, as long as the security model used is sound and the underlying hardness assumptions hold, one can stay assured that the cryptosystem cannot be `broken'. This is in stark contrast with the `heuristic' approach to cryptography employed until the '70s, where cryptosystems were designed to be secure according to the intuition of the authors, and the only guarantee of that security was given by the `test of time', in the sense that nobody would find a way to attack the cryptosystem for a long enough time. This approach has turned cryptography from a mere engineering exercise to a logical-deductive discipline.

However, the effectiveness of provable security strongly relies on the hardness assumptions used, which are {\em not guaranteed}. Good hardness assumptions are based on the observation that algorithmical advances on solving the underlying mathematical problem would imply (unlikely) breakthrough results of scientific importance. However, all of these assumptions are also based on the {\em belief} that the future computing technology will never be {\em inherently different} from today's, save for a somewhat expected increase in performance, due to engineering improvements.

\section{Security in a Quantum World}

This is where {\em quantum computers} come into play. Quantum computers~\cite{Feynman} are machines, first theorized by Richard Feynmann in the early '80s, which are not based on the laws of classical physics like traditional computers are, but on the laws of {\em quantum mechanics} instead. Quantum mechanics is a very fundamental scientific theory, which has revolutionized physics since the early 20th century. Despite requiring a quite involved mathematical formalism and leading often to very counterintuitive consequences, it has routinely succeeded in predicting experimental results which classical physics could not explain.

From a formal point of view, a quantum computer is a mathematical model where the laws of quantum mechanics are exploited to perform some kind of computation, in a much more efficient way than traditional computers. 
Quantum computers promise to revolutionize the Age of Information as we know it. The ability to store, transmit, and process quantum data opens a world of new possibilities in the area of information processing. 
Simplified~\cite{Dwave} or limited models~\cite{IBM} of quantum computers have already been built, and everything from the experiments performed so far seems to confirm the validity of the underlying theory and the viability of the technology. Although a fully-fledged, scalable quantum computer has yet to be built, recent results~\cite{Google} and the pace of research in its realization seem to hint at the fact that quantum computing might soon become a reality.

\subsection{Post-Quantum Cryptography}

It turns out that, due to the effects predicted by quantum mechanics, quantum computers can perform tasks which are not possible with any classical computing device, present or future. The breakthrough result in this direction (which sparked a lot of interest for quantum computing in the area of cryptography) is the 1994 work by Peter Shor~\cite{Shor94}, who showed how for a quantum computer it is possible to factor large integers efficiently, a mathematical task considered to be unreasonably difficult until then, and at the base of many modern cryptosystems such as RSA~\cite{RSA}. Subsequent works have shown how to harness the power of quantum computing in order to speed up the search of solutions to problems like the {\em discrete logarithm}~\cite{Watrousalgos} on finite fields and elliptic curves, search on unstructured database~\cite{Grover96}, collision finding~\cite{BHT98}, and many others. Given that these are all hardness assumptions at the base of the security of cryptosystems~\cite{DH,ElGamal,ECDSA} widely adopted in the industrial, banking, and military sectors amongst others, it is clear how the realization of a scalable quantum computer would pose a threat to modern IT infrastructures.

A sound notion of security should be {\em proactive}, i.e., trying to take countermeasures against a reasonable future threat before the threat manifests itself. For this reason, cryptography has tried to address the looming danger of quantum computing since the early '90s. The idea is to find new mathematical problems which are supposed to be `hard' {\em even} for quantum computers, so that new, `quantum-immune' cryptosystems can be constructed by relying on such new quantum computational hardness assumptions. These are problems such as finding short vectors on lattices (which are geometric structures of a certain form), inverting hash functions, decoding certain types of linear codes, and a few others. The branch of cryptography dealing with the mathematical analysis of these assumptions and the construction of new cryptosystems based on such assumptions is called {\em post-quantum cryptography}~\cite{pqcrypto}. Post-quantum cryptography is today a thriving branch of information security, and so far it has been quite successful at designing cryptosystems which are at the same time reasonably efficient on today's hardware, and based on problems which are believed to be quantum-hard.

However, post-quantum cryptography has two fundamental issues.

The first problem is that security proof techniques that have been developed for traditional cryptosystems might {\em fail} when `translated' to the quantum scenario. A typical example is {\em rewinding}, a technique used in the security proofs of many cryptosystems, which roughly consists in modeling a scenario where the adversary is first run once, then rewound, partially reset, and then re-run again, in order to extract two different but related `adversarial transcripts' that are then used somehow in the security proof. The problem is that rewinding often {\em does not work} with quantum adversaries, because the nature of quantum mechanics does not guarantee that a `partial reset' of a quantum computer is always possible.

Proof failures of this kind have often been ignored in the post-quantum community in the past, and there are examples of attempts to `patching' non--post-quantum cryptosystems into post-quantum ones, by merely replacing the underlying hardness assumption with a quantum-hard one, and ignoring the fact that in so doing the security proof might become invalid.

The second problem of post-quantum cryptography is the often incomplete understanding of sound security models in the quantum world. One thing is to say that {\em ``the cryptosystem should be secure against a quantum adversary''}, another thing is to formalize mathematically what this exactly means. Models that are used for classically secure schemes are sometimes not adequate to model quantum security, and this can lead to confusion.

A typical example is the case of the {\em random oracle model (ROM)}, which is a formal paradigm widely used in security proofs. A random oracle is a purely mathematical construct which is completely independent from the type of adversary considered, and there are hence no exotic technical difficulties in adopting such paradigm in security proofs for post-quantum cryptosystems. In fact, such approach has been taken before, and there exist in literature cryptosystems advertised as `post-quantum' just because they are based on quantum-hard problems and provably secure in the ROM.

A random oracle, however, is just an abstraction describing an idealized model of hash function, which is an algorithmic object eventually run on a computing device. As the code for such a hash function is usually public, it is reasonable to assume that an adversary equipped with a quantum computer could run the code on his quantum machine, and therefore would be able to access the hash function in a way which is not modeled anymore by the ROM. For this reason, in a sound post-quantum security analysis, the random oracle model should always be avoided, and replaced by a different, more involved model called {\em quantum random oracle model (QROM)}. It can happen that schemes proven secure in the ROM become insecure in the QROM~\cite{QROM}.

All the above considerations are not intended to mean that the whole idea of post-quantum cryptography relies on a flawed model. In fact, there are plenty of cryptographically sound security analyses, where such problems are carefully taken care of. However, it is often the case that `secure against quantum adversaries' is confused with `relying on quantum-hard assumptions'.

\subsection{Quantum Cryptography}

On one hand, quantum computing poses new challenges for modern cryptography, as many of the currently used cryptographic schemes and protocols base their security on the hardness of certain mathematical problems which are known to be easily solvable by a quantum machine. On the other hand, quantum computers open up new possibilities in secure information processing, as they can also be used `defensively' in order to reach unprecedented levels of privacy, integrity, and trusted authentication. Importantly, it is often the case that such applications do not even require a fully-fledged scalable quantum computer, but only quantum hardware of modest technological engineering difficulty, which is already commercially available and deployed in many applications worldwide.

A typical example is {\em quantum key distribution (QKD)}~\cite{BB84}, where two remote parties aim at establishing a secure communication channel by exchanging a secret key, employing the exchange of elementary quantum information packets ({\em qubits}) through a quantum channel. This can be technologically done, for example, by transmitting polarized photons through an optic fiber channel. QKD is already largely developed~\cite{idquantique}, and it provides levels of security that classical cryptography cannot reach.

Looking into the future, with the advent of more and more advanced quantum hardware, it is easy to envision a world where a large part (if not most or all) of our global IT infrastructure will rely on quantum information processing. Under this scenario, it is important to think how to manage security related to quantum data. Not only it is required to re-model in a quantum way tasks usually performed by classical cryptography, for example {\em encryption of quantum data}~\cite{ABF+16} or {\em quantum authentication}~\cite{quantauth}. But it also means to consider tasks which are {\em inherently impossible} without quantum data, and which only make sense when considering a `fully quantum infrastructure', such as {\em quantum money}~\cite{quantmoney} or {\em delegated quantum computation}~\cite{delegated}.

In general, quantum computers promise to revolutionize the Age of Information as we know it. The ability to store, transmit, and process quantum data opens a world of new possibilities in the area of information processing. {\em Quantum cryptography} is the branch of cryptography which deals with designing secure cryptographic solutions which are natively meant to be run on a quantum hardware - this includes QKD and all of the other examples above, and still others. Quantum cryptography is a relatively recent area of study of modern cryptography, and there is still much to be done in terms of inventing new cryptosystems, creating correct security models, and figuring out the relations between classical and quantum cryptographic constructions.

\section{Contribution and Structure of this Work}

We define {\em `quantum security'} to be the discipline dealing with {\em all} the scenarios where one or more parties have access to quantum hardware. This encompasses both the fields of post-quantum cryptography, quantum cryptography, and also {\em hybrid models}, where traditional cryptographic schemes are somehow `mixed' with quantum operations in certain scenarios. The term `quantum security', 
although having appeared in the scientific literature before, has often been used used inconsistently from one work to another (see, for example,~\cite{ZhandryPRF,UnruhEverlasting,KuwakadoM12,Frodo}), at times denoting `post-quantum' notions of security, and at times denoting something else.

In this work, we provide the first systematic classification of quantum security scenarios, and a new framework for modeling quantum security notions in a sound way. We achieve this by identifying four distinct {\em quantum security classes}, or {\em domains}, each of them encompassing the security notions and constructions related to a particular scenario. We denote these classes by \QS~(standing for `quantum security'), followed by a number identifying the class. For each of these classes we recall known notions and results, as well as providing some results which are new or appearing in one or more of the author's publications. We start with a preliminary section in \textbf{Chapter~\ref{chap:preliminaries}} where we recall some basic concepts and notation, and then we proceed by presenting the four quantum security classes in the following chapters. 

As it often happens in academic research, many of the results presented in the various chapters of this thesis stem from collaborative projects, where each individual achievements can be contributed by several, and most often all, researchers participating in that project. This makes it hard, if not impossible sometimes, to pinpoint who contributed to which specific part of the overall work. At the beginning of chapters~\ref{chap:QS0},~\ref{chap:QS1},~\ref{chap:QS2}, and~\ref{chap:QS3}, we will give an account of the results presented in that chapter which are novel or appearing in some of the author's publications.

\subsection{\QS0}

We start in \textbf{Chapter~\ref{chap:QS0}} with the class \QS0, which is `classical cryptography' (meaning that no quantum scenario is considered), where we present some results about traditional cryptography as a preliminary step. In this chapter we introduce security models for different classical cryptographic primitives, and we also introduce other building blocks and transformations from one primitive to another. More in detail, first we define and analyze in Section~\ref{sec:QS0buildblocks} some of the building blocks used in modern cryptography: {\em pseudorandom number generators, functions, and permutations}.

Then we look at the security models (and some example of constructions) for {\em secret-key and public-key encryption schemes}, in sections~\ref{sec:SKES} and~\ref{sec:pke} respectively. We do it by looking at both the security models of {\em semantic security} and {\em indistinguishability of ciphertexts}.

In Section~\ref{sec:sig}, we discuss {\em digital signature schemes}, both in the standard model and in the ROM, and we show how to obtain secure signature schemes through the {\em Fiat-Shamir transformation} in Section~\ref{sec:FS}.

Finally, in Section~\ref{sec:ORAM}, we introduce {\em oblivious random access machines (ORAMs)}, which are interactive protocols used to privately outsource a large database. We look at PathORAM, one of the most famous of such protocols, by using the formalism introduced in~\cite{GKK17}.

\subsection{\QS1}

In \textbf{Chapter~\ref{chap:QS0}}, we look at post-quantum security, and we call \QS1 the related quantum security domain. We start in Section~\ref{sec:QS1basics} with a detailed discussion of all the issues arising when modeling quantum provable security for classical cryptographic objects, including some examples of how classical proofs can fail when `translated' to the quantum world, and the meaning of {\em quantum access to classical oracles}. We conclude this section with a {\em classification} of possible quantum security reductions which, to the best of the author's knowledge, does not explicitly appear in existing literature.

Then, in Section~\ref{sec:QROM} we introduce the quantum random oracle model, and we give some technical tools to deal with quantum random oracles.

In Section~\ref{sec:pqbuildingblocks}, we see how the security models for the building blocks defined in Section~\ref{sec:QS0buildblocks} change when considering post-quantum scenarios. We also have a look at cryptographic objects which are minimal {\em post-quantum hardness assumptions}, such as {\em post-quantum one-way functions} and {\em post-quantum one-way trapdoor permutations}.

In Section~\ref{sec:QS1enc} we discuss post-quantum security notions for encryption schemes, both in the secret-key and public-key scenario, and we show some basic constructions. Then we discuss post-quantum digital signatures in Section~\ref{sec:QS1sig}. We do this both for the standard post-quantum model and for the quantum random oracle model.

We proceed in Section~\ref{sec:FSQROM} to the analysis of the Fiat-Shamir transformation in the quantum random oracle model. We provide here both a positive and a negative result: if the underlying identification scheme has certain properties, then the Fiat-Shamir transform of that scheme yields a secure signature scheme in the quantum random oracle model. However, if the underlying identification scheme has different properties, it is possible to find an argument (using the technique of {\em meta-reduction}) which shows that security proofs of a certain form cannot be found at all. The surprising result here is that identification schemes having the latter type of properties are usually {\em less desirable} (in terms of security) than the former ones. We exploit this fact by showing a counterintuitive but efficient technique to `strengthen' the quantum security of a signature scheme obtained through the Fiat-Shamir transformation by `weakening' the security of the underlying identification scheme.

Finally, in Section~\ref{sec:pqORAM} we look at post-quantum ORAMs, and at sufficient and necessary conditions to obtain a post-quantum version of PathORAM.

\subsection{\QS2}

In \textbf{Chapter~\ref{chap:QS2}}, we look at {\em superposition-based quantum security}, and we call \QS2 the related quantum security domain. This security class deals with special scenarios, where the cryptosystems studied are still classical (and can hence be run on a classical computer), but {\em extra} security guarantees against quantum adversaries are required in respect to the `post-quantum' definition of security. We model these new scenarios in terms of {\em quantum oracle access capabilities} of the adversaries, explaining when such access is already implied in \QS1 and when instead it leads to new security scenarios covered by \QS2. Such scenarios arise in certain contexts, such as {\em obfuscation} and {\em fault attacks}, as explained in Section~\ref{sec:whysuperposition}. But very often they also stem from ambiguous interpretations of the `post-quantum' setting (as defined in \QS1) sometimes present in the literature. From this point of view, one of the most important contributions of this thesis is to formally clarify the distinction between these two security classes.

In Section~\ref{QS2:bb} we look at what happens when considering cryptographic building blocks in the new scenarios. It turns out that, in respect to the post-quantum scenarios, nothing changes for most of them, with two notable exceptions: quantum secure pseudorandom functions and permutations.

Finally we discuss quantum-resistant encryption schemes in Section~\ref{sec:QS2enc}, with a special emphasis on the secret-key case. For such schemes, we provide new notions of indistinguishability and semantic security, as well as secure constructions and impossibility results.

\subsection{\QS3}

Finally, in \textbf{Chapter~\ref{chap:QS3}}, we leave the realm of classical cryptosystems, and we look at {\em quantum cryptosystems}, that is, cryptosystems meant to be natively run on quantum hardware.

First we look at {\em quantum encryption} (that is, quantum algorithms for the encryption of quantum data) both in the secret-key (Section~\ref{sec:QS3ske}) and public-key (Section~\ref{sec:QS3pke}) scenarios. For both cases we provide security notions, as well as new constructions. We also show a novel technique for building encryption schemes secure in the \QS3 sense starting from encryption schemes secure in the \QS2 sense.

Finally, we introduce {\em quantum ORAMs (QORAMs)} in Section~\ref{sec:QORAM}. This is a new primitive (basically a quantum version of ORAM) which is aimed at outsourcing in a private way a database composed of quantum data. In proposing a new security model and an explicit construction for QORAMs, we also introduce a novel technique of independent interest which models a quantum adversary able to extract information from a quantum system without disturbing its state `too much'.

\section{Related Work}

The idea of {\em quantum security} as defined in this work is to encompass different types of scenarios which have in common the secure management of information in presence of quantum devices. Therefore, the existing related literature in this respect is vast, and we only cite a few key works here.

The term `post-quantum cryptography', as meant in the \QS1 sense, was popularized by Bernstein, Buchmann, and Dahmen in~\cite{pqcrypto}. The QROM was introduced in~\cite{QROM}. Regarding the problems inherent to quantum rewinding, see Watrous~\cite{WatrousZK}, Unruh~\cite{UnruhZK}, and Ambainis et al.~\cite{QuantRewinding}. Song~\cite{Song14} discussed relations between classical and quantum reductions, and Hallgren et al.~\cite{HSS11} discussed classical cryptographic protocols in the quantum world. Post-quantum building blocks and encryption schemes can be constructed from mathematical problems on lattices~\cite{GGH97,Micciancio11a,LPR13}, linear codes~\cite{McEliece}, multivariate equations~\cite{OilVinegar}, and supersingular isogenies~\cite{isogeny}. In addition to the problems just mentioned, post-quantum signature schemes can be constructed from hash functions~\cite{sphincs}.

Superposition-based attacks have been first proposed in~\cite{superposition} in respect to multiparty computation. Quantum-secure pseudorandom functions and pseudorandom permutations have been investigated by Zhandry~\cite{ZhandryPRF,ZhandryPRP}, Kuwakado and Morii~\cite{KuwakadoM10,KuwakadoM12}, and Alagic and Russell~\cite{gorjanPRP}, while secret- and public-key encryption schemes falling in the \QS2 cathegory have been proposed by Boneh and Zhandry in~\cite{BZ13}, where superposition-resistant signature schemes also appear. Signature schemes secure against superposition attacks have also been studied in~\cite{ES15}. Anand et al.~\cite{Anand+}, Kaplan et al.~\cite{Kapl+}, and Santoli and Schaffner~\cite{SS16} extended some attacks against pseudorandom permutations to other block ciphers, modes of operation, and compression functions.
Quantum key distribution was introduced in the seminal works by Wiesner~\cite{Wiesner}, and Bennet and Brassard~\cite{BB84}. Quantum money was introduced by Aaronson~\cite{quantmoney}. Computationally secure quantum encryption was formalized by Broadbent and Jeffery~\cite{BJ15}, while~\cite{AlagicAuth,Barnum+,Garg+16} deal with authentication of quantum information. See~\cite{BS16} for an overview of quantum cryptographic schemes belonging to the \QS3 class, and Vidick and Watrous~\cite{VW16} for an overview of quantum complexity theory and reductions in the quantum world.

\chapter{Preliminaries}\label{chap:preliminaries}

In this chapter we discuss the notation and provide basic definitions used in the rest of this work.

\section{Basic Notions}

We start with a few basic concepts, mathematical notation and terminology. In the rest of this work, `w.l.o.g.' stands for `without loss of generality', `iff' stands for `if and only if', and `classical' means `non-quantum'.

Numbers, strings, and generic atomic objects are denoted by default as lowercase letters, e.g., $a,b,x,y$. In particular, indices for sequences or families will be often denoted by $n,m,i,j,k$. Sometimes inputs and outputs of an algorithm will be denoted by lowercase Sans Serif script, e.g., $\com,\state,\sig$. The security parameter is \secpar, or \secparam when expressed in unary notation.

Special symbols are $\bot$ (usually denoting `error', or `lack of meaning') and the lowercase Roman \im (denoting the imaginary unity, $\sqrt{-1}$). The symbol $\|$ denotes concatenation of {\em bit strings}, and the symbol $0^k$ (resp. $1^k$) denotes a $k$-bit string of zeroes (resp., ones). For a bit string (or natural number) $x$ we denote its bit size (or bit length) as $|x|$. If $x$ is a non-integer number, $|x|$ denotes its absolute value. If $x$ is a complex number, $|x|$ denotes its complex modulus, and $\conj{x}$ its complex conjugate.

Families or collections of objects (sets, functions, probability distributions) are of the form $\family{\A},\family[j,k]{\X}$, where individual elements of the family are indexed, e.g., $\A_n,\X_{j,k}$. However, if there is no ambiguity in the choice of the index (usually this is the security parameter), such families are labeled in short just as \A,\X, etc.

Sets are usually denoted by uppercase letters, e.g., $T,X,Y$, except for special sets such as $\emptyset,\NN,\RR,\CC$, and the set of all permutations on a set $X$, denoted by $S(X)$. The set of all finite bit strings or words is $\words$. However, sets of bit strings will often be presented as families, where each member of the family contains bit strings of the same length. In this case, sets will be denoted by \T,\X,\Y instead, being understood that, e.g., $\X = \family{\X}$, where 
$\X_\secpar$ only contains bit strings of length $\f(\secpar)$ for some positive (usually polynomial) function \f.  
The cardinality (number of elements) of a set $X$ is denoted by $|X|$. Set operations are $\cup$ (union), $\cap$ (intersection), $\setminus$ (set difference), and $\times$ (Cartesian product). If a tuple $(x,y,z) \in \X \times \Y \times \Z$, then single entries of the tuple are isolated by writing, e.g., $(x,y,z)_{\X\Y\Z} \project{\Y} = y$.

Functions (from sets to sets) are denoted by lowercase calligraphic letters, e.g., $\f,\g,\l: X \to Y$. Borrowing a commonly used notation when defining `small' quantities (relative to some parameter), exceptions to this notation are special functions $\epsilon$ and $\delta$.

However, when a function is actually a family (indexed, for example, in terms of the bit size of the input) then it is denoted by uppercase calligraphic letters, e.g., $\F,\G,\L$. Commonly, in this case, domain and target space of these functions are also indexed as families, in relation to the bit size of the function's input. For example, $\F:\X\to\Y$ represents a function \F from set \X to set \Y, which can be seen as a family of functions $\family{\F}$, where $\F_\secpar:\X_\secpar \to \Y_\secpar$.

A (real-valued) function \f is {\em polynomially bounded} iff there exists a polynomial function $\p$ and an element $\bar{x}$ such that $|f(x)| \leq \p(x), \foral x$ with $|x| > |\bar{x}|$. In this case we write $\f = \poly$. A (real-valued) function $\epsilon$ is {\em negligible} iff, for any polynomial function \p, there exists an element $\bar{x}$ such that $|\epsilon(x)| < \frac{1}{\p(x)}, \foral x$ with $|x| > |\bar{x}|$. In this case we write $\epsilon = \negl$.

Lowercase Greek letters denote quantum states, either pure ones when written in bra-ket notation (e.g., $\ket{\phi},\ket{\psi}$) or mixed ones when written without (e.g., $\sigma, \rho$). Exceptions are the symbols $\delta$ and $\epsilon$, as already discussed, and $\lambda$ (used for eigenvalues). Uppercase Greek letters ($\Sigma,\Gamma,\Theta$) are reserved for special purposes, usually to denote quantum channels.

Data structures (trees, blocks) are labeled with Typewriter script, e.g., \texttt{tree, block, node}.

\subsection{Probability}

Distributions are denoted by uppercase calligraphic letters, e.g., $\D,\P,\U$. Distribution ensembles, or families, are denoted by $\family{\D}, \family{\P}$, etc. As usual, if there is no ambiguity in the choice of the index (usually this is the security parameter), such families are labeled in short just as \D,\P, etc., with individual member distributions being $\D_n, \P_n$, etc.

If \D is a distribution over a set \X, then sampling an element $x$ from the distribution is written as $x \fromdist{\D} \X$ (or, a shorthand notation when the domain is clear, just $x \from \D$). Sampling an element uniformly at random from a set \R is written as $r \rand \R$.

The {\em support} of a distribution \D over a set \X is the subset of elements with non-zero probability, i.e., $\set{x \in \X : \Pr[x \from \D] > 0}$. The {\em cardinality of a distribution} is the cardinality of its support.

If \D is a distribution over $\X \times \Y$, then we denote the distribution on \X induced by \D as $\D_\X := \D\project{\X}$, and the sampling as $\D_\X \to x$ where $x:=(x,y)\project{\X}$.

The {\em total variation distance} (or, {\em statistical distance}) of two distributions $\D_0,\D_1$ is defined as:

$$
\left| \D_0 - \D_1\right| := \sum_x \left| \Pr[x \from \D_0] - \Pr[x \from \D_1] \right|.
$$

\subsection{Linear Algebra}

Vectors are denoted either as tuples (e.g., $(x_1,\ldots,x_n)$) or as boldface characters for the notation of the corresponding components (e.g., $\vec{x}$). The zero vector is denoted as \zerovec. Matrices (linear operators between two vector spaces) are denoted by uppercase letters, e.g., $A,B,M$. (unless families, in that case $\A,\B,\M$ etc., as previously explained), except for the special symbols {\em zero matrix (or null operator) over $n$ elements} (denoted by $\nullop_n$), and the {\em identity matrix (or identity operator) over $n$ elements} (denoted by $\Id_n$). If $M$ is an $n \times m$ matrix (which includes the case of vectors or scalars if $n$ or $m$ equals $1$), then $M^T$ denotes its $m \times n$ transpose, $\conj{M}$ denotes its $n \times m$ complex conjugate, and $M^\dagger$ denotes its $m \times n$ Hermitian conjugate (or adjoint) $\conj{M^T} = \conj{M}^T$. If $M$ is an $n\times n$ matrix with non-zero determinant, its unique inverse is denoted by $M^{-1}$. An $n\times n$ matrix (or linear operator) $M$ is {\em Hermitian} if $M = M^\dagger$, and {\em unitary} if $M^\dagger = M^{-1}$. The {\em trace} of a square matrix $M$ is denoted by $\tr(M)$, and it is the sum of the elements on the diagonal.

A {\em complex Hilbert space} is a complex vector space \Hilbert, together with an inner product operation $\braket{.,.}: \Hilbert \times \Hilbert \to \CC$ such that \Hilbert (seen as a metric space) is complete in respect to the metric $\|\vec{x}\|:= \sqrt{\left|\braket{x,x}\right|}$ induced by the inner product. Unless otherwise specified, the inner product adopted here is always the scalar product:

$$
\braket{\vec{x},\vec{y}} := \vec{x} \vec{y}^\dagger = \left( x_1 , \ldots , x_n \right) \left( \begin{array}{c}\conj{y_1}\\ \vdots \\ \conj{y_n} \end{array}\right) = \sum_i x_i \conj{y_i}
$$
The norm induced by the above product is the {\em Euclidean norm}, and it is denoted by $\|\vec{x}\|_2$. The Euclidean distance between two vectors $\vec{x}$ and $\vec{y}$ is hence $\|\vec{x} - \vec{y} \|_2$. The {\em dimension} of a Hilbert space is the cardinality of a minimal set of orthonormal elements spanning the whole space. Such a set is called a {\em basis} for the complex Hilbert space, and it is not unique. In this work we only consider finite-dimensional complex Hilbert spaces.

\section{Classical Computation}

In this section we recall the basic concepts and notation related to classical computation and complexity theory. The topic is of course vast and here we do not cover in depth every aspect of it. For a more complete treatment of the aspects of computation and complexity theory we refer to~\cite{AroraBarak}.

\subsection{Circuits and Algorithms}

The fundamental objects of study of computation theory are {\em algorithms}, which are sequences of elementary operations applied to some input data; the goal is to perform some procedure on those input data to produce some output. The {\em complexity} of an algorithm can refer to the number of elementary steps performed, the running time, the memory consumption, or any other resource used during its execution. Such complexity is expressed in relation to the {\em instance size} of the computation, which is a positive integer expressing the `size' of the computational problem which the algorithm has to solve in order to perform the desired computation; this parameter is usually (related to) the bit size of the input. The complexity of an algorithm is then expressed as a function of the instance size: for example, if an algorithm \A has complexity at most $O(n^2)$ for instance size $n$, we say that \A has `quadratic complexity'. An algorithm is {\em deterministic} if it produces always the same output for the same input, while it is {\em probabilistic} if it also takes an additional input (of size at most polynomial in the instance size) drawn from uniform random bits; its output is hence expressed as a distribution over these `internal random coins'.

In this work we only deal with {\em time complexity}, i.e., we count as complexity the execution time of the algorithm. Time complexity is expressed in terms of the number of elementary operations performed by the algorithm, regardless of their nature, i.e., we assume for simplicity that any elementary operation (be it an addition, logical AND, division, etc.) takes one unit of time to execute. Moreover, as common in cryptography, we call the instance size the {\em security parameter}, denoted by \secpar. \DPT stands for `(Boolean) deterministic polynomial time', while \PPT stands for `(Boolean) probabilistic polynomial time', where `Boolean' refers to the fact that the algorithm operates on bit strings and performs elementary Boolean (bit) operations.

Traditionally, the two most commonly used models used to describe a classical algorithm are {\em Turing machines} and {\em Boolean circuits}.
\begin{itemize}
\item A Turing machine is a mathematical model describing an abstract machine with an internal state, acting on a data tape and performing operations according to a pre-specified set of rules.
\item Boolean circuits are acyclic directed graphs where the nodes are either input bits, output bits, or elementary (Boolean) operations. Complexity in this case is given by the total number of gates in the circuit.
\end{itemize}
In this work, by `algorithm' we mean `a uniform family of circuits', i.e., there exists a Turing machine which, given the security parameter expressed in unary $\secparam$ as input, runs in time at most polynomial in $\secpa$, and outputs a description of the $\secpa$-th member of the circuit family. So, for example, a \PPT algorithm \A is a family of Boolean circuits $\A := \family{\A}$ such that:
\begin{enumerate}
\item there exists a Turing machine \M such that, on input \secparam, \M runs in time $O\left(\poly(\secpa)\right)$ and outputs a description of $\A_\secpa$; and
\item $\A_\secpa$ is a Boolean circuit of size $O\left(\poly(\secpa)\right)$, taking as input a $\poly(\secpa)$-bit value and a $\poly(\secpa)$ many uniformly random bits, and producing a $O\left(\poly(\secpa)\right)$-bit output.
\end{enumerate}
Algorithms, being families of circuits, are denoted by, e.g., $\A := \family{\A}$. When studying an algorithm which is a subroutine of another algorithm, or where we do not want to stress that it is a family, or anyway for clarity of notation, we use math Sans Serif script (e.g., \access, \KGen, \Enc). Every algorithm {\em always} gets as input at least the security parameter, so we will ignore it in the notation, being understood that such input is always present. In order to express that a deterministic algorithm \A, on input a value $x$, produces an output $y$, we write: $y := \A(x)$ or, equivalently, $\A(x) =: y$. For a probabilistic algorithm instead, the notation becomes $y \from \A(x)$ (or, equivalently, $\A(x) \to y$). However, if the output of a probabilistic algorithm \A is written as $\A(x) = y$, that is a shorthand notation for: $\Pr \left[ \A(x) \to y \right] = 1$, where the probability is taken over the internal randomness of \A. If an algorithm's only input is the security parameter (which we omit from the notation, as said), we only write, e.g., $\A =: y$, or $\A \to y$ if probabilistic.

The random coins of a probabilistic algorithm are almost always omitted from its input, so we write simply, e.g., $\A(x) \to y$; however, if for some reason it is necessary to `de-randomize' the algorithm (that is, to consider the deterministic algorithm obtained by fixing a particular choice of randomness $r$), we write this as $\A(x;r) =: y$. If \A is probabilistic, then the notation $\Pr \left[ \A(x) \to y \right]$ is meant as `probability over the randomness of \A, for that particular value $x$', while $\Pr_{x \in \X} \left[ \A(x) \to y \right]$ (or $\Pr_{x \from \X} \left[ \A(x) \to y \right]$) means `over the randomness of \A, averaged over the uniform distribution on \X'. However, if \A is deterministic, then $\Pr_{x \in \X} \left[ \A(x) \to y \right]$ (or $\Pr_{x \from \X} \left[ \A(x) \to y \right]$) is given by the fraction $\frac{\card{\set{x \in \X: \A(x) =: y}}}{\card{\X}}$.

Abusing notation, we express sometimes algorithms as {\em (families of) functions} from (families of) sets of inputs to (families of) sets of outputs. So, for example, $\A := \family{\A} : X \times \Y \to \Z \times \words$ means that, for every $\secpar \in \NN, \A_\secpar$ is a Boolean circuit taking as input one element of $X$ and one element of $\Y_\secpar$, and outputting one element of $\Z_\secpar$ and one extra bit string of unspecified length. If an algorithm only takes as input the security parameter and outputs elements in \X we write just: $\A : \to \X$.

Algorithms can be {\em interactive}, and communicate with each other. A special case is given by {\em stateful} algorithms, which have an internal state variable which can be updated and stored across different executions of the same algorithm (in this sense, the algorithm `communicates with his future self'). In order to represent this communication, three different notations can be used.

\begin{enumerate}
\item Explicit state transport. For example, if $\A = (\A_1,\A_2)$ and one wants the first stage algorithm $\A_1$ to communicate some information to the second stage $\A_2$, we write something like:
\begin{algorithmic}[1]
\State $\A_1(x) \to (y,\state)$
\State $\A_2(z,\state) \to w$
\end{algorithmic}
where \state, when left unspecified, is a bit string of size polynomial in the security parameter, carrying the information to be transmitted.
\item Circuit self-output, used in particular for stateful algorithms. For example, if $\A_0$ is the algorithm in the initial state, then $\A_0$ `outputs $y$ and a description of its own updated state' as: $\A_0(x) \to (y,\A_1)$. If using this notation, from now on $\A_0$ cannot be invoked again anymore. Instead, $\A_1$ is run on some other input $a$ and updates itself as: $\A_1(a) \to (w,\A_2)$. From now on, $\A_1$ cannot be invoked anymore. Instead a fresh invocation can be written as: $\A_2(w,y,b) \to (u,r,\A_3)$, and so on.
\item Communication transcript. In this case, two or more algorithms communicate back and forth through a {\em communication channel} (which is a shared register between the two circuits). The `history' of the content of such register during the execution of two algorithms \A and \B is called {\em communication transcript \com}, and it is usually denoted as: $\com \from \braket{\A(x),\B(y)}$.
\end{enumerate}

If an algorithm \A has {\em oracle access} to another algorithm (or family of functions) \O, this is written as $\A^\O$. In this case, it is understood that \A can communicate with \O through \O's input and output registers solely, while \A does not know anything else about \O's structure, code, or working details. Such communication is called {\em query}: \A queries \O on input value $x$, then \O computes the answer $y \from \O(x)$ and finally $y$ is sent back to \A. In this case, \O's running time is ignored: it is always assumed that one oracle invocation takes one unit time to execute, regardless of \O's running time. Giving \A oracle access to another resource \O models the case where \A is given `extra power' in performing a certain task, without having to deal with the exact way this task is performed.

\vfill

\subsection{Computational Complexity Theory}

{\em Complexity classes} are families of problems with related asymptotic difficulty. Their definition is often given in terms of {\em language verifiers}: a {\em language} is a subset of \words, and a {\em verifier} for a language is an algorithm which checks if a given input bit string belongs to that language (outputs $1$) or not (outputs $0$). In this work we only consider the following three classical complexity classes.
\begin{itemize}
\item \PPP is the set of all languages \L for which there exists a \DPT algorithm \M such that:
	\begin{enumerate}
	\item $\foral x \in \L \implies \M(x) = 1$; and
	\item $\foral x \notin \L \implies \M(x) = 0$.
	\end{enumerate}
Informally, \PPP is the set of all problems which are `easy to solve', in the sense that a solution for a given problem instance of size $n$ can be found deterministically in time at most polynomial in $n$.
\item \BPP is the set of all languages \L for which there exists a \PPT algorithm \M and a positive constant $c$ such that:
	\begin{enumerate}
	\item $\foral x \in \L \implies \Pr [\M(x) \to 1] \geq \half + c$; and
	\item $\foral x \notin \L \implies \Pr [\M(x) \to 0] \geq \half + c$.
	\end{enumerate}
Informally, \BPP is the set of all problems which are `easy to solve with high probability', in the sense that a solution for a given problem instance of size $n$ can be found with high probability in time at most polynomial in $n$. It is currently unknown whether $\PPP \neq \BPP$ or not~\cite{GolBPP}.
\item \NP is the set of all languages \L for which there exists a \DPT algorithm \M and a polynomial \p such that:
	\begin{enumerate}
	\item $\foral x \in \L \ \exists \ y \in \words$ with $|y|\leq \p(\secpar)$ such that $\M(x,y) = 1$; and
	\item $\foral x \notin \L, \foral y \in \words$ with $|y|\leq \p(\secpar) \implies \M(x,y) = 0$.
	\end{enumerate}
Informally, \NP is the set of all problems which admit a `solution easy to check'. in the sense that a candidate solution for a given problem instance of size $n$ can be tested deterministically in time at most polynomial in $n$. It is currently unknown whether $\PPP \neq \NP$ or not~\cite{AroraBarak}.
\end{itemize}

Let $\L \in \NP$ be a language with a (polynomially computable) relation $\R$, i.e., there exists a \DPT algorithm $\mathsf{Rel}$ and a polynomial $\p$ such that $x \in \L$ iff there exists some $w \in \W \subset \words$ such that $(x,w) \in \R$ and $|w| \leq \p(|x|) \forall x$, where $(x,w) \in \R \iff \mathsf{Rel}(x,w) = 1$. We say that $w$ is a {\em witness} for $x \in \L$ (and $x$ is called a {\em theorem} or {\em statement}). We sometimes use the notation $\R_\secpar$ to denote the set of pairs $(x,w)$ in $\R$ of complexity measured in relation to the security parameter, e.g., if $|x|=\secpar$. In this case, with abuse of notation we identify the relation $\R$ with the algorithm testing its membership $\mathsf{Rel}$.

\section{Classical Cryptography}

In this section we briefly recall the basic concepts and terminology used in modern cryptography.

\subsection{Provable Security}\label{sec:provsec}\label{sec:meta}

Traditionally, cryptography has been seen for a long time as a `cat-and-mouse' game, in the sense that the only way to validate the quality of a proposed cryptographic object was to perform some sort of cryptanalysis on it (i.e., `trying to break it'), and then trying to fix the vulnerabilities potentially found, until new flaws were found, and so on. Under this perspective, the criterion to decide whether a cryptographic object should be trusted or not is just `the test of time', in the sense that no new vulnerabilities are being found `for a long time'.

However, this paradigm has shifted radically in the last \textasciitilde{}30 years. The modern approach to defining good practice in cryptography is {\em provable security}, which is a paradigm involving a rigorous mathematical analysis of the cryptographic object, adversarial model, and security assumptions. In provable security, when analyzing a cryptographic scheme, one needs to provide rigorous definitions and models for the following aspects:
\begin{enumerate}
\item the {\em functionality} of the cryptographic object, i.e., what exactly is the goal that the object wants to achieve;
\item the {\em adversary model}, i.e., what does a `reasonable' adversary against the object look like? What does the adversary want to achieve? When can we say that he is `successful'?
\item The {\em security proof}, i.e., a mathematical proof showing that, under the specified model and some basic, commonly accepted assumptions, it is possible to rule out {\em any} successful adversary against the cryptographic object in exam.
\end{enumerate}

It is important to distinguish between two different concepts of security.
\begin{itemize}
\item {\em Information-theoretical (or, statistical) security.} In this case, the proof of security aims at showing that the behavior of the cryptographic object is statistically equivalent (in the sense that it produces a distribution of outputs at most negligibly different) to the behavior of an {\em idealized object}, against which no successful attacker can exist by definition. For example, an information-theoretical secure encryption scheme produces a distribution of ciphertexts which is at most negligibly different from the uniform distribution over all ciphertexts, regardless of the input plaintext. Clearly, information-theoretical security is very strong, because it gives security guarantees {\em regardless of the adversarial model}. However, being so strong, it is also limited in use, as very few cryptographic objects can be shown to be statistically secure.
\item {\em Computational security,} on the other hand, aims at showing that a cryptographic object is secure by relying on the intrinsic computational limitations of a `reasonable' adversary. For example, in a computationally secure (but not statistically secure) encryption scheme, an adversary might be able to break security by testing (`brute-forcing') all the possible encryption keys one after one. However if such an adversary, in so doing, takes an amount of time which exceeds by many orders of magnitude the age of the universe, we would not consider him a threat for the security of the cryptographic scheme. A commonly accepted definition of `computationally bounded adversary' is `polynomial-time bounded' (in the security parameter).
\end{itemize}

In this work we only focus on computational security, but sometimes we refer to statistical security when needed for comparison. The adversary model we consider in classical security is thus some form of \PPT algorithm, possibly with oracle access to additional resources.

The `winning condition' for a given adversary \A is expressed in terms of the outcome of an {\em experiment} (or {\em game}), which is a mathematical model describing the intuitive behavior of an adversary trying to compromise the security of a cryptographic scheme \S. Formally, an experiment is an algorithm (taking as input the security parameter \secpar and, optionally, other parameters) with oracle access to \A and (the components of) \S, and outputting some value (typically a bit) telling whether the experiment was successful (i.e., \A won) or not. The notation used is of the form $\game^{\mathsf{LABEL}}_{\S,\A}$, where $\mathsf{LABEL}$ identifies the particular experiment. The {\em advantage} of an adversary \A running such experiment (denoted by $\adv^{\mathsf{LABEL}}_{\S,\A}$) is the difference between \A's success probability, and the success probability of a `naif' adversary who just guesses at random a possible solution to the problem of breaking \S's security. Then, in order to define \S secure, two possible approaches are considered:
\begin{enumerate}
\item {\em game-based security}. In this case, it is required that the advantage of {\em any} (computationally bounded) adversary is `small' (meaning, negligible in the security parameter); or
\item {\em simulation-based security}. In this case, the success probability of an arbitrary adversary \A in the original experiment is compared to the success probability of the same \A in a {\em different} experiment, describing an idealized, or `simulated' situation where there is basically no possibility that \A can break the security of the underlying scheme. In this case, security requires that for {\em any} (computationally bounded) adversary, the difference between the success probabilities in the `real' and the `ideal' world are roughly the same (meaning, at most negligibly distinct).
\end{enumerate}
Both approaches are widely used in provable security. Usually, simulation-based security better captures the idea of transforming in a rigorous mathematical model what intuitively we want a cryptographic object to achieve; game-based security, however, is often of more immediate formulation and simpler use in security proofs. A common technique in provable security is in fact to show equivalence between an intuitive, rigorous simulation-based security definition, and a simpler, easier-to-use game-based one.

Regarding {\em security proofs}, it must be noticed that such proofs are intuitively very hard to come up with. In fact, it is in theory easy to show that a particular, formally well-described adversary is unable to successfully attack a certain cryptographical scheme. However, the security proofs we need require to rule out {\em every possible adversary}, even those which we do not know yet, or are unable to formalize. Therefore, directly showing security against one adversary does not work, and different techniques are used instead.

A very common technique to show the security of a cryptographic scheme \S is the concept of {\em reduction} to another problem, or primitive \P. Let us assume that \P is hard to solve, or anyway widely believed to be hard. Then one could `show' the security of \S by proving that the problem of breaking \S's security is `at least as hard' as solving \P. This is accomplished by proving that, given an hypothetical, successful adversary \A against \S, such adversary can be turned, {\em constructively and in an efficient way}, into an efficient solver for \P. In this case we say that the security of \S {\em reduces} to the hardness of \P, and the formal proof itself is called {\em reduction}. A typical example of reduction is giving an explicit description of an efficient algorithm \B which solves \P, and which has oracle access to \A (in that case \B is also said to be the reduction itself). We say that a reduction is {\em `black-box'} if such oracle access is the {\em only} interaction between \A and \B, and \B does not have any other clue about \A, such as insights about \A's code or access to oracles which, according to the security model, should be only accessible by \A. However, as it is common practice in provable security, \B is allowed to know a priori an upper bound on \A's running time or number of queries to his oracles.

Finally, another common topic in provable security are {\em impossibility results}, that is, general theorems stating that a certain class of cryptographic object having certain properties {\em cannot} be secure. The most direct way to do it is by providing an explicit attack, i.e., an efficient adversary working against every member of that class. However, this can be hard sometimes, and there are countless examples of cryptographic schemes where a direct attack is {\em not known}, but at the same time {\em no reduction can be found}.

A possible technique to show impossibility results is that of {\em meta-reductions}. Intuitively, a meta-reduction is `a reduction on reductions': the idea is to show that, if a scheme \S admits an efficient reduction \B to some problem \P, then another reduction \M exists, which uses \B to attack another, possibly different hard problem $\P'$. This rules out the existence of \B.

In the case of meta-reductions, since \B needs an efficient adversary \A against \P in order to work, and reductions must always be {\em constructive and efficient}, it should be \M's duty to provide such adversary \A for \B to work with. However, since \M cannot break \P directly (or else this would be a contradiction), the meta-reduction {\em simulates} a `fake' adversary, in such a way that the simulation cannot be used directly to break \P, but at the same time such simulation is undetectable from \B's perspective. So, a meta-reduction technique works like this:
\begin{enumerate}
\item assume the existence of a reduction \B from scheme \S to problem \P.
\item Give an explicit description of {\em any} adversary \A against \S. This adversary does not necessarily need to exist, because \B works regardless of \A's nature. In practice though, it is usually required that \B is a black-box reduction.
\item Give an explicit description of an efficient algorithm \M which can simulate \A (from \B's point of view) and any other resource or oracle that \B needs to access.
\item Execute the reduction \B, and use \B's output to break $\P'$.
\end{enumerate}

\subsection{Hardness Assumptions}\label{sec:assumptions}

{\em Hardness assumptions} relate to mathematical problems which are at the same time easy to formalize (and it is clear what a solver for these problems should accomplice), and such that to date no known general method for solving these problems has been found (and there is evidence that finding such a method is arguably very hard). These assumptions are important, because they identify problems which are very attractive reduce to during security proofs.

Since we are dealing with computational security, a very minimal assumption is that $\PPP \neq \NP$. This is widely believed to be the case~\cite{AroraBarak}; however, finding cryptographic reductions to such a minimal assumption is very hard. In this section, we recall some commonly used hardness assumptions used in cryptography. In what follows, we assume w.l.o.g. that the message space is $\X = \family{\X} := \left( \bin^\secpar \right)_\secpar$. 

One very well studied assumption that we will explicitly use later in this work is the {\em computational hardness of the discrete logarithm problem (DLP)}.

\begin{definition}[Discrete Logarithm Problem]\label{def:dlp}
For a security parameter \secpar, let $(\G,\star)$ be a cyclic group of order exponential in \secpar, with generator $g$, and such that $\star$ is efficiently computable. The {\em discrete logarithm problem (DLP) on \G} is, given $h\rand\G$, to find $x \in \NN$ such that $h = g^x$.
\end{definition}

The {\em DLP hardness assumption} (for a given group $(\G,\star)$) states that no \PPT algorithm exists, which is able to solve the DLP problem with probability better than $\half + c$ for any positive constant $c$ (i.e., the DLP problem is {\em not} in \BPP for many known groups). There exist many different variants of the DLP problem, such as the {\em decisional Diffie-Hellman (DDH) problem} and many others, see~\cite{BonehDDH} for a survey. There exist also many other number-theoretic hardness assumptions, both quantum-insecure (RSA~\cite{RSA} and factorization, elliptic-curve DLP~\cite{ECDSA}, etc.) and (presumably) quantum-resistant (lattice problems~\cite{GGH97}, code-based~\cite{McEliece}, isogenies~\cite{isogeny}, etc.) but we will not address them specifically in this work

Another very minimal hardness assumption that we make heavy use of is the existence of {\em one-way functions}. Intuitively, these are (families of) functions that are `easy' to evaluate on any input, but `hard' to invert on a random output, meaning that no efficient algorithm can find a pre-image for a randomly generated image.

\begin{definition}[One-Way Functions (OWF) and Permutations (OWP)]\label{def:owfowp}
Let $\F = \family{\F}$ be a \DPT algorithm, with $\F_\secpar : \X_\secpar \to \words$. \F is a (family of) {\em one-way functions (OWF)} iff for any \PPT algorithm \A it holds:
$$
\Pr_{x \rand \X} \left[ \A(\F(x)) \to x' : \F(x) = \F(x') \right] \leq \negl.
$$
Moreover, in the special case where $\F_\secpar : \X_\secpar \to \X_\secpar$ are permutations on $\X_\secpar$ for every \secpar, \F is a (family of) {\em one-way permutations (OWP)}.
\end{definition}
The existence of one-way functions would imply $\PPP \neq \NP$, but the converse is not believed to hold~\cite{AroraBarak}. However, one-way functions are considered to be a very minimal assumption for the existence of computationally secure cryptography. In general, reducing the security of a cryptographic object to the existence of one-way functions is a strong indicator of the scheme's security.

Notice the following: Definition~\ref{def:owfowp} does not say anything about individual members of the family being pseudorandom. For example, there might be one-way functions which always fixes certain bits of their output, which can hence be trivially inverted. However, these `easily predictable' bits cannot be `too many', otherwise an adversary \A could invert the whole function by guessing the other bits, against the assumption of one-wayness. Those (Boolean functions of) bits which are {\em not} easily predictable are called {\em hard-core bits} (or {\em hard-core predicates}).

\begin{definition}[Hard-Core Predicate]\label{def:hc}
Let $\F: \X \to \Y$ be a OWF. A polynomial-time computable function $\hc_\F : \X \to \bin$ is a {\em hard-core predicate (or bit) of \F} iff, for any \PPT algorithm \A it holds:
$$
\Pr_{x \rand \X} \left[ \A(\F(x)) \to \hc_\F(x) \right] \leq \half + \negl.
$$
\end{definition}
Whether {\em every} OWF admits hard-core predicates or not is an open problem~\cite{KatzLindell}. But it can be shown that, given any OWF \F, it is always possible to construct another OWF \H such that $\hc_\H$ exists. Moreover, if \F is a OWP, then also \H is.
 
\begin{proposition}[{\cite{HILL99}}]\label{prop:hc}
Let \F be a OWF (resp., OWP). Then 
it is possible to efficiently transform \F into 
a OWF (resp., OWP) \H such that at least one hard-core predicate $\hc_\H$ exists.
\end{proposition}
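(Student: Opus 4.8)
The plan is to use the Goldreich--Levin construction. Given the OWF (resp.\ OWP) $\F = \family{\F}$ with $\F_\secpar : \X_\secpar \to \words$ and $\X_\secpar = \bin^\secpar$, define $\H = \family{\H}$ by $\H_\secpar(x,r) := (\F_\secpar(x),\, r)$ for $x,r\in\bin^\secpar$, together with the candidate predicate $\hc_\H(x,r) := \bigoplus_i x_i r_i$, the inner product of $x$ and $r$ modulo $2$. Both $\H$ and $\hc_\H$ are polynomial-time computable, and $\H$ merely doubles the input length of $\F$, so the transformation is efficient. First I would check that $\H$ is a OWF: an inverter \B for $\H$ on input $(\F(x),r)$ with $r$ uniform outputs some $(x',r')$ with $\H(x',r')=(\F(x),r)$, hence $\F(x')=\F(x)$; averaging over $r$ this contradicts one-wayness of $\F$. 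If $\F$ is a OWP, then for each $\secpar$ the map $(x,r)\mapsto(\F_\secpar(x),r)$ is a bijection on $\bin^\secpar\times\bin^\secpar$ (a product of the permutation $\F_\secpar$ with the identity), so $\H$ is a OWP.

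The heart of the argument is that $\hc_\H$ is hard-core, i.e.\ the Goldreich--Levin theorem. Suppose for contradiction that some \PPT algorithm \A satisfies $\Pr_{x,r}[\A(\F(x),r)\to\bigoplus_i x_i r_i] \geq \half + \eps$ for a non-negligible $\eps=\eps(\secpar)$. By a Markov/averaging argument, for at least an $\eps/2$ fraction of $x$ the algorithm $\A(\F(x),\cdot)$ predicts $\langle x,\cdot\rangle$ over a uniform $r$ with advantage at least $\eps/2$; call such $x$ \emph{good}. I would then build an inverter \B that, on input $y=\F(x)$ for a good $x$, recovers $x$ with non-negligible probability. \B picks $\ell := O(\log(\secpar/\eps))$ uniformly random strings $s_1,\dots,s_\ell\in\bin^\secpar$ and \emph{guesses} the $\ell$ bits $b_j := \langle x, s_j\rangle$; there are only $2^\ell = \poly(\secpar/\eps)$ possibilities, so \B enumerates all of them. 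For a non-empty $J\subseteq\{1,\dots,\ell\}$ set $r_J := \bigoplus_{j\in J}s_j$; for the correct guess the value $\langle x, r_J\rangle = \bigoplus_{j\in J}b_j$ is \emph{known} to \B, while the strings $\{r_J\}_J$ are pairwise independent and each uniform. For each coordinate $i$, \B estimates $x_i$ as the majority over all $J$ of $\A\big(y,\, r_J\xor e_i\big)\xor\bigoplus_{j\in J}b_j$, where $e_i$ is the $i$-th unit vector; since $\langle x, r_J\xor e_i\rangle = \langle x,r_J\rangle\xor x_i$, each term equals $x_i$ whenever \A answers correctly on $r_J\xor e_i$, which for a good $x$ and the correct guess happens with probability $\geq\half+\eps/2$. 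Pairwise independence of the $r_J$'s lets me apply Chebyshev's inequality: with $2^\ell-1=\poly(\secpar/\eps)$ samples the majority equals $x_i$ except with probability $<1/(2\secpar)$, and a union bound over the $\secpar$ coordinates yields the full string $x$ with probability $>1/2$ for the correct guess.

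Running \A over all guesses and all coordinates, \B produces a polynomial-size list of candidate strings, tests each against $y$ via $\F$, and outputs any valid preimage. Since a good $x$ occurs with probability $\eps/2$, this inverts $\F$ with non-negligible probability, contradicting the hypothesis. Hence $\hc_\H$ is a hard-core predicate of $\H$, completing the proof.

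The main obstacle is the analysis in the second paragraph: the naive approach of independently guessing $\langle x, r\rangle$ for every query $r$ used in the majority vote would require polynomially many lucky guesses and hence run in exponential time. The essential trick is to realize the $\poly(\secpar/\eps)$ queries as the pairwise-independent combinations $r_J=\bigoplus_{j\in J}s_j$ of only $\ell=O(\log(\secpar/\eps))$ seeds, and to verify that pairwise independence alone suffices for the Chebyshev concentration that drives the per-coordinate majority. Choosing $\ell$ and the sample count correctly, and checking that the overall success probability remains non-negligible after the union bound over coordinates and the restriction to good $x$, is the delicate part; the one-wayness reduction and the permutation case are routine.
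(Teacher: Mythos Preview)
Your proposal is the standard Goldreich--Levin argument and is correct. Note, however, that the paper does not supply its own proof of this proposition: it is stated with a citation to the literature and no argument is given in the text. Thus there is nothing to compare against beyond observing that your construction $\H(x,r)=(\F(x),r)$ with $\hc_\H(x,r)=\langle x,r\rangle$ is precisely the classical one underlying the cited result, and your sketch of the pairwise-independent subset-sum reconstruction is the usual proof.
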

Given the above, from now on we assume for simplicity that every OWF admits hard-core predicates. In the case that $\F:\X\to\X$ (in particular, if \F is a OWP), the construction of hard-core bits can be iterated to $\hc_{\H^2}, \hc_{\H^3}, \ldots$.

\begin{proposition}[{\cite{HILL99}}]\label{prop:hcmult}
Let $\F:\X\to\X$ be a OWF (resp., OWP) with hard-core predicate $\hc_\F$. \!Then \!$\F^2$ \!is a OWF (resp., OWP) with hard-core predicate $\hc_{\F^2}$.
\end{proposition}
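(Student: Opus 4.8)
\emph{Proof strategy.} The statement has two halves: that $\F^2 := \F\circ\F$ is again a one-way function (resp.\ permutation), and that it has a hard-core predicate --- for which the natural candidate is $\hc_{\F^2}(x) := \hc_\F(\F(x))$, the hard-core bit of the intermediate value $\F(x)$, matching the iterated form $\hc_{\F^k}(x) := \hc_\F(\F^{k-1}(x))$ mentioned above. I would prove both halves by contraposition, turning a \PPT attacker on $\F^2$ into a \PPT attacker on $\F$ so as to contradict the one-wayness (resp.\ hard-core property) assumed for $\F$.

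\emph{One-wayness of $\F^2$.} First I would observe that, $\F$ being a permutation, $\F^2$ is a composition of bijections, hence itself a permutation on $\X$, and in particular a well-defined \DPT algorithm $\X\to\X$. Then, given a \PPT inverter $\A$ for $\F^2$, I would construct an inverter $\B$ for $\F$: on a challenge $y=\F(x)$ with $x\rand\X$ --- so $y$ is uniform on $\X$, since $\F$ permutes $\X$ --- let $\B$ compute $\F(y)=\F^2(x)$, feed it to $\A$ to recover the unique $\F^2$-preimage (namely $x$), and output it; indeed $\F(x)=y$. The reduction is exact because $\F(y)$ for uniform $y$ has the same uniform law as $\F^2(x)$ for $x\rand\X$, so $\A$'s success probability transfers with no loss. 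The same reduction is what one attempts in the OWF case, but there $\F^2(x')=\F^2(x)$ does not force $\F(x')=y$; and since the composition of arbitrary one-way functions need not be one-way, the OWF version is really valid only with ``$\F^2$ is a OWF'' adjoined as a side condition, a condition that is automatic for permutations.

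\emph{Hard-core property.} Suppose towards a contradiction that some \PPT algorithm $\A$ satisfies $\Pr_{x\rand\X}\!\left[\A(\F^2(x))=\hc_\F(\F(x))\right]\geq\half+1/\p(\secpar)$ for a polynomial $\p$. I would build $\B$ attacking $\hc_\F$: on input $z=\F(u)$ with $u\rand\X$, simply run $\A(z)$ and output its bit. Since $z=\F(u)$ is uniform on $\X$, it is distributed exactly like $\F^2(x)$ for $x\rand\X$, so $\A$ keeps its advantage; and writing $x_0:=\F^{-1}(u)$ (well-defined because $\F$ is a permutation) gives $z=\F(\F(x_0))=\F^2(x_0)$, whence $\A(z)$ equals $\hc_{\F^2}(x_0)=\hc_\F(\F(x_0))=\hc_\F(u)$ with that same advantage. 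Hence $\Pr_{u\rand\X}\!\left[\B(\F(u))=\hc_\F(u)\right]\geq\half+1/\p(\secpar)$, contradicting Definition~\ref{def:hc} for $\hc_\F$; together with the previous paragraph this shows $\hc_{\F^2}$ is a hard-core predicate of $\F^2$, and the argument iterates verbatim to $\hc_{\F^k}$. (Once $\F^2$ is granted to be one-way, the alternative choice $\hc_{\F^2}:=\hc_\F$ also works, with $\B$ now computing $\F(z)=\F^2(u)$ before calling $\A$; this variant never appeals to injectivity and is the one I would use to argue the OWF case directly.)

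\emph{Expected obstacle.} I expect the only genuinely delicate point to be distributional bookkeeping: in each reduction $\B$ must hand $\A$ an input drawn from exactly the distribution $\A$ expects, and then read off exactly the right object --- the $\F$-preimage of the challenge, respectively the bit $\hc_\F(u)$. For permutations this is free, since $\F$ preserves the uniform distribution and has unique preimages, which is precisely why the OWP statement goes through cleanly; for general one-way functions it is not, to the point that $\F^2$ may fail to be one-way altogether, so the OWF half of the proposition is only as strong as that side assumption.
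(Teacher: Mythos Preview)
The paper does not supply a proof for this proposition; it is stated with a citation and used as a black box for Construction~\ref{constr:goldreichlevin}. Your argument is the standard one and is correct in the OWP case: both reductions exploit that a permutation $\F$ preserves the uniform distribution on $\X$ and has unique preimages, so feeding $\A$ the value $\F(y)$ (resp.\ $z$ itself) produces exactly the distribution $\A$ expects, and the reductions are tight.

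Your caveat about the OWF case is well-placed and worth making explicit: for a general OWF $\F:\X\to\X$, the composite $\F^2$ need not be one-way (take $\F(x,y)=(g(y),0^n)$ for a OWP $g$ on $\bin^n$; then $\F$ is one-way but $\F^2$ is constant), so the OWF half of the proposition really only goes through under the side assumption that $\F^2$ is one-way. Your remark that the alternative predicate $\hc_{\F^2}:=\hc_\F$ yields a reduction that never invokes injectivity --- compute $\F(z)=\F^2(u)$ and call $\A$ --- is correct and is the cleaner route for the OWF variant once that side assumption is granted.
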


Another very important cryptographic assumption is the existence of {\em one-way trapdoor permutations (OWTP)}. A OWTP is a (family of) permutations which are easy to evaluate but hard to invert, {\em unless} an extra piece of secret information is known (the {\em trapdoor}) which is specific to a certain permutation.

For our scope, it is convenient to express a family of OWTPs as indexed through an {\em index family}, which is efficiently sampleable together with the related trapdoor. We will denote by $\I := \family{\I}$ and $\T := \family{\T}$ the index and trapdoor spaces, respectively. W.l.o.g., we assume $\I_\secpar \subseteq \bin^{\d(\secpar)}$, and $\T_\secpar \subseteq \bin^{\t(\secpar)}$ for security parameter $\secpar \in \NN$, where $\d$ and $\t$ are polynomial functions determined by the OWTP family.

\begin{definition}[One-Way Trapdoor Permutation Family (OWTP)]\label{def:owtp}
A (family of) {\em one-way trapdoor permutations (OWTP)} is a tuple of \PPT algorithms $\P := (\Gen,\Eval,\Invert)$:
\begin{enumerate}
\item $\Gen: \to \I \times \T$;
\item $\Eval: \I \times \X \to \X$;
\item $\Invert: \I \times \T \times \X \to \X \cup \set{\bot}$,
\end{enumerate}
and such that:
\begin{enumerate}
\item for any \PPT algorithm \A it holds:
$$
\Pr_{\substack{x\!\rand\!\!\X \\(i,t) \from \Gen}} \left[ \A(i,\Eval(i,x)) \to x \right] \leq \negl; \text{ and}
$$
\item $\Invert(i,t,y) = \Eval(i,x), \foral x \in \X , \foral (i,t) \from \Gen , \foral y \from \Eval(i,x)$.
\end{enumerate}
\end{definition}

The existence of OWTP is an assumption, like in the case of OWF. It is a stronger assumption, because the existence of OWTP in particular implies the existence of OWF, but the converse is not believed to hold.
 
\begin{proposition}[OWTP$\implies$OWP$\implies$OWF]\label{prop:OWTPtoOWF}
Let $\P := (\Gen,\Eval,\Invert)$ be a OWTP on \X. Then, for all but a negligible fraction of possible sequences $\left( i_\secpar,t_\secpar \right)_\secpar \from \Gen(\secpar) \Rightarrow \Eval(i_\secpar,.)$ is a OWP (and thus a OWF) on $\X = \family{\X}$.
\end{proposition}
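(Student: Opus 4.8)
The plan is to check the two clauses of Definition~\ref{def:owfowp} in turn. That $\Eval(i_\secpar,\cdot)$ is a permutation is immediate and holds for \emph{every} sequence in the support of $\Gen$: by the type of $\Eval$ the map $x\mapsto\Eval(i_\secpar,x)$ sends the finite set $\X_\secpar$ into itself, and the correctness clause of Definition~\ref{def:owtp} gives $\Invert(i_\secpar,t_\secpar,\Eval(i_\secpar,x))=x$ for every $x$, so the map is injective, hence bijective. Injectivity also settles the ``and thus a OWF'' rider for free, since it collapses the winning set $\set{x' : \Eval(i_\secpar,x')=\Eval(i_\secpar,x)}$ of Definition~\ref{def:owfowp} to $\set{x}$; a one-way permutation is therefore in particular a OWF. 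So it remains to argue one-wayness, which is the only place the ``all but a negligible fraction'' clause is needed.

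Fix any \PPT adversary $\A$ against the candidate permutation family and set $\delta_\A(i):=\Pr_{x\rand\X_\secpar}[\A(\Eval(i,x))\to x]$; because $\Eval(i,\cdot)$ is a permutation, a uniform preimage $x$ induces a uniform image and ``$\A$ inverts'' just means ``$\A$ outputs $x$''. The machine $\A'$ that on input $(i,y)$ runs $\A(y)$ and ignores $i$ is a \PPT OWTP-adversary, so the one-wayness of $\P$ yields
$$\mathbb{E}_{(i,t)\from\Gen(\secpar)}\bigl[\delta_\A(i)\bigr]\;=\;\Pr_{\substack{x\rand\X_\secpar\\(i,t)\from\Gen}}\bigl[\A'(i,\Eval(i,x))\to x\bigr]\;\le\;\eps(\secpar)$$
for some negligible $\eps$. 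Markov's inequality then gives $\Pr_{(i,t)\from\Gen(\secpar)}\bigl[\delta_\A(i)\ge\sqrt{\eps(\secpar)}\bigr]\le\sqrt{\eps(\secpar)}$, and $\sqrt{\eps}$ is again negligible (given a polynomial $\p$, apply negligibility of $\eps$ to $\p^2$). Hence, outside a set of indices of negligible $\Gen$-probability, $\A$ inverts $\Eval(i_\secpar,\cdot)$ with only negligible probability.

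The step I expect to be the real obstacle is the order of quantifiers: the argument above defeats \emph{each fixed} $\A$ on all but a negligible fraction of sequences, whereas ``$\Eval(i_\secpar,\cdot)$ is a OWP'' demands resistance to \emph{every} \PPT adversary on the \emph{same} sequence, and a bare union over the countably many \PPT machines need not keep the exceptional set negligible. I would handle this in the standard fashion: enumerate the \PPT adversaries $\A_1,\A_2,\dots$, observe that for an asymptotic statement it suffices that $\A_k$ be defeated for all $\secpar\ge k$, so that at parameter $\secpar$ one unions only the $\le\secpar$ bad-sets $\mathrm{Bad}_{\A_k}(\secpar)$, and pick the Markov threshold slightly sharper (say $\eps_k(\secpar)^{1-1/\secpar}$) so that the total exceptional measure stays negligible; a $1-\negl$ fraction of the sequences then yields a genuine OWP. (If one only cares about the downstream consequence ``OWTP $\Rightarrow$ OWP exists'', this bookkeeping can be avoided altogether by taking the OWP to be $\G_\secpar(i,x):=(i,\Eval(i,x))$ on $\I_\secpar\times\X_\secpar$, which is a permutation and is one-way provided $\Gen$'s index distribution is essentially uniform, thereby folding the index choice into the function input.)
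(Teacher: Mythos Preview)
The paper states this proposition without proof; it appears in the preliminaries as a known fact, so there is no argument to compare against.

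Your proof is correct. The permutation and OWP$\Rightarrow$OWF parts are immediate as you say, and the averaging-plus-Markov step is the standard route to pass from security on a \emph{random} index to security on a \emph{fixed} one. You are also right that the only genuine subtlety is the quantifier swap. Your diagonal sketch works, but a cleaner way to close it is Borel--Cantelli: for fixed $\A$ and fixed polynomial $\p$, Markov gives $\Pr_{(i,t)\from\Gen(\secpar)}[\delta_\A(i)\ge 1/\p(\secpar)]\le \p(\secpar)\cdot\eps(\secpar)$, which is negligible and hence summable in $\secpar$; so with probability $1$ over the product measure on sequences, $\delta_\A(i_\secpar)<1/\p(\secpar)$ for all but finitely many $\secpar$. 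Intersecting over the countable family of polynomials $\p_k(\secpar)=\secpar^k$ and the countable family of \PPT machines still leaves a probability-$1$ event, so in fact \emph{almost every} sequence yields a OWP --- stronger than the ``all but a negligible fraction'' phrasing (which is somewhat informal for an infinite product anyway).

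One small remark on your parenthetical alternative: the map $(i,x)\mapsto(i,\Eval(i,x))$ is indeed the textbook way to get a single OWP from a OWTP family, but as you note it only matches Definition~\ref{def:owfowp} verbatim when $\Gen$'s index marginal is (close to) uniform on $\I_\secpar$; otherwise one gets a OWP with respect to an efficiently samplable input distribution rather than the uniform one.
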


Candidates OWTP can be constructed from some hard problems such as factorization, DLP, and many others. 
As an example, it is well known that if factoring large integers is hard, then one can build OWTP using, e.g., the {\em RSA cryptosystem~\cite{RSA}}.

\begin{theorem}[RSA $\implies$ OWTP]\label{thm:DLPOWTP}
If factorization of large integers is computationally hard, then OWTPs exist.
\end{theorem}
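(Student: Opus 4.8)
The plan is to exhibit an explicit family meeting Definition~\ref{def:owtp} --- the RSA family of modular exponentiations --- and to verify its three requirements in order: that \Eval\ is a permutation inverted by \Invert\ (correctness), and that \Eval\ is hard to invert without the trapdoor (one-wayness), the latter being the only place the hardness of factoring is used. Concretely, $\Gen$ samples two distinct primes $p,q$ of bit length $\approx \secpar/2$ so that $N := pq$ has bit length $\Theta(\secpar)$, picks $e$ coprime to $\phi(N) = (p-1)(q-1)$, computes $d := e^{-1} \bmod \phi(N)$, and outputs index $i := (N,e)$ and trapdoor $t := d$; one sets $\Eval(i,x) := x^{e} \bmod N$ and $\Invert(i,t,y) := y^{d} \bmod N$, with domain $\ZZ_N^{*}$ and $\bot$ returned off-domain. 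The index and trapdoor lengths $\d(\secpar),\t(\secpar)$ are $\Theta(\secpar)$, hence polynomial as required.

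For correctness, take $x \in \ZZ_N^{*}$; then $\Invert(i,t,\Eval(i,x)) = x^{ed} \bmod N$, and writing $ed = 1 + k\phi(N)$ and invoking Euler's theorem $x^{\phi(N)} \equiv 1 \pmod N$ gives $x^{ed} \equiv x \pmod N$. So $\Invert(i,t,\cdot)$ is a left inverse of $\Eval(i,\cdot)$ on the finite set $\ZZ_N^{*}$, which forces $\Eval(i,\cdot)$ to be a bijection; this settles requirement~2 of Definition~\ref{def:owtp} together with the permutation property.

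For one-wayness I would argue by contraposition, showing that an efficient inverter contradicts the factoring assumption. The cleanest rigorous route here is \emph{not} fixed-exponent RSA (inverting which is only known to be no harder than factoring --- the separate ``RSA assumption'' --- and is not provably equivalent), but the closely related squaring map of Rabin / Blum--Williams type: take $p \equiv q \equiv 3 \pmod 4$, let the domain be the quadratic residues $QR_N$, and put $\Eval(i,x) := x^{2} \bmod N$, which for such Blum integers is a permutation of $QR_N$ whose inverse \Invert\ recovers the unique square root in $QR_N$ from $y^{(p+1)/4} \bmod p$ and $y^{(q+1)/4} \bmod q$ via the Chinese Remainder Theorem using the trapdoor $(p,q)$. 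Given a factoring challenge $N$, the reduction draws $r \rand \ZZ_N^{*}$, feeds $y := r^{2} \bmod N$ --- uniform on $QR_N$, exactly the input distribution the inverter expects --- receives a square root $r'$, and with probability $\tfrac12$ over $r$ one has $r \not\equiv \pm r' \pmod N$, whence $\gcd(r - r',N)$ is a nontrivial factor; so an inverter of advantage $\eps$ yields a factoring algorithm of success probability $\Omega(\eps)$. If one insists on the RSA presentation verbatim, the correctness part above is unchanged and one simply bases one-wayness on the RSA hardness assumption, which the hardness of factoring renders plausible though does not formally entail.

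The main obstacles are two. The first is technical: Definition~\ref{def:owtp} fixes $\X_\secpar = \bin^{\secpar}$, whereas the natural domain is $\ZZ_N^{*}$ (or $QR_N$), which is neither all of $\bin^{|N|}$ nor of exactly that cardinality; the standard fix is to note that membership in, and near-uniform sampling from, $\ZZ_N^{*}$ and $QR_N$ are efficient, so up to a negligible statistical defect one may encode the domain as a bit-string set --- or, more cleanly, read Definition~\ref{def:owtp} with the equivalent relaxation allowing $\X$ to be an efficiently recognizable, efficiently sampleable, index-dependent set. The second, and the real crux, is the conceptual gap just flagged: the literal hypothesis ``factoring is hard'' genuinely supports the squaring construction but only heuristically supports fixed-exponent RSA, so I would prove the theorem through the squaring / Rabin function and merely remark that the RSA variant is the historically standard --- if assumption-wise stronger --- instantiation. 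Getting the ``two inequivalent square roots expose a factor'' step exactly right (the input distribution seen by the inverter, the $\tfrac12$ bound, the restriction to $QR_N$) is where the care goes; the number-theoretic correctness half is routine.
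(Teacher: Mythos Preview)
The paper does not actually prove this theorem: it is stated without proof, preceded only by the remark that ``it is well known that if factoring large integers is hard, then one can build OWTP using, e.g., the RSA cryptosystem~\cite{RSA}.'' So there is no proof in the paper to compare against; the theorem is treated as folklore.

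Your proposal is not only correct but more scrupulous than the paper's own framing. You rightly flag that the theorem's title (``RSA $\implies$ OWTP'') and its hypothesis (``factorization is hard'') do not quite match: inverting fixed-exponent RSA is the \emph{RSA assumption}, which is implied by but not known to be equivalent to hardness of factoring. Your move to the Rabin/Blum--Williams squaring permutation on $QR_N$ for Blum integers is exactly the construction that makes the stated hypothesis genuinely sufficient, and your reduction (random $r \in \ZZ_N^*$, square it, recover a root $r'$, factor via $\gcd(r-r',N)$ when $r \not\equiv \pm r'$) is the standard one, with the $\tfrac12$ success bound and the input-distribution check both handled correctly. The domain-encoding caveat you raise --- that Definition~\ref{def:owtp} literally asks for $\X_\secpar = \bin^\secpar$ while the natural domain is index-dependent --- is a real wrinkle the paper glosses over entirely; your proposed relaxation to efficiently recognizable, efficiently sampleable, index-dependent domains is the usual fix in the literature.

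In short: there is no gap in your argument. If anything, you have supplied the proof the paper omits, and sharpened the statement in the process.
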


\subsection{The Random Oracle Model}\label{sec:rom}

In this section we briefly recall the {\em random oracle model (ROM)} methodology. The subject is quite involved and here we do not discuss it in detail, see~\cite{BellareROM} for an overview. A random oracle (RO) is an abstract mathematical model representing an idealized version of a publicly accessible source of randomness. In practice, a RO is used in security proofs to replace pseudorandom objects, such as hash functions, which would be otherwise too difficult to analyze. The idea is that such objects approximate very well the mathematical model described by the random oracle, so that a security proof given in the ROM is `almost as good' as a security proof given for the real-world implementation. However, it is important to keep in mind that there are cases of {\em ROM uninstantiability}~\cite{CGH98,BFM15}. That is, there exist (artificial) examples of cryptographic schemes which are provably secure in the ROM, but which become insecure whenever the random oracle is replaced by {\em any} hash function.

Formally, a random oracle from a bit string set \X to a bit string set \Y is a function $\O:\X\to\Y$ drawn uniformly at random from the set $\Y^\X$. The description of \O is not explicitly given; instead, \O can only be queried in a black-box way. At the beginning of the security analysis, the oracle is {\em initialized} by drawing a function uniformly at random from the set $\Y^\X$. The function so chosen remains unknown to all the parties involved in the protocol, but all those parties gain oracle access to it.

It is important to notice that, with high probability, a randomly chosen function from \X to \Y does not have a compact representation, so that the mere act of selecting a random function in $\Y^\X$ is not algorithmically defined. Because the security proofs we are interested in must be constructive and efficient, different approaches should be taken when constructing a random oracle. One possibility is {\em lazy sampling}: because the value distribution of a completely random function on a certain point $x$ is independent from the value the function takes on any other point, then the following procedure defines a random function, by adaptively filling a lookup table of values as soon as they are queried for the first time. In terms of pseudocode, a lazy sampling procedure would look as follows:

\vfill

\begin{algorithmic}[1]
\State set $\lookuptable = \emptyset$
\ForAll{query received on element $x$}
	\If{$(x,y') \in \lookuptable$ for some element $y'$}
		\State \textbf{Return:} $y'$
	\Else
		\State sample $y \rand \Y$
		\State set $\lookuptable := \lookuptable \cup \set{(x,y)}$
		\State \textbf{Return:} $y$
	\EndIf
\EndFor
\end{algorithmic}
Another possible method is to instantiate the RO with an efficiently computable {\em pseudorandom function family}, which will be described in Section~\ref{sec:PRF}.

Finally, it is important to mention that a RO can be {\em reprogrammed}, that is, the underlying function can be changed `on the fly' during the security proof. The intuition for this is that, since the RO replaces a hash function, the proof should still hold if we use a certain hash function instead of another one, as long as it does not have exploitable `structures' which are not supposed to be found (with high probability) on a completely random function.

\section{Quantum Computation}

In this section we recall the basic concepts of quantum information theory and quantum computation. We only give here a brief overview, and refer to~\cite{NC00} for a more detailed exposition.

\subsection{Quantum Mechanics}

In quantum mechanics, an isolated physical system (which we denote usually by an uppercase letter, e.g., $A$) is represented by a complex Hilbert space, denoted by $\Hilbert_A$ (or just $\Hilbert$ when the physical system is implied), of dimension suitable to represent all the independent possible physical states of $A$. Using the {\em bra-ket notation}, a completely defined state $\phi$ of the system (also called a {\em pure state}) is represented by (a class of) unitary vectors denoted by $\ket{\phi}$. A set of orthonormal generators for $\Hilbert$ is a {\em basis} for $\Hilbert$; a {\em computational basis} for $\Hilbert$ is a conventionally defined basis where elements are labeled as bit strings (or integers) $\set{\ket{x}: x = 1, \ldots, d}$, where $d := \dim{\Hilbert}$. Every pure state $\ket{\phi}$ can thus be written as:
$$
\ket{\phi} = \sum_x a_x \ket{x},
$$
with $\sum_x |a_x|^2 = 1$. The complex coefficients $a_x$ are the {\em amplitudes of $\ket{x}$}, and we say that $\ket{\phi}$ is a {\em quantum superposition} of states $\ket{x}$. Sometimes, if \X is a set, we use the notation $\Hilbert_\X$ to denote a complex Hilbert space for some physical system such that the computational basis for that space is labeled with elements of \X. That is, $\Hilbert_\X$ is the space generated by $\set{\ket{x}:x\in \X}$. For two pure quantum states $\ket{\phi}=\sum_x a_x \ket{x}$ and $\ket{\psi}= \sum_x b_x \ket{x}$ in superposition in the basis states $\ket{x}$, the Euclidean distance is given by $\big( \sum_x \left| a_x - b_x\right|^2 \big)^\frac{1}{2}$. 

We denote by $\bra{\psi}$ the {\em dual} of a state $\ket{\psi}$, i.e., $\bra{\psi} := \ket{\psi}^\dagger$. By Riesz's Representation Theorem, for every linear functional $\a: \Hilbert \to \CC$ there exists a unique $\ket{\alpha}$ such that $\a(\ket{\phi}) = \braket{\alpha|\phi}, \foral \phi \in \Hilbert$. Notice that, since pure states are represented by classes of unitary vectors, then $|\braket{\psi|\phi}|^2 \in [0,1]$, with $|\braket{\psi|\phi}|^2 = 1$ iff $\ket{\phi} = \ket{\psi}$, and $|\braket{\psi|\phi}|^2 = 0$ iff $\ket{\psi}$ and $\ket{\phi}$ are orthogonal. In particular, $\braket{x_i|x_j} = 0 \foral i \neq j$.

According to the laws of quantum mechanics, two different types of physically valid transformations can be applied to pure states:
\begin{itemize}
\item {\em reversible transformations,} or {\em evolutions}, which are modeled by unitary operators of the form $U:\Hilbert \to \Hilbert$; and
\item {\em measurements}, which allow an observer to extract information from the physical system.
\end{itemize}
In this work, for pure states we only consider {\em measurement in the computational basis}, which works in the following way: let $\ket{\phi} = \sum_x a_x \ket{x}$. Then, measuring such state yields a single real-valued outcome $x$ with probability $|a_x|^2$, and after such measurement the state {\em collapses} to the basis state $\ket{x}$.

The {\em composition} (joint system) of two physical systems $A$ and $B$ is represented by the {\em tensor product} of the respective Hilbert spaces, $\Hilbert_{AB} := \Hilbert_A \otimes \Hilbert_B$. So, for example, if $\set{\ket{x}}_{x \in \X}$ is a basis for $\Hilbert_A$ and $\set{\ket{y}}_{y \in \Y}$ is a basis for $\Hilbert_B$ (for two sets \X and \Y), then $\set{\ket{x} \otimes \ket{y}}_{(x,y) \in \X \times \Y}$ is a basis for $\Hilbert_{AB} = \Hilbert_{\X \times \Y}$. We write equivalently $\ket{x} \otimes \ket{y} = \ket{x}\ket{y} = \ket{x,y}$.

Two fundamental theorems in quantum information theory, which we only mention here informally, are the following.
\begin{theorem}[No-Cloning Theorem]
There does not exist any valid physical process which, given as input an arbitrary state $\ket{\phi}$, produces the state $\ket{\phi} \otimes \ket{\phi}$.
\end{theorem}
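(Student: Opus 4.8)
The plan is to argue by contradiction, exploiting the fact (implicit in the description of valid transformations above) that every physically admissible evolution preserves the scalar product of pure states. First I would pin down what ``valid physical process'' means. In full generality this is a quantum channel, possibly involving ancillas and measurements, but by the standard dilation of such processes to unitary evolutions on an enlarged system it suffices to rule out a \emph{unitary} cloner: a unitary $U$ on $\Hilbert \otimes \Hilbert$ together, if needed, with an ancilla register prepared in a fixed state $\ket{0}$, such that $U\big(\ket{\phi} \otimes \ket{0}\big) = \ket{\phi} \otimes \ket{\phi}$ for \emph{every} pure state $\ket{\phi} \in \Hilbert$. Indeed, any channel achieving cloning has a unitary dilation of this form, and discarding the dilating register cannot undo the fact that the first two factors carry $\ketbra{\phi}\otimes\ketbra{\phi}$.

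The crux is then a two-state comparison. Assume $\dim\Hilbert \geq 2$ (the one-dimensional case is vacuous) and pick pure states $\ket{\phi},\ket{\psi}\in\Hilbert$ that are neither orthogonal nor equal, i.e. $0 < \card{\braket{\phi|\psi}} < 1$. Applying $U$ to $\ket{\phi,0}$ and to $\ket{\psi,0}$ and using $U^\dagger U = \Id$, which preserves inner products, yields
\[
\big(\braket{\phi|\psi}\big)^2 \;=\; \braket{\phi,\phi \,|\, \psi,\psi} \;=\; \braket{\phi,0 \,|\, \psi,0} \;=\; \braket{\phi|\psi}\,\braket{0|0} \;=\; \braket{\phi|\psi},
\]
so $z := \braket{\phi|\psi}$ satisfies $z = z^2$, forcing $z \in \set{0,1}$. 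But by the facts recalled above about the scalar product, $\card{z}^2 = 1$ only if $\ket{\phi}=\ket{\psi}$ and $\card{z}^2 = 0$ only if $\ket{\phi},\ket{\psi}$ are orthogonal, contradicting the choice of the two states. Hence no such $U$, and therefore no such physical process, can exist.

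As an alternative to the inner-product argument one can use \emph{linearity} of $U$ directly: from $U\big(\ket{\phi,0}\big) = \ket{\phi}\otimes\ket{\phi}$ and $U\big(\ket{\psi,0}\big) = \ket{\psi}\otimes\ket{\psi}$, linearity gives $U\big((\ket{\phi}+\ket{\psi})\otimes\ket{0}\big) = \ket{\phi}\otimes\ket{\phi} + \ket{\psi}\otimes\ket{\psi}$, whereas the cloning property applied to the (suitably normalized) state $\ket{\phi}+\ket{\psi}$ would demand the output $(\ket{\phi}+\ket{\psi})\otimes(\ket{\phi}+\ket{\psi})$, which additionally contains the cross terms $\ket{\phi}\otimes\ket{\psi}+\ket{\psi}\otimes\ket{\phi}$ — a contradiction whenever these do not vanish, i.e. again whenever the two states are distinct. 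I expect the only genuinely delicate point to be the very first step: justifying that it is without loss of generality to restrict to a unitary process on an enlarged Hilbert space rather than an arbitrary channel-plus-measurement procedure. Once that reduction is granted, the contradiction is purely algebraic and immediate.
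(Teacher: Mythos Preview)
The paper does not actually prove this theorem: it introduces it with the words ``which we only mention here informally'' and states it without proof, as a standard preliminary fact. So there is no paper proof to compare against.

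Your argument is the standard one and is essentially correct. One small gap worth tightening: when you invoke the dilation to reduce to a unitary cloner, the purified output need not be exactly $\ket{\phi}\otimes\ket{\phi}$ on the nose, but rather $\ket{\phi}\otimes\ket{\phi}\otimes\ket{e_\phi}$ for some environment state that may depend on $\phi$ (this is forced because the reduced state $\ketbra{\phi}\otimes\ketbra{\phi}$ is pure). The inner-product computation then gives $\braket{\phi|\psi} = \braket{\phi|\psi}^2\,\braket{e_\phi|e_\psi}$; taking absolute values and using $|\braket{e_\phi|e_\psi}|\le 1$ still yields $|\braket{\phi|\psi}| \ge |\braket{\phi|\psi}|^{-1}\cdot|\braket{\phi|\psi}|^2$, hence $|\braket{\phi|\psi}|\in\set{0,1}$, the same contradiction. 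Your write-up silently drops this $\ket{e_\phi}$ factor, which is harmless for the conclusion but should be mentioned since you explicitly allow an ancilla register. The alternative linearity argument you sketch is also fine and avoids this subtlety.
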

\begin{theorem}[No-Signaling Theorem]
There does not exist any valid physical process which allows two parties to transmit information faster-than-light, even though these parties are allowed to perform instantaneous physical action on remote and possibly entangled quantum systems.
\end{theorem}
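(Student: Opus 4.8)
The plan is to reduce the statement to a single fact about reduced density operators: whatever Alice does locally, the state of Bob's subsystem does not change, and therefore neither do any statistics Bob can collect. Concretely, I would model the situation by a joint physical system held by the two parties, described by a density operator $\rho_{AB} \in \states{\Hilb{A} \otimes \Hilb{B}}$, where $A$ sits on Alice's side and $B$ on Bob's. The quantity that matters is Bob's local (reduced) state $\rho_B := \tr_A \rho_{AB}$: by the statistical postulate of quantum mechanics, every measurement Bob can ever perform on his own equipment has outcome statistics that are a function of $\rho_B$ alone. So it suffices to show that no operation available to Alice alters $\rho_B$.

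The core step is a one-line Kraus-operator computation. Any ``instantaneous physical action'' Alice performs on her side -- adjoining fresh ancillas, applying unitaries, measuring, recording or discarding outcomes -- induces on the global state a completely positive trace-preserving (CPTP) map of the form $\Lambda \otimes \Id_B$ for some channel $\Lambda$ acting on $A$; crucially, if Alice measures and keeps a classical outcome that she does not transmit, then from Bob's point of view her effective operation is the average over outcomes, which is again a CPTP map on $A$. Writing $\Lambda$ in Kraus form $\Lambda(\sigma) = \sum_k E_k \sigma E_k^\dagger$ with $\sum_k E_k^\dagger E_k = \Id_A$, I would compute
\[ \tr_A\!\big( (\Lambda \otimes \Id_B)(\rho_{AB}) \big) = \tr_A\!\Big( \sum_k (E_k \otimes \Id_B)\,\rho_{AB}\,(E_k^\dagger \otimes \Id_B) \Big) = \tr_A\!\Big( \big( \textstyle\sum_k E_k^\dagger E_k \otimes \Id_B \big) \rho_{AB} \Big) = \tr_A \rho_{AB} = \rho_B , \]
where the middle equality uses cyclicity of the partial trace for operators supported on $A$, and the last nontrivial equality uses $\sum_k E_k^\dagger E_k = \Id_A$. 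Hence $\rho_B$ is invariant under every local action of Alice.

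To finish, I would invoke the relativistic hypothesis: if Alice's action and Bob's measurement occur at spacelike-separated events, no classical signal (in particular, not Alice's measurement outcome) can reach Bob in time, so the only conceivable carrier of information from Alice to Bob is the pre-shared quantum correlation encoded in $\rho_{AB}$. But the computation above shows that Bob's marginal $\rho_B$ -- and therefore the full probability distribution of any local measurement Bob makes -- is exactly the same whether Alice acts or not, and independent of which action she chooses. Consequently Bob's data are statistically independent of Alice's choice, so Alice cannot encode a message into them, and faster-than-light communication is impossible.

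The main obstacle is not the algebra but getting the modelling assumptions right, and I expect two points to need care. First, the claim that ``everything Bob can observe is a function of $\rho_B$'' is the statistical postulate of quantum mechanics applied to a subsystem; it must be assumed, not derived, and one should be explicit that the partial trace is the correct description of a local subsystem. Second, one must handle the case where Alice measures: conditioned on a specific outcome, Bob's state does change (this is ``steering''), which naively looks like it might leak information. The resolution -- and the genuinely subtle part of the argument -- is that the ensemble Bob actually holds is the outcome-averaged state, because he never learns Alice's result without a classical channel; it is this averaged state, not any single conditional branch, that governs his local statistics, and it is precisely this state that the Kraus computation shows to be invariant.
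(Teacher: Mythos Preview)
Your argument is correct and is essentially the standard proof of no-signaling: model Alice's local action as a CPTP map $\Lambda \otimes \Id_B$, write it in Kraus form, and use trace-preservation $\sum_k E_k^\dagger E_k = \Id_A$ together with the partial-trace identity to conclude that Bob's reduced state $\rho_B$ is unchanged. Your treatment of the measurement case (outcome-averaging when no classical channel is available) is also the right way to close the ``steering'' loophole.

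However, there is nothing in the paper to compare against. The paper explicitly introduces both the No-Cloning Theorem and the No-Signaling Theorem with the sentence ``Two fundamental theorems in quantum information theory, which we only mention here informally, are the following,'' and then states them without proof as background facts in the preliminaries. So your proposal is not an alternative to the paper's proof; it simply supplies an argument where the paper chose to give none.
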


\subsection{Entanglement}

Notice that not all states of $\Hilbert_{AB}$ are of the form $\ket{\phi} \otimes \ket{\psi}$ for some $\ket{\phi} \in \Hilbert_A$ and $\ket{\psi} \in \Hilbert_B$ - actually, very few of them are. For example, for $2$-dimensional Hilbert spaces $\Hilbert_A$ and $\Hilbert_B$, the following state:
\begin{equation}\label{eq:EPR}
\ket{\rho}_{AB} = \sqrt{\half} \ket{00} + \sqrt{\half} \ket{11}
\end{equation}
cannot be expressed as a tensor product of two pure states, even if it is a pure states itself. In fact, if we consider the state associated to the system $A$ alone (which we denote by $\ket{\rho}_{A}$) we find that it is impossible to write this state as a superposition of $\ket{x}$ elements, and the same applies to $\ket{\rho}_{B}$. When this happens we say that $\ket{\rho}_{A}$ and $\ket{\rho}_{B}$ are {\em entangled} states, otherwise we say that $\ket{\rho}_{AB}$ is {\em separable}. It turns out that, in a composite system, the vast majority of possible quantum states are entangled, and only a small subclass of them are separable. Entangled states cannot be pure states, so a different formalism is required to express them.

The {\em density matrix formalism} is used to represent all those states (including entangled states) which cannot be represented as pure states. We call such states {\em mixed states}, and we drop the bra-ket notation to represent them, in order to highlight the fact that they are not vectors, but matrices. Mixed states can be represented as probability distributions over sets of pure states. If a mixed state $\rho$ is defined as a distribution over elements $\ket{\phi_i}$, each of them occurring with probability $p_i$, then we define:
$$
\rho := \sum_i p_i \ketbra{\phi_i}.
$$
We call the resulting matrix representation of $\rho$ the {\em density matrix (or density operator) representation} of $\rho$. Formally, density matrices are operators $\rho:\Hilbert \to \Hilbert$ such that:
\begin{enumerate}
\item (trace condition) $\tr(\rho) = 1$
\item (positivity condition) $\braket{\phi|\rho|\phi} \geq 0 \foral \phi \in \Hilbert$.
\end{enumerate}
As a consequence, every density operators has diagonal elements in $[0,1]$. We denote the set of all admissible quantum states on a system $A$ (that is, the set of all positive, unitary-trace linear operators on $\Hilbert_A$) as $\states{\Hilbert_A}$.

All the formalism defined for pure states can be reformulated in terms of mixed states, because mixed states describe a statistics on pure states. 
If $\ket{\phi}$ is a pure state, its density matrix is defined just as $\ketbra{\phi}$. If $\rho \in \states{\Hilbert_A}$ and $\sigma \in \states{\Hilbert_B}$, then $\rho\otimes\sigma \in \states{\Hilbert_{AB}}$ is the state of the joint system. A unitary evolution $U$ applied to a mixed state $\rho$ produces another mixed state $U\rho U^\dagger$. Measuring a state $\rho$ in the computational basis yields outcome $x_i$ with probability $p_i$, where $p_i$ is the $i$-th diagonal element of $\rho$; in this case, the system is left in the state $\ketbra{x}$.

If we have two (or more) physical systems $A,B$, and they are jointly in the state $\rho_{AB}$, then the state describing the system $A$ (resp., $B$) alone is denoted by $\rho_A$ (resp, $\rho_B$), which has density operator:
$$
\rho_A := \tr_B(\rho_{AB}),
$$
where $\tr_B$ is the {\em partial trace over $B$}, defined by:
$$
\tr_B(\ket{x_1}\!\!\bra{x_2}_A \otimes \ket{y_1}\!\!\bra{y_2}_B) := \ket{x_1}\!\!\bra{x_2}_A \cdot \tr(\ket{y_1}\!\!\bra{y_2}_B).
$$
The act of taking a state in a joint system and considering only the state in one of its subsystems, `forgetting' about the rest of the system is called {\em tracing out (or, reducing)} to a certain subsystem. W.l.o.g this can be seen as: first measuring the state in the computational basis {\em only} on the subsystem to be `forgotten' (thereby collapsing part of the state and hence obtaining a separable state between the two systems), and then discarding the collapsed state and only consider the state of the subsystem left.

Any physically allowable process in nature, according to quantum mechanics, has to obey the constraints that density operators must be mapped to other density operators. That is, the mathematical transformation describing a physical process must preserve the unitarity of the trace, and the positivity of the operators. We call such `admissible transformations' {\em CPTP maps} (completely positive, trace-preserving maps), or {\em quantum channels}.

\subsection{Quantum Circuits}

The most widely used model for quantum computation is that of {\em quantum circuits}. A quantum circuit is the analogue of a Boolean circuit, with a few differences. For the purpose of this work, we consider the following:
\begin{itemize}
\item instead of acting on register of bits, a quantum circuits operates on {\em quantum registers}, which are physical systems composed of subsystems (called {\em qubits}) described by $2$-dimensional complex Hilbert spaces.
\item Instead of being composed of Boolean gates, quantum circuits are composed of {\em elementary quantum gates}, which are either measurement operators, or transformations on (some subsets of) qubits, described by unitary operators.
\end{itemize}
A quantum circuit takes as input a quantum register in a certain state and produces a quantum output, but we can always consider additional classical inputs and outputs (which can be `embedded' into quantum registers as basis states). The outcome of the quantum computation, however, is usually recovered through a measurement. It turns out that, w.l.o.g., measurements during a quantum computation can always be postponed to the very end of the quantum circuit, without changing the distribution of outcomes.

The number of input and output qubits of a quantum circuit can be different from each other. In fact, even if unitary operators act on the same subspace, a quantum circuit can have additional constant, `hidden input registers' (called {\em ancilla qubits}, usually initialized to $\ket{0}$), and can `delete' or `forget' some register (by tracing them out). However, {\em any} CPTP map can be modeled as a quantum circuit.

For the purpose of this work, we only consider measurement operators in the computational basis. If we have a single qubit in a state $\ket{\phi}$ and we apply a measurement on that qubit in order to obtain a single bit as outcome, we denote this as in Figure~\ref{fig:measure} (the double line denotes a classical output).
\begin{figure}[t]
\begin{center}
\includegraphics[width=0.2\textwidth]{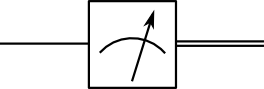}
\end{center}
\caption{Quantum measurement gate.}
\label{fig:measure}
\end{figure}

If $U$ is a single-qubit unitary acting on the $i$-th qubit of an $\secpar$-qubit system, it is denoted by $U_i$ as shown in Figure~\ref{fig:unitary}.
\begin{figure}[t]
\begin{center}
\includegraphics[width=0.2\textwidth]{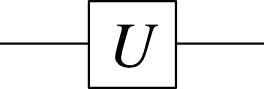}
\end{center}
\caption{Single-qubit unitary gate.}
\label{fig:unitary}
\end{figure}

The most basic single-qubit gate is the identity $\Id$:
$$
\Id :=  \left(\begin{array}{cc} 1&0\\0&1 \end{array}\right).
$$
A very important single-qubit gate is the {\em Hadamard gate}, denoted by $H$, and defined by the unitary matrix:
$$
H := \sqhalf \left(\begin{array}{cc} 1&1\\1&-1 \end{array}\right).
$$
Other useful single-qubit operators are the {\em Pauli matrices} $X,Y,Z$ defined by:
$$
\begin{array}{ccc}
X := \left(\begin{array}{cc} 0&1\\1&0 \end{array}\right) &
Y := \left(\begin{array}{cc} 0&-\im\\\im&0 \end{array}\right) &
Z := \left(\begin{array}{cc} 1&0\\0&-1 \end{array}\right)
\end{array}
$$
Notice how the Pauli matrices are also Hermitian. Moreover they satisfy: $XZ = \im Y$. We define the {\em Pauli group on $1$ qubit} $\mathfrak{P}_1$ as the matrix multiplicative subgroup generated by $\set{\im\Id, X, Y, Z}$. This extends to the {\em Pauli group on \secpar qubits} $\mathfrak{P}_\secpar$ as the subgroup generated by $\set{\im\Id_i, X_i, Y_i, Z_i : i = 1,\ldots,\secpar}$.

Finally, two very important $2$-qubit gates are the {\em controlled-NOT (CNOT)} and the {\em SWAP} gates:
$$
\begin{array}{cc}
\mathsf{CNOT} := \left(\begin{array}{cccc} 1&0&0&0\\0&1&0&0\\0&0&0&1\\0&0&1&0 \end{array}\right) &
\mathsf{SWAP} := \left(\begin{array}{cccc} 1&0&0&0\\0&0&1&0\\0&1&0&0\\0&0&0&1 \end{array}\right) 
\end{array}
$$

\begin{figure}[t]
\begin{center}
\includegraphics[width=0.2\textwidth]{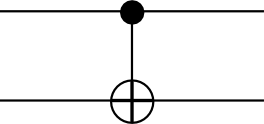}
\end{center}
\caption{CNOT gate.}
\label{fig:cnot}
\end{figure}

\begin{figure}[t]
\begin{center}
\includegraphics[width=0.6\textwidth]{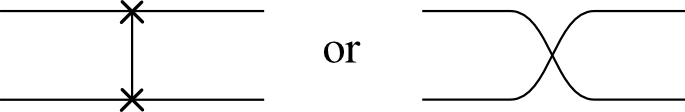}
\end{center}
\caption{SWAP gate.}
\label{fig:swap}
\end{figure}

A {\em quantum algorithm} is a uniform family of quantum circuits, i.e., there exists a (classical) Turing machine which, given the security parameter expressed in unary $\secparam$ as input, runs in time at most polynomial in $\secpa$, and outputs a (classical) description of the $\secpa$-th member of the quantum circuit family. \QPT stands for `quantum polynomial time', so a \QPT algorithm is a uniform family of quantum circuits of size polynomial in the security parameter.

As quantum algorithms are probabilistic by nature, there is no quantum analogue of the classical complexity security class \PPP. However, there is an analogue for \BPP: the complexity \BQP is the set of all languages \L for which there exists a \QPT algorithm \M and a positive constant $c$ such that:
	\begin{enumerate}
	\item $\foral x \in \L \implies \Pr [\M(x) \to 1] \geq \half + c$; and
	\item $\foral x \notin \L \implies \Pr [\M(x) \to 0] \geq \half + c$.
	\end{enumerate}
Informally, \BQP is the set of all problems which are `easy to solve with high probability' on a quantum computer.

`Famous' quantum algorithms include Shor's algorithm~\cite{Shor94} for factoring integers and solving DLP in polynomial time, Simon's algorithm~\cite{Simon97} for recognizing in polynomial time black-box functions of a certain form, and Grover's algorithm~\cite{Grover96} for polynomially speeding up search on unsorted databases, inversion of functions, and general brute-force attacks.

\subsection{Quantum Oracles}\label{sec:quantumoracles}

As in the classical case, the computational capabilities of a quantum algorithm \A can be expanded by giving to the algorithm access to an oracle \O, which we denote by $\A^\O$. The oracle can be classical (with the same meaning as in the classical case), or it can be quantum. In the latter case, we have to distinguish between:
\begin{itemize}
\item {\em (standard) quantum oracle access}. In this case the oracle is a unitary operation $U$ which \A can query on a quantum state $\rho$ at unit time cost in order to receive the response state $U \rho U^\dagger$. Whenever not specified, by `oracle access' we always mean the standard one.
\item {\em Quantum gate access}. In this case the oracle is also a unitary operator, like in the standard oracle access, the only difference is that \A automatically gains access to the inverse operator $U^\dagger$ as well.
\item {\em Quantum circuit access.} In this case the oracle is not necessarily unitary, but an arbitrary CPTP map. This means that the oracle could, e.g., perform measurements, or tracing out qubits, or act on additional quantum registers outside of \A's control.
\end{itemize}

We use the following technical tool in the proof of Theorem~\ref{thm:FSLambda}. Let \A be a quantum algorithm performing quantum queries to an oracle \O, and let $\q_x(\ket{\phi_j})$ be the magnitude squared of basis element $x$ in the $j$-th query, which we call the {\em query probability of $x$ in query $j$}. If we sum over all queries, we get an upper bound on the total query probability of $x$.

\begin{lemma}[{\cite[Theorem 3.3]{BBBV97}}]\label{lem:queryprob}
Let $\A$ be a quantum algorithm running in time $t$ with quantum oracle access to $\O:\X\to\Y$. Let $\epsilon > 0$ and let $S \subseteq \set{1,\ldots,t} \times \X$ be a set of time-string pairs such that $\sum_{(j,x)\in S} \q_x(\ket{\phi_j}) \leq \epsilon$. If we modify $\O$ into an oracle $\O'$ which answers each query $x$ at time $j$ by providing the same string $\bar{x}$ (which has been sampled independently from $\O$) whenever $(j,x) \in S$, then the Euclidean distance between the final states of $\A$ when invoking $\O$ and $\O'$ is at most $\sqrt{t \epsilon}$.
\end{lemma}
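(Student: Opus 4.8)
The plan is to follow the standard hybrid argument of Bennett--Bernstein--Brassard--Vazirani, adapted to the setting where we reprogram the oracle only on a low-total-query-weight set $S$ rather than on a single query index. First I would set up a sequence of hybrids $\ket{\phi^{(0)}}, \ket{\phi^{(1)}}, \ldots, \ket{\phi^{(t)}}$, where in hybrid $i$ the algorithm $\A$ interacts with $\O$ for the first $i$ queries and with $\O'$ (the reprogrammed oracle) for queries $i+1, \ldots, t$. By the triangle inequality, $\bigl\| \ket{\phi^{(0)}} - \ket{\phi^{(t)}} \bigr\|_2 \leq \sum_{j=1}^{t} \bigl\| \ket{\phi^{(j-1)}} - \ket{\phi^{(j)}} \bigr\|_2$, and $\ket{\phi^{(0)}}$ is the final state with $\O'$ throughout while $\ket{\phi^{(t)}}$ is the final state with $\O$ throughout (or vice versa, up to relabeling). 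Since unitary evolution after query $j$ is norm-preserving and is common to both hybrids $\ket{\phi^{(j-1)}}$ and $\ket{\phi^{(j)}}$, the distance $\bigl\| \ket{\phi^{(j-1)}} - \ket{\phi^{(j)}} \bigr\|_2$ equals the distance introduced at query $j$ itself, namely the norm of the difference between applying $\O$ and applying $\O'$ to the state $\ket{\psi_j}$ entering query $j$ in the common prefix.

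Next I would bound that per-query discrepancy. On query $j$ the two oracles differ only on basis strings $x$ with $(j,x) \in S$; writing $\ket{\psi_j} = \sum_x \alpha_{j,x} \ket{x}\ket{\cdot}$, the application of $\O$ versus $\O'$ changes only the components indexed by those $x$, so the difference vector has squared norm at most $4 \sum_{x : (j,x) \in S} |\alpha_{j,x}|^2 = 4 \sum_{x : (j,x) \in S} \q_x(\ket{\phi_j})$ (the factor $4$ because two distinct unit response phases differ by at most $2$ in norm on each affected component; in the XOR-into-answer-register model one gets a factor $2$ for the squared norm, which only improves the constant). Let $\eps_j := \sum_{x : (j,x)\in S} \q_x(\ket{\phi_j})$, so $\sum_j \eps_j \leq \eps$ by hypothesis. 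Then $\bigl\| \ket{\phi^{(j-1)}} - \ket{\phi^{(j)}} \bigr\|_2 \leq 2\sqrt{\eps_j}$.

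Finally I would combine the per-query bounds. Summing gives $\bigl\| \ket{\phi^{(0)}} - \ket{\phi^{(t)}} \bigr\|_2 \leq 2 \sum_{j=1}^{t} \sqrt{\eps_j}$, and by the Cauchy--Schwarz inequality $\sum_{j=1}^t \sqrt{\eps_j} \leq \sqrt{t} \cdot \sqrt{\sum_{j=1}^t \eps_j} \leq \sqrt{t\eps}$, which yields the claimed bound $\sqrt{t\eps}$ (up to the harmless constant factor, which one can absorb by using the XOR-model convention; I would just quote the cited statement for the exact constant). The one genuinely delicate point — and the step I expect to be the main obstacle to write carefully — is the hybrid-coupling: one must be precise that the reprogrammed answers $\bar{x}$ are sampled \emph{independently of} $\O$ and held \emph{fixed} across all hybrids, so that hybrids $\ket{\phi^{(j-1)}}$ and $\ket{\phi^{(j)}}$ genuinely share the same state immediately before query $j$ and the same post-query unitaries, making the per-query difference the \emph{only} source of discrepancy; getting the bookkeeping of which oracle is used on which query right, and that the common-prefix state entering query $j$ really is the state whose query probabilities $\q_x(\ket{\phi_j})$ appear in the statement, is where care is needed. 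Everything else is the triangle inequality and Cauchy--Schwarz.
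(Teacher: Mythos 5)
The paper itself gives no proof of this lemma: it is imported verbatim (through~\cite{DFG13}) from~\cite{BBBV97}, so the only meaningful comparison is with the original BBBV argument, and your reconstruction is exactly that argument. The structure is sound, and you got the one genuinely delicate bookkeeping point right: with your hybrid ordering, hybrids $j-1$ and $j$ share the prefix in which the first $j-1$ queries are answered by $\O$, so the state hit by $\O_j-\O'_j$ is precisely $\ket{\phi_j}$ of the unmodified run, which is what makes the hypothesis $\sum_{(j,x)\in S}\q_x(\ket{\phi_j})\leq\epsilon$ the relevant quantity; the suffix unitaries are common and norm-preserving, and triangle inequality plus Cauchy--Schwarz finish the job.

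The one blemish is the constant. Your per-query estimate gives $2\sqrt{\epsilon_j}$ and hence $2\sqrt{t\epsilon}$ overall, and the XOR-model refinement (the affected components map to \emph{orthogonal} rather than anti-aligned answer states, since $\bar{x}$ is sampled independently of $\O$) only improves this to $\sqrt{2t\epsilon}$, not to the stated $\sqrt{t\epsilon}$; saying the constant is ``absorbed'' and then quoting the cited statement for it is circular if the goal is to prove the lemma as stated. In fairness, this is a well-known looseness in how BBBV Theorem~3.3 is quoted, and it is immaterial to this thesis: the lemma's only use is inside the game hops of Theorem~\ref{thm:FSLambda}, where the bound enters a term that is negligible regardless of constant factors, so $2\sqrt{t\epsilon}$ would serve identically. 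If you want a self-contained write-up, either state and prove the lemma with the explicit constant $2$ (or $\sqrt{2}$), or carry out the original BBBV accounting carefully enough to justify the exact constant rather than deferring to the citation.
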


\subsection{Distinguishing Quantum States}

A crucial problem in quantum information theory is {\em distinguishing quantum states}. Because quantum states form a continuum, and because the only way we have to extract information from them is by performing measurements, distinguishing different quantum states with certainty is not always possible. In fact, for any practical purpose two quantum states are `the same state' 
if {\em there is no physically admissible process extracting measurement outcomes with different distributions from those states}. In other words, it is only possible to distinguish different quantum states if we can perform operations on them leading to measurement outcome distributions which are themselves distinguishable. Because in this work we only deal with computationally bounded processes, it is clear that the minimal requirement for two states to be distinguishable is that they are (or can be efficiently transformed to) states which yield computationally distinguishable outcome distributions when measured.

The following lemma from~\cite{BV97} upperbounds the statistical distance between the distributions of measurements on two quantum states in terms of their Euclidean distance.
\begin{lemma}[{\cite[Lemma 3.6]{BV97}}]\label{lem:distances}
Let $\ket{\phi},\ket{\psi}$ be pure quantum states with Euclidean distance at most $\epsilon$. Then, performing the same measurement on $\ket{\phi},\ket{\psi}$ yields distributions with statistical distance at most~$4\epsilon$.
\end{lemma}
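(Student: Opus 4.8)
The plan is to reduce everything to a one-dimensional calculation using the standard fact that for pure states there is a global phase freedom, so we may choose representatives $\ket{\phi},\ket{\psi}$ with $\braket{\phi|\psi}$ real and nonnegative. First I would record the elementary relation between Euclidean distance and inner product: $\|\ket{\phi}-\ket{\psi}\|_2^2 = 2 - 2\,\mathrm{Re}\braket{\phi|\psi}$, so that the hypothesis $\|\ket{\phi}-\ket{\psi}\|_2 \le \epsilon$ translates into $\mathrm{Re}\braket{\phi|\psi} \ge 1 - \epsilon^2/2$, and hence (choosing the phase) $|\braket{\phi|\psi}| \ge 1-\epsilon^2/2$. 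Next I would fix an arbitrary measurement, which without loss of generality we may take to be a POVM $\{M_x\}$ with $\sum_x M_x = \Id$ and each $M_x \ge 0$; the two outcome distributions are $p_x = \braket{\phi|M_x|\phi}$ and $q_x = \braket{\psi|M_x|\psi}$, and the quantity to bound is $\sum_x |p_x - q_x|$.

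The key step is to control $|p_x - q_x|$ for a single effect $M$ with $0 \le M \le \Id$. Writing $\ket{\psi} = \alpha\ket{\phi} + \beta\ket{\phi^\perp}$ with $\ket{\phi^\perp}$ a unit vector orthogonal to $\ket{\phi}$, $|\alpha|^2 + |\beta|^2 = 1$, and $|\alpha| \ge 1-\epsilon^2/2$, one expands $\braket{\psi|M|\psi} - \braket{\phi|M|\phi}$ into three terms: an $(|\alpha|^2-1)\braket{\phi|M|\phi}$ term, a cross term $2\,\mathrm{Re}(\conj{\alpha}\beta\braket{\phi|M|\phi^\perp})$, and a $|\beta|^2\braket{\phi^\perp|M|\phi^\perp}$ term. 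Using $0 \le \braket{\phi|M|\phi}, \braket{\phi^\perp|M|\phi^\perp} \le 1$, the Cauchy–Schwarz bound $|\braket{\phi|M|\phi^\perp}| \le 1$ (valid since $0\le M\le\Id$), and $|\beta| \le \epsilon$ together with $|\alpha| \le 1$, each term is bounded by a small multiple of $\epsilon$; summing gives $|p_x-q_x| \le c\,\epsilon$ per outcome for a small constant $c$. To pass from a single effect to the sum over all outcomes, I would instead work directly with the two–outcome coarse-graining: for the set $T = \{x : p_x \ge q_x\}$, we have $\sum_x |p_x - q_x| = 2\big(P(T) - Q(T)\big)$ where $P(T) = \sum_{x\in T} p_x = \braket{\phi|M_T|\phi}$ with $M_T = \sum_{x\in T} M_x$ again satisfying $0 \le M_T \le \Id$. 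So the whole statistical distance is bounded by $2$ times the single-effect bound, i.e.\ by $2c\,\epsilon$; tracking the constants through the expansion above yields $2c \le 4$, giving the claimed $4\epsilon$.

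The main obstacle is bookkeeping of constants: the three-term expansion naively produces a bound like $(1-|\alpha|^2) + 2|\alpha||\beta| + |\beta|^2$, and one has to argue carefully — using $1-|\alpha|^2 = |\beta|^2 \le \epsilon^2$ and $2|\alpha||\beta| \le 2\epsilon$ — that this is at most $2\epsilon + \epsilon^2 \le 3\epsilon$ for $\epsilon$ small (and trivially the statistical distance never exceeds $2$, handling large $\epsilon$), so that after the factor $2$ from coarse-graining one still lands at or below $4\epsilon$. An alternative, cleaner route that avoids the $\ket{\phi^\perp}$ decomposition is to bound the trace distance $\frac12\|\,\ketbra{\phi} - \ketbra{\psi}\,\|_1 = \sqrt{1 - |\braket{\phi|\psi}|^2} \le \sqrt{1 - (1-\epsilon^2/2)^2} = \sqrt{\epsilon^2 - \epsilon^4/4} \le \epsilon$, and then invoke the fact that the trace distance equals the maximal statistical distance over all measurements, which immediately gives statistical distance $\le 2\epsilon \le 4\epsilon$; I would likely present this version, since it is shorter and the factor $4$ in the statement is not tight.
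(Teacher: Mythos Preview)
The paper does not prove this lemma at all; it is stated as a preliminary fact and attributed to \cite[Lemma~3.6]{BV97}, so there is no in-paper proof to compare against. Your second route via the trace distance of pure states is correct and is in fact sharper than the stated bound: from $\|\ket{\phi}-\ket{\psi}\|_2\le\epsilon$ one gets $|\braket{\phi|\psi}|\ge \mathrm{Re}\braket{\phi|\psi}\ge 1-\epsilon^2/2$, hence $\tfrac12\|\ketbra{\phi}-\ketbra{\psi}\|_1=\sqrt{1-|\braket{\phi|\psi}|^2}\le\epsilon$, and since the trace distance upper-bounds half the total variation distance of the outcome distributions for any measurement, the statistical distance (in the paper's un-normalised convention $\sum_x|p_x-q_x|$) is at most $2\epsilon\le 4\epsilon$.

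One small remark on your first approach: the phase-alignment step is not available here, because the hypothesis fixes the vector representatives (Euclidean distance is phase-sensitive), but as you only use $|\braket{\phi|\psi}|\ge 1-\epsilon^2/2$, which already follows from the real-part bound, this does not matter. More substantively, your constant tracking in the $\ket{\phi^\perp}$ decomposition slips: $(1-|\alpha|^2)+2|\alpha||\beta|+|\beta|^2 = 2|\beta|^2+2|\alpha||\beta|\le 2\epsilon^2+2\epsilon$, not $2\epsilon+\epsilon^2$, so after the factor~$2$ from coarse-graining you land at $4\epsilon+4\epsilon^2$, which is not $\le 4\epsilon$ for all $\epsilon$. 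This is easily rescued (e.g.\ use the trivial bound $\sum_x|p_x-q_x|\le 2$ once $\epsilon$ is not small, or simply present the trace-distance argument), and you already flag the issue and propose the cleaner alternative, so the overall proposal is fine.
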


For mixed states on isolated systems, the {\em trace distance} is a useful mathematical tool which gives directly an upper bound on the probability of distinguishing two states {\em for any physical process}.
\begin{definition}[Trace Distance]
Let $\rho,\sigma \in \states{\Hilbert}$. The {\em trace distance} between $\rho$ and $\sigma$ is defined by:
$$
\left\| \rho - \sigma \right\|_{\tr} := \half \sum_i |\lambda_i|,
$$
where $\lambda_i$ are the eigenvalues of $\rho - \sigma$.
\end{definition}

We call the {\em totally, or maximally mixed (or entangled) state} over a physical system $A$ the mixed state $\tau_A := \frac{\Id}{\dim{\Hilbert_A}}$; it has the property that a measurement over {\em any} possible orthonormal basis on this state always yields the uniform distribution of possible outcomes. This state represents somehow `a state of maximal uncertainty', and a common technique to show that no information can be extracted from a quantum state is to show that such state has `low' trace distance from the maximally mixed state.

However, when trying to distinguish between two CPTP maps, or two possible states on a {\em non-isolated system}, the trace distance is not enough. The reason is that, because of entanglement, two states which are {\em different} on a joint system $AB$ might yield {\em the same} reduced state on a subsystem $A$. In that case, the trace distance on $A$ would be $0$, but a distinguisher with access to $B$ might still be able to tell them apart. In these cases, the {\em diamond norm} is used, which induces a distance between CPTP maps.
\begin{definition}[Diamond Norm]
If $\Phi$ is a CPTP map (quantum channel) from operator spaces $\states{\Hilbert_A}$ to $\states{\Hilbert_B}$, then its {\em diamond norm} is defined by:
$$
\left\| \Phi \right\|_{\diamond} := \sup_{\rho \in \states{\Hilbert_{AK}}} \left\| \left( \Phi \otimes \Id_K \right)(\rho)\right\|_{\tr},
$$
where $\Hilbert_K$ is {\em any} Hilbert space such that $\dim{\Hilbert_K} \geq \dim{\Hilbert_A}$.
\end{definition}
It can be shown that an upper bound to the probability of distinguishing two quantum channels $\Phi$ and $\Psi$ is given by $\| \Phi - \Psi \|_\diamond$.

\chapter{\QS0: Classical Security}\label{chap:QS0}

The first class of cryptographic security notions that we are going to analyze encompasses the weakest notions in the quantum world. Namely: no quantum at all. In our new labeling system, the security class \QS0 refers to all the security notions and concepts which make {\em no mention} of quantum information theory. That is, \QS0 is just classical cryptography, in a sense `pre-(post-quantum)'. Studying this security class is essential in order to understand how the results change when we introduce quantum adversaries.

In this chapter we introduce security models for different classical cryptographic primitives, starting from the very basic ones (such as secret-key encryption schemes) to more elaborated ones. We also introduce other building blocks and transformations from one primitive to another.

A key feature of this part of our work is to perform all this analysis using a formalism which sometimes deviates from the one conventionally used in the existing literature, but which has the advantage of being easier to translate to the quantum world.

\subsection{My Scientific Contribution in this Chapter}

Most of the material in this chapter 
can be found in the existing literature (see for example~\cite{KatzLindell,Goldreich1,Goldreich2}, and 
is part of the preliminary technical results needed to understand the challenges arising when modeling security scenarios in a quantum world. However, to the best my knowledge, the proof of Theorem~\ref{thm:INDCPAnotoINDCCA1} has never been made explicit before. In fact, separation examples between CPA and CCA scenarios in the scientific literature usually refer either to the public-key scenario (where one can exploit group homomorphic properties) or to the separation between CPA and CCA2. Moreover, all the material from Section~\ref{sec:ORAM} first appeared in~\cite{GKK17}, which is a joint work with Nikolaos P. Karvelas and Stefan Katzenbeisser.

\section{Building Blocks}\label{sec:QS0buildblocks}

We start our analysis of classical cryptographic primitives by recalling some basic building blocks which we will use throughout the rest of this work. In what follows, $\X$ and $\Y$ are (sub)sets of binary strings. W.l.o.g., we assume that $\X = \family{\X} := \left( \bin^\secpar \right)_\secpar$. The {\em key space \K} instead, is identified with $\family{\K} \subseteq \bin^{\s(n)}$ for security parameter $\secpar \in \NN$, where $\s$ is a polynomial function determined by the scheme considered. W.l.o.g. we assume that, for security parameter \secpar, keys are of bit size $\secpa$.

\subsection{Pseudorandom Number Generators}\label{sec:PRNG}

A {\em pseudorandom number generator (PRNG)} is a \DPT stateful algorithm which outputs bit strings with a distribution computationally indistinguishable from the uniform distribution over some set. There is no secret key involved, but a secret internal state of the algorithm determines the value to be output next. As the algorithm is deterministic, the same internal state produces the same output value, so the state must be updated after every execution, according to a procedure specified by the algorithm itself. The initial value of the PRNG's state is called the {\em seed}.

Formally, we give a slightly different definition.

\begin{definition}[Pseudorandom Number Generator (PRNG)]\label{dfn:PRNG}
Let $\p$ be a polynomial such that $\p(\secpar) \geq \secpar+1, \forall \secpar \in \NN$. A {\em pseudorandom number generator (PRNG)} with expansion factor $\p$ is a \DPT algorithm $\PRNG$ such that:
\begin{enumerate}
\item given as input a bit string $s \in \bin^\secpar$, (the {\em seed}), outputs a bit string $\PRNG(s) \in \bin^{\p(\secpar)}$; and
\item for any \PPT algorithm $\D$:
$$
\left| \Pr \left[ \D(r) \to 1 \right] - \Pr \left[ \D(\PRNG(s)) \to 1 \right] \right| \leq \negl,
$$
where $r\rand\bin^{\p(\secpar)}, s\rand\bin^\secpar$, and the probabilities are taken over the choice of $r$ and $s$, and the randomness of $\D$.
\end{enumerate}
\end{definition}

However, it is possible to show that with the above definition one can actually also define a procedure to output a stream of polynomially many values of bit size polynomial in \secpar. The idea is to define a bit stream, where some of the $\p(\secpar)$ output bits are used to form the stream, and the others are used to generate a new, updated seed for the \PRNG. Therefore, one usually speaks of {\em PRNG with \secpa-bit output}. Analogously, it is easy to see that bits truncated by \PRNG's output are also pseudorandom.

Moreover, it is possible to prove that the condition of indistinguishability from random is equivalent to the condition of {\em non-predictability}, that is, no \PPT algorithm can reliably guess the next bit output by \PRNG, even by observing polynomially many bits output in the past. Clearly, if one could predict such bits then the pseudorandomness property would be violated. The other direction of the equivalence is known as Yao's Test~\cite{YaoPRNG}.

PRNGs have many useful applications. Given the existence of a PRNG, it is always possible (see, e.g.,~\cite{Goldreich1}) to build a OWF (by encoding the input to the OWF as a seed for the PRNG), but it is also possible to show the converse. Namely, given a OWF \F, one can define a PRNG which outputs a hard-core bit of \F, computed on the seed. This construction can be iterated producing a PRNG which we denote by $\PRNG_\F$, and which outputs polynomially many hard-core bits of $\F,\F^2,\F^3,\ldots$.

\begin{construction}[Goldreich-Levin PRNG~\cite{GL89}]\label{constr:goldreichlevin}
Let $\F:\X \to \Y$ be a OWF with hard-core predicate $\hc_\F$ . Define a stateful \DPT algorithm $\PRNG_\F:\X \to \X$ which, given as input an \secpar-bit seed $x \in \X$, outputs the \secpar-bit string:
$$
\hc_\F(x) \| \hc_{\F^2}(x) \| \ldots \| \hc_{\F^\secpar}(x).
$$
We call $\PRNG_\F$ the {\em Goldreich-Levin construction for OWF \F}.
\end{construction}

\begin{theorem}[{\cite{GL89}}]\label{thm:OWFtoPRNG}
Construction~\ref{constr:goldreichlevin} is a PRNG.
\end{theorem}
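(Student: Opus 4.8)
The plan is to reduce the pseudorandomness of $\PRNG_\F$ to the hard-core property of the iterated functions $\F^i$, routing through the unpredictability characterisation (Yao's Test) rather than bounding a distinguishing advantage head-on. The decisive point, which is exactly what lets the statement hold for an arbitrary OWF and not merely a OWP, is that every output bit $\hc_{\F^i}(x)$ is a hard-core bit of the function $\F^i$ evaluated at the \emph{same} uniform seed $x$; by Proposition~\ref{prop:hcmult} (iterated) each $\F^i$ is a genuine OWF with hard-core predicate $\hc_{\F^i}$, so the guarantee of Definition~\ref{def:hc} applies directly to the uniform seed and I never have to reason about the distribution of an intermediate iterate $\F^i(x)$. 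This sidesteps the familiar obstruction of the Blum--Micali construction, where a permutation is needed precisely so that $\F^i(x)$ stays uniform.

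First I would record that $\PRNG_\F$ is \DPT and that, reading Construction~\ref{constr:goldreichlevin} with the natural iterated predicate $\hc_{\F^j}(x)=\hc_\F(\F^{j-1}(x))$, each later bit is computable in the forward direction: $\hc_{\F^{i+k}}(x)=\hc_\F\big(\F^{k-1}(\F^i(x))\big)$ is a function of $\F^i(x)$ alone, obtained by applying $\F$ and then $\hc_\F$ without ever inverting $\F$. To meet the expansion requirement $\p(\secpar)\geq\secpar+1$ of Definition~\ref{dfn:PRNG} one outputs $\hc_\F(x)\|\cdots\|\hc_{\F^{\p(\secpar)}}(x)$ for the desired polynomial $\p$; the argument below is insensitive to the exact number of bits.

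The core argument invokes Yao's Test: a string distribution is pseudorandom iff no \PPT predictor guesses the next bit from the preceding ones with non-negligible advantage, and since bit-reversal is an efficient, uniform-preserving permutation of positions, I may equivalently test the \emph{reversed} output $\hc_{\F^\secpar}(x),\ldots,\hc_\F(x)$. Suppose a predictor $P$, given $\hc_{\F^\secpar}(x),\ldots,\hc_{\F^{i+1}}(x)$, outputs $\hc_{\F^i}(x)$ with advantage $\eta$. I build an adversary $\A$ against the hard-core property of $\F^i$: on challenge $\F^i(x)$ for uniform $x$, $\A$ recomputes the context bits $\hc_{\F^{i+1}}(x),\ldots,\hc_{\F^\secpar}(x)$ from $\F^i(x)$ by the forward computation above, feeds them to $P$, and returns $P$'s guess for $\hc_{\F^i}(x)$. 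Because $x$ is uniform and the fed bits are exactly the real marginal on those positions, $\A$ predicts $\hc_{\F^i}(x)$ from $\F^i(x)$ with advantage $\eta$, contradicting Definition~\ref{def:hc} applied to the OWF $\F^i$ (Proposition~\ref{prop:hcmult}). Hence $\eta=\negl$ at every position, and a standard hybrid with a union bound over the polynomially many positions upgrades next-bit unpredictability to full pseudorandomness, yielding the theorem.

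I expect the main obstacle to be exactly this non-permutation subtlety: one must ensure the reduction's challenge lives on a \emph{uniform} seed so that the hard-core property of $\F^i$ is usable, and that the predictor's context consists only of \emph{later} bits (hence the reversal), since the earlier bits would require preimages of $\F^i(x)$ and are unavailable by one-wayness. A secondary point worth checking is that the predicate supplied by Proposition~\ref{prop:hcmult} is indeed of the compositional form $\hc_\F\circ\F^{j-1}$ that makes forward recomputation of the context possible; if one takes only the abstract existence of $\hc_{\F^i}$ from the proposition, the forward-computability of later bits from $\F^i(x)$ must be justified directly from how the iteration in Construction~\ref{constr:goldreichlevin} is assembled.
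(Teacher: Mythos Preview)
The paper does not actually prove this theorem; it states it with a citation and moves on, remarking only that the proof makes no assumption on adversarial queries to the OWF. So there is no in-paper argument to compare against.

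Your argument is the standard Blum--Micali/Yao reduction: reverse the output string, invoke next-bit unpredictability, and reduce prediction of the $i$-th bit to the hard-core property of $\F^i$ on a uniform seed, recomputing the later bits from the challenge $\F^i(x)$ by forward iteration. Within the paper's framework---i.e., granting Proposition~\ref{prop:hcmult} iterated, with the compositional reading $\hc_{\F^j}=\hc_\F\circ\F^{j-1}$---this is correct and is exactly the argument one would expect the citation to stand in for. Your proof is also consistent with the paper's one-line remark: the reduction only evaluates $\F$ and $\hc_\F$ forward on public inputs and never relies on any restricted oracle access.

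The caveat you flag at the end is well-placed and is the only genuinely delicate point. The entire weight of the non-permutation case rests on Proposition~\ref{prop:hcmult}: your reduction never needs $\F^{i-1}(x)$ to be uniform, only $x$, precisely because the proposition asserts that $\hc_{\F^i}$ is hard-core for $\F^i$ on uniform input. If one had Proposition~\ref{prop:hcmult} only abstractly (``some hard-core predicate for $\F^i$ exists'') without the compositional form, forward recomputation of the later bits from $\F^i(x)$ would not be guaranteed and the argument would break; you are right to isolate this as the point that must be checked. A minor observation: Construction~\ref{constr:goldreichlevin} as literally written outputs $\secpar$ bits from an $\secpar$-bit seed, so your remark about extending the output to $\p(\secpar)\geq\secpar+1$ bits is needed to meet Definition~\ref{dfn:PRNG}.
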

It must be noticed that the proof for the above theorem does not make any assumption on the adversary in terms of queries to the OWF. This fact will be important in the next chapter. It follows from Proposition~\ref{prop:hc} that a PRNG can be constructed by any OWF.

\begin{corollary}[OWF $\iff$ PRNG]\label{cor:OWFiffPRNG}
OWFs exist iff PRNGs exist.
\end{corollary}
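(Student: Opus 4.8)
The plan is to prove the two implications of the equivalence separately. The direction ``OWFs exist $\implies$ PRNGs exist'' is essentially a repackaging of Proposition~\ref{prop:hc} and Theorem~\ref{thm:OWFtoPRNG}, whereas the converse is a short counting argument.

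For ``OWF $\implies$ PRNG'', starting from an arbitrary OWF \F I would first apply Proposition~\ref{prop:hc} to obtain a OWF \H admitting a hard-core predicate $\hc_\H$; since that proposition also preserves the OWP property, when \F is a permutation the iterates $\hc_{\H^2},\hc_{\H^3},\ldots$ are well-defined (Proposition~\ref{prop:hcmult}) and Construction~\ref{constr:goldreichlevin} applies directly, with Theorem~\ref{thm:OWFtoPRNG} yielding a deterministic algorithm whose output on a uniform seed is indistinguishable from uniform. To meet the expansion requirement $\p(\secpar)\geq\secpar+1$ of Definition~\ref{dfn:PRNG} (and, beyond that, any polynomial expansion) I would extract a pseudorandom stream by reserving part of each block of output bits as a fresh seed for the next invocation, exactly as in the remark following Definition~\ref{dfn:PRNG}. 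For a general OWF one first reduces, by a standard preprocessing step not spelled out in the excerpt, to a length-preserving one, so that the iteration $\F^i$ appearing in Construction~\ref{constr:goldreichlevin} is meaningful.

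For ``PRNG $\implies$ OWF'', given a PRNG \PRNG I would use the same stream extraction to obtain, for the polynomial $\p(\secpar):=2\secpar$, a deterministic map $\F:=\family{\F}$ with $\F_\secpar:\bin^\secpar\to\bin^{2\secpar}$ whose output on a uniform seed is computationally indistinguishable from uniform on $\bin^{2\secpar}$, and I claim this \F is a OWF. If not, some \PPT \A satisfies $\Pr_{s\rand\bin^\secpar}\left[\A(\F(s))\to s' : \F(s')=\F(s)\right]\geq 1/\poly(\secpar)$ infinitely often; build a distinguisher \D that on input $y\in\bin^{2\secpar}$ runs $s'\from\A(y)$ and outputs $1$ iff $\F(s')=y$ (a \DPT check). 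On a pseudorandom input $y=\F(s)$ this \D outputs $1$ with probability at least $1/\poly(\secpar)$, while on a uniform $y$ it outputs $1$ only if $y$ lies in the image of $\F_\secpar$, an event of probability at most $2^\secpar/2^{2\secpar}=2^{-\secpar}$; the resulting advantage $1/\poly(\secpar)-2^{-\secpar}$ contradicts the pseudorandomness of \PRNG. The main obstacle — in fact the only genuinely delicate point in either direction — is the bookkeeping around expansion: in the forward direction one must promote the length-preserving Goldreich--Levin output to a properly expanding stream, and in the backward direction one must expand the PRNG enough (to $2\secpar$ bits) that its image is sparse, since a one-bit expansion alone would not render the distinguisher's gap non-negligible by this argument.
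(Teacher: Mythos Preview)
Your proposal is correct and mirrors the paper's treatment: the forward direction is exactly the combination of Proposition~\ref{prop:hc} and Theorem~\ref{thm:OWFtoPRNG} that the paper invokes, and your backward direction spells out the standard sparsity-based distinguisher that the paper only gestures at in the text preceding Construction~\ref{constr:goldreichlevin} (``by encoding the input to the OWF as a seed for the PRNG'', citing~\cite{Goldreich1}). One small caution on your patch for the general case: there is no simple ``standard preprocessing step'' that turns an arbitrary OWF into a length-preserving one suitable for the iteration $\F^i$ in Construction~\ref{constr:goldreichlevin}; the general OWF~$\Rightarrow$~PRNG implication is the full H{\aa}stad--Impagliazzo--Levin--Luby construction, which is considerably more involved than the Blum--Micali-style iteration and which both you and the paper absorb into the citation~\cite{HILL99} without further detail.
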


\subsection{Pseudorandom Functions}\label{sec:PRF}

A (family of) {\em pseudorandom functions (PRF)} from \X to \Y with key space \K is a family of efficiently computable functions $\PRF: \K \times \X \to \Y$ which, without knowledge of the secret key $k \in \K$ indexing the particular member of the family, is computationally indistinguishable from the collection of all functions from \X to \Y (denoted by $\Y^\X$). We identify \PRF as a \DPT algorithm computing \PRF for a specific security parameter \secpar. As a shorthand notation, we write $\PRF_k:\X\to\Y$ meaning the member of the family indexed by $k \in \K$.

\begin{definition}[Pseudorandom Function (PRF)]\label{def:PRF}
A (family of) {\em pseudorandom functions (PRF)} from \X to \Y with key space \K is a \DPT algorithm $\PRF: (k \in \K_\secpar, x \in \X_\secpar) \mapsto y \in \Y_\secpar$ such that for any \PPT algorithm \D it holds:
$$
\left| \Pr_{k \rand \K} \left[ \D^{\PRF_k} \to 1 \right] - \Pr_{\h \rand \Y^\X} \left[ \D^{\O_\h} \to 1 \right] \right| \leq \negl,
$$
where $\O_\h$ is an oracle computing \h (i.e., a random oracle), and the probabilities are over the choice of $k$ and $\h$, and the randomness of $\D$.
\end{definition}

In security reductions, PRFs are usually modeled as random oracles. However, unlike PRNGs, their security depends on the secrecy of the key used, because any party with knowledge of such key can trivially distinguish the PRF from a completely random function.

PRFs, being indistinguishable from random functions, 
can be used as PRNGs (for a vast majority of the possible keys).

\begin{theorem}[PRF $\implies$ PRNG]\label{thm:PRFimPRNG}
If a PRFs exist, then PRNGs exist.
\end{theorem}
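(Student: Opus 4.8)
The plan is to give a direct construction rather than routing through one-way functions. Given a PRF $\PRF:\K\times\X\to\Y$, write $\Y_\secpar=\bin^{\l(\secpar)}$ for a polynomial $\l$ (w.l.o.g.), fix pairwise distinct points $x_1,\ldots,x_m\in\X_\secpar$ with $m:=\lceil(\secpar+1)/\l(\secpar)\rceil$ (so $m\leq\secpar+1$, hence polynomial in $\secpar$), and define
$$
\PRNG(s):=\PRF_s(x_1)\,\|\,\PRF_s(x_2)\,\|\,\cdots\,\|\,\PRF_s(x_m),
$$
interpreting the $\secpar$-bit seed $s$ as a key in $\K_\secpar$ (legitimate, since keys have bit size $\secpar$). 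This is a \DPT algorithm whose output has length $\p(\secpar):=m\cdot\l(\secpar)\geq\secpar+1$, so it satisfies the expansion requirement of Definition~\ref{dfn:PRNG} with expansion factor $\p$.

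For the pseudorandomness property I would argue by contraposition. Suppose a \PPT distinguisher $\D$ tells $\PRNG(s)$ for $s\rand\bin^\secpar$ apart from $r\rand\bin^{\p(\secpar)}$ with non-negligible advantage. I would build a \PPT oracle distinguisher $\D'$ against the PRF: given oracle access to a function $g$ (which is either $\PRF_k$ for $k\rand\K$ or $\h\rand\Y^\X$), $\D'$ queries $g$ on $x_1,\ldots,x_m$, forms the string $z:=g(x_1)\|\cdots\|g(x_m)$, runs $\D(z)$, and outputs $\D$'s bit. Since $m$ is polynomial, $\D'$ makes only polynomially many oracle queries and is \PPT.

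The two cases then match the two distributions exactly. When $g=\PRF_k$ with $k\rand\K$, the string $z$ is distributed precisely as $\PRNG(s)$ with $s\rand\bin^\secpar$ (identifying the key with the seed). When $g=\h$ is a uniformly random function, the values $\h(x_1),\ldots,\h(x_m)$ are independent and uniform over $\Y_\secpar$ because the $x_i$ are pairwise distinct, so $z$ is uniform over $\Y_\secpar^{m}=\bin^{\p(\secpar)}$, i.e.\ distributed exactly as $r$. Hence the PRF-distinguishing advantage of $\D'$ equals the PRNG-distinguishing advantage of $\D$, which is non-negligible, contradicting Definition~\ref{def:PRF}. Therefore no such $\D$ exists, and $\PRNG$ is a PRNG.

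The construction and reduction are essentially routine, so I do not expect a genuine obstacle. The only points needing care are: the length bookkeeping ensuring $\p(\secpar)\geq\secpar+1$ (picking $m$ large enough, and noting $m$ stays polynomial); the fact that a uniformly random function evaluated at pairwise-distinct inputs yields independent uniform outputs, so the ``ideal'' string $z$ is genuinely uniform over $\bin^{\p(\secpar)}$; and checking that $\D'$ issues only polynomially many queries so that it is a legitimate \PPT adversary against the PRF. As an alternative one could invoke Corollary~\ref{cor:OWFiffPRNG} once PRF~$\implies$~OWF is at hand, but the direct argument above is shorter and self-contained.
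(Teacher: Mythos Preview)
Your proof is correct, and it is the standard direct construction. The paper itself does not give a proof of this theorem; it only precedes the statement with the informal sentence ``PRFs, being indistinguishable from random functions, can be used as PRNGs (for a vast majority of the possible keys)'', and then moves on. Your argument makes this precise in exactly the expected way: use the seed as the PRF key, evaluate at a handful of fixed distinct points, concatenate, and reduce any PRNG distinguisher to a PRF distinguisher by querying the oracle at those same points.

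The bookkeeping you flag is indeed all that needs checking, and it goes through under the paper's conventions: $\X_\secpar=\bin^\secpar$ so there are certainly $m\leq\secpar+1$ distinct points available; keys have bit size $\secpar$ and are sampled uniformly (Definition~\ref{def:PRF} writes $k\rand\K$), so identifying the seed with the key is legitimate; and $\D'$ makes only $m=\poly(\secpar)$ queries. One small remark on your proposed alternative: routing through Corollary~\ref{cor:OWFiffPRNG} would require an independent proof that PRF $\implies$ OWF, which the paper does not supply at this point (Corollary~\ref{cor:OWFiffPRF} is derived \emph{after} Theorems~\ref{thm:PRFimPRNG} and~\ref{thm:PRNGimPRF}, so invoking it here would be circular). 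Your direct argument is therefore not just shorter but actually the only self-contained option given the paper's ordering.
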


Still, one can show that PRFs can be built by using PRNGs, and therefore their existence is equivalent to the existence of OWFs.

\begin{theorem}[{\cite{GGM84}}]\label{thm:PRNGimPRF}
If a PRNGs exist, then PRFs exist.
\end{theorem}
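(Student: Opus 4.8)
The plan is to instantiate a PRF by the Goldreich--Goldwasser--Micali tree construction over a length-doubling PRNG. First I would use the remark following Definition~\ref{dfn:PRNG} to assume w.l.o.g.\ that we are given a PRNG $\PRNG$ with $2\secpar$-bit output, and write $\PRNG(s) = \PRNG_0(s) \| \PRNG_1(s)$ with $|\PRNG_0(s)| = |\PRNG_1(s)| = \secpar$. Then, for a key $k \in \bin^\secpar$ and an input $x = x_1 x_2 \cdots x_\secpar$, I would set
$$
\PRF_k(x) := \PRNG_{x_\secpar}\!\left( \PRNG_{x_{\secpar-1}}\!\left( \cdots \PRNG_{x_1}(k) \cdots \right) \right),
$$
that is, $\PRF_k(x)$ is the label of leaf $x$ in the complete depth-$\secpar$ binary tree whose root is seeded with $k$ and whose left/right edges apply $\PRNG_0$/$\PRNG_1$. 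This is computable by a \DPT algorithm, since it is just $\secpar$ successive evaluations of $\PRNG$.

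Next I would prove pseudorandomness by a hybrid argument across the levels of the tree. For $0 \le i \le \secpar$, let $H_i$ be the oracle that assigns an independent uniformly random $\secpar$-bit label to each node at depth $i$ and then expands labels below depth $i$ using the PRNG tree rule, all done lazily along queried paths only. Then $H_0$ is distributed exactly as $\PRF_k$ for $k \rand \K$, whereas $H_\secpar$ assigns a fresh uniform $\secpar$-bit label to each queried leaf and is therefore distributed exactly as a uniformly random function $\h \rand \Y^\X$ sampled lazily. Hence, for any \PPT distinguisher $\D$,
$$
\left| \Pr_{k \rand \K}\!\left[ \D^{\PRF_k} \to 1 \right] - \Pr_{\h \rand \Y^\X}\!\left[ \D^{\O_\h} \to 1 \right] \right| \le \sum_{i=1}^{\secpar} \left| \Pr\!\left[ \D^{H_{i-1}} \to 1 \right] - \Pr\!\left[ \D^{H_i} \to 1 \right] \right|.
$$

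I would then bound each consecutive gap by the PRNG security. If $\D$ runs in time $t = \poly(\secpar)$ it makes at most $q \le t$ queries, hence touches at most $q$ nodes at depth $i-1$; the distributions $H_{i-1}$ and $H_i$ differ only in whether the (at most $q$) depth-$i$ children of those nodes are obtained by applying $\PRNG$ to a uniformly random depth-$(i-1)$ seed, or are drawn uniformly at random directly. An inner hybrid over those $q$ nodes, together with a reduction that embeds a single external PRNG challenge at one (guessed) such node and lazily simulates every other node label itself, turns a gap of $\eps_i$ at level $i$ into a PRNG distinguisher of advantage at least $\eps_i / q$. Since $\PRNG$ is a PRNG, $\eps_i / q \le \negl$, so $\eps_i = q \cdot \negl = \negl$; summing over the $\secpar$ levels keeps the total advantage negligible, which establishes that $\PRF$ is a PRF.

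The main obstacle is the bookkeeping inside the reduction for the adaptively, lazily-built tree: one must check that the simulator remains \PPT (it keeps a table of at most $O(\secpar q)$ node labels), that planting the external challenge at the chosen depth-$(i-1)$ node is perfectly consistent with all other queries $\D$ may later make, and that the reduction need not know in advance which of the $q$ candidate nodes $\D$ will hit --- this last point is handled by an additional guessing step, which is exactly what costs the factor $q$ in the advantage loss. Everything else, including the definition of the construction and the outer level-by-level hybrid, is routine.
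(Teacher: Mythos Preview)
Your proposal is correct and is precisely the classical Goldreich--Goldwasser--Micali tree construction and level-by-level hybrid argument that the paper invokes by citing~\cite{GGM84}; the paper itself does not spell out a proof. Note also that your argument relies on tracking and lazily simulating the (classical) query paths of~$\D$, which is exactly the ``assumption on the query capabilities of the adversary'' the paper remarks on immediately after the theorem.
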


However, unlike in the case of Theorem~\ref{thm:OWFtoPRNG}, the proof does make assumptions on the query capabilities of the adversary.

\begin{corollary}\label{cor:OWFiffPRF}
OWF exist iff PRF exist.
\end{corollary}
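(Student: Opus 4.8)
The plan is to obtain both implications purely by chaining together results already established in this section, so that no new construction is required.

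For the direction $\PRF \implies \text{OWF}$, I would first invoke Theorem~\ref{thm:PRFimPRNG} to conclude that the existence of a PRF family yields a PRNG (a PRF keyed by a random key, evaluated on a fixed point or stretched appropriately, is indistinguishable from random, hence a PRNG). Then Corollary~\ref{cor:OWFiffPRNG} gives that the existence of a PRNG implies the existence of a OWF. Composing the two arrows yields $\PRF \implies \text{OWF}$. For the converse, $\text{OWF} \implies \PRF$, I would run the chain in the other order: by Corollary~\ref{cor:OWFiffPRNG} a OWF yields a PRNG, and then Theorem~\ref{thm:PRNGimPRF} (the GGM tree construction of~\cite{GGM84}) turns that PRNG into a PRF family. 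Composing gives $\text{OWF} \implies \PRF$.

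Since each arrow in the two chains preserves the predicate ``exists,'' combining the directions gives the claimed equivalence $\text{OWF} \iff \PRF$. The only point deserving a remark is that all the genuine cryptographic content---the hardness amplification inside the Goldreich--Levin construction, Yao's next-bit characterization underlying the PRNG definition, and the length-doubling-plus-binary-tree argument of GGM---is already packaged inside the cited statements; the corollary itself is mere transitivity.

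I do not expect a real obstacle here. The only thing to be careful about is selecting the correct readings of Corollary~\ref{cor:OWFiffPRNG}: $\text{OWF} \Rightarrow \PRNG$ for one direction and $\PRNG \Rightarrow \text{OWF}$ for the other, both of which the ``iff'' statement supplies. One may also note in passing (as foreshadowed in the text after Theorem~\ref{thm:PRNGimPRF}) that, unlike the OWF-to-PRNG step, the PRNG-to-PRF step does make assumptions on the query capabilities of the distinguisher; this is relevant for the post-quantum discussion in the next chapter but does not affect the classical equivalence proved here.
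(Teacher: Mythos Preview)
Your proposal is correct and takes essentially the same approach as the paper: the corollary is stated without proof precisely because it follows by chaining Corollary~\ref{cor:OWFiffPRNG} with Theorems~\ref{thm:PRFimPRNG} and~\ref{thm:PRNGimPRF}, exactly as you describe. Your additional remark about the query-capability caveat in the GGM step is also consistent with the paper's comment following Theorem~\ref{thm:PRNGimPRF}.
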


\subsection{Pseudorandom Permutations}

{\em Pseudorandom permutations (PRP)} are just PRFs which also happen to be (invertible) permutations on some space \X, for any choice of key. That is, a PRP \P is a family of permutations (and their inverses) which is computationally indistinguishable from the family $S(\X)$ of all the permutations on \X. As in the PRF case, we identify a PRP \PRP with the \DPT algorithm evaluating it, and as a shorthand notation, we write $\PRP_k:\X\to\X$ meaning the member of the (circuit or function) family indexed by $k \in \K$.

We start by defining a {\em weak} PRP, that is, indistinguishable from random to any adversary who does not have oracle access to the inverse permutation.

\begin{definition}[Weak Pseudorandom Permutation (WPRP)]\label{def:WPRP}
A (family of) {\em weak pseudorandom permutations (WPRP)} on \X with key space \K is a pair of \DPT algorithm $(\PRP,\PRP^{-1}): (k \in \K, x \in \X) \mapsto x' \in \X$ such that:
\begin{enumerate}
\item $\forall k \in \K \implies \PRP_k, \PRP^{-1}_k$ are permutations on $\X$;
\item $\forall k \in \K \implies (\PRP_k)^{-1} = \PRP^{-1}_k$; and
\item for any \PPT algorithm \D it holds:
$$
\left| \Pr_{k \rand \K} \left[ \D^{\PRP_k} \to 1 \right] - \Pr_{\p \rand S(\X)} \left[ \D^{\O_\p} \to 1 \right] \right| \leq \negl,
$$
where $\O_\p$ is an oracle for \p, and the probabilities are over the choice of $k$ and $\p$, and the randomness of $\D$.
\end{enumerate}
\end{definition}

In many applications though, we also need the possibility of inverting the permutations, hence we we also require the existence of another \DPT algorithm $\PRP^{-1}$ computing the inverse permutation. A PRP is called {\em strong} if it maintains pseudorandomness also in this setting.

\begin{definition}[Strong Pseudorandom Permutation (SPRP)]\label{def:SPRP}
A (family of) {\em strong pseudorandom permutations (SPRP)} on \X with key space \K is a pair of \DPT algorithms $(\PRP,\PRP^{-1}): (k \in \K, x \in \X) \mapsto x' \in \X$ such that:
\begin{enumerate}
\item $\forall k \in \K \implies \PRP_k, \PRP^{-1}_k$ are permutations on $\X$;
\item $\forall k \in \K \implies (\PRP_k)^{-1} = \PRP^{-1}_k$; and
\item for any \PPT algorithm \D it holds:
$$
\left| \Pr_{k \rand \K} \left[ \D^{\PRP_k,\PRP^{-1}_k} \to 1 \right] - \Pr_{\p \rand S(\X)} \left[ \D^{\O_\p,\O_{\p^{-1}}} \to 1 \right] \right| \leq \negl,
$$
where $\O_\p$ is an oracle for \p, $\O_{\p^{-1}}$ is an oracle for $\p^{-1}$, and the probabilities are over the choice of $k$ and $\p$, and the randomness of $\D$.
\end{enumerate}
\end{definition}

When left unspecified, by `PRP' we mean the strong version. A PRP is clearly also a PRF, but not necessarily the other way around. However, there exist constructions of PRPs from PRFs, such as the {\em Feistel} construction. Therefore, the existence of PRPs is also equivalent to the existence of OWFs.

\begin{theorem}[PRF $\iff$ PRP]\label{thm:PRFiffPRP}
PRFs exist iff PRPs exist.
\end{theorem}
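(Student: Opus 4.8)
The plan is to prove the two implications separately; throughout, ``PRP'' denotes the strong variant of Definition~\ref{def:SPRP}, following the paper's convention. The direction PRP~$\implies$~PRF is almost immediate: given a strong PRP $(\PRP,\PRP^{-1})$, set $\PRF := \PRP$ and discard the inversion algorithm. For any \PPT distinguisher $\D$ issuing $q = \poly(\secpar)$ queries, running $\D$ against $\PRP_k$ is a forward-only distinguishing attack on the PRP, so by Definition~\ref{def:SPRP} the gap $\big|\Pr_{k}[\D^{\PRP_k}\to 1] - \Pr_{\p\rand S(\X)}[\D^{\O_\p}\to 1]\big|$ is negligible (the distinguisher simply never invokes its inverse oracle). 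It remains to bound $\big|\Pr_{\p\rand S(\X)}[\D^{\O_\p}\to 1] - \Pr_{\h\rand\Y^\X}[\D^{\O_\h}\to 1]\big|$ on the domain $\X_\secpar = \bin^\secpar$; this is the standard permutation-vs-function switching lemma, since a uniformly random permutation and a uniformly random function answer identically until two queries collide, an event of probability at most $\binom{q}{2}2^{-\secpar} \le q^2 2^{-\secpar-1} = \negl$. The triangle inequality then shows $\PRF$ satisfies Definition~\ref{def:PRF}.

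For PRF~$\implies$~PRP I would use the Luby--Rackoff construction. From a PRF $\PRF:\K\times\bin^\secpar\to\bin^\secpar$ one first obtains a PRF $\PRF':\K\times\bin^{\secpar/2}\to\bin^{\secpar/2}$ on half-length blocks by zero-padding the input and truncating the output, which trivially preserves pseudorandomness. For independent keys $k_1,\ldots,k_4$, let $\Psi_{k_1,\ldots,k_4}:\bin^\secpar\to\bin^\secpar$ be the four-round Feistel network whose $i$-th round maps $(L,R)\mapsto(R,\,L\xor\PRF'_{k_i}(R))$. Each round is an efficiently invertible permutation, so setting $\PRP:=\Psi_{k_1,\ldots,k_4}$ and letting $\PRP^{-1}$ run the four rounds in reverse yields a pair of \DPT algorithms satisfying conditions~(1)--(2) of Definition~\ref{def:SPRP}. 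For condition~(3), a four-step hybrid argument replaces $\PRF'_{k_i}$ one index at a time by an independent uniformly random function $\h_i:\bin^{\secpar/2}\to\bin^{\secpar/2}$, each replacement costing at most one PRF-distinguishing advantage and hence negligible in total. This reduces the statement to the information-theoretic Luby--Rackoff theorem: the four-round Feistel network built from independent random round functions is indistinguishable from a uniformly random permutation on $\bin^\secpar$ even under combined forward/inverse oracle access, with advantage $O(q^2 2^{-\secpar/2})$ for $q$ queries.

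The main obstacle is precisely that last, information-theoretic step. Its proof is a combinatorial analysis of the internal Feistel wires: one defines a ``bad'' event collecting all collisions among intermediate values that a $q$-query adversary could force, bounds its probability by $O(q^2 2^{-\secpar/2}) = \negl$ through a union bound over pairs of queries, and argues that, conditioned on the bad event not occurring, the oracle's input--output behaviour is distributed exactly like that of a uniformly random permutation consistent with the answers already given. The delicate point --- and the reason four rounds are needed rather than the three that suffice for a weak PRP (Definition~\ref{def:WPRP}) --- is that the adversary is adaptive and may interleave forward and inverse queries, which complicates the case analysis of which wires are ``fresh'' at each step. Everything outside this core is routine bookkeeping with hybrids and the triangle inequality, and assembling the two implications gives the claimed equivalence.
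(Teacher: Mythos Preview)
Your proposal is correct and follows exactly the approach the paper alludes to: the paper states the theorem without proof, merely remarking beforehand that ``A PRP is clearly also a PRF'' and that ``there exist constructions of PRPs from PRFs, such as the \emph{Feistel} construction.'' Your write-up supplies the missing details---the switching lemma for the forward direction and four-round Luby--Rackoff with a hybrid argument for the converse---which is precisely the standard instantiation of that sketch.
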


\section{Secret-Key Encryption Schemes}\label{sec:SKES}

A very fundamental object in cryptography is {\em secret-key (or, symmetric-key) encryption schemes (SKES)}. In what follows, $\X$ and $\Y$ represent the {\em plaintext and ciphertext message spaces} respectively, 
while \K is the {\em key space}.

\begin{definition}[Secret-Key Encryption Scheme (SKES)]\label{def:skes}
A {\em secret-key encryption scheme (SKES)} with plaintext space \X, ciphertext space \Y, and key space \K is a tuple of \PPT algorithms $\E := \E_{\K,\X,\Y} := (\KGen,\Enc,\Dec)$:
\begin{enumerate}
\item $\KGen: \to \K$;
\item $\Enc: \K \times \X \to \Y$;
\item $\Dec: \K \times \Y \to \X \cup \set{\bot}$;
\end{enumerate}
such that $\forall\ n \in \NN, x \in \X, k \from \KGen \implies \Dec(k, \Enc(k,x)) = x$.
\end{definition} 

\vfill

Notice the following:

\begin{itemize}
\item $\KGen$ only gets as input a security parameter, but $\Enc,\Dec$ also need as input the correct security parameter related to the second input (the secret key) they receive. In order to lighten notation and w.l.o.g. we just assume that $\secpar$ is also appended to every $k \from \KGen$, so that every key also implicitly contains the security parameter.
\item Strictly speaking, it is not necessary to define $\bot$ as a possible output for $\Dec$. However, this is useful when defining schemes which can also {\em reject} certain ciphertexts (such as {\em CCA2 secure encryption schemes}).
\item As a shorthand notation, we will write $\Enc_k$ meaning the $\Enc$ algorithm with $k \in \K$ fixed as a first input; analogously for $\Dec_k$.
\item $\KGen$ is always assumed to be a nondeterministic algorithm, otherwise the encryption scheme would be trivial.
\item $\Enc$ can be a probabilistic algorithm, so it is certainly possible that two different executions of $\Enc_k(x)$ for fixed $k$ and $x$ yield two different ciphertexts. However, those ciphertexts would still decrypt to the same $x$ through $\Dec_k$.
\item As an immediate consequence of the previous point, it is clear that, for a given $k \in \K$, the image sets of different plaintexts are disjoint. That is: $x\neq x' \implies \Supp{\Enc_k(x)} \cap \Supp{\Enc_k(x')} = \emptyset$.
\item The behavior of $\Dec_k$ is unspecified (and dependent on the SKES considered) if given as input an element of $\Y$ which is not a valid encryption, i.e., of the form $\Enc_k(x)$ for some $x \in \X$.
\item If $\Enc_k$ is nondeterministic for all $k \in \K$, then we say that $\E$ is {\em randomized}, otherwise we say that $\E$ is {\em deterministic}.
\end{itemize}

Finally, notice that Definition~\ref{def:skes} does not say anything about the {\em security} of a SKES. We will study this aspect in the next section. In particular, for a SKES to be considered `secure', the size of $\Supp{\KGen(\secpar)}$ must be superpolynomial in $\secpar$. One of the most basic examples of SKES is the well known {\em one-time pad (OTP)}.

\begin{construction}[One-Time Pad (OTP)]\label{constr:otp}
Let $\X = \K = \Y = \bin^\secpar$. Define the {\em one-time pad (OTP) on \secpar bits} $\E = \E_{\K,\X,\Y}:=(\KGen,\Enc,\Dec)$ as the SKES with key space $\K $, plaintext space \X, and ciphertext space $\Y$, defined as follows:
\begin{enumerate}
\item $\KGen \to k$, with  $k \rand \K$;
\item $\Enc_k(x) := x \xor k$;
\item $\Dec_k(y) := y \xor k$.
\end{enumerate}
\end{construction}

It is well known~\cite{Shannon01} that the OTP is {\em information-theoretically secure}, as long as the key is completely random and only used once.

\subsection{Semantic Security}

In order to analyze the security of a SKES, we first have to define what it means for a SKES to be `secure'. That is, we have to define a `meaning', i.e., a {\em semantics} of the term `security'. Intuitively, we want to formalize the fact that no reasonable adversary should be able, given a ciphertext, to find out any `interesting' information about the underlying plaintext. There are three aspects to consider here.

First of all, we should define what a `reasonable' adversary is. In our case we will consider computationally bounded adversaries, that is, adversaries as \PPT algorithms, because we consider computational security. However, adversaries could be given additional power in the form of oracles. We will see a few examples in the next sections, while in this part we will start with the basic scenario (without oracles).

Secondly, we should define what constitutes `interesting information' about the underlying plaintext. We do not consider `interesting' all that information which is already publicly available, leaked, or manifest. For example, the length (bit size) of the plaintext is usually identifiable by only looking at the length of the ciphertext. Moreover, if some information about the plaintext is known a priori, e.g.: `the message starts with a vowel', we do not consider an adversary succesful if he is only able to tell that the message starts with a vowel, because that fact is already known. We want security to protect the encryption scheme only against those adversaries who can extract `interesting' information from the ciphertexts.

Finally, we should define the `winning conditions' for our adversaries, so that we can define our schemes `secure' if they prevent the adversaries from reaching those conditions. In theory, we could define a scheme to be `secure' if every adversary fails consistently in his goals, regardless of the choice of keys and plaintexts he intends to attack. However, this is not reasonable to expect, for three reasons:

\begin{itemize}
\item the choice of some particular key might influence the adversary's winning probability. For example, what if the message is encrypted with a key that the adversary happens to know as well?
\item The choice of the plaintexts is important as well. On one hand, we need the scheme to be secure even in the worst case scenario (that is, the best case scenario from the adversary's perspective.) On the other hand we cannot leave arbitrary freedom to the adversary in choosing the underlying plaintext - otherwise he could just break the encryption of a message he already knows, but that would not be `interesting' information.
\item The adversary might just get lucky. For example, when trying to decrypt a single bit of the message, he might just guess randomly, and still be succesful $50\%$ of the times.
\end{itemize}

In literature, {\em semantic security} is the well-established golden standard in defining the security of an encryption scheme. Semantic security is a simulation-based security notion, where the success probability of an adversary trying to guess meaningful information about a ciphertext is compared to that of a {\em simulator}, which has the same goal as the adversary but is not allowed to see the ciphertext at all. The probability is taken over the internal randomness of the algorithms (and, hence, over all the keys), and `interesting' and `non-interesting' information is defined in terms of a {\em target function $\f$} and an {\em auxiliary information function $\h$}, respectively (these are functions of the possible plaintexts.) The goal of the adversary/simulator is to guess $\f(x)$ when having access to $\h(x)$, for a certain plaintext $x$ drawn from a chosen distribution. The scheme is considered secure if the adversary and the simulator have roughly the same probability of guessing $\f(x)$. 

There are many, different but equivalent ways to define semantic security for SKES. In this work, we follow the approach from~\cite{Goldreich2}.

\begin{definition}[SEM Adversary, SEM Simulator]\label{def:SEMadvsim}
Let $\E := \E_{\K,\X,\Y}$ 
 be a SKES, and $\f,\h:\bin^* \to \bin^*$ two functions efficiently computable and polynomially bounded in the input bit size. A {\em SEM adversary \A for \E} is a \PPT algorithm $\A: \Y \times \Supp{\h} \to \Supp{\f}$. A {\em SEM simulator \S for \E} is a \PPT algorithm $\S: \Supp{\h} \to \Supp{\f}$.
\end{definition}

Notice that, w.l.o.g., we can assume that $\h(x)$ always includes the bit size of the plaintext $x$. We assume that \h and \f are efficiently computable, but actually, as shown in~\cite{Goldreich2}, this is redundant.

\begin{experiment}[$\gameSEMA$]\label{expt:SEMA}
Let $\E$ be a SKES, and $\A$ a SEM adversary. The {\em SEM experiment} proceeds as follows:
\begin{algorithmic}[1]
\State \textbf{Input:} $n \in \NN$, $\f,\h:\bin^* \to \bin^*$ efficiently computable and polynomially bounded in the input bit size, $\M := \family{\M}$, where $\M_n$ are probability distributions over $\X_\secpar$ with $\card{\M_n} = \poly(n)$
\State $k \from \KGen$
\State $m \from \M_n$
\State $c \from \Enc_k(m)$ \Comment{this is called `SEM challenge query'}
\State $f \from \A(c,\h(m))$
\If{$f = \f(m)$}
	\State \textbf{Output:} $1$
\Else
	\State \textbf{Output:} $0$
\EndIf
\end{algorithmic}
\end{experiment}

\vfill

\begin{experiment}[$\gameSEMS$]\label{expt:SEMS}
Let $\E$ be a SKES, and $\S$ a SEM simulator. The {\em simulated SEM experiment} proceeds as follows:
\begin{algorithmic}[1]
\State \textbf{Input:} $n \in \NN$, $\f,\h:\bin^* \to \bin^*$ efficiently computable and polynomially bounded in the input bit size, $\M := \family{\M}$, where $\M_n$ are probability distributions over $\X_\secparam$ with $\card{\M_n} = \poly(n)$
\State $k \from \KGen$
\State $m \from \M_n$
\State $f \from \S(\h(m))$
\If{$f = \f(m)$}
	\State \textbf{Output:} $1$
\Else
	\State \textbf{Output:} $0$
\EndIf
\end{algorithmic}
\end{experiment}

\begin{definition}[Semantic Security (SEM)]\label{def:SEM}
A SKES $\E$ is {\em semantically secure (SEM)} iff, for any SEM adversary \A there exists a SEM simulator \S such that, for every efficiently computable $\f,\h:\bin^* \to \bin^*$ polynomially bounded in the input bit size, for every probability ensemble $\M := \family{\M}$, where $\M_n$ are probability distributions over $\X_\secpar$ with $\card{\M_n} = \poly(n)$, it holds:
$$
\left| \Pr \left[ \gameSEMA(\M,\f,\h) \to 1\right] - \Pr \left[ \gameSEMS(\M,\f,\h) \to 1\right] \right| \leq \negl,
$$
where the probabilities are taken over the randomness of $\A,\E,\M,\S$.
\end{definition}

Intuitively, the notion of SEM tells us the following: any information about the plaintext the adversary could guess from the ciphertext, could also be guessed by only looking at publicly available information. That means, the ciphertext does not leak any meaningful information about the plaintext. This security notion captures in a very complete way what we want from an encryption scheme, but it has the drawback of being quite involved formally, and cumbersome to use in security proofs. Because of this, different notions of security are often used, which are equivalent to SEM but easier to formalize.

\subsection{Ciphertext Indistinguishability}

Another notion of security for encryption schemes is {\em indistinguishability of ciphertexts (IND)}. Unlike SEM, this notion is game-based instead of simulation-based: there is no simulator at all, and security requires that no reasonable adversary can win a certain security game with probability substantially better than merely guessing. The IND security game consists in distinguishing the encryption of two different plaintexts (chosen by the adversary). Although, unlike in the case of SEM, it is unclear at a first glance that IND captures in a complete way exactly what we require from a `secure' encryption scheme, we will see that the two notions are actually equivalent.

As in SEM, we model IND adversaries as \PPT algorithms, as we are interested in computational security. However, in the IND game it is usually convenient to separate the adversary in two {\em stages}, each one with a specific function. The first stage, the {\em message generator \M}, chooses two messages from the plaintext space -- the idea being that, in order to achieve the strongest security notion, the adversary is allowed to choose the most favourable scenario when playing this game. Then, one of these two messages is selected at random and encrypted with a key unknown to the adversary. Finally, the second stage of the adversary, the {\em distinguisher \D}, receives the resulting ciphertext, and his goal is to guess which one of the two plaintexts was encrypted. Formally, the adversary outputs a bit, and he wins the game if that bit is equal to the secret bit used to select one of the two plaintexts.

More formally, we define an {\em IND adversary} as follows.

\begin{definition}[IND Adversary]\label{def:INDadv}
Let $\E$ be a SKES. An {\em IND adversary \A for \E} is a pair of \PPT algorithms $\A := (\M,\D)$, where:
\begin{enumerate}
\item $\M: \to \X \times \X \times \words$ is the {\em IND message generator};
\item $\D: \Y \times \words \to \bin$ is the {\em IND distinguisher}
\end{enumerate}
\end{definition}

The security experiment related to the IND notion is as follows.

\begin{experiment}[$\gameIND$]\label{expt:IND}
Let $\E$ be a SKES, and $\A:=(\M,\D)$ an IND adversary. The {\em IND experiment} proceeds as follows:
\begin{algorithmic}[1]
\State \textbf{Input:} $\secpar \in \NN$
\State $k \from \KGen$
\State $(m^0,m^1,\state) \from \M$
\State $b \rand\bin$
\State $c \from \Enc_k(m^b)$ \Comment{this is called `IND challenge query'}
\State $b' \from \D(c,\state)$
\If{$b = b'$}
	\State \textbf{Output:} $1$
\Else
	\State \textbf{Output:} $0$
\EndIf
\end{algorithmic}
The {\em advantage of \A} is defined as:
$$
\advIND := \Pr \left[ \gameIND \to 1 \right] - \half .
$$
\end{experiment}

Notice the following:

\begin{itemize}
\item \D and \M are part of the same `entity' (the IND adversary \A), so that they should be allowed to exchange information. In particular, \D should know which are the two original messages generated by \M. In the security game, this is modeled by exchanging a state string $\state$ from \M to \D (obviously this string has bit size at most polynomial in the security parameter since \M is \PPT.)
\item There is no need to impose the condition that the two plaintexts generated by \M must be distinct, as the security notion requires that {\em all} adversaries (including those who choose distinct messages) fail at winning the game.
\item Since there are only two messages to choose from, the adversary can always win with $50\%$ probability by guessing randomly. Therefore, the advantage of the adversary is measured in terms of doing better than merely guessing.
\item The probability is over $b$ and the internal randomness of \A and \KGen.
\end{itemize}

\begin{definition}[Indistinguishability of Ciphertexts (IND)]\label{def:IND}
A SKES $\E$ has {\em indistinguishable encryptions (or, it is IND secure)} iff, for any IND adversary $\A$ it holds that: $\advIND \leq \negl$.
\end{definition}

The advantage of the IND notion is that, being game-based, it is easier to use in cryptographic reductions. At the same time, one can show that it is equivalent to IND.

\begin{theorem}[{\cite{Goldreich2}}]\label{thm:INDiffSEM}
A SKES is IND secure iff it is SEM secure.
\end{theorem}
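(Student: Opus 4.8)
The plan is to prove the two implications separately; the one genuinely subtle point throughout is the quantifier order in Definition~\ref{def:SEM}, where the simulator \S is allowed to depend only on the SEM adversary \A and must then work for \emph{every} target/auxiliary pair $\f,\h$ and \emph{every} message ensemble \M. This dictates the shape of the proof: rather than a purely statistical argument, one exhibits a single ``canonical'' simulator, and IND security is exactly what is needed to show it succeeds. Throughout I use the standing conventions that $\h(m)$ encodes $|m|$ and (following~\cite{Goldreich2}) that it suffices to treat efficiently computable $\f,\h$.

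\emph{IND $\Rightarrow$ SEM.} Assume \E is IND secure and let \A be an arbitrary SEM adversary. Define the canonical simulator \S: on input $\h(m)$, read $\ell := |m|$ off it, sample a fresh key $k'\from\KGen$, compute a dummy ciphertext $c'\from\Enc_{k'}(0^\ell)$, run $f\from\A(c',\h(m))$, and output $f$. Then \S is \PPT and depends only on \A. Fix any $\f,\h$ and any ensemble $\M := \family{\M}$ (assumed efficiently samplable, or else with its polynomial-size support given as non-uniform advice). Writing $p_0 := \Pr[\A(\Enc_k(m),\h(m)) = \f(m)]$ and $p_1 := \Pr[\A(\Enc_{k'}(0^{|m|}),\h(m)) = \f(m)]$ for $m\from\M_\secpar$ and $k,k'\from\KGen$, we have $\Pr[\gameSEMA(\M,\f,\h)\to 1] = p_0$ and $\Pr[\gameSEMS(\M,\f,\h)\to 1] = p_1$, since the key used by the encryption in the simulated experiment is independent of $m$ and distributed as \KGen, hence may be taken to be the one \S samples. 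If $|p_0-p_1| > 1/\p(\secpar)$ for some polynomial \p and infinitely many \secpar, we obtain an IND adversary $(\M',\D')$: $\M'$ samples $m\from\M_\secpar$ and outputs $(m,\,0^{|m|},\,\state := (\h(m),\f(m)))$, and $\D'$ on input $(c,(a,\phi))$ runs $f\from\A(c,a)$ and outputs $0$ if $f=\phi$ and $1$ otherwise (flipping the output if necessary). Since $\Pr[\D'\to 0\mid b=0] = p_0$ and $\Pr[\D'\to 0\mid b=1] = p_1$, this gives $\advIND = \half|p_0 - p_1| > 1/(2\p(\secpar))$, contradicting IND security. Hence \E is SEM secure.

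\emph{SEM $\Rightarrow$ IND.} We prove the contrapositive. Suppose \E is not IND secure, so some IND adversary $\A = (\M,\D)$ has $\advIND > 1/\p(\secpar)$ for a polynomial \p and infinitely many \secpar. Fix such an \secpar. Averaging over \M's internal coins, there is a fixing of them under which the conditional advantage is still $> 1/\p(\secpar)$; hard-wire it, so that $(m^0,m^1,\state)$ become fixed strings, with necessarily $m^0 \neq m^1$ (if $m^0 = m^1$ the challenge ciphertext is independent of $b$ and the advantage is $0$). Define the ensemble $\M'$ with $\M'_\secpar$ outputting $m^b$ for $b\rand\bin$; the auxiliary function $\h(x) := 1^{|x|}\|\state$ for all $x$ (constant, and it encodes $|x|$); and the target function \f with $\f(m^0):=0$, $\f(m^1):=1$, $\f(x):=0$ otherwise. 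Let \B be the SEM adversary that on input $(c,\h(x))$ extracts \state, runs $b'\from\D(c,\state)$, and outputs $b'$. Then $\Pr[\gameSEMA[\B](\M',\f,\h)\to 1] = \Pr[b'=b] > \half + 1/\p(\secpar)$, where $b$ is the bit used to form $x = m^b$. But for \emph{any} SEM simulator \S, its only input $\h(x) = 1^\secpar\|\state$ is a constant independent of the uniform bit $b$, and its output lies in $\Supp{\f} = \bin$, so $\Pr[\gameSEMS(\M',\f,\h)\to 1] = \Pr[\S(1^\secpar\|\state) = b] = \half$. The two experiments therefore differ by more than $1/\p(\secpar)$ for infinitely many \secpar, contradicting SEM security. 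Combining the two parts proves the equivalence.

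\emph{Expected main obstacle.} The delicate part is handling the quantifiers in SEM correctly: the simulator is fixed before $\f,\h,\M$ are revealed, which is why the canonical-simulator-plus-reduction-to-IND route is the natural one, and why in the converse direction one must produce a concrete instance $(\M',\f,\h)$ on which the simulator is \emph{information-theoretically} stuck at $\half$. The mild non-uniform step used there (hard-wiring the best message generator) and the samplability caveat in the first part are the standard uniform-vs-non-uniform bookkeeping, and can be dispensed with by formulating SEM over joint plaintext/side-information ensembles as in~\cite{Goldreich2}.
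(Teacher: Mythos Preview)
The paper does not spell out a proof here, deferring entirely to~\cite{Goldreich2}; your argument is correct and is precisely the standard one found there --- a canonical simulator that encrypts $0^{|m|}$ under a fresh key for the forward direction, and a two-point message ensemble with constant auxiliary information (so that any simulator is information-theoretically stuck at $\tfrac12$) for the converse. This is exactly the structure the paper itself adopts when it \emph{does} prove the quantum analogue (Theorem~\ref{thm:wqINDqCPAiffwqSEMqCPA}), so your approach matches the intended one; the non-uniform hard-wiring you flag in the SEM\,$\Rightarrow$\,IND direction is the usual bookkeeping and is handled in~\cite{Goldreich2} just as you indicate.
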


Moreover, it has to be mentioned that the choice of defining the IND game in terms of two different messages is not compulsory: there are alternative definitions of the game where \M only generates a message, and the other is either chosen randomly or set to $0$, or where \M generates polynomially many messages, and one of them is selected for the encryption. All these notions turn out to be equivalent, with small modifications.

An example of (unconditionally) IND secure SKES is the OTP.

The notions of IND can be augmented, i.e., made {\em stronger}, by granting extra power to the IND adversary in the form of {\em oracles}. Since the adversary acquires additional computational power in so doing, it might be the case that IND secure schemes now become insecure because of this extra power. Therefore, the resulting security notions are (potentially) stronger, and encryption schemes which are resistant against the new, augmented adversaries are automatically resistant to the weaker adversaries as well. The more power is given to the adversaries, the potentially stronger the security notion.

Traditionally, oracles have been used to model attack scenarios not covered by the IND notion alone. Of course, one could simply give the adversary unlimited access to a decryption oracle and make him super powerful. But that would make the security notion so strong to be unachievable -- after all, SKES are not meant to protect by adversaries in possession of the secret key. Instead, other scenarios are considered.

\subsection{Chosen Plaintext Attacks}

In the {\em chosen plaintext attack (CPA)} scenario, the adversary is able to see encryptions of additional messages, in addition to the ones used in the IND game. He is allowed to choose the plaintexts to be encrypted by querying the encryption oracle $\Enc_k$ during the execution of the IND game. Moreover, he can perform the oracle queries in an {\em adaptive} way, i.e., reacting adaptively to the oracle's answers, for a polynomial number of queries, both before and after the IND challenge query. The resulting security game is as follows.

\begin{experiment}[$\gameINDCPA$]\label{expt:INDCPA}
Let $\E$ be a SKES, and $\A:=(\M,\D)$ an IND adversary. The {\em IND-CPA experiment} proceeds as follows:
\begin{algorithmic}[1]
\State \textbf{Input:} $\secpar \in \NN$
\State $k \from \KGen$
\State $(m^0,m^1,\state) \from \M^{\Enc_k}$
\State $b \rand\bin$
\State $c \from \Enc_k(m^b)$
\State $b' \from \D^{\Enc_k}(c,\state)$
\If{$b = b'$}
	\State \textbf{Output:} $1$
\Else
	\State \textbf{Output:} $0$
\EndIf
\end{algorithmic}
The {\em advantage of \A} is defined as:
$$
\advINDCPA := \Pr \left[ \gameINDCPA \to 1 \right] - \half .
$$
\end{experiment}

\begin{definition}[Indistinguishability of Ciphertexts under Chosen Plaintext Attack (IND-CPA)]\label{def:INDCPA}
A SKES $\E$ has {\em indistinguishable encryptions under chosen plaintext attack (or, it is IND-CPA secure)} iff, for any IND adversary $\A$ it holds that: $\advINDCPA \leq \negl$.
\end{definition}

As discussed above, IND-CPA is clearly at least as strong as IND.

\begin{theorem}[IND-CPA $\implies$ IND]\label{thm:INDCPAtoIND}
If a SKES is IND-CPA secure, then it is also IND secure.
\end{theorem}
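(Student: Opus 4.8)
The plan is to observe that every IND adversary is, trivially, an IND-CPA adversary that simply declines to use its encryption oracle, so IND-CPA security restricts to IND security with no real work. I would argue by contraposition: assume $\E$ is \emph{not} IND secure, so by Definition~\ref{def:IND} there exists an IND adversary $\A = (\M,\D)$ with $\advIND$ non-negligible.

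From $\A$ I would build an IND-CPA adversary $\A' := (\M',\D')$ by taking $\M'$ and $\D'$ to be the algorithms $\M$ and $\D$ respectively, equipped with the oracle access to $\Enc_k$ demanded by the IND-CPA experiment but making zero queries to it. Since asking no oracle queries is a legal strategy in $\gameINDCPA[\A']$, this $\A'$ is a well-defined IND adversary in the IND-CPA sense, and it is \PPT because $\A$ is. The next step is to check, line by line, that $\gameINDCPA[\A']$ and $\gameIND$ induce the same distribution: the key generation $k \from \KGen$, the message generation $(m^0,m^1,\state)$, the uniform choice of $b$, the challenge encryption $\Enc_k(m^b)$, and the final comparison $b \overset{?}{=} b'$ are literally identical, the only formal difference being the unused oracles attached to $\M'$ and $\D'$. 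Hence $\Pr[\gameINDCPA[\A'] \to 1] = \Pr[\gameIND \to 1]$, so the IND-CPA advantage of $\A'$ equals $\advIND$, which is non-negligible, contradicting the assumed IND-CPA security of $\E$.

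There is essentially no obstacle here; the only point meriting an explicit word is that the IND-CPA experiment permits an adversary to make zero oracle queries, which is immediate from Experiment~\ref{expt:INDCPA}. This same boilerplate — an adversary in the stronger game that simply ignores its extra power — will recur whenever a security notion is augmented with oracles, and I would flag it as the template for the analogous implications later (e.g.\ IND-CCA1 $\implies$ IND-CPA).
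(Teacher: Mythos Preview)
Your proposal is correct and matches the paper's approach: the paper states the theorem without proof, remarking only that ``IND-CPA is clearly at least as strong as IND,'' which is exactly the observation you spell out---an IND adversary is a special case of an IND-CPA adversary that ignores its encryption oracle. Your contrapositive write-up is the standard way to make this precise, and your remark that this template recurs for the CCA1 and CCA2 implications is also how the paper treats those cases.
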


But the converse is not true. In particular, all the encryption schemes that are not randomized cannot be IND-CPA secure, because then the adversary could always win the security game by first encrypting two messages of his choice, then performing the IND challenge, and then compare the resulting ciphertext with the encryption previously obtained. As an example, the OTP is not IND-CPA secure, despite being IND secure.

\begin{theorem}[IND \nimplies IND-CPA]\label{thm:INDnotoINDCPA}
There exist SKES which are IND secure, but not IND-CPA secure.
\end{theorem}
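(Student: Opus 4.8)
The plan is to use the one-time pad (Construction~\ref{constr:otp}) as the separating example. Its IND security has already been recorded — unconditionally, as a consequence of Shannon's theorem — so all that remains is to exhibit an explicit IND-CPA adversary $\A = (\M,\D)$ with non-negligible advantage. The key observation is that the OTP is deterministic, and, worse still, a single CPA query on the all-zero plaintext already leaks the secret key, since $\Enc_k(0^n) = 0^n \xor k = k$.

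Concretely, I would let the message generator $\M^{\Enc_k}$ query its encryption oracle once on $0^n \in \X$, receive $c_0 = k$, and then output two arbitrary distinct plaintexts $m^0 \neq m^1$ together with the state string $\state := c_0 = k$ (of length $n = \poly(n)$, hence admissible). The distinguisher $\D^{\Enc_k}(c,\state)$, on input the challenge ciphertext $c = \Enc_k(m^b) = m^b \xor k$ and the state $\state = k$, recovers $x := c \xor \state = m^b$ and outputs $0$ if $x = m^0$ and $1$ otherwise. By correctness of the OTP this is exactly the bit $b$, so $\Pr[\gameINDCPA \to 1] = 1$ and thus $\advINDCPA = \half$, which is non-negligible. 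Hence the OTP is IND secure but not IND-CPA secure, proving the theorem.

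It is worth noting that the argument is more robust than this single example suggests: the same strategy defeats IND-CPA security for \emph{any} deterministic SKES. Indeed, $\M$ may simply pick $m^0 \neq m^1$, query $\Enc_k(m^0)$, forward the resulting ciphertext in $\state$, and let $\D$ compare it with the challenge ciphertext; since $\Enc_k$ is a function, the two coincide precisely when $b = 0$. The OTP is merely the most transparent instance, and the key-recovery shortcut above is an incidental bonus. There is essentially no obstacle to overcome here; the only point requiring a little care is to phrase the attack within the exact syntax of Experiment~\ref{expt:INDCPA} — in particular, that $\M$'s knowledge is passed to $\D$ only through the polynomially bounded state string $\state$, and that the pre-challenge query to $\Enc_k$ is legitimate because $\M$ is granted oracle access to $\Enc_k$ in the IND-CPA game.
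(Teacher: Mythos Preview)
Your proposal is correct and matches the paper's approach: both use the OTP as the separating example, relying on its (unconditional) IND security and exploiting its determinism to break IND-CPA. The paper's informal argument uses the ciphertext-comparison attack you describe in your final paragraph (encrypt $m^0$, compare with the challenge), whereas your primary attack recovers $k$ via $\Enc_k(0^n)=k$; both work, and you correctly observe that the comparison variant extends to any deterministic SKES.
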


IND-CPA secure SKES can be constructed in a block-box way using PRFs.

\begin{construction}[{\cite[Construction 5.3.9]{Goldreich2}}]\label{constr:goldreich}
Let $\PRF:\X\to\Y$ be a PRF with key space \K. Define $\E = \E_{\K,\Y,\Y\times\X}:=(\KGen,\Enc,\Dec)$ as a SKES with key space $\K $, plaintext space \Y, and ciphertext space $\Y\times\X$, as follows:
\begin{enumerate}
\item $\KGen \to k$, with  $k \rand \K$;
\item $\Enc_k(x) \to (y,r)$, with $y := x \xor \PRF_k(r)$, where $r \rand \X$;
\item $\Dec_\sk(y,r) := y \xor \PRF_k(r)$.
\end{enumerate}
\end{construction}

\begin{theorem}\label{thm:GoldreichINDCPA}
Construction~\ref{constr:goldreich} is an IND-CPA SKES.
\end{theorem}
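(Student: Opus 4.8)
The plan is to reduce the IND-CPA security of \E to the pseudorandomness of \PRF via a single game hop, and then bound the residual advantage in the idealized game by a randomness-collision argument. (Decryption correctness, i.e.\ that \E is a SKES in the sense of Definition~\ref{def:skes}, is immediate since $\Dec_k(x\xor\PRF_k(r),r) = (x\xor\PRF_k(r))\xor\PRF_k(r) = x$, so I would dispatch this in one line.)

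First I would fix an arbitrary IND-CPA adversary $\A=(\M,\D)$ making at most $q=\poly(\secpar)$ encryption-oracle queries, and define a modified experiment $\game'$ which is exactly $\gameINDCPA$ except that every invocation of \Enc (the oracle answers to \M and \D, as well as the challenge query) answers a plaintext $x$ by $(x\xor\h(r),r)$ for a fresh $r\rand\X$ and a single function $\h\rand\Y^\X$ sampled once at the start. To bound $\left|\Pr[\gameINDCPA\to 1]-\Pr[\game'\to 1]\right|$ I would build a \PPT PRF-distinguisher \B: given oracle access to a function $g$ (either $\PRF_k$ for $k\rand\K$, or $\h\rand\Y^\X$), \B samples $b\rand\bin$, runs $\A$, answers each encryption request on $x$ by $(x\xor g(r),r)$ with fresh $r\rand\X$, produces the challenge analogously on $m^b$, and outputs $1$ iff \A's final bit equals $b$. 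By construction \B perfectly simulates $\gameINDCPA$ when $g=\PRF_k$ and $\game'$ when $g=\h$, so its PRF-distinguishing advantage equals the difference above; by Definition~\ref{def:PRF} this is $\negl$.

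Next I would analyse $\Pr[\game'\to 1]$ directly. Let $r^\ast$ be the randomness drawn for the challenge ciphertext $c=(m^b\xor\h(r^\ast),r^\ast)$, and let $\mathsf{Coll}$ be the event that $r^\ast$ equals the randomness used in at least one of the (at most $q$) encryption-oracle queries made by \M or \D. Conditioned on $\neg\mathsf{Coll}$, the point $r^\ast$ has not been fed to \h anywhere else in the experiment, so — \h being a uniformly random function — $\h(r^\ast)$ is uniform over \Y and independent of $b$, of \M's output, and of \A's entire view; hence $m^b\xor\h(r^\ast)$ is uniform and independent of $b$, \D's output bit is independent of $b$, and $\Pr[\game'\to 1\mid\neg\mathsf{Coll}]=\half$. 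Since each randomness value is sampled uniformly and independently from $\X_\secpar=\bin^\secpar$, a union bound over the at most $q$ queries gives $\Pr[\mathsf{Coll}]\le q\cdot 2^{-\secpar}$, so $\Pr[\game'\to 1]\le\half+q\cdot 2^{-\secpar}$. Combining with the game hop, $\advINDCPA\le\adv^{\mathsf{PRF}}_{\B}+q\cdot 2^{-\secpar}=\negl$, which is the claim.

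The main obstacle is the careful bookkeeping around the two-stage adversary and the bad event: one must measure collisions against the challenge randomness across \emph{both} the \M-phase queries (which occur before $r^\ast$ is even sampled) and the \D-phase queries, and argue that conditioning on $\neg\mathsf{Coll}$ genuinely decouples $\h(r^\ast)$ from all of \A's transcript — a point where it matters that the adversary never chooses the randomness itself, so it can only hope for a collision rather than force one. Everything else (that \B is \PPT given that \A is, the elementary union bound, the conditional-probability splitting) is routine.
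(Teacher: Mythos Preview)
Your proposal is correct and follows the same approach as the paper's proof: reduce to the PRF property via a game hop replacing $\PRF_k$ by a truly random function, then observe that the resulting scheme is essentially a one-time pad per query. The paper only gives a two-sentence sketch (``break the OTP or break the PRF''), whereas you spell out the PRF-distinguisher and the collision bound explicitly; this is the standard textbook elaboration of exactly that sketch.
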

\begin{proof}[Proof (sketch)]
The one-time pad is perfectly (statistically) secure if used with random, independent keys. This means that the only way to break the security of \E is to break the security of \PRF. Since a fresh randomness $r$ is chosen for every encryption, and since the image $\PRF_k(r)$ can be recovered by the related plaintext/ciphertext pairs, giving oracle access to $\Enc_k$ for the adversary is equivalent to giving oracle access to $\PRF_k$. However, by Definition~\ref{def:PRF}, this is indistinguishable from a random oracle for any \PPT adversary, so that the security of the one-time pad carries over, although only computationally.
\end{proof}

Then, recalling Corollary~\ref{cor:OWFiffPRNG} and Theorem~\ref{thm:PRNGimPRF}, we can state the following.

\begin{corollary}[IND-CPA SKES from OWF]\label{cor:INDCPAfromOWF}
If OWFs exist, then IND-CPA SKES exist.
\end{corollary}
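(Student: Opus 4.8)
The plan is to chain together the building-block equivalences already established, since the statement only asserts \emph{existence}: no fresh construction is required. Concretely, I would bootstrap an OWF into a PRF in two standard steps, and then feed that PRF into Construction~\ref{constr:goldreich}, whose security has already been proved.

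First I would invoke Corollary~\ref{cor:OWFiffPRNG}: under the hypothesis that a family of one-way functions exists, there exists a PRNG \PRNG with some expansion factor $\p$ (internally this is Construction~\ref{constr:goldreichlevin} together with Proposition~\ref{prop:hc}, but here it can be treated as a black box). Next I would apply Theorem~\ref{thm:PRNGimPRF}, the GGM construction of~\cite{GGM84}, to obtain from \PRNG a PRF $\PRF : \K \times \X \to \Y$. Here I should make one remark explicitly, since the surrounding text flags it: unlike the OWF$\to$PRNG step, the GGM reduction \emph{does} rely on the adversary making only polynomially many queries; this is harmless for us, because the adversary we ultimately need to defeat is the \PPT IND-CPA distinguisher of Experiment~\ref{expt:INDCPA}, which is precisely query-bounded.

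Finally I would instantiate Construction~\ref{constr:goldreich} with this \PRF, yielding a SKES $\E = (\KGen,\Enc,\Dec)$ with plaintext space \Y and ciphertext space $\Y \times \X$. By Theorem~\ref{thm:GoldreichINDCPA} this $\E$ is IND-CPA secure, which is exactly the claim. Since every step is a direct appeal to an already-proven result, I do not expect any genuine obstacle; the only point worth stating carefully is the query-bounded-versus-unconditional distinction just mentioned, and perhaps the trivial observation that the message space of the resulting scheme is inherited from the codomain of \PRF, so if one insists on a prescribed space such as $\bin^\secpar$ one simply chooses \PRF with that codomain from the outset, without loss of generality.
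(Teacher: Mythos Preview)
Your proposal is correct and follows exactly the same route as the paper: chain Corollary~\ref{cor:OWFiffPRNG} and Theorem~\ref{thm:PRNGimPRF} to obtain a PRF from the assumed OWF, then plug it into Construction~\ref{constr:goldreich} and invoke Theorem~\ref{thm:GoldreichINDCPA}. The paper states the corollary with essentially no proof beyond ``recalling Corollary~\ref{cor:OWFiffPRNG} and Theorem~\ref{thm:PRNGimPRF}'', so your write-up is already more detailed than what the paper provides.
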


\subsection{Non-Adaptive Chosen Ciphertext Attacks}

In the {\em non-adaptive chosen ciphertext attack (CCA1)} scenario, in addition to the IND-CPA capabilities, the adversary is able to also see decryptions of certain ciphertexts. As in the CPA case, he is allowed to choose the ciphertexts to be decrypted by querying the decryption oracle $\Dec_k$ during the execution of the IND game. However, unlike in the CPA case, he is only able to interact with this oracle {\em before} the IND challenge query, and not afterward. The adversary is allowed to perform the decryption oracle queries in an adaptive way, for a polynomial number of queries, but only before the IND challenge query, hence the term `non-adaptive'\footnote{Admittedly, this well-established term in the scientific literature is somewhat misleading, because this `non-adaptivity' refers to `in respect to the challenge ciphertext', while the queries to the decryption oracle can actually be performed adaptively.}. Notice, in fact, that if the adversary were able to perform arbitrary decryption queries {\em after} the challenge query as well, this would allow him to decrypt the challenge ciphertext, and therefore it would render the security notion unachievable.

The resulting security game for the CCA1 scenario is as follows. 

\begin{experiment}[$\gameINDCCA$]\label{expt:INDCCA1}
Let $\E$ be a SKES, and $\A:=(\M,\D)$ an IND adversary. The {\em IND-CCA1 experiment} proceeds as follows:
\begin{algorithmic}[1]
\State \textbf{Input:} $\secpar \in \NN$
\State $k \from \KGen$
\State $(m^0,m^1,\state) \from \M^{\Enc_k,\Dec_k}$
\State $b \rand\bin$
\State $c \from \Enc_k(m^b)$ 
\State $b' \from \D^{\Enc_k}(c,\state)$
\If{$b = b'$}
	\State \textbf{Output:} $1$
\Else
	\State \textbf{Output:} $0$
\EndIf
\end{algorithmic}
The {\em advantage of \A} is defined as:
$$
\advINDCCA := \Pr \left[ \gameINDCCA \to 1 \right] - \half .
$$
\end{experiment}

\begin{definition}[Indistinguishability of Ciphertexts under Non-Adaptive Chosen Ciphertext Attack (IND-CCA1)]\label{def:INDCCA1}
A SKES $\E$ has {\em indistinguishable encryptions under non-adaptive chosen ciphertext attack (or, it is IND-CCA1 secure)} iff, for any IND adversary $\A$ it holds that: $\advINDCCA \leq \negl$.
\end{definition}

IND-CCA1 is clearly at least as strong as IND-CPA.

\begin{theorem}[IND-CCA1 $\implies$ IND-CPA]\label{thm:INDCCA1toINDCPA}
If a SKES is IND-CCA1 secure, then it is also IND-CPA secure.
\end{theorem}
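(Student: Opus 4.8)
The plan is to observe that IND-CCA1 is simply IND-CPA with one extra capability handed to the message generator (oracle access to $\Dec_k$ in Experiment~\ref{expt:INDCCA1}), so that every IND-CPA attack is, verbatim, an IND-CCA1 attack that declines to use $\Dec_k$. I would therefore argue by this oracle-containment: take an arbitrary IND adversary $\A = (\M,\D)$ (Definition~\ref{def:INDadv}) and build from it an IND-CCA1 adversary $\A' = (\M',\D)$ whose message generator $\M'$ runs $\M$ as a subroutine, relaying each of $\M$'s queries to its own $\Enc_k$ oracle, never querying $\Dec_k$, and finally outputting whatever $\M$ outputs.

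Next I would compare Experiment~\ref{expt:INDCPA} (run with $\A$) and Experiment~\ref{expt:INDCCA1} (run with $\A'$) line by line: key generation, the sampling of $b$, the challenge query $c \from \Enc_k(m^b)$, and the distinguisher phase $b' \from \D^{\Enc_k}(c,\state)$ are identical in the two games; the only syntactic difference is that in the CCA1 game $\M'$ is additionally given oracle $\Dec_k$, which by construction it ignores. Hence the two experiments induce the same joint distribution on $(k, m^0, m^1, \state, b, c, b')$, and in particular $\adv^{\mathsf{IND-CCA1}}_{\E,\A'} = \advINDCPA$.

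To finish, I would invoke the hypothesis: if $\E$ is IND-CCA1 secure then, since $\A'$ is itself a \PPT IND adversary (it merely relays $\M$'s polynomial-time computation), Definition~\ref{def:INDCCA1} gives $\adv^{\mathsf{IND-CCA1}}_{\E,\A'} \le \negl$; combining with the equality above yields $\advINDCPA \le \negl$. As $\A$ was arbitrary, $\E$ is IND-CPA secure by Definition~\ref{def:INDCPA}.

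There is essentially no obstacle here — the content is entirely the bookkeeping that the CCA1 game extends the CPA game by a decryption oracle that the reduction simply does not call. The only points I would be careful to spell out are that $\A'$ stays polynomial-time and that the challenge and distinguishing phases of the two games coincide exactly, so that the advantage is preserved with no loss at all, not merely up to a negligible term.
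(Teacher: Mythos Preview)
Your proposal is correct and is exactly the standard oracle-containment argument the paper relies on implicitly; the paper itself does not spell out a proof for this theorem, merely remarking that ``IND-CCA1 is clearly at least as strong as IND-CPA'' before stating it. Your write-up supplies precisely the bookkeeping behind that word ``clearly''.
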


But the converse is not true. There are IND-CPA secure SKES where, being able to decrypt different but related ciphertexts, can leak information about the secret key used.

\begin{theorem}[IND-CPA \nimplies IND-CCA1]\label{thm:INDCPAnotoINDCCA1}
There exists a SKES which is IND-CPA secure, but not IND-CCA1 secure.
\end{theorem}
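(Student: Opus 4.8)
The plan is to take a scheme already known to be IND-CPA secure and graft onto it a ``backdoor'' that is only ever triggered by the \emph{decryption} algorithm, so that it stays invisible in the chosen-plaintext game but becomes fatal in the non-adaptive chosen-ciphertext game. Concretely, let $\E' := (\KGen',\Enc',\Dec')$ be any IND-CPA secure SKES with plaintext space $\X'$ and ciphertext space $\Y'$ --- such a scheme exists whenever one-way functions do, by Corollary~\ref{cor:INDCPAfromOWF} (e.g.\ Construction~\ref{constr:goldreich}) --- and recall that, w.l.o.g., keys have bit size $\secpa$ and $\card{\X'}\geq 2$; fix two distinct plaintexts $p_0 \neq p_1 \in \X'$. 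I would define a new SKES $\E := (\KGen,\Enc,\Dec)$ with plaintext space $\X'$ and ciphertext space $\left(\set{0}\times\Y'\right)\cup\left(\set{1}\times\set{1,\dots,\secpa}\right)$ by setting $\KGen := \KGen'$, $\Enc_k(x) := (0,\Enc'_k(x))$, $\Dec_k(0,y) := \Dec'_k(y)$, and $\Dec_k(1,i) := p_{k_i}$, where $k_i$ denotes the $i$-th bit of the key $k$ (and $\Dec$ outputs $\bot$ on every other input). Correctness of $\E$ is immediate from that of $\E'$: an honestly produced ciphertext always carries the flag bit $0$, so $\Dec_k(\Enc_k(x)) = \Dec'_k(\Enc'_k(x)) = x$.

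To argue IND-CPA security I would observe that the decryption algorithm is never invoked anywhere in Experiment~\ref{expt:INDCPA}, and that $\Enc_k$ differs from $\Enc'_k$ only by a constant prepended bit which any algorithm can add or remove on its own. Hence a trivial black-box reduction turns an IND-CPA adversary $\A$ against $\E$ into an IND-CPA adversary $\A'$ against $\E'$ with the same advantage: $\A'$ runs $\A$ verbatim, relays each of $\A$'s encryption queries $x$ to its own oracle $\Enc'_k$ and returns $(0,\Enc'_k(x))$ to $\A$, and upon receiving the challenge $c'$ from its own experiment hands $(0,c')$ to $\A$ as the challenge ciphertext. Therefore $\advINDCPA \leq \negl$ by the IND-CPA security of $\E'$ (Theorem~\ref{thm:GoldreichINDCPA}), so $\E$ is IND-CPA secure.

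For the separating attack I would exhibit the IND adversary $\A := (\M,\D)$ against $\E$ in the IND-CCA1 experiment (Experiment~\ref{expt:INDCCA1}). The message generator $\M$, which has oracle access to $\Dec_k$, queries $\Dec_k(1,1),\dots,\Dec_k(1,\secpa)$; since $\Dec_k(1,i) = p_{k_i}$ and $p_0 \neq p_1$, these answers reveal every bit of $k$, so $\M$ reconstructs the whole key. It then picks any two distinct plaintexts $m^0 \neq m^1 \in \X'$ and outputs $(m^0,m^1,\state)$ with $\state := k$. All decryption queries are issued before the IND challenge query, so this is a legitimate CCA1 strategy. On input the challenge $c$ together with $\state = k$, the distinguisher $\D$ computes $x := \Dec_k(c)$ and outputs the bit $b'$ with $m^{b'} = x$; by correctness $x = \Dec_k(\Enc_k(m^b)) = m^b$, hence $b' = b$ with certainty and $\advINDCCA = \half$, which is non-negligible. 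This contradicts IND-CCA1 security, proving the claim.

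There is no serious obstacle here; the only step that needs care is the IND-CPA argument, and the point there is simply that a chosen-plaintext adversary interacts with $\Enc_k$ but never with $\Dec_k$, so the backdoor built into $\Dec$ is information-theoretically invisible in that game. Everything else --- the flag bit separating ``real'' from ``trapdoor'' ciphertexts, and the encoding of single key bits as the plaintexts $p_0,p_1$ (which needs only $\card{\X'}\geq 2$, true for any non-trivial scheme) --- is routine bookkeeping, and the formal reductions are black-box and lossless.
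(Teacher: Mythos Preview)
Your proof is correct but takes a different route from the paper's. The paper places the backdoor in the \emph{encryption} algorithm rather than in the decryption: it augments the key with a secret special message $\overline{m}$, and defines $\Enc'_k(m)$ to output the pair $(\Enc_k(m),\Enc_k(\overline{m}))$ when $m\neq\overline{m}$, but $(\Enc_k(\overline{m}),k)$ when $m=\overline{m}$; decryption simply applies $\Dec_k$ to the first component. The CCA1 attack then proceeds in three steps: a CPA query on any message yields a ciphertext whose second half is $\Enc_k(\overline{m})$; a decryption query on the swapped pair recovers $\overline{m}$; a final CPA query on $\overline{m}$ reveals $k$ in the clear. Your construction is arguably cleaner: by placing the trapdoor entirely inside $\Dec$ via the flag bit, your IND-CPA argument is information-theoretically immediate (the decryption algorithm is simply never touched in Experiment~\ref{expt:INDCPA}), whereas the paper must argue that the adversary guesses $\overline{m}$ only with negligible probability. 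On the other hand, the paper's counterexample has the feature that the weakness is exposed through the interplay of encryption and decryption queries (the attack needs both oracles before the challenge), illustrating that CCA1 insecurity can arise from malleability-style manipulation of honestly generated ciphertexts rather than from ``out-of-band'' ciphertext strings that encryption never produces.
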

\begin{proof}[Proof (sketch)]
Consider a SKES $\E' = (\KGen',\Enc',\Dec')$ obtained by modifying another, IND-CPA secure SKES $\E = (\KGen,\Enc,\Dec)$ as follows:
\begin{enumerate}
\item $\KGen' \to (k,\overline{m})$,\\where $k \from \KGen$, and $\overline{m}$ is a special message, unknown to the adversary;
\item $\Enc'_k(m) \to \begin{cases}\left(\Enc_k(m),\Enc_k(\overline{m})\right) &\text{ if } m \neq \overline{m},\\ \left(\Enc_k(\overline{m}),k\right) &\text{ otherwise;}\end{cases}$
\item $\Dec'_k(y,z) = \Dec_k(y)$.
\end{enumerate}
The new SKES $\E'$ is still IND-CPA secure, because the probability for any adversary of guessing the plaintext $\overline{m}$ is negligible. However, in the CCA1 scenario it is trivial to break such modified scheme, by first performing a CPA query to obtain a valid ciphertext, then performing a CCA1 decryption query on the ciphertext obtained by swapping the two ciphertext halves, therefore recovering $\overline{m}$, and then performing another CPA query on $\overline{m}$, hence recovering the secret key.
\end{proof}

However, Construction~\ref{constr:goldreich} is also IND-CCA1.

\begin{theorem}\label{thm:GoldreichINDCCA1}
Let $\E$ be the SKES from Construction~\ref{constr:goldreich}. Then $\E$ is an IND-CCA1 SKES.
\end{theorem}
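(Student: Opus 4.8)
The plan is to reuse the structure of the proof of Theorem~\ref{thm:GoldreichINDCPA}, upgrading it to absorb the extra decryption oracle available to a CCA1 adversary. First I would replace the keyed function $\PRF_k$ by a truly random function $\h \rand \Y^\X$ by a standard reduction to Definition~\ref{def:PRF}. Given an IND-CCA1 adversary $\A=(\M,\D)$ against the scheme $\E$ of Construction~\ref{constr:goldreich}, build a \PPT distinguisher $\D'$ that has oracle access to a function $\g$ (either $\PRF_k$ for $k\rand\K$, or a random $\h$) and simulates the experiment $\gameINDCCA$ for $\A$: answer every encryption query on $x$ by sampling $r\rand\X$ and returning $(x\xor\g(r),r)$; answer every decryption query on $(y,r)$ by returning $y\xor\g(r)$; answer the challenge query on $(m^0,m^1)$ by sampling $b\rand\bin$, $r^*\rand\X$ and returning $(m^b\xor\g(r^*),r^*)$; and finally output $1$ iff $\A$ guesses $b$. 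The key point here is that the decryption oracle is perfectly simulable with oracle access to $\g$ alone, since decrypting is exactly an XOR with $\g(r)$ at the adversarially chosen point $r$. When $\g=\PRF_k$ this is a perfect simulation of $\gameINDCCA$, and when $\g=\h$ it is the `ideal' experiment, so by Definition~\ref{def:PRF} it suffices to bound $\A$'s advantage in the ideal experiment.

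Second, I would show that in the ideal experiment $\A$ has only negligible advantage. Let $q=\poly(\secpar)$ bound the total number of oracle queries $\A$ makes, and let $r^*$ be the randomness drawn for the challenge ciphertext. Define the bad event $E$ to be the event that $r^*$ coincides with (i)~the randomness used to answer some encryption query of $\A$ (made either by $\M$ or by $\D$, before or after the challenge), or (ii)~the second component $r$ of some decryption query of $\A$ (necessarily before the challenge). Since all encryption randomness and the challenge randomness are sampled uniformly and independently from $\X=\bin^\secpar$, and since all decryption queries are made before $r^*$ is drawn, a union bound gives $\Pr[E]\leq q/2^\secpar=\negl$. Conditioned on $\neg E$, the point $r^*$ was never queried to $\h$, so by the lazy-sampling view of a random function the value $\h(r^*)$ is uniform on $\Y$ and independent of the whole view of $\A$; hence the challenge ciphertext $(m^b\xor\h(r^*),r^*)$ has a first component uniformly distributed on $\Y$ regardless of $b$, and its second component is independent of $b$, so $\Pr[b'=b\mid\neg E]=\half$. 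Combining, $\A$'s advantage in the ideal experiment is at most $\Pr[E]=\negl$, and folding in the negligible PRF loss from the first step yields $\advINDCCA\leq\negl$.

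The only genuinely delicate point, and the one worth spelling out, is the treatment of decryption queries in the bad-event analysis: unlike in the IND-CPA proof, the adversary now gets to choose the $r$-values fed to $\Dec_k$, so one must argue this adaptive choice cannot help it `pre-load' the challenge randomness. That is exactly where the CCA1 restriction is used — decryption queries are forbidden after the challenge, so every adversarially chosen $r$ is fixed before $r^*$ is sampled; the residual risk that a \emph{post}-challenge encryption query happens to reuse $r^*$ is covered by the same union bound, since there the randomness is supplied by the oracle, not the adversary. Beyond this union bound I do not expect any calculation here to be involved.
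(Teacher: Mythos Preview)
Your proposal is correct and follows essentially the same approach as the paper's sketch: both reduce to PRF security and observe that pre-challenge decryption queries amount to nothing more than additional (adversarially chosen) evaluation points for the PRF. Your write-up is considerably more detailed than the paper's, which merely states the key observation; in particular, your careful handling of the bad-event analysis (distinguishing adversarially chosen $r$'s in decryption queries from oracle-sampled $r$'s in encryption queries, and noting that the CCA1 restriction ensures all adversarial $r$'s are fixed before $r^*$ is drawn) makes explicit exactly what the paper's sketch leaves implicit.
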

\begin{proof}[Proof (sketch)]
Being able to perform decryption queries (before the challenge phase) gives to the adversary the possibility to forge new ciphertexts different (but related in a known way) to some other ciphertext of his choice. However, before the challenge phase, this does not provide any extra power, except the possibility of performing (polynomially many) extra queries to the PRF.
\end{proof}

Then, recalling Corollary~\ref{cor:OWFiffPRNG} and Theorem~\ref{thm:PRNGimPRF}, we can state the following.

\begin{corollary}[IND-CCA1 SKES from OWF]\label{cor:INDCCA1fromOWF}
If OWFs exist, then IND-CCA1 SKES exist.
\end{corollary}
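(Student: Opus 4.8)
The plan is to follow the same two-step strategy as in the proof sketch of Theorem~\ref{thm:GoldreichINDCPA}, adding a short combinatorial argument to neutralize the extra decryption oracle. First I would pass to an idealized game in which the keyed function $\PRF_k$ is replaced by a uniformly random function $\h \rand \Y^\X$ (implemented by lazy sampling). This step is justified by a black-box reduction to the pseudorandomness of $\PRF$ (Definition~\ref{def:PRF}): given oracle access to a function $\O$ that is either $\PRF_k$ or a random $\h$, one can perfectly simulate the experiment $\gameINDCCA$ for the IND-CCA1 adversary $\A = (\M,\D)$ — answer an encryption query on $x$ by sampling $r\rand\X$ and returning $(x\xor\O(r),r)$, answer a decryption query on $(y,r)$ by returning $y\xor\O(r)$, and build the challenge ciphertext as $(m^b\xor\O(r^*),r^*)$ with $r^*\rand\X$ — and finally output the bit ``$\A$ wins''. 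Hence the gap between $\A$'s winning probability in the real game and in the idealized game is at most the (negligible) PRF-distinguishing advantage.

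It then remains to bound $\A$'s advantage in the idealized game. The key observation is that, before the challenge query, everything $\A$ can extract from its oracles is the value of $\h$ on a polynomial-size set $Q$ of points: each encryption query reveals $\h$ at one (adversarially uncontrolled) point, while a decryption query on $(y,r)$ reveals $\h(r) = y \xor \Dec_k(y,r)$ at a point of $\A$'s choosing. Crucially, $\M$ commits to $Q$ before the challenge randomness $r^*$ is drawn uniformly and independently from $\X=\bin^\secpar$, so $\Pr[r^*\in Q]\le |Q|/2^\secpar=\negl$. Moreover, in the CCA1 setting $\D$ has access only to $\Enc_k$, which samples its own fresh randomness; letting $Q'$ be the polynomial-size set of randomness values returned to $\D$ by its post-challenge encryption queries — again independent of $r^*$ — we get $\Pr[r^*\in Q']\le\negl$. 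Conditioned on the complement of these bad events, $\h(r^*)$ is never exposed to $\A$, so by the one-time-pad argument (cf.~\cite{Shannon01}) the challenge component $y^* = m^b\xor\h(r^*)$ is uniformly random and statistically independent of $b$ and of $\A$'s whole view, whence $\D$ guesses $b$ with probability exactly $\half$ in that case. A union bound then gives that $\A$'s winning probability in the idealized game is at most $\half+\negl$, so $\advINDCCA\le\negl$.

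The main obstacle is making precise — and arguing correctly — that the pre-challenge decryption oracle provides no real leverage: unlike a CCA2-style attack, the decryption queries are fixed before $r^*$ is sampled, so the adaptively chosen set $Q$ of ``known'' points of $\h$ is independent of $r^*$ and a union bound over $Q\cup Q'$ closes the argument. One should also check the minor edge case in which a decryption query reuses a randomness value from an earlier encryption query (or from the challenge): this is automatically handled by treating $\h$ as a single fixed random function maintained consistently throughout the simulation. Finally, combining this with Corollary~\ref{cor:OWFiffPRNG} and Theorem~\ref{thm:PRNGimPRF} yields Corollary~\ref{cor:INDCCA1fromOWF}.
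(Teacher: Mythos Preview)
Your proposal is correct and follows essentially the same route as the paper: the paper derives the corollary immediately from Theorem~\ref{thm:GoldreichINDCCA1} together with Corollary~\ref{cor:OWFiffPRNG} and Theorem~\ref{thm:PRNGimPRF}, and your argument simply fleshes out the informal proof sketch of Theorem~\ref{thm:GoldreichINDCCA1} (the PRF-to-random-function hybrid and the bad-event bound on $r^*$ colliding with the queried set) before invoking the same chain of references.
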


\subsection{Adaptive Chosen Ciphertext Attacks}

Finally, in the {\em adaptive chosen ciphertext attack} scenario, in addition to the IND-CCA1 capabilities, the adversary is able to query the decryption oracle also after the challenge query, {\em with an important exception}: he is not allowed to query $\Dec_k$ on the challenge ciphertext received. This restriction is necessary, as we have already discussed in the CCA1 case, otherwise the adversary could simply decrypt the challenge ciphertext and trivially win the game, and this would make the security notion unachievable. Formally, we have therefore to define a `modified' decryption oracle, which is able to {\em reject} certain `forbidden' decryption queries (those trying to decrypt the challenge ciphertext), by replying with a special symbol $\bot$ to those queries.

\begin{definition}[CCA2 Oracle]\label{def:CCA2oracle}
Let $\E := (\KGen,\Enc,\Dec)$ be a SKES, and $c \in \Supp{\Enc}$. The {\em CCA2 decryption oracle rejecting $c$} is defined by:
$$
\Dec^c_k (c') \longrightarrow
\begin{cases}
\Dec_k (c') &\text{ if } c' \neq c,\\
\bot &\text{ otherwise.}
\end{cases}
$$
\end{definition}

The new security game is defined as follows. 

\begin{experiment}[$\gameINDCCAA$]\label{expt:INDCCA2}
Let $\E$ be a SKES, and $\A:=(\M,\D)$ an IND adversary. The {\em IND-CCA2 experiment} proceeds as follows:
\begin{algorithmic}[1]
\State \textbf{Input:} $n \in \NN$
\State $k \from \KGen$
\State $(m^0,m^1,\state) \from \M^{\Enc_k,\Dec_k}$
\State $b \rand\bin$
\State $c \from \Enc_k(m^b)$
\State $b' \from \D^{\Enc_k,\Dec_k^c}(c,\state)$
\If{$b = b'$}
	\State \textbf{Output:} $1$
\Else
	\State \textbf{Output:} $0$
\EndIf
\end{algorithmic}
The {\em advantage of \A} is defined as:
$$
\advINDCCAA := \Pr \left[ \gameINDCCAA \to 1 \right] - \half .
$$
\end{experiment}

\begin{definition}[Indistinguishability of Ciphertexts under Adaptive Chosen Ciphertext Attack (IND-CCA2)]\label{def:INDCCA2}
A SKES $\E$ has {\em indistinguishable encryptions under adaptive chosen ciphertext attack (or, it is IND-CCA2 secure)} iff, for any IND adversary $\A$ it holds that: $\advINDCCAA \leq \negl$.
\end{definition}

IND-CCA2 is clearly at least as strong as IND-CCA1.

\begin{theorem}[IND-CCA2 $\implies$ IND-CCA1]\label{thm:INDCCA2toINDCCA1}
If a SKES is IND-CCA2 secure, then it is also IND-CCA1 secure.
\end{theorem}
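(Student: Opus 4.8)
The plan is to show that every IND-CCA1 adversary is, essentially verbatim, a (restricted) IND-CCA2 adversary, so that the classes of attackers are nested and the advantages coincide. Concretely, given an IND adversary $\A := (\M,\D)$ playing $\gameINDCCA$, I would define an IND adversary $\A' := (\M',\D')$ for $\gameINDCCAA$ by setting $\M' := \M$ (which in both games has oracle access to $\Enc_k$ and $\Dec_k$ before the challenge query) and letting $\D'$ run $\D$ verbatim while simply \emph{ignoring} the decryption oracle $\Dec_k^c$ that the CCA2 game makes available to the distinguisher. Since $\D'$ is just $\D$ wrapped with a discarded oracle, $\A'$ is \PPT whenever $\A$ is.

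Next I would argue that $\gameINDCCAA$ run with $\A'$ is distributed identically to $\gameINDCCA$ run with $\A$: both sample $k \from \KGen$, run $\M^{\Enc_k,\Dec_k}$ to obtain $(m^0,m^1,\state)$, pick $b \rand \bin$, compute $c \from \Enc_k(m^b)$, and then feed $(c,\state)$ to the distinguisher, which in both cases is exactly $\D$ interacting with $\Enc_k$ only. The sole formal difference between the two experiments — the presence of the restricted oracle $\Dec_k^c$ in the second phase of the CCA2 game — is immaterial here precisely because $\D'$ never queries it; in particular the rejection behaviour of $\Dec_k^c$ on the challenge ciphertext (Definition~\ref{def:CCA2oracle}) plays no role. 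Hence $\Pr[\gameINDCCAA[\A'] \to 1] = \Pr[\gameINDCCA[\A] \to 1]$, and therefore $\advINDCCAA[\A'] = \advINDCCA$.

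Finally, assuming $\E$ is IND-CCA2 secure, $\advINDCCAA[\A'] \leq \negl$ for the adversary $\A'$ constructed above, so $\advINDCCA \leq \negl$ as well. Since $\A$ was an arbitrary IND-CCA1 adversary, this shows $\E$ is IND-CCA1 secure. There is no real obstacle in this argument — it is a pure inclusion of adversary classes — and the only point deserving a word of care is the bookkeeping that the message generator $\M$ already has the same oracle access ($\Enc_k$ and $\Dec_k$) in both games, so no oracle needs to be simulated at all.
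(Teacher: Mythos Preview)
Your proposal is correct and is exactly the standard inclusion-of-adversaries argument the paper has in mind; indeed, the paper does not even spell out a proof here, simply remarking that ``IND-CCA2 is clearly at least as strong as IND-CCA1,'' which your argument makes precise.
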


But the converse is not true. There exist IND-CCA1 secure SKESs where an adversary able to decrypt ciphertexts which are different, but related, to the challenge ciphertext, can find out information about the underlying plaintext.

\begin{theorem}[IND-CCA1 \nimplies IND-CCA2]\label{thm:INDCCA1notoINDCCA2}
There exist SKES which are IND-CCA1 secure, but not IND-CCA2 secure.
\end{theorem}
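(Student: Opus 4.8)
The plan is to mirror the proof of Theorem~\ref{thm:INDCPAnotoINDCCA1}: start from a SKES that is already IND-CCA1 secure and graft onto it a trivial syntactic gadget that leaves the \emph{challenge} ciphertext decryptable through a second, distinct ciphertext, so that a single post-challenge decryption query hands the adversary the challenge plaintext. Concretely I would take $\E = (\KGen,\Enc,\Dec)$ to be a SKES that is IND-CCA1 secure --- for instance Construction~\ref{constr:goldreich}, which is such by Theorem~\ref{thm:GoldreichINDCCA1} and which exists whenever OWFs do by Corollary~\ref{cor:INDCCA1fromOWF}.

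I would then define $\E' = (\KGen',\Enc',\Dec')$ over plaintext space $\X$ and ciphertext space $\Y \times \bin$ by setting $\KGen' := \KGen$; letting $\Enc'_k(m) := (\Enc_k(m),0)$ append a flag bit that honest encryption always fixes to $0$; and letting $\Dec'_k(c,\beta) := \Dec_k(c)$ simply ignore the flag bit on decryption. Correctness of $\E'$ follows at once from that of $\E$, since $\Dec'_k(\Enc'_k(m)) = \Dec_k(\Enc_k(m)) = m$.

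First I would check that $\E'$ inherits IND-CCA1 security through a black-box reduction: from any IND adversary $\A' = (\M',\D')$ against $\E'$, build an IND adversary $\A = (\M,\D)$ against $\E$ which runs $\M'$, answers each of its pre-challenge $\Dec'_k$-queries $(c,\beta)$ by forwarding $c$ to the real $\Dec_k$ oracle, and answers $\Enc'_k$-queries $m$ with $(\Enc_k(m),0)$; $\A$ relays $\M'$'s message pair and state unchanged, obtains its own challenge $c^* \from \Enc_k(m^b)$, and runs $\D'$ on input $(c^*,0)$, again padding the answers to $\D'$'s encryption queries with a $0$. This perfectly simulates the IND-CCA1 experiment of Experiment~\ref{expt:INDCCA1} for $\E'$ --- the crucial point being that $\Dec'_k(c,\beta)$ does not depend on $\beta$, so the flag carries no information $\A$ could not itself produce --- whence $\adv^{\mathsf{IND-CCA1}}_{\E',\A'} = \adv^{\mathsf{IND-CCA1}}_{\E,\A} \leq \negl$.

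Finally I would exhibit the IND-CCA2 attack on $\E'$: let $\M'$ output any two \emph{distinct} messages $m^0 \neq m^1$, and let $\D'$, on receiving the challenge $(c^*,0)$, query its CCA2 decryption oracle (which rejects only the exact challenge $(c^*,0)$) at the distinct point $(c^*,1)$. This query is not rejected and returns $\Dec'_k(c^*,1) = \Dec_k(c^*) = m^b$, so $\D'$ outputs the unique bit $b'$ with $m^{b'} = m^b$ and wins with certainty, i.e.\ $\adv^{\mathsf{IND-CCA2}}_{\E',\A'} = \half$. The only step that needs any thought is the IND-CCA1-preservation reduction, and there the decisive (and essentially trivial) observation is that the appended bit is pure syntactic padding, never consulted by $\Dec'$; everything else is routine bookkeeping.
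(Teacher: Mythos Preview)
Your proof is correct, but it takes a different route from the paper. The paper does not build a new scheme $\E'$; instead it uses Construction~\ref{constr:goldreich} itself as the counterexample, exploiting its specific malleability: given the challenge $(y^*,r^*)$ with $y^* = m^b \xor \PRF_k(r^*)$, the adversary queries $\Dec_k$ on $(y^* \xor 1\ldots1,\,r^*)$, which is a distinct ciphertext and returns $m^b \xor 1\ldots1$, revealing $b$. Your approach is more generic --- it turns \emph{any} IND-CCA1 scheme into a non--IND-CCA2 one via a dummy flag bit --- whereas the paper's approach has the virtue of exhibiting a ``natural'' scheme (one already defined for other reasons) that separates the two notions. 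Both are standard; yours is closer to the textbook ``append-a-bit'' trick, the paper's is closer to the observation that stream-cipher-style constructions are inherently malleable.
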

\begin{proof}[Proof (sketch)]
The counterexample is given by Construction~\ref{constr:goldreich}, as already hinted in the proof of Theorem~\ref{thm:GoldreichINDCCA1}. Being able to forge a valid ciphertext related in a controlled way to a target challenge ciphertext $c$ allows the adversary to ask for decryptions of such ciphertexts without violating the CCA2 limitation that the ciphertext must be different from the challenge one. For example, the adversary might be able to ask for a decryption of $c \xor 1\ldots1$, therefore recovering $m \xor 1\ldots1$, where $m$ was the original plaintext.
\end{proof}

Finally, although we are not going to write it down formally, it is possible to extend the SEM security notion to CPA, CCA1, and CCA2 scenarios as well, obtaining the security notions SEM-CPA, SEM-CCA1, and SEM-CCA2 respectively. Each of them can be shown to be equivalent to their IND counterpart. The situation is summarized in Figure~\ref{fig:QS0relations}.

\begin{figure}[t]
\begin{center}
\includegraphics[width=\textwidth]{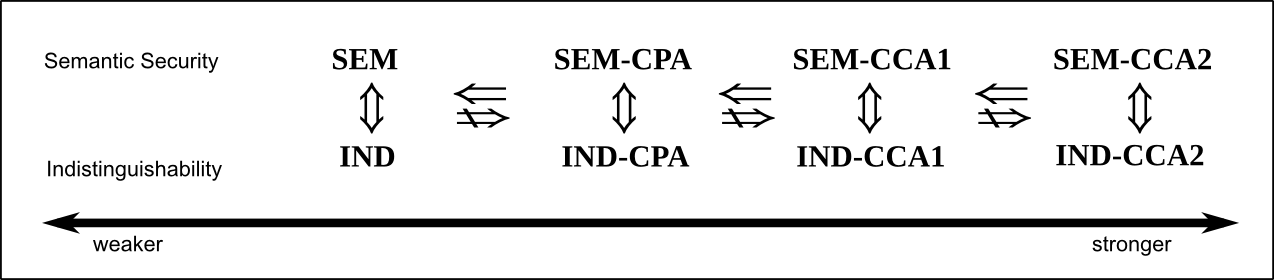}
\end{center}
\caption{Relations for SKES security notions in \QS0.}
\label{fig:QS0relations}
\end{figure}

\section{Public-Key Encryption Schemes}\label{sec:pke}

Another important cryptographic primitive are {\em public-key encryption schemes (PKES)}. Analogously to SKES, PKES work by encrypting messages from a plaintext space \X to a ciphertext space \Y, and decrypting ciphertexts the other way around. The difference this time is that the key generation algorithm generates {\em pairs} of keys: a {\em public-key \pk} which is only used to encrypt, and a {\em secret key \sk } which is only used to decrypt. W.l.o.g. we assume that, for security parameter \secpar, public keys are of bit size $\p(n)$, while secret keys are of bit size $\s(n)$, where \p,\s are polynomial functions determined by the scheme considered. Under this notation, we identify the (public, private) keyspace $\K$ as $\family{\K} = \family{\Kp} \times \family{\Ks} =: \Kp \times \Ks \subset \bin^{\p(n)} \times \bin^{\s(n)}$.

\begin{definition}[Public-Key Encryption Scheme (PKES)]\label{def:pkes}
A {\em public-key encryption scheme (PKES)} with plaintext space $\X$, ciphertext space $\Y$, and key space $\K\!:=\!\Kp \times \Ks$ \!is a tuple of\, \PPT algorithms $\E\!:=\!\E_{\K,\X,\Y}\!:=\!(\KGen,\Enc,\Dec)$:
\begin{enumerate}
\item $\KGen: \to \K$;
\item $\Enc: \Kp \times \X \to \Y$;
\item $\Dec: \Ks \times \Y \to \X \cup \set{\bot}$;
\end{enumerate}
such that $\foral n \in \NN, \foral  x \in \X, \foral (\pk,\sk) \from \KGen \implies \Dec(\sk, \Enc(\pk,x)) = x$.
\end{definition} 

As in the case of SKES, the following hold:

\begin{itemize}
\item we assume w.l.o.g. that $n$ is also appended to every \pk and every \sk such that $(\pk,\sk) \from \KGen$, so that every key also implicitly contains the security parameter.
\item As a shorthand notation, we will write $\Enc_{\pk}$ meaning the $\Enc$ algorithm with $\pk \in \Kp$ fixed as a first input; analogously for $\Dec_{\sk}$.
\item If $\Enc_{\pk}$ is probabilistic for all $\pk \in \Kp$, then we say that $\E$ is {\em randomized}, otherwise we say that $\E$ is {\em deterministic}.
\end{itemize}

The notions of security for PKES are basically the same as the ones for SKES, with two important differences:

\begin{enumerate}
\item because the public keys are, in fact, public, {\em all} the parties (including every stage of any adversary) can perform encryptions in polynomial time. Therefore, all parties have oracle access to $\Enc_\pk$. In particular, giving \M and \D the public key \pk as input also implies access to $\Enc_\pk$.
\item As an immediate consequence, notice that for PKES, IND-CPA is the {\em minimal} meaningful security notion. In fact, if \E is a PKES and \A an IND adversary for \E, notice that $\gameINDCPA=\gameIND[\A^{\Enc_\pk}]$.
\end{enumerate}

Finally, as in the SKES case, it is clear that for a PKES to be IND-CPA secure, $\Supp{\KGen(\secpar)}$ must be superpolynomial in $\secpar$ -- actually, both $\left| \set{ \pk \in \Kp_\secpar} \right|$ and $\left| \set{ \sk \in \Ks_\secpar} \right|$ must be superpolynomial in $n$.

IND-CPA secure PKES can be built from OWTPs. Assume for simplicity that $\X = \bin^\secpar$. Then we define the following.

\begin{construction}[PKES from OWTP]\label{constr:PKESfromOWTP}
Let $\P:=(\Gen,\Eval,\Invert)$ be a OWTP on \X, with index and trapdoor spaces \I and \T respectively, and let $\PRNG_\P:\X \to \X$ be the Goldreich-Levin PRNG for \P (seen as a OWF with hard-core predicates). Define $\E = \E_{\K,\X,\X^2}:=(\KGen,\Enc,\Dec)$ as a PKES with (public,private) key space $\K  = \Kp \times \Ks$ (where $\Kp := \I$ and $\Ks := \T$, plaintext space \X, and ciphertext space $\X^2$, in the following way:
\begin{enumerate}
\item $\KGen \to (\pk,\sk)$, with  $(\pk,\sk) := (i,t) \from \Gen$;
\item $\Enc_\pk(x) \to (y,z)$,\\with $y := x \xor \PRNG_\P(r)$ and $z \from \Eval(\pk,r)$, where $r \rand \X$;
\item $\Dec_\sk(y,z) := y \xor \PRNG_\P(s)$, where $s \from \Invert(\pk,\sk,z)$.
\end{enumerate}
\end{construction}

\begin{theorem}[IND-CPA PKES from OWTP]\label{thm:PKESfromOWTP}
Construction~\ref{constr:PKESfromOWTP} is an IND-CPA secure PKES.
\end{theorem}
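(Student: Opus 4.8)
The plan is to verify correctness, reduce the \textsf{IND-CPA} security of $\E$ to a single pseudorandomness statement about the Goldreich--Levin generator $\PRNG_\P$, and then indicate how that statement follows from the one-wayness of the \textsf{OWTP}. \emph{Correctness and well-formedness.} On any $(y,z)\from\Enc_\pk(x)$ we have $z\from\Eval(\pk,r)$ for some $r$, so $s:=\Invert(\pk,\sk,z)=r$ by the inversion property of Definition~\ref{def:owtp}, hence $y\xor\PRNG_\P(s)=x\xor\PRNG_\P(r)\xor\PRNG_\P(r)=x$; moreover $\card{\Supp{\KGen(\secpar)}}=\card{\I_\secpar}$ is superpolynomial because $\P$ is a \textsf{OWTP}, so $\E$ is a well-defined \textsf{PKES}. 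Since $\pk$ is public, every stage of every adversary can evaluate $\Enc_\pk$, so $\gameINDCPA=\gameIND[\A^{\Enc_\pk}]$ and it suffices to bound $\advIND$ for an arbitrary \textsf{IND} adversary $\A=(\M,\D)$ that is additionally handed $\pk$.

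\emph{Hybrid games.} Let $\game_0:=\gameINDCPA$, and let $\game_1$ be the same experiment except that the string $\PRNG_\P(r)$ used in the challenge ciphertext is replaced by a fresh $u\rand\X$ drawn independently of $r$, $b$, and $\pk$; thus the challenge becomes $(m^b\xor u,\Eval(\pk,r))$. In $\game_1$ the component $m^b\xor u$ is uniform on $\X$ and statistically independent of $b$ (one-time pad), and $\Eval(\pk,r)$ and $\state$ are independent of $b$ as well, so $\D$'s guess is independent of $b$ and $\Pr[\game_1\to1]=\half$. Hence $\advINDCPA=\Pr[\game_0\to1]-\Pr[\game_1\to1]$, and it remains to bound this difference.

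\emph{Key step (the main obstacle).} I claim that the ensembles $\big(\pk,\PRNG_\P(r),\Eval(\pk,r)\big)$ and $\big(\pk,u,\Eval(\pk,r)\big)$, for $(\pk,\sk)\from\Gen$, $r\rand\X$, and $u\rand\X$, are computationally indistinguishable. Granting this, a distinguisher between $\game_0$ and $\game_1$ is converted into a distinguisher for these ensembles by a black-box reduction $\B$: on input $(\pk,w,z)$, $\B$ simulates the experiment using $\pk$ (it answers all $\Enc_\pk$-queries itself), picks $b\rand\bin$, runs $\M$ and $\D$, hands $\D$ the challenge $(m^b\xor w,z)$, and outputs $1$ iff $\D$'s guess equals $b$; then $w=\PRNG_\P(r)$ reproduces $\game_0$ and $w=u$ reproduces $\game_1$, so $\B$'s advantage equals $\advINDCPA$, which is therefore negligible. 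Proving the claim is the delicate part: it is a strengthening of Theorem~\ref{thm:OWFtoPRNG} that crucially exploits that $\PRNG_\P$ is the generator obtained by iterating hard-core bits of the very permutation $\Eval(\pk,\cdot)$, so that its output stays pseudorandom even when the corresponding trapdoor image of the seed is leaked on the side. I expect to establish it by a bit-by-bit hybrid over the output stream of $\PRNG_\P$ — at the $j$-th hybrid the first $j$ output bits are replaced by uniform bits — with consecutive hybrids shown indistinguishable via the hard-core property of the relevant iterate of $\Eval(\pk,\cdot)$ (equivalently, the one-wayness of $\P$), exactly as in the proof that $\PRNG_\P$ is a \textsf{PRNG}, but carrying $\Eval(\pk,\cdot)$ of the seed through all the hybrids. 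Combining with $\Pr[\game_1\to1]=\half$ gives $\advINDCPA\leq\negl$, i.e.\ $\E$ is \textsf{IND-CPA} secure.
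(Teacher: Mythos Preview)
Your proposal is correct and follows essentially the same approach as the paper: both reduce security to the one-time pad plus the fact that $\PRNG_\P(r)$ remains pseudorandom even when the adversary sees $z=\Eval(\pk,r)$, which in turn rests on the hard-core property of the iterated permutation. The paper gives only a four-sentence sketch of this reasoning, whereas you make the hybrid structure and the reduction $\B$ explicit and correctly isolate the ``leaked-image'' pseudorandomness claim as the one nontrivial step; your bit-by-bit hybrid plan for that claim is the standard way to carry it out.
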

\begin{proof}[Proof (sketch)]
If we omit the second half $z$ of the ciphertext, then the indistinguishability of the encryptions immediately follows from the information-theoretical security of the OTP, as the key $r$ of the OTP is always sampled indipendently and uniformly at random, and the output from the PRNG is computationally indistinguishable from random. So the only way to attack the scheme would be to extract information about the seed $r$ of the PRNG, by looking at the OWTP image $z$ obtained through \Eval. However, since $\PRNG_\P$ only outputs a sequence built from hard-core bits of \P, this would violate the one-wayness of the OWTP.
\end{proof}

\section{Digital Signature Schemes}\label{sec:sig}

{\em Digital signature schemes (DSS)} are another fundamental cryptographic building block for many other advanced constructions. In a DSS, each user has a unique private/public key pair, as in PKES. However, the goal is not to protect the secrecy of the message, but its {\em authenticity}, intended as assurance about the identity of the originator of the message, and {\em integrity}, intended as a guarantee that the original message sent by the originator has not been altered prior to being received. This is achieved by computing a piece of information (the {\em signature}) to attach to a message, in such a way that everyone can verify that such signature could not be computed without possession of a specific secret key. More in detail, the signature is computed by the {\em sender} of a message using the sender's private key, and it is attached to the message. The {\em verifier}, upon receiving the message, checks the validity of the signature by using the sender's public-key. The signature is a (short) message-- and secret-key-- specific bit string, with the following properties:
\begin{enumerate}
\item for any message and any secret-key, it is efficiently computable; and
\item for any message, it is hard to generate a valid signature for any public-key without having the corresponding secret-key.
\end{enumerate}

More formally, and borrowing the notation used in Section~\ref{sec:pke}, we define a DSS as follows.

\begin{definition}[Digital Signature Scheme (\!DSS)]\label{def:dss}
A {\em digital signature scheme (DSS)} with message space $\X$, signature space $\T$, and key space $\K := \Kp \times \Ks$  is a tuple of \PPT algorithms $\Sigscheme := \Sigscheme_{\K,\X,\T} := (\KGen,\Sign,\SVer)$:
\begin{enumerate}
\item $\KGen: \to \K$;
\item $\Sign: \Ks \times \X \to \T$;
\item $\SVer: \Kp \times \X \times \T \to \bin$;
\end{enumerate}
such that the following correctness condition holds:
$$
\foral n \in \NN, \foral x \in \X, \foral (\pk,\sk) \from \KGen, \foral \sig \from \Sign(\sk,x)
$$$$
\implies \SVer(\pk,x,\sig) = 1.
$$
\end{definition} 

As in the case of SKES, the following hold:

\begin{itemize}
\item we assume w.l.o.g. that \secpar is also appended to every \pk and every \sk such that $(\pk,\sk) \from \KGen$, so that every key also implicitly contains the security parameter.
\item As a shorthand notation, we will write $\Sign_{\sk}$ meaning the $\Sign$ algorithm with $\sk \in \Ks$ fixed as a first input; analogously for $\SVer_{\pk}$.
\end{itemize}

\subsection{Existential Unforgeability}

The notions of security for DSS is given in terms of {\em (strong) existential unforgeability under chosen message attack} (there are also weaker notions, but we will not use them here). An adversary is successful if he manages to create a valid signature for a message and public-key without having the corresponding secret key, even after observing a polynomial number of valid message/signature pairs.

\begin{experiment}[$\gameEUFCMA$]\label{expt:EUF-CMA}
Let $\Sigscheme$ be a DSS, and \A a \PPT algorithm. The {\em EUF-CMA experiment} proceeds as follows:
\begin{algorithmic}[1]
\State \textbf{Input:} $\secpar, q_s \in \NN$
\State $(\pk,\sk) \from \KGen$
\State $(x,\sig) \from \A^{\Sign_\sk}(\pk)$ after making at most $q_s$ queries to $\Sign_\sk$, receiving signatures $(x_1,\sig_1) , \ldots (x_{q_s}, \sig_{q_s})$
\If{$\SVer(\pk,x,\sig) = 1$ and $\x \neq \x_i \foral i = 1,\ldots,q_s$}
	\State \textbf{Output:} $1$
\Else
	\State \textbf{Output:} $0$
\EndIf
\end{algorithmic}
The {\em advantage of \A} is defined as:
$$
\advEUFCMA(\secpar,q_s) := \Pr \left[ \gameEUFCMA(\secpar,q_s) \to 1 \right].
$$
\end{experiment}

Sometimes we also consider a slightly different version of Experiment~\ref{expt:EUF-CMA}, where the public/private key pair is given as an input to the game instead of being generated randomly. This is useful if we want to target a specific public key during some security reduction.

\begin{definition}[Existential Unforgeability under Chosen Message Attack (EUF-CMA)]\label{def:EUFCMA}
A DSS $\Sigscheme$ is {\em existentially unforgeable under chosen message attack (or, it is EUF-CMA secure)} iff, for any \PPT algorithm \A it holds that:
$$
\advEUFCMA \leq \negl.
$$
\end{definition}

\subsection{Signatures in the Random Oracle Model}

For certain applications it makes sense to investigate the security properties of signature schemes in the random oracle model. Recall that, in the ROM, all the parties involved gain access to an oracle $\RO$, where $\h$ is a function chosen uniformly at random from the set of all functions on certain spaces. This also means, in particular, that Definition~\ref{def:dss} changes by allowing $\KGen,\Sign,\SVer$ oracle access to $\RO$. The resulting security model changes as follows.

\begin{experiment}[$\gameEUFCMARO$]\label{expt:EUF-CMA-RO}
Let $\Sigscheme$ be a DSS, \RO a random oracle (computing a function \h selected uniformly at random), and \A a \PPT algorithm. The {\em EUF-CMA-RO experiment} proceeds as follows:
\begin{algorithmic}[1]
\State \textbf{Input:} $\secpar, q_s, q_h \in \NN$
\State $(\pk,\sk) \from \KGen^\RO$
\State $(x,\sig) \from \A^{\Sign_\sk,\RO}(\pk)$ after making at most $q_h$ queries to \RO, and $q_s$ queries to $\Sign_\sk$ receiving signatures $(x_1,\sig_1) , \ldots (x_{q_s}, \sig_{q_s})$
\If{$\SVer(\pk,x,\sig) = 1$ and $\x \neq \x_i \foral i = 1,\ldots,q_s$}
	\State \textbf{Output:} $1$
\Else
	\State \textbf{Output:} $0$
\EndIf
\end{algorithmic}
The {\em advantage of \A} is defined as:
$$
\advEUFCMARO(\secpar,q_s,q_h) := \Pr \left[ \gameEUFCMARO(\secpar,q_s,q_h) \to 1 \right].
$$
\end{experiment}

\begin{definition}[Existential Unforgeability under Chosen Message Attack in the Random Oracle Model (EUF-CMA-RO)]\label{def:EUFCMARO}
A DSS $\Sigscheme$ is {\em existentially unforgeable under chosen message attack in the random oracle model (or, it is EUF-CMA-RO secure)} iff, for any \PPT algorithm $\A$ it holds that:
$$
\advEUFCMARO \leq \negl.
$$
\end{definition}

\section{The Fiat-Shamir Transformation}\label{sec:FS}

The Fiat-Shamir (FS) transformation \cite{FS86} is a well known method to remove interaction in three-move identification schemes between a prover and verifier (also called {\em$\Sigma$-protocol}), by letting the verifier's challenge $\ch$ be determined via a hash function $\h$ applied to the prover's first message $\com$. Currently, the only generic, provably secure instantiation is by modeling the hash function $\h$ as a random oracle \cite{BR93,PS00}. In this section, we will investigate the security of the FS transformation when applied to a $\Sigma$-protocol \sigmaproto in order to obtain a DSS \Sigscheme, which we call {\em the FS transform of \sigmaproto}.

\subsection{Hard Languages}

Let $\L \in \NP$ be a language with a (polynomially computable) relation $\R$, i.e., 
$\foral x : x \in \L \iff \exists \ w \in \W \subset \bin^{\poly(|x|)} : (x,w) \in \R$. In this case we also write that $x \in \L_\secpar$ and $(x,w) \in \R_\secpar$, for $\secpar = |x|$. For using $\L$ in cryptographic applications, we need to discuss the following two issues:
\begin{enumerate}
\item given a statement $x \in \L$, how hard is to find a valid witness for $x$? And,
\item is it possible at all to find valid pairs $(x,w) \in \R$ in an efficient way?
\end{enumerate}

For an interesting security notion, finding a witness from $x$ alone should be infeasible for computationally bounded adversaries. On the other hand, it is useful to have a way to efficiently sample elements from the relation.

To this end we assume the existence of an efficient {\em hard instance generator $\Inst$}, which on input the security parameter \secpar outputs a pair $(x,w) \in \R_\secpar$ such that no \PPT algorithm can find valid witnesses for the overwhelming majority of statements contained in any of $\Inst$'s output. 
If $\L$ admits a hard instance generator, we say that $\L$ is a {\em hard language}.

\begin{definition}[Hard Language and Instance Generator]
Let \R be an $\NP$ relation between language \L and witness space \W. A \PPT algorithm \Inst is a {\em hard instance generator for \R} iff the following hold:
\begin{enumerate}
\item $(x,w) \in \R_\secpar$, for any $(x,w) \from \Inst$; and
\item for any \PPT algorithm \A it holds:
$$
\Pr_{(x,w) \from \Inst} \left[ (x,\A(x)) \in \R \right] \leq \negl.
$$
If $\L$ admits a hard instance generator, we say that $\L$ is a {\em hard language}, and we denote it by $\hardL$.
\end{enumerate}
\end{definition}

Notice that the existence of a hard instance generator does not mean that it is hard to find a valid witness for {\em any} statement in \L. But this certainly holds for the vast majority of those statements in the subclass output by \Inst. Moreover, the cardinality of this subclass is at least superpolynomial in \secpar (otherwise \PPT algorithms with oracle access to \Inst could find valid witnesses by exhaustive search). This fact is used in the following paragraphs about the {\em Fiat-Shamir transformation} in order to show that large enough commitment spaces can be built from hard languages. Candidates for hard languages are at the base of many cryptographic constructions, and stem from \NP problems such as {\em graph isomorphism}~\cite{GMW86}, {\em decisional Diffie-Hellman for finite groups}~\cite{BonehDDH}, and many others.

\subsection{Identification Schemes}

An {\em identification scheme (IS)} between a {\em prover} $\P$ and a {\em verifier} $\V$ is an interactive protocol which allows $\P$ to prove his {\em identity} to $\V$. By `proving identity' we mean `proving a statement about one's identity'. This is usually done with the help of a hard language \hardL where every user identity is bound to a certain statement; in practice, identities are usually linked to some public key, and for the prover to succeed he must prove ownership of the corresponding private key. We write $d \from \left(\P(x,w),\V(x)\right)$ for the final outcome of the protocol, where $d \in \bin$ is a bit denoting the final decision (acceptance or rejection) of the verifier.

ISs are related to a class of cryptographic objects known as {\em interactive proofs of knowledge}. Traditionally, the security notion for an IS is based on {\em impersonation security}, which intuitively states that no efficient adversary should be able to make \V accept a statement $x$ without knowing a valid witness $w$. However, for the scope of this work, a weaker notion of security (which we call {\em weak impersonation security}) suffices. In this notion, additional effort is required for an adversary to be successful. Namely, given a statement $x$, the adversary should be able, after interacting with \idscheme, to output a valid witness for $x$, breaking the security of the hard language. This, in particular, would allow the adversary to make \V accept the execution of the scheme (but the converse is not necessarily true, that is why this notion is called `weak'). Moreover, weak security comes in two variants, depending on the level of interaction that the adversary is allowed to have with \idscheme. For {\em passive} adversaries, the only allowed interaction is given by observing and recording the executions of (at most polynomially many) sequential instances of the IS for a given statement. Therefore, passive weak security only relies on the hardness of the language \hardL.

\begin{definition}[Passively and Weakly Secure Identification Scheme (PWSIS)]\label{def:pwsis}
A {\em passively and weakly secure identification scheme (PWSIS)}, \idscheme for a hard language $\hardL$ is an interactive protocol between two \PPT algorithms \P and \V satisfying:
$$
\foral \secpar, \foral (x,w) \from \Inst \implies \left(\P(x,w),\V(x)\right) \to 1.
$$
\end{definition}

An {\em active} adversary, instead, is also allowed to interact directly with $\P(x,w)$ by impersonating \V, and its goal is to output a valid witness for $x$ given this interaction. That is, an active adversary $\A:=(\A_1,\A_2)$ is a passive adversary who has also access to $\P(x,w)$ (seen as oracles). However, in order to avoid trivial breaks of the identification scheme (e.g., by man-in-the-middle attacks), during the security game the adversary can only be active before actually seeing $x$, and becomes passive afterwards. We express this as $\A_1^{\P(x,w)}(x)$. Obviously, if an IS is weakly secure against active attacks, it is also secure against passive attacks, but the converse does not necessarily hold. More formally, we define the following.

\begin{definition}[Actively Weakly Secure Identification Scheme (AWSIS)]\label{def:awsis}
An {\em actively and weakly secure identification scheme (AWSIS)}, \idscheme for a hard language $\hardL$ is a PWSIS (according to Definition~\ref{def:pwsis}) such that, for every \PPT algorithms $\A_1, \A_2$, the following holds:
$$
\Pr_{(x,w) \from \Inst} \left[ (x,\A_2(x, \A_1^{\P(x,w)}(x))) \in \R \right] \leq \negl.
$$
\end{definition}

\subsection{$\Sigma$-Protocols}

A $\Sigma$-protocol for a hard language $\hardL$ between a {\em prover} $\P$ and a {\em verifier} $\V$ is a 3-step interactive protocol which allows $\P$ to convince $\V$ that he knows a witness $w$ for a public theorem $x \in \L$, without giving to $\V$ non-trivial information beyond this fact. Informally, a $\Sigma$-protocol $\sigmaproto$ consists of an interactive exchange of three messages $(\com,\ch,\resp)$ where the first message $\com$ (the {\em commitment}) is sent by $\P$, the second message $\ch$ (the {\em challenge}) is sampled uniformly from a challenge space by \V, and the last message $\resp$ (the {\em response}) is computed by \P by using the witness. We write $(\com,\ch,\resp) \from \left(\P(x,w),\V(x)\right)$ for the randomized output (the {\em communication transcript}) of an interaction between $\P$ and $\V$. We denote individual messages of the (stateful) prover in such an execution by $\com \from \P(x,w)$ and $\resp \from \P(x,w,\com,\ch)$, respectively. Analogously, we denote the verifier's steps by $\ch \from \V(x,\com)$ for the challenge step, and $d \from \V(x,\com,\ch,\resp)$ for the final decision, where $d \in \bin$ is a bit denoting acceptance or rejection.

More formally, we define the following.

\begin{definition}[$\Sigma$-Protocol]\label{def:sigmaproto}
A $\Sigma$-protocol ({\em `sigma-protocol'}) \sigmaproto for a hard language $\hardL$ is a $3$-move interactive protocol with exchange of messages $\com,\ch,\resp$ between two \PPT algorithms \P and \V satisfying the following properties:

\begin{enumerate}
\item \textbf{Completeness:} $\forall \secpar \in \NN, (x,w) \in \R_\secpar, (\com,\ch,\resp) \from (\P(x,w),\V(x))$ it holds that: $\V(x,\com,\ch,\resp)=1$.

\item \textbf{Public-Coin:} $\forall \secpar \in \NN, (x,w) \in \R_\secpar, \com \from \P(x,w)$, the challenge distribution $\ch \from \V(x,\com)$ is uniform on $\bin^{\poly(\secpar)}$.

\item \textbf{Special Soundness:} there exists a \PPT algorithm $\J$ (the {\em extractor}) such that, given two valid transcripts $(\com,\ch,\resp)$ and $(\com,\ch',\resp')$ for $x\!\in\!\L$ (with $\ch\!\neq\!\ch'$) and $\V(x,\com,\ch,\resp)\!=\!\V(x,\com,\ch',\resp')\!=\!1$, the extractor outputs a witness $w\from \J(x,\com,\ch,\resp,\ch',\resp')$ for $x$, satisfying $(x,w) \in \R$.

\item \textbf{Honest-Verifier Zero-Knowledge (HVZK):} there exists a \PPT algorithm $\S$ (the {\em zero-knowledge simulator}) which, on input $x \in\L$, outputs a transcript $(\com,\ch,\resp)$ that is computationally indistinguishable from a valid transcript derived in a $\sigmaproto$ interaction. That is, for any \PPT algorithm $\V=(\V_1,\V_2)$, the following two distributions are statistically indistinguishable:
\begin{multicols}{2}

\begin{algorithmic}[1]
\State \textbf{Input:} $n \in \NN$
\State $(x,w,\mathsf{state}) \from \V^*_1$
\If{$(x,w) \in \R$}
	\State \!\!\!\!\!\!$(\com,\!\ch,\!\resp)\!\from\!(\P(x,w),\!\V(x))$
\Else
	\State \!\!\!\!\!\!$(\com,\ch,\resp) := (\bot,\bot,\bot)$
\EndIf
\State $b \from \V^*_2(\com,\ch,\resp,\mathsf{state})$
\State \textbf{Output:} $b$
\end{algorithmic}
\vfill
\columnbreak
\begin{algorithmic}[1]
\State \textbf{Input:} $n \in \NN$
\State $(x,w,\mathsf{state}) \from \V^*_1$
\If{$(x,w) \in \R$}
	\State \!\!\!\!\!\!$(\com,\ch,\resp) \from \S(x)$
\Else
	\State \!\!\!\!\!\!$(\com,\ch,\resp) := (\bot,\bot,\bot)$
\EndIf
\State $b \from \V^*_2(\com,\ch,\resp,\mathsf{state})$
\State \textbf{Output:} $b$
\end{algorithmic}
\vfill
\end{multicols}
\end{enumerate}
\end{definition}

It turns out that $\Sigma$-protocols are also (passively, weakly-secure) identification schemes.

\begin{theorem}[$\Sigma$-Protocols as IS]\label{thm:sigmaIS}
Let \sigmaproto be a $\Sigma$-protocol. Then \sigmaproto is a PWSIS.
\end{theorem}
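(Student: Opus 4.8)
The plan is to verify that a $\Sigma$-protocol $\sigmaproto$ satisfies the two requirements of Definition~\ref{def:pwsis}: the correctness (completeness) condition, and passive weak security. The first is immediate: the completeness property of Definition~\ref{def:sigmaproto} says precisely that $\forall \secpar, (x,w) \in \R_\secpar$, an honest execution $(\com,\ch,\resp) \from (\P(x,w),\V(x))$ satisfies $\V(x,\com,\ch,\resp) = 1$, which is exactly $(\P(x,w),\V(x)) \to 1$. So the work is entirely in establishing passive weak security.

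For passive weak security, I would argue by contradiction: suppose there is a \PPT passive adversary $\A$ that, after observing polynomially many transcripts of honest executions $(\P(x,w),\V(x))$ for a given $x$, outputs a valid witness $w'$ for $x$ with non-negligible probability. I want to turn $\A$ into a \PPT algorithm breaking the hardness of $\hardL$, i.e., one that on input $x$ alone (with $(x,w) \from \Inst$) outputs a witness with non-negligible probability. The key observation is that the transcripts $\A$ observes carry \emph{no} information about $w$ beyond what is already implied by $x \in \L$: by the Honest-Verifier Zero-Knowledge property, each honest transcript is statistically indistinguishable from one produced by the simulator $\S(x)$, which takes only $x$ as input. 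So the reduction proceeds: on input $x$, run the zero-knowledge simulator $\S(x)$ independently to produce as many transcripts as $\A$ requests, feed these to $\A$, and output whatever witness $\A$ returns. By a hybrid argument over the (polynomially many) transcripts, the view of $\A$ in this simulated interaction is statistically close to its view in the real passive attack, so $\A$ still outputs a valid witness with essentially the same (non-negligible) probability. This contradicts the hardness of $\hardL$ (condition~2 in the definition of hard instance generator), completing the argument.

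I expect the main subtlety to be the hybrid step: the HVZK guarantee is stated for a \emph{single} simulated transcript against a malicious verifier $\V^* = (\V_1^*, \V_2^*)$, whereas the passive adversary sees a sequence of transcripts for the same fixed statement. One must check that statistical indistinguishability is preserved under polynomially many independent repetitions — this follows from a standard triangle-inequality / hybrid argument, since each transcript in the honest protocol is generated with fresh independent randomness and likewise for each simulator call, so the total statistical distance is at most the number of transcripts times the per-transcript distance, still negligible. A minor point worth noting is that the malicious-verifier formulation of HVZK in Definition~\ref{def:sigmaproto} is actually stronger than needed here (the honest verifier is public-coin, so an \emph{honest}-verifier simulator already suffices), but invoking the stated property is harmless. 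No other property of the $\Sigma$-protocol (special soundness in particular) is needed for this direction.
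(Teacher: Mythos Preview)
Your argument is sound, but it proves more than the paper asks. Look again at Definition~\ref{def:pwsis}: the only formal requirement listed for a PWSIS is the completeness condition $\foral \secpar, \foral (x,w) \from \Inst \implies (\P(x,w),\V(x)) \to 1$; no separate security clause appears (contrast this with Definition~\ref{def:awsis}, which \emph{does} add one). Accordingly, the paper gives no proof of Theorem~\ref{thm:sigmaIS} at all --- it is immediate from the Completeness property in Definition~\ref{def:sigmaproto} --- and the paper explicitly remarks right after the theorem that ``HVZK is not necessary for Theorem~\ref{thm:sigmaIS} to hold.''

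Your second half, invoking HVZK plus a hybrid over simulated transcripts to reduce to the hardness of $\hardL$, is a correct proof of the natural stronger statement (a passive eavesdropper seeing polynomially many honest transcripts cannot output a witness), and it is exactly the argument one would expect under a more conventional reading of ``passively secure.'' But it is not what this paper's formal definition requires, and it directly conflicts with the paper's remark that HVZK is unnecessary here. For the theorem as stated, your first sentence already suffices.
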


It is important to notice two things in the above theorem:
\begin{itemize}
\item HVZK is not necessary for Theorem~\ref{thm:sigmaIS} to hold; and
\item a $\Sigma$-protocol may or may not be also an AWSIS.
\end{itemize}

\subsection{The FS Transformation applied to $\Sigma$-Protocols}

The Fiat-Shamir transformation of a $\Sigma$-protocol $(\P,\V)$ is a modification of the protocol where the computation of $\ch$ is done as $\ch \from \h(x,\com)$ instead of $\from \V(x,\com)$. Here, $\h$ is a public hash function which is usually modeled as a random oracle $\RO$; in this case we speak of the {\em Fiat-Shamir (FS) transformation of $(\P,\V)$ in the random-oracle model}. Note that we include $x$ in the hash computation, but all of our results remain valid if $x$ is omitted from the input. If applying the FS transformation to a $\Sigma$-protocol, one obtains a signature scheme, if the hash computation also includes the message $m$ to be signed. We call the resulting signature scheme {\em FS transform of \sigmaproto in the ROM}, and we denote it by $\FSSigma$.

\begin{definition}[FS Transform of a $\Sigma$-Protocol]
Let \sigmaproto be a $\Sigma$-protocol for a hard language \hardL, with commitment space \X, challenge space \Y, and response space \Z. Let \RO be a random oracle for a random function $\h: \L \times \X \times \M \to \Y$. The {\em FS transform of \sigmaproto in the ROM, $\FSSigma$}, is a DSS with message space $\M$, signature space $\T :=\X \times \Y \times \Z$, and key space $\K := \L \times \W$, defined as follows:
\begin{enumerate}
\item $\KGen \to (\pk,\sk)$, where $(\pk,\sk):=(x,w) \from \Inst$
\item $\Sign^\RO(\sk,m) \to \sig := (\com,\ch,\resp)$,\\where $\com \from \P(\pk,\sk)$, $\ch := \h(\pk,\com,m)$,\\and $\resp \from \P(\pk,\sk,\com,\ch)$
\item $\SVer^\RO(\pk,m,\sig) \to b$,\\where $\sig := (\com,\ch,\resp)$, $b \from \V(\pk,\com,\h(\pk,\com,m),\resp)$
\end{enumerate}
\end{definition}

Notice that in the above signature the challenge $\ch$ can always be omitted (and it is infact ignored in the verification step), because it is recovered by computing $\h$ on the message, the commitment, and the public key. In this case we define the signature space of the DSS as $\T := \X \times \Z$. 
The following theorem states that the above construction yields secure DSSs in the ROM.

\begin{theorem}[Security of a Fiat-Shamir Transform~\cite{PS00}]\label{thm:FS}
Let \sigmaproto be a $\Sigma$-protocol. Then $\FSSigma$ is an EUF-CMA-RO secure DSS.
\end{theorem}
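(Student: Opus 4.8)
The plan is to follow the classical Pointcheval--Stern strategy~\cite{PS00}: assume a \PPT forger $\A$ against $\FSSigma$ with non-negligible advantage $\epsilon := \advEUFCMARO$, and construct a \PPT algorithm $\B$ which, given a statement $x$ produced by $\Inst$, finds a valid witness $w$ with $(x,w)\in\R$ with non-negligible probability, contradicting the assumption that $\Inst$ is a hard instance generator for $\hardL$.

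First I would describe how $\B$ simulates the EUF-CMA-RO game for $\A$ \emph{without knowing the secret key}. Setting $\pk := x$, the algorithm $\B$ answers the (at most $q_h + q_s$) random-oracle queries by lazy sampling, maintaining a lookup table as in the procedure of Section~\ref{sec:rom}. To answer a signing query on a message $m$, $\B$ runs the HVZK simulator $\S(x)$ to obtain a transcript $(\com,\ch,\resp)$, \emph{reprograms} the table so that $\h(\pk,\com,m) := \ch$, and returns $(\com,\ch,\resp)$. I would then argue this reprogramming is undetectable except with negligible probability: since $\hardL$ is a hard language, the commitment space built from it (as discussed after the hard-language definition) has superpolynomial cardinality, so each freshly simulated $\com$ collides with a previously queried or programmed point only with negligible probability; conditioned on no collision, the programmed values are distributed exactly as fresh oracle outputs, and by HVZK the simulated transcripts are statistically indistinguishable from genuine ones. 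A hybrid over the $q_s$ signing queries shows $\A$ still outputs a valid forgery $(x,\sig^* = (\com^*,\ch^*,\resp^*))$ for some message $m^*$ (never queried to the signing oracle) with probability $\epsilon - \negl$ in the simulated game.

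Next comes the \emph{forking argument}, which I expect to be the main obstacle. Observe first that, except with probability at most $2^{-\poly(\secpar)}$, the forger must have queried $\RO$ on the point $(\pk,\com^*,m^*)$ at some time $i^* \in \{1,\dots,q_h+q_s\}$: otherwise $\ch^*$ is an independent uniformly random value unknown to $\A$, and $\V(\pk,\com^*,\ch^*,\resp^*)=1$ holds with negligible probability over the oracle. Now $\B$ runs $\A$ a second time with the same randomness and the same oracle answers for all queries up to index $i^*-1$, but answers the $i^*$-th query (and all subsequent ones) with fresh randomness. By the standard rewinding/forking lemma behind~\cite{PS00}, with probability at least roughly $\tfrac{\epsilon}{q_h+q_s}\bigl(\tfrac{\epsilon}{q_h+q_s} - 2^{-\poly(\secpar)}\bigr) - \negl$ both runs succeed, the $i^*$-th query point is the same $(\pk,\com^*,m^*)$ in both runs (it is fixed by the shared prefix of execution), and the two answers $\ch^*,\ch^{*\prime}$ differ. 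Since $\com^*$ is determined before query $i^*$, this produces two accepting transcripts $(\com^*,\ch^*,\resp^*)$ and $(\com^*,\ch^{*\prime},\resp^{*\prime})$ with $\ch^*\neq\ch^{*\prime}$. Note also that the forking point is never a reprogrammed point, because $m^*$ was never signed --- this is what makes the two pieces of the proof compatible.

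Finally I would invoke \emph{special soundness}: $\B$ runs the extractor $\J(x,\com^*,\ch^*,\resp^*,\ch^{*\prime},\resp^{*\prime})$ to obtain $w$ with $(x,w)\in\R$ and outputs it. Collecting the bounds, $\B$ is \PPT and $\Pr_{(x,w)\from\Inst}[(x,\B(x))\in\R]$ is non-negligible whenever $\epsilon$ is, contradicting the hardness of $\hardL$; hence $\epsilon = \negl$ and $\FSSigma$ is EUF-CMA-RO secure. The delicate points to get right are the precise probability accounting in the forking step (the quadratic advantage loss and the union bounds over $q_h$, $q_s$) and a careful check that the signing simulation, which reprograms the oracle at pairs $(\pk,\com,m)$ with $m$ a signed message, never touches the forking point $(\pk,\com^*,m^*)$.
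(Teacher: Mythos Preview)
Your proposal is correct and follows exactly the approach sketched in the paper: simulate signing queries via the HVZK simulator with random-oracle reprogramming, then rewind/fork the forger to obtain two accepting transcripts $(\com^*,\ch^*,\resp^*)$ and $(\com^*,\ch^{*\prime},\resp^{*\prime})$ with $\ch^*\neq\ch^{*\prime}$, and finally apply special soundness to extract a witness for $x$, contradicting the hardness of $\hardL$. The paper only gives this as a brief sketch and cites~\cite{PS00} for the details, whereas you have fleshed out the simulation and forking bookkeeping more explicitly.
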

\begin{proof}[Sketch]
The proof of this theorem uses rewinding. Intuitively, given a statement $x$ and an adversary forging a signature for the DSS, this is used to extract a transcript $(\com,\ch,\resp)$ for the underlying $\Sigma$-protocol. After that, the adversary is rewound, and the random oracle reprogrammed, in such a way that letting the adversary run again with the new oracle yields a related transcript $(\com,\ch',\resp')$ for the same \com but $\ch'\neq\ch$. This, in turn, allows to use the special soundness property to extract a valid witness for $x$, therefore breaking the weak security of the underlying $\Sigma$-protocol, in contrast to Theorem~\ref{thm:sigmaIS}.
\endproof
\end{proof}

\section{ORAMs}\label{sec:ORAM}

In this chapter we have presented many different cryptographic objects, in order of growing technical complexity. As a last example, we conclude with the concept of {\em Oblivious Random Access Machine (ORAM)}, and we define and analyze security models against classical adversaries.

Defining ORAMs in a fully formal way is a long, delicate, and strenuous task~\cite{gooram}. Therefore, in the following we will use a simplified model (introduced in~\cite{GKK17}) which covers most of the existing ORAM constructions without delving too much into the fine print - but still retaining a reasonable level of formalism - and which has the advantage of being much easier to treat.

Informally, an ORAM is an interactive protocol between two parties: a {\em client \C} and a {\em server \S}, which we model as two \PPT Turing machines (or, in our case, uniform families of circuits) sharing a communication tape (circuit register) $\Xi$ to exchange data. In this scenario, a computationally limited \C wants to outsource a {\em database (DB)} to the more powerful \S. Moreover, \C wants to perform {\em operations} on the DB (by interactively communicating with \S) in such a way that \S, or any other honest-but-curious adversary \A having read-only access to $\Xi$ and \S's internal memory, cannot determine the nature of such operations. The security notion for ORAM schemes is therefore a particular notion of {\em privacy}.

More formally: we define {\em blocks}, the basic storage units used in an ORAM construction. A block is an area of memory (circuit register) storing a \bsize-bit value, for a fixed parameter $\bsize \in \NN$ which depends on \C's and \S's architectures. A {\em database} (DB) of size $\dbsize \in \NN$ is an area of \S's memory which stores an array $(\block_1, \ldots , \block_\dbsize)$ of such blocks. As we assume this database to reside on the server's side, we will denote it as $\S.\DB$. Notice that the precise way this array of blocks is represented in the database is unspecified, and left to the exact implementation of the ORAM scheme taken into account. For example, in the ORAM construction we are going to analyze in detail, the server's database $\S.\DB$ stores blocks in a binary tree structure. We will abuse notation and write that $\S.\DB(i) = \block$ if $\block$ is the $i$-th component of $\S.\DB$, and that $\block \in \S.\DB$ if $\block$ is stored at some position in the database \S.\DB. 

Next we define {\em data units} as the basic units of data that the client wants to access, read, or write. Formally, a data unit is an \dsize-bit value for a fixed parameter $\dsize \leq \bsize$ which depends on \C's and \S's architectures. Every block encodes (usually in an encrypted form) a data unit, plus possibly auxiliary information such as a block identifier, checksum, or hash value. Since every block can encode a single data unit, at any given time $t$ it is defined a function $\Data_t: \S.\DB \to \bin^\dsize$. With abuse of notation, we will denote by $\Data(\block)$ the data unit encoded in the block $\block$ at a certain time $t$. The client \C can operate on the database through {\em data requests}.

\begin{definition}[Data Request]\label{dfn:dr}
A {\em data request} to a database $\S.\DB$ of size \dbsize is a tuple $\dr = (\op, i, \data)$, where $\op\in\{\text{read},\text{write}\}, i \in \set{1,\ldots,\dbsize}$, and $\data \in \bin^\dsize$ is a data unit ($\data$ can also be $\bot$ if $\op=\text{read}$).
\end{definition}

Finally, we define the meaning of a {\em communication transcript} during an execution of an ORAM protocol. Since this also depends on the exact implementation of the ORAM scheme, we will use the following definition.

\begin{definition}[Communication Transcript]\label{def:comtrans}
A {\em communication transcript} $\com_t$ at time $t$ is the content of the communication channel $\Xi$ at time $t$ of the protocol's execution.
\end{definition}
Notice that the above defines the communication transcript as a function of time, but since an ORAM is a multi-round interactive protocol we will just consider $\com$ as a discrete function of the round $1,2,\ldots$ of the protocol.

\

We are now ready to give a definition of ORAM. We assume that a server's database is always initialized empty (usually with randomized encryptions of $0$ elements as blocks), and it is left up to the client the task of `populating' the database with appropriate {\em write} operations.

\begin{definition}[ORAM]\label{def:oram}
Let $\maxsize \in \NN, \msize \geq \dsize \in \NN$ be fixed parameters, and $\E=(\KGen,\Enc,\Dec)$ be a SKES mapping $\msize$-bit plaintexts to $\bsize$-bit ciphertexts. An {\em ORAM} $\oram_\E$ with parameters $(\maxsize,\dsize,\E)$ is a pair of two-party interactive randomized algorithms, $(\init,\access)$, such that:
\begin{itemize}
\item $\init(\secpar,\dbsize) \rightarrow (\C,\S)$ in the following way:
	\begin{enumerate}
	\item \secpar is the security parameter, $\dbsize \leq \maxsize$;
	\item $k \from \KGen(\secpar)$ is generated by \C;
	\item \S includes a database $\S.\DB=(\block_1,\ldots,\block_\dbsize)$,\\where $\foral i \implies \block_i \from \Enc_k(0)$;
	\end{enumerate}
\item $\access(\C,\S,\dr) \rightarrow (\C',\S',\com)$ in the following way:
	\begin{enumerate}
	\item \C issues a data request $\dr$;
	\item \C and \S communicate through $\Xi$ and produce the communication's transcript $\com$;
	\end{enumerate}
\end{itemize}
\end{definition}

One might wonder why it is necessary to explicitly condition the definition of an ORAM in respect to a symmetric-key encryption scheme \E. It is actually possible to use different primitives, such as PKES, but most of the known ORAM constructions work well with just a simple primitive such as SKES. One might also wonder why the definition does not depend on other cryptographic primitives, such as PRNGs or PRFs. The reason is that not all ORAM constructions use such primitives, for example the `trivial' ORAM scheme~\cite{gooram} (which consists in just transferring the whole encrypted database from \S to \C and back at every data request) does not use anything else than a SKES \E as a building block. On the other hand, notice that encryption of the database is a minimal requirement for security, as we will see, therefore it makes sense to explicitly specify the scheme \E in the notation.

An ORAM must satisfy {\em soundness} and {\em security}. We are going to define security in Section~\ref{sec:oramsec}. Regarding soundness, the exact specification depends on the particular ORAM construction considered. A simplified, game-based definition of soundness ({\em `correctness'}) can be found in~\cite{tworam}, but it is difficult to adapt to the model from~\cite{GKK17} which we consider here, and which is more aimed at studying ORAM security, while a general definition (that can be found in~\cite{gooram}) is rather involved, and goes outside the scope of this work. The meaning of the soundness property is that the ORAM protocol `should work', i.e., after any execution of $\init$ or $\access$ the two parties \C and \S must be left in such a state that allows them to continue the protocol in the next round. Despite the generality of this statement, in the model we consider here minimal soundness conditions can be identified, which must hold for {\em any} ORAM construction.

\begin{definition}[Minimal ORAM Soundness Conditions]\label{def:soundness}
An ORAM construction $\oram_\E$ has {\em minimal soundness} if the following hold:
\begin{enumerate}
\item \label{item:key1} for any $(\secpar,\dbsize)$, if $(\C,\S) \leftarrow \init(\secpar,\dbsize)$, then \C stores the secret key $k$ from Def.~\ref{def:oram};
\item for any $\dr = (\op, i, \data)$, if $(\C',\S',\com) \leftarrow \access(\C,\S,\dr)$, then:
	\begin{enumerate}
	\item \label{item:key2} if \C stores the secret key $k$, then also $\C'$ stores $k$;
	\item \label{item:encrypted} if $\op=\text{read}$ and $\S.\DB(i) = \block$, then $\C'$ stores $\Data(\block)$;
	\item \label{item:disappear} if $\op=\text{write}$ and $\S'.\DB(i) = \block$, then $\Data(\block) = \data$.
	\end{enumerate}
\end{enumerate}
\end{definition}

Notice that conditions~\ref{item:key1} and~\ref{item:key2} do not say anything about \S having access to the key $k$ or not: This is a property of {\em security}, not soundness, as we will see in Section~\ref{sec:oramsec}. An ORAM scheme \oram can have additional soundness conditions, depending on the particular construction. We assume that whenever \C (resp., \S) is modified during the execution of the protocol to \C' (resp., \S') after $\access$ calls, all these soundness conditions (the minimal ones above as well as the special ones) are always satisfied. In this case, we also say that \C' is a {\em sound evolution} of \C and that \S' is a {\em sound evolution} of \S.

\subsection{Classical Security of ORAMs}\label{sec:oramsec}

We now look at the security model for ORAMs against classical adversaries introduced in~\cite{GKK17}. Traditionally, the threat model in this case is defined by an {\em honest-but-curious adversary \A}. This means that \A is some entity who wants to compromise \C's privacy by having access to the 
communication channel $\Xi$ and \S's internal memory, but who is not allowed to modify the content of the channel or the database against the protocol, i.e., soundness must be preserved. In general, one does not lose generality by assuming that \S itself is the adversary: \S must behave `honestly' (in the sense that he follows the protocol, in particular related to the protocol's soundness), but at the same time he will use all the information he can get through the interaction with \C in order to compromise \C's privacy. In particular, this also implies that \S cannot know the key $k$ generated during $\oram.\init$, as noted above.

Formally, this model is defined in terms of {\em access patterns}, which are the adversarial views during an execution of data requests in $\oram.\access$. Security requires that the adversary's view over a certain run of the protocol does not leak any information about the data requests executed by \C, except the sequences' length. This formulation reminds of the definition of semantic security for encryption schemes. As in that case, equivalent but easier-to-deal-with formulations can be given in terms of {\em computational indistinguishability 
of access patterns}. Following the security model introduced in~\cite{GKK17}, we will consider an adaptive, game-based indistinguishability notion stating that for any two data requests, no computationally bounded adversary with knowledge of the access pattern of the client executing one of the two can distinguish which one was executed. This definition is equivalent~\cite{GKK17} to the simulation-based notion given in~\cite{tworam}, which states that no computationally bounded adversary can distinguish between the interaction with a real client or with a simulator that produces bogus transcripts.

More formally: when a data request is executed, we assume that the honest-but-curious adversary \A records all the communication between \C and \S, plus the changes in \S's internal status. Without loss of generality, as we assume that \A and \S coincide, we assume that the only meaningful changes in the database area $\S.\DB$ only happen between the beginning and the end of an $\access$ execution. The communications are polynomially bounded and, for simplicity, we assume that the channel $\Xi$ does not erase symbols, i.e., it is write-once. Hence, the adversarial view is composed of the communication transcript, and the server's database before and after the execution of the data request. We call this adversarial view, the {\em access pattern} of the execution.

\begin{definition}[Access Pattern]\label{dfn:ap}
Given ORAM client and server $\C$ and $\S$, and a data request $\dr$, the {\em access pattern} $\ap(\dr)$ is the tuple $\left( \S.\DB,\com,\S'.\DB \right)$, where $(\C',\S',\com) \from \access(\C,\S,\dr)$.
\end{definition}

Next, we define formally a {\em classical ORAM adversary}.

\begin{definition}[Classical ORAM Adversary]\label{dfn:oramadv}
A {\em classical ORAM adversary} \A is a \PPT algorithm which 
is computationally indistinguishable from an honest server \S for every ORAM client \C (in particular, soundness is preserved).
\end{definition}

Notice the following fact: this definition of adversary can generally be {\em stronger} than in the usual `honest-but-curious' meaning. In fact, such adversary could still manipulate the channel and the database in a malicious way, as long as the client \C cannot detect such manipulation -- in particular, the soundness of the protocol must be preserved. We define the security of an ORAM through the following indistinguishability game.

\begin{experiment}[\gameORAM]\label{expt:gameORAM}
Let $\oram=(\init,\access)$ be an ORAM construction with parameters $(\maxsize,\dsize,\E)$, \secpar a security parameter and \A a classical ORAM adverary. The {\em computational indistinguishability of access patterns game under adaptive chosen query attack} $\gameORAM$ proceeds as follows:
\begin{algorithmic}[1]
\State \textbf{Input:} $\secpar \in \NN$
\State $\A \to (\A_0,\dr_1,\dbsize \leq \maxsize)$
\State $(\C_0,\S_0) \leftarrow \init(\secpar,\dbsize)$
\Loop{ for $i=1,\ldots,q_1 \in \NN$:} \Comment{first CQA learning phase}
\State $\access(\C_{i-1},\S_{i-1},\dr_i) \to (\C_i,\S_i,\ap_i)$
\State $\A_{i-1}(\ap_i,\S_i) \to (\A_i,\dr_{i+1})$
\EndLoop
\State $\A_{q_1}(\dr_{q_1+1}) \to (\A',\dr^0,\dr^1)$ 
\State $b \rand \bin$
\State $\access(\C_{q_1},\S_{q_1},\dr^b) \to (\C_{q_1+1},\S_{q_1+1},\ap_{q_1+1})$ \Comment{AP-IND challenge query}
\State $\A'(\ap_{q_1+1},\S_{q_1+1}) \to (\A_{q_1+1},\dr_{q_1+2})$
\Loop{ for $i=q_1+2,\ldots,q_2 \geq q_1+2$:} \Comment{second CQA learning phase}
\State $\access(\C_{i-1},\S_{i-1},\dr_i) \to (\C_i,\S_i,\ap_i)$
\State $\A_{i-1}(\ap_i,\S_i) \to (\A_i,\dr_{i+1})$
\EndLoop
\State $\A_{q_2}(\dr_{q_2+1}) \to b' \in \bin$
\If{$b = b'$}
	\State \textbf{Output:} $1$
\Else
	\State \textbf{Output:} $0$
\EndIf
\end{algorithmic}
The {\em advantage of \A} is defined as:
$$
\advORAM := \Pr \left[ \gameORAM \to 1 \right] - \half .
$$
\end{experiment}

In this game the adversary, after selecting suitable ORAM parameters of his choice, is first allowed to see the access patterns originated by executions of \access for data requests of his choice, chosen adaptively one after the other (this is called `first CQA learning phase'.) At some point, the adversary issues a challenge query composed of two (w.l.o.g. different) data requests. One of the two is selected at random and executed through \access, and the adversary, after being allowed a second CQA learning phase, must guess which one of the two was executed. Notice that, since \A is polynomially bounded, $q_1$ and $q_2$ are at most polynomials in \secpar. We are now ready to define the classical security notion for ORAMs.

\begin{definition}[Access Pattern Indistinguishability Under Adaptive Chosen Query Attack]\label{def:AP-IND-CQA}
An ORAM construction $\oram$ has {\em computationally indistinguishable access patterns under adaptive chosen query attack} (or, it is AP-IND-CQA-secure) iff for any classical ORAM adversary \A it holds that $\advORAM \leq \negl$.
\end{definition}

\subsection{PathORAM}

As an example of ORAM construction, we recall here PathORAM, one of the most efficient ORAM constructions proposed to date, introduced by Stefanov et al. in~\cite{pathoram}. We only give a high-level explanation of PathORAM, and for a thorough description of the construction, as well as a detailed proof of its functionality, we refer to~\cite{pathoram}.

In PathORAM a client stores \dbsize blocks of bit size \bsize on a server, in a binary tree structure of height $\tsize = \lceil \log_2 \dbsize \rceil$. Each node of the tree can store a constant amount \zsize of blocks. Every block encodes (in an encrypted form, using an IND-CPA 
SKES) a data unit of bit size \dsize, and optionally additional information which is used to label the block for efficient retrieval. There are many different ways one can implement this labeling of the blocks. In our case we will use the simple approach of concatenating to the data unit $\data$ an \ksize-bit string encoding the block identifier $i \in \set{1,\ldots,\dbsize}$, that is, blocks are of the form $\block_i \from \Enc_k (i\|\data_i)$. This system is very general, and as we will see it has the advantage that it easily translates to the quantum setting, unlike other approaches such as identifying blocks by using a hash table. At the beginning, all the blocks in the tree are initialized in an `empty' state, which is defined by setting to $0$ the identifying label -- recall in fact that valid block identifiers are $1,\ldots,\dbsize$ only. Every block is mapped to a leaf of the tree, and this mapping is recorded in a correspondence table, called {\em position map}, by the client\footnote{Note that the size of the position map is linear in the number of blocks that the client has, and thus cannot be stored locally by the client. The authors of~\cite{pathoram} propose storing the position map recursively to smaller PathORAMs following an idea from~\cite{SSS12}. For ease of exposition however, we will assume here that the position map is stored locally.}. 

A read (or write) operation for a block $\block_i$ is performed by the client, by downloading the path (tree branch) from the root of the tree to the leaf indicated in the client's position map, and randomly remapping $\block_i$ to another leaf in the position map. Then the client decrypts and re-encrypts (re-randomizing) all the blocks in the downloaded path, and for every valid (non-empty) block $\block_j$ found, the client checks its corresponding leaf in the position map, and moves $\block_j$ (if there is enough available space) to the node in the path which is closest to the leaf level and that belongs both to the downloaded path and the path to the leaf of $\block_j$ given by the position map. If a block does not fit anywhere in the downloaded path, then an extra storage, called {\em `stash'} is used by the client to store this overflowing block locally. The blocks found in the stash are also examined during every read (or write) operation and checked if they can be evicted from the stash and placed in the tree. Since the stash must be stored locally by the client, the stash's size should be reasonably small; in fact, in~\cite{pathoram}, the authors show that the probability that the stash exceeds a size of $\O(\log \dbsize)$ is negligible in the number of queries. The intuition is to notice that the stash is only used if the tree root is full, but the average action of a data request is to push only $\block_i$ toward the tree root, and push many other blocks $\block_j$ toward the leaf level. In the following we will mostly ignore the use of the stash for simplicity.

More concretely, we give here a full description of PathORAM (which we denote as \pathoram) according to the formalism introduced.

\begin{construction}[\pathoram~{\cite[Definition~18]{GKK17}}]\label{dfn:pathoram}
For fixed parameters $\dsize,\maxsize \in \NN$, let $\ksize = \lceil \log_2 \maxsize \rceil,\zsize \in \NN,\msize = \dsize + \ksize,\bsize \geq \msize$. Let $\G$ be a PRNG outputting $\ksize$-bit values, and $\E=(\KGen,\Enc,\Dec)$ be a SKES with $\msize$-bit plaintexts and $\bsize$-bit ciphertexts. We define an ORAM construction called $\pathoram = \pathoram_{\E,\G}$ as follows:
\begin{itemize}
\item $\init(\secpar,\dbsize) \rightarrow (\C,\S)$ in the following way:
	\begin{algorithmic}[1]
	\State $\C$ generates a secret key $k\from\KGen$
	\State set $\tsize := \lceil \log_2 \dbsize \rceil$ \Comment{notice $\tsize \leq \ksize$}
	\State $\C$ initializes a position map of the form $\left((1,r_1),\ldots,(\dbsize,r_\dbsize)\right)$, where $r_i$ are $\tsize$-bit values generated 
	 by truncating bits from $\G$'s output
	\State $\S.\DB$ is stored in a binary tree of height $\tsize$, with root $\treeroot$ and leaves $\leaf_0,\ldots,\leaf_{2^\tsize-1}$, and such that:
		\begin{enumerate}
		\item each node of the tree stores up to $\zsize$ blocks;
		\item every block of every node is initialized to $\Enc_k(0^\ksize\|0^\dsize)$.
		\end{enumerate}
	\end{algorithmic}
\item If $\dr = (\op, i, \data)$, then $\access(\C,\S,\dr)\rightarrow (\C',\S',\com)$ as follows:
	\begin{algorithmic}[1]
	\State \C reads $r_i$ from his position map and sends it to \S
	\State $\S$ sends to $\C$ the path $\branch$ from $\treeroot$ to $\leaf_{r_i}$
	\State remap $(i,r_i)$ to $(i,r'_i)$ in the position map of $\C$, where $r'_i$ is a fresh pseudorandom $\tsize$-bit value (generated by truncating the first $\ksize-\tsize$ bits of $\G$'s output), obtaining~$\C'$
	\ForAll{$\block$ contained in $\branch$}
		\State $\C'$ decrypts $\Dec_k(\block) \rightarrow (j\|\data_j) \in \bin^\msize$, \newline \mbox{\ \ \ \ \ \ where $j\in\bin^\ksize, \data_j \in \bin^\dsize$}
		\If{$j=i$}
			\If{$\op = \text{`read'}$}
				\State $\C'$ reads $\data_j$ \Comment{$\C'$ now has {\em access} to $\data_j$}
			\ElsIf{$\op = \text{`write'}$}
				\State $\C'$ sets $\data_j = \data$ \Comment{$\block$ is updated}
			\EndIf
		\EndIf
		\State $\C'$ re-encrypts (re-randomizing) $\block$
		\State find in $\branch$ the common parent node $\node$ between $\leaf_{r_i}$ \newline \mbox{\ \ \ \ \ \ and $\leaf_{r_j}$, closer to the leaf level}
		\State set $b_\mathsf{swap} := \text{`false'}$
		\ForAll{$\block'$ in $\node$}\label{alg:repeatloop}
			\State $\C'$ decrypts $\Dec_k(\block') \rightarrow (j'\|\data'_j) \in \bin^\msize$ 
			\State $\C'$ re-encrypts (re-randomizing) $\block'' \from \Enc_k(j'\|\data'_j)$ 
			\If{$j'=0\ldots0$} \Comment{$\block''$ is empty, can be used}
				\State swap $\block$ and $\block''$
				\State set $b_\mathsf{swap} := \text{`true'}$
			\EndIf
		\EndFor
		\If{$b_\mathsf{swap} = \text{`false'}$} \Comment{no empty blocks in current $\node$}
			\If{$\node \neq \treeroot$}
				\State set $\node$ to be one level up in the tree (i.e., $\node$'s parent)
				\State go to step~\ref{alg:repeatloop}
			\Else
				\State store $\block$ in the $\stash$ \Comment{no empty blocks found}
			\EndIf
		\EndIf
		
	\EndFor
	\State $\C'$ sends back the updated tree branch, $\newbranch$, to $\S$
	\State update $\S.\DB$ with $\newbranch$, obtaining $\S'$
	\State produce $\com$, which contains $r_i, \branch, \newbranch$
	\end{algorithmic}
\end{itemize}

\end{construction}

In the above, we recap the meaning of the parameters as follows:

\begin{itemize}
\item $\secpar$ is the security parameter, used by the encryption scheme $\E$.
\item $\maxsize$ is the maximum number of blocks that the server's architechture can support (an upper bound to \S's tree storage).
\item $\dbsize$ is the maximum number of `real' blocks that the client \C wants to store (so, $\dbsize\leq\maxsize$). Unlike $\maxsize$ thus, $\dbsize$ can be chosen by the adversary in the security game.
\item $\ksize$ is the minimum number of bits that are needed to index all the `real' blocks in the limit scenario where $\dbsize=\maxsize$. Hence, $\ksize$ is also architecture-dependant, and not chosen by $\A$.
\item $\zsize$ is the maximum number of blocks that can be stored in every tree node. Lower values reduce the amount of memory used by \S to store the tree (for a fixed $\dbsize$), but increase the risk of using large amounts of memory by the client for the stash. This is a parameter of the particular \pathoram implementation: as we do not care about performance analysis here, we will leave $\zsize$ undefined, as any nonzero value works for us.
\item $\tsize$ is the minimum number of bits that are needed to index all the `real' $\dbsize$ blocks (hence, $\tsize\leq \ksize$). $\tsize$ also represents the minimum height of the tree necessary to store all blocks in the limit case $\zsize=1$.
\item $\dsize$ is the bit size of the data units used in the \pathoram implementation, and it is hence architecture-dependant.
\item $\msize$ is the total bit size of a data unit, plus the number of bits necessary to address the block where this data unit is encoded, so also this value is architecture-dependant. The encryption scheme $\E$ must be able to work with $\msize$-bit plaintexts.
\item $\bsize$ is the size of a ciphertext produced by the encryption scheme $\E$, and hence the total size of a block. The size of \S's tree storage memory is thus at most $\bsize \maxsize$ bits.
\end{itemize}

We now show the (classical) security of \pathoram.

\begin{theorem}[\!AP-IND-CQA Security of \pathoram]\label{thm:pathoramsec}
Let $\E\!=\!(\!\KGen,\!\Enc,\!\Dec)$ be an IND-CPA SKES, and let $\G$ be a PRNG. Then, \pathoram instantiated 
using $\E$ and $\G$ is an AP-IND-CPA secure ORAM.
\end{theorem}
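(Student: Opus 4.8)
The plan is a short sequence of game hops that turns the real experiment $\gamePORAM$ into an idealized one whose output is independent of the challenge bit. Let $G_0$ denote $\gamePORAM$ itself. In $G_1$ we replace the PRNG $\G$ by a source of truly uniform, independent random bits, so that every leaf label written into the client's position map (at $\init$ and at each remapping during $\access$) is drawn uniformly at random. Distinguishing $G_0$ from $G_1$ yields a distinguisher against $\G$ in the sense of Definition~\ref{dfn:PRNG}: it simulates the whole of $\gamePORAM$ internally, feeding the position-map samples either from $\G$ or from true randomness, and outputs $1$ iff the simulated game outputs $1$; hence $\left| \Pr[G_0 \to 1] - \Pr[G_1 \to 1] \right| \leq \negl$.

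In $G_2$ we replace every invocation of $\Enc_k$ in $\init$ and $\access$ (including all re-encryptions of blocks along a branch, in a node, or in the stash) by $\Enc_k(0^\ksize\|0^\dsize)$. The hop $G_1 \to G_2$ is the main technical obstacle and is handled by a hybrid argument over the total number $N$ of ciphertexts generated during an execution; since $\A$ is \PPT and each $\access$ call touches only a polynomial-size branch (and stash), $N = \poly(\secpar)$. Two consecutive hybrids differ in exactly one ciphertext, and are told apart by an IND-CPA adversary $\B$ against $\E$ (Definition~\ref{def:INDCPA}): $\B$ internally simulates the entire $\gamePORAM$ game --- in particular it picks the challenge bit $b$ itself --- producing the earlier ciphertexts as encryptions of the all-zero block and the later ones as encryptions of their ``real'' content, all via its own $\Enc_k$ oracle, and embedding the IND-CPA challenge ciphertext at the one position that separates the two hybrids, with the message pair $(0^\ksize\|0^\dsize,\, p)$ where $p$ is the real content there; $\B$ finally outputs $1$ iff the simulated game outputs $1$. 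The one subtlety is that the simulated client must still decrypt blocks to carry on the protocol: $\B$ handles this by bookkeeping, recording for every ciphertext it generates the plaintext it is supposed to decrypt to and answering each $\Dec_k$ step of the client from that table, which is faithful because $\E$ is correct. Summing over the hybrids, $\left| \Pr[G_1 \to 1] - \Pr[G_2 \to 1] \right| \leq N\cdot\negl = \negl$.

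It remains to argue that in $G_2$ the advantage is exactly $0$. The honest-but-curious adversary sees only the access patterns $\ap = (\S.\DB, \com, \S'.\DB)$. In $G_2$ the position map lives entirely on the client side, and every ciphertext occurring anywhere in the tree (hence in $\branch$ and $\newbranch$) is a fresh independent sample of $\Enc_k(0^\ksize\|0^\dsize)$; consequently, at every point in time the distribution of $\S.\DB$ is simply ``each of its (polynomially many) block slots holds an i.i.d. copy of $\Enc_k(0^\ksize\|0^\dsize)$'', regardless of the history of accesses, and $\S'.\DB$ differs from $\S.\DB$ only by re-drawing the slots along the downloaded path. Moreover, because $\access$ remaps the accessed block to a fresh leaf on every single call and never discloses a position-map entry before that block is next accessed, the leaf label $r_i$ revealed at each step is a fresh uniform value that $\A$ has never observed, independent of which block was requested and of whether $\op$ is a read or a write. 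Therefore the distribution of each $\ap$ --- and hence of the entire transcript of the experiment, including $\ap_{q_1+1}$ and both learning phases --- is independent of the challenge bit $b$, so $\Pr[G_2 \to 1] = \half$. The stash never enters this argument, as it is stored locally and never appears in any access pattern (its size remaining small, as shown in~\cite{pathoram}, is needed only for soundness, not for this indistinguishability claim). Chaining the bounds gives $\advORAM = \Pr[G_0 \to 1] - \half \leq \negl$, so $\pathoram$ instantiated with $\E$ and $\G$ is AP-IND-CQA-secure in the sense of Definition~\ref{def:AP-IND-CQA}.
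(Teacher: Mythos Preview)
Your proof is correct. You take the standard game-hopping route: swap the PRNG for true randomness, then replace every ciphertext by a zero-encryption via a hybrid over the $N=\poly(\secpar)$ many encryptions (with the bookkeeping trick so the reduction can simulate the client's decryptions), and finally argue information-theoretically that in $G_2$ each revealed leaf is a fresh uniform value never seen by $\A$ while every block in the tree is an i.i.d.\ zero-encryption, so the whole transcript is independent of $b$. Your treatment of the second CQA phase and of the stash is also correct.

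The paper proceeds differently: it gives a \emph{single-shot} reduction to IND-CPA rather than a hybrid. The reduction simulates the client with a plaintext mirror of the tree (exactly your bookkeeping), using its $\Enc_k$ oracle for all blocks it uploads. At the ORAM challenge it forms the two candidate block plaintexts $m^0,m^1$ from $\dr^0,\dr^1$, submits them as its one IND-CPA challenge, guesses a bit $b^*$, simulates the execution of $\dr^{b^*}$ with the returned ciphertext injected at the relevant position, and bases its output on whether $\A$ recovers $b^*$. This buys a tighter reduction (advantage loss roughly a factor $2$ instead of your factor $N$), at the price of a more delicate case analysis --- in particular the paper must separately dispose of ``non-meaningful'' challenges where $m^0=m^1$, which your $G_2$ argument absorbs automatically. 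Your decomposition is more modular and makes the position-map-freshness property (the real engine of PathORAM's security) explicit in the final information-theoretic step; the paper's is tighter but less transparent. Both are valid proofs of the theorem.
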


\begin{proof}
By assumption, the outputs of $\G$ are indistinguishable from random. Therefore, in the following analysis, we can w.l.o.g. replace $\G$ with a real source of randomness.

Suppose that there exists an adversary $\A$ and a non-negligible $\ell$, such that:
$$
\Pr\left[ \gamePORAM = 1 \right] = \frac{1}{2} + \ell.
$$

We will use $\A$ in a black-box way to construct a PPT algorithm able to break the IND-CPA security of $\E$, against the assumption. The idea is to build an algorithm $\D$ which simulates a \pathoram client $\C$, playing the AP-IND-CQA game against $\A$ (w.l.o.g., we assume that $\A$ itself simulates the server $\S$, otherwise $\S$ can be also simulated by $\D$). Throughout the game, $\D$ also stores a copy of the server's database $\S.\DB$, in plaintext. This is allowed, as $\S.\DB$ is of size linear in $\dbsize$, and $\D$ is only simulating $\C$, so he is not limited by the storage constraints usually assumed in a normal ORAM client. Then $\D$ will use the interaction with $\A$ to win the IND-CPA game for scheme $\E$.

More in detail: first, $\D$ executes $\A$. Then $\A$ starts $\gamePORAM$ by choosing $\secpar$ and $\dbsize$, and $\D$ simulates a \pathoram client $\C$ created during $\init$, by initializing his own position map (populated with random values), but {\em without} generating a secret encryption key. Furthermore, $\D$ creates a tree memory structure of height $\tsize$, with leaves indexed $0,\ldots,2^\tsize-1$, where every node stores $\zsize$ plaintexts of bit size $\msize$, which are initialized to $(0^\ksize \| 0^\dsize)$ (the parameters are the same as in Construction~\ref{dfn:pathoram}). This structure will be used by \D to `mirror' $\S.\DB$ in cleartext throughout the execution of \pathoram.

$\D$ now starts $\gameINDCPA[\D]$, obtaining oracle access to $\Enc_k$ for an unknown secret key $k$, and 
choosing as security parameter the same $\secpar$ chosen by $\A$. At this point, notice that \D is able to perfectly simulate a valid client \C having access to the key $k$, in the following way:
\begin{itemize}
\item whenever \C downloads a branch of $\S.\DB$ identified by leaf $r$ by calling $\access$, \D does the same (although the blocks in such downloaded branch will be ignored, as we will see);
\item whenever \C decrypts a certain block in a downloaded branch, \D simulates the decryption oracle $\Dec_k$ by fetching the plaintext $(i\|\data)$ found at the corresponding position in the `mirrored' tree;
\item whenever \C swaps two blocks in a downloaded branch, \D swaps the two plaintexts found at the corresponding positions in the `mirrored' tree;
\item whenever \C encrypts a plaintext $(i\|\data)$ to obtain a new encrypted block, \D does so by using the encryption oracle $\Enc_k$ obtained from the IND-CPA game;
\item whenever \C updates his position map, or uploads an updated branch to $\S.\DB$, \D does the same.
\end{itemize}

Given the above, it is clear that now whenever $\A$ asks for the execution of a data request $\dr$, \D is able to simulate the correct communication transcript $\com$ and a correctly formed updated branch $\newbranch$. Therefore, for every data request performed during the first CQA phase, \A always receives the correct access pattern.

Eventually, at the challenge step \A produces two data requests $\dr^0,\dr^1$, and requests the execution of one of them. For $a \in \bin$, let $\dr^a = (\op^a, i^a, \data^a)$ be the two data requests 
and let $m^a \in \bin^\msize$ be formed as follows:
\begin{itemize}
\item if $\op^a = \text{`write'}$, then set $m^a = (i^a \| \data^a)$;
\item else, set $m^a = (i^a \| \data_{i^a})$, where $\data_{i^a}$ is retrieved by looking for identifier $i^a$ in the mirrored tree.
\end{itemize}

Now, it could happen that $m^0 = m^1$. For example, it might be that the two data requests are of the form $(\text{`write'}, i, \data)$ and $(\text{`read'}, i, \data')$ respectively, but $\block_i$ already encodes $\data$. If this happens we say that the challenge query is {\em non-meaningful}. It is easy to see that two data requests from a non-meaningful challenge query will produce the same statistical distributions of communication transcripts\footnote{Notice how this is not true anymore if the values in the position map are not totally random. Therefore, this step fails if the PRNG used is not secure.} and updated paths, because their effect on the database is equivalent. Therefore, since \A distinguishes the two resulting access patterns with non-negligible probability by assumption, it is clear that the challenge query must be {\em meaningful}, i.e., $m^0 \neq m^1$.

At this point \D executes the challenge IND query using $m^0, m^1$ as plaintexts, and receiving back an encryption $c \from \Enc_k(m^b)$ for a secret bit $b$. \D will also generate a random bit $b^* \rand \bin$ (a `guess'), and will answer \A's challenge query by simulating the execution of $\dr^{b^*}$ as in the CQA phase, but injecting $c$ as an updated block with identifier $i^{b^*}$ during the execution of $\dr^{b^*}$. Then \D keeps simulating \C during the second CQA phase as before, and waits until \A outputs a bit $\hat{b}$. Finally: if $\hat{b}=b^*$, then \D outputs $b^*$ in the IND-CPA game, otherwise \D outputs a new random bit.

Now, notice the following. In the case that \D's guess was correct, i.e., $b=b^*$, it means that $c$ was the right ciphertext at the right place, so that \A has received a correctly formed access pattern. This means that \A correctly guesses $\hat{b}=b^*$ with probability at least $\frac{1}{2} + \ell$, by assumption. In that case, also \D wins, so:
\begin{equation}\label{qoramwin}
\Pr \left[ \gameINDCPA[\D] = 1 \middle| b=b^* \right] \geq \frac{1}{2} + \ell.
\end{equation}
On the other hand, if $b\neq b^*$ we cannot say anything on \A's success probability, because now \A has a malformed access pattern. But we can say that, even if \A fails, \D still succeeds with probability $\frac{1}{2}$.
\begin{equation}\label{qoramfail}
\Pr \left[ \gameINDCPA[\D] = 1 \middle| b \neq b^* \right] \geq \frac{1}{2}.
\end{equation}
Thus, combining~\ref{qoramwin} and~\ref{qoramfail}, the reduction's overall 
success probability is:
$$
\Pr \left[ \gameINDCPA[\D] = 1 \right] \geq \frac{1}{2} + \frac{\ell}{2},
$$
which concludes the proof.
\endproof
\end{proof}

\chapter{QS1: Post-Quantum Security}\label{chap:QS1}

The next step in our analysis of quantum security notions is to consider what happens to classical encryption primitives when the adversaries have access to a quantum computing device. In this scenario, the cryptographic objects we are studying are still classical, as in the security class \QS0. However, since many constructions in \QS0 rely on computational hardness assumptions which do not hold anymore against quantum computers, new security models and constructions have to be considered in order to retain security in the new scenario. The branch of cryptography which aims at this goal has traditionally been called {\em post-quantum cryptography}. That is, post-quantum cryptography is about the security of {\em classical} primitives {\em after} (hence `post-') quantum computing becomes available\footnote{Admittedly, this naming is a bit misleading, because it might be meant as {\em `cryptography resistant against the more advanced model of computation which will conceivably come after quantum computing'}. We do not want to argue here about the term `post-quantum', which has become commonly accepted in the literature.}. The security class which we denote by \QS1 in our new labeling system covers this scenario.

But {\em how do we model post-quantum security exactly?} In the scientific community there has not always been mutual agreement on this. For example, one of the questions which most often cryptographers ask is: {\em ``When should one consider classical access to a function for a quantum adversary, and when should one consider quantum access instead?''}. As we will see, the answer to this question is: {\em ``Whenever the security model implies that the adversary computes the function on his local device, then quantum access should be used.''} We call this principle {\em the \QS1 principle}.

In this chapter we will discuss in detail the \QS1 principle and all the issues arising toward properly defining post-quantum security. Next, we introduce security models and definitions for post-quantum cryptographic primitives, starting from the very basic ones to more elaborated ones. We also discuss post-quantum assumptions, building blocks, and transformations from one primitive to another.

\subsection{My Scientific Contribution in this Chapter}

Theorem~\ref{thm:pqOWFtopqPRNG} is commonly considered folklore, but to the best of my knowledge the first fully formal proof appears in~\cite{ABF+16}, which is a joint work with Gorjan Alagic, Anne Broadbent, Bill Fefferman, Christian Schaffner, and Michael St. Jules.

All the material from Section~\ref{sec:FSQROM} appeared first in~\cite{DFG13}, which is a joint work with Özgür Dagdelen and Marc Fischlin.

Regarding post-quantum ORAMs, all of the results in Section~\ref{sec:pqORAM} first appeared in~\cite{GKK17}, which is a joint work with Nikolaos P. Karvelas and Stefan Katzenbeisser.

Finally, to best of my knowledge, the classification of quantum security reductions appearing in Section~\ref{sec:qred} has never been made explicit before, and it appears in this work for the first time, although single examples of any of those kind of reductions have appeared in the literature before.

\section{Issues in Post-Quantum Security}\label{sec:QS1basics}

Post-quantum security constructions are usually obtained by replacing some underlying hardness assumption with a different, quantum-hard assumption, and then repeating the construction process (i.e., the security proof) leading to the realization of a secure primitive as in \QS0. For example, when designing a post-quantum signature scheme, a natural option would be to consider a signature scheme in \QS0 based on, e.g., the DLP problem, and see if it is possible to obtain a new scheme by replacing the DLP problem with some other quantum hardness assumption, e.g., {\em learning with errors (LWE)} or {\em shortest vector problem (SVP)}. Alternatively, one could simply try to design a signature scheme from scratch by relying on a new security proof reducing the security of the scheme to the quantum hardness of one of the aforementioned mathematical problems. Traditionally, schemes produced by such approaches are labeled `post-quantum'. However, this labeling is sometimes inappropriate. The goal of this section is to give an overview of the many things that {\em could go wrong} when adopting too blindly the procedure described above, and to explain why one should take a more careful approach when defining meaningful notions of post-quantum security.

\subsection{Proof Failures}\label{sec:prooffailures}

The general issue when designing post-quantum primitives is that the classical security proofs might fail quantumly, even when only relying on quantum-hard assumptions. Common reasons for this are (but not limited to) the following.

\begin{itemize}
\item \textbf{No-Cloning:} when the security proof works by using the same value or element for two different purposes, care must be taken in making sure that this does not contradict the No-Cloning Theorem. If the element in question is a classical element, there is no problem. However, for quantum states, it is usually not possible to re-use the state for computing more than a single operation. Sometimes this can be solved by defining the operation in such a way that it does not destroy the input state.

\item \textbf{Memory Snapshots:} as a consequence of the previous point, problems may arise when the security proof requires recording a `snapshot' of an algorithm, or adversary, in order to execute it on different instances, or to analyze some internal area of memory. As the adversary is now a quantum machine, this cannot usually be done.

\item \textbf{Rewinding:} analogously, proofs that use rewinding are notoriously hard to translate to the quantum setting. Limited positive results have been achieved in this respect in the existing literature~\cite{QuantRewinding,WatrousZK}.

\item \textbf{Quantum Queries:} if the security proof requires `counting the number of queries' to a certain oracle, it will probably fail when the oracle is replaced by a quantum oracle. The reason is that a quantum oracle can, in some sense, be queried over {\em all} the domain elements at once.

\item \textbf{Lookup Tables:} analogously, if the proof requires storing a transcript of a protocol execution, including the query calls to some oracle, and if the oracle is quantum, problems may arise.

\item \textbf{Measurements:} conditional procedures such as ``if the value of $x$ is $y$, then do...'' are often an issue in the context of analyzing quantum states, because the information in the state is usually destroyed in the measurement process. This is particularly problematic when analyzing the values of queries to quantum oracles, or when comparing those values to those contained in some set.
\end{itemize}

Unfortunately, there is no general recipe to solve all of the above problems, and much of the existing literature erroneously advertises cryptographic constructions as `post-quantum' just because they are based on quantum-hard problems, without addressing the previous issues. We strongly argue against the use of the term `post-quantum' when describing the security of such constructions. Regardless, over the last few years many important tools have been developed in order to deal with these problems.

\subsection{Quantum-Classical Oracles}\label{sec:QS1princ}

The first important concept to define is what happens when an oracle $\O_\f$ computing a {\em classical function} $\f:\X \to \Y$ is invoked by a quantum algorithm. Two possible scenarios arise, depending on the {\em interaction}, or {\em access mode}, of the algorithm to the oracle:

\begin{enumerate}
\item the interaction is classical; in this case, the oracle is still a classical object which can be queried on classical inputs $x \in \X$ and returning outputs $y \in \Y$; or
\item the interaction is quantum; in this case the classical oracle $\O_\f$ must be replaced by a {\em quantum-classical oracle} (which we denote by $\ket{\O_\f}$).
\end{enumerate}

A quantum-classical oracle can be queried on a {\em quantum superposition of classical input values}, usually of the form:
$$
\sum_{x \in \X, y \in \Y} a_{x,y} \ket{x,y}, \text{ where } \sum_{x,y} |a_{x,y}|^2 = 1,
$$
and it returns a quantum state encoding somehow the evaluation of $\f$ on the inputs in the superposition query. The exact form of the input and output states can vary, and it depends on the type of quantum access considered, as mentioned in Section~\ref{sec:quantumoracles}. However, for most applications, and unless differently specified, we will denote by $\ket{\O_\f}$ the unitary operator acting as follows.

\begin{definition}[Canonical Quantum-Classical Oracle]\label{def:qcoracle}
Let \X,\Y be sets, and $\f:\X\to\Y$. The {\em (canonical) quantum-classical oracle for $\f$}, denoted by $\ket{\O_\f}$, is a unitary operator on $\Hilb{\X \otimes \Y}$, defined by:
$$
\ket{\O_\f} : \ket{x,y} \mapsto \ket{x,y \xor \f(x)}.
$$
\end{definition}
When not necessary to specify otherwise, in order to simplify notation we assume the ancilla register to be initialized with $\ket{0}$, so that:
$$
\ket{\O_\f} : \sum_{x \in \X} a_x \ket{x,0} \mapsto \sum_{x \in \X} a_x \ket{x,\f(x)}, \text{ where } \sum_{x} |a_{x}|^2 = 1.
$$

One important question regards quantum-classical oracles for randomized functions. For instance, if $\f$ is a randomized function, we can explicit the dependence from the randomness $r$ (sampled from some appropriate distribution \R) by writing: $y := \f(x;r)$. Then the question is: when considering $\ket{\O_\f}$, should we consider superpositions of evaluations using the same, fixed randomness $r$, or should we consider evaluations where a fresh new randomness $r$ is sampled for every element in the superposition? In other words, should we consider:
$$
\ket{\O_\f} : \sum_{x \in \X} a_x \ket{x,0} \mapsto \sum_{x \in \X} a_x \ket{x,\f(x;r)}, \text{ where } r \from \R,
$$
or should we consider the following instead?
$$
\ket{\O_\f} : \sum_{x \in \X, r \from \R} a_{x,r} \ket{x,0} \mapsto \sum_{x \in \X, r \from \R} a_{x,r} \ket{x,\f(x;r)}.
$$
As observed in~\cite{BZ13}, it turns out that the two cases are equivalent. The reason is that, using the first case, we can simulate the second case by first sampling a single $r$ from \R, and then applying a quantum-secure PRF (described in Section~\ref{sec:qPRF}) to generate independent pseudorandom values for every component of the superposition query. Because of the security properties of such PRF, the result would look the same to any \QPT adversary.

\subsection{Quantum Reductions}\label{sec:qred}

Another thing to discuss is the meaning of {\em quantum reductions}. As in the classical case, a quantum reduction \B from (the security of) a scheme $\Sigma$ to (the security of) a primitive, or (the hardness of) a problem $\Pi$, is an efficient algorithmic procedure which uses an hypothetical adversary \A against $\Sigma$ to attack $\Pi$. The existence of a reduction shows that: if an efficient adversary against $\Sigma$ exists, then an efficient algorithm breaking $\Pi$'s security must also exist. In this work we only consider black-box reductions, that is, reductions which do not have access to \A's or $\Sigma$'s internal code/circuit, but are only allowed to use the interactions between these components to attack $\Pi$.

Let us consider different possible scenarios in the quantum world. The following is a {\em classification} of possible (post-)quantum security reductions.

\begin{enumerate}
\item \A is classical but \B is quantum. In this case, \B is a \QPT algorithm using \A as a (classical) subroutine. These kind of reductions offer the weakest form of security guarantees because they basically say: ``if a {\em classical} adversary against $\Sigma$ exists, then a {\em quantum} algorithm breaking $\Pi$'s security exists''. They do not say anything about the possibility that a quantum adversary against $\Sigma$ might exist, so they are not really useful in our \QS1 setting. We call these {\em weak quantum reductions}.

\item\label{item:qr} \A is quantum and \B is quantum. This is the most common scenario. These reductions say: ``if a {\em quantum} adversary against $\Sigma$ exists, then a {\em quantum} algorithm breaking $\Pi$'s security exists''. In particular, this rules out classical adversaries against $\Sigma$, but the existence of any of these adversaries would not necessarily imply a {\em classical} algorithm against $\Pi$, only a quantum one. We call these {\em (standard) quantum reductions}.

\item\label{item:qcr} \A is quantum but \B is classical. These reductions offer the strongest security guarantees, because they say: ``if a {\em quantum} adversary against $\Sigma$ exists, then a {\em classical} algorithm breaking $\Pi$'s security exists, with only black-box access to the adversary''. Not only this rules out quantum and classical adversaries alike, but it also implies that the post-quantum security of $\Sigma$ can rely just on the post-quantum security of $\Pi$, so that in particular one does need to worry about oracle access modes. We call these {\em semi-classical reductions}.
\end{enumerate}

Finally, it should be discussed what `black-box' in the quantum setting means. Classically, this means that \B is allowed to interact with \A without accessing \A's internal code or state. In other words, \B can only act on \A's inputs, outputs, and oracle queries. Furthermore, in cryptographic reductions, one usually has to make sure that \B's action is computationally undetectable for \A, which means that the probability that \A's output is affected by this action is negligible. This is important, for example, in the case that \B injects or reads values inside \A's queries to an oracle.

In the quantum setting, we adopt the same principle: \B can tamper with \A's inputs, outputs, and queries, as long as \A's behaviour is only negligibly affected. So, for example, \B could measure (fully or partially) \A's queries to some quantum oracle, and even modify the queries and reprogram the oracle, as long as it can be proven that this action does not disturb \A's working behaviour too much.

However, one could also take a stricter approach. Since measuring unknown quantum states might destroy the information therein, we could also consider quantum reductions that {\em do not measure external quantum states at all}, and only rely on the classical interactions with \A (or other oracles) instead. For example, in the case of quantum oracle queries, such reductions would ignore those queries, and only interact classically with the (quantum) adversary. Clearly, these `careful' reductions are quite powerful, because they work even when ignoring some potential source of information (the quantum queries). They basically say: ``if a {\em quantum} adversary against $\Sigma$ exists, then a {\em quantum} algorithm breaking $\Pi$'s security exists, by using {\em only classical access} to some external quantum resources''. These kind of reductions are placed somewhere between points~\ref{item:qr} and~\ref{item:qcr} of the above hierarchy, and we call them {\em strong quantum reductions}.

\section{The Quantum Random Oracle Model}\label{sec:QROM}

One archetypical example of where the \QS1 principle comes into play is the {\em Quantum Random Oracle Model (QROM)}. Recall that, in \QS0, the Random Oracle Model (ROM) is a computation model where all parties have access to an oracle $\RO$ computing a function $\h$ picked uniformly at random from the set of all functions from some domain \X to some range \Y. This model is useful when analyzing the security of schemes employing PRFs or hash functions. In other words, the (truly) random function \h is just an abstraction, or a model, for a real-world function \g which we assume {\em behaving} like a random function.

But this also means that the random oracle \RO itself is an abstract model for the computation of the real-world, algorithmic function \g, {\em performed on some computer.} And since the code for \g is public, and can be run by anyone (after all, in the ROM the access to \RO is given to every participant in the scheme because of this reason), it is necessary to assume that a quantum adversary could implement the circuit computing \g on his quantum computer, therefore being able to query \g quantumly. Therefore, in the {\em Quantum Random Oracle Model (QROM)}, the random oracle \RO must be replaced by a {\em quantum random oracle} \QRO. It is important to stress the fact that there exist models where security is proven in the random oracle model against quantum adversaries. We strongly argue against the use of the term `post-quantum' when referring to those models.

So, in other words, in \QS1 the ROM {\em must} be replaced by the QROM, where every \QPT algorithm has access to a quantum oracle:
$$
\QRO : \ket{x,y} \mapsto \ket{x,y \xor \h(x)}.
$$
and where \h is chosen uniformly at random from the set of all functions from \X to \Y, as in the random oracle model.

\subsection{QROM Emulation}\label{sec:QROMemu}

Notice the following difficulty when defining the QROM operationally. Classically, as explained in Section~\ref{sec:rom}, during a cryptographic reduction a random oracle is {\em emulated} by a \PPT algorithm, for example through lazy sampling. But lazy sampling cannot work for quantum random oracles, for two reasons.

First of all, a single quantum query to \QRO could require the emulator to lazy-sample too many elements. E.g., a query of the form:
$$
\sum_{x \in \bin^\secpar} \frac{1}{\sqrt{2^\secpar}} \ket{x,0}
$$
would query all the exponentially-many input values at once, and so it would `force' the emulator to `fix' all those values at the same time. This is not compatible with what we require from an efficient cryptographic reduction.

The second problem is that the concept of lookup table, used in the classical ROM to answer consistently with the previous queries, becomes meaningless. Firstly because such table could quickly reach exponential size, as the previous query example shows; and secondly because, as discussed in Section~\ref{sec:QS1princ}, there might be no way to check whether the values of some query are in the table or not without destroying the query.

Luckily, there exist a few other techniques to solve the above issues and to make the QROM a meaningful tool in \QS1. If the number of queries performed by the adversary to the QRO is known a priori, then the QRO can be efficiently emulated by {\em $d$-wise independent functions}. These are families of functions that are statistically indistinguishable from random functions if queried (classically) no more than $d$ times. An example are polynomial functions of degree $d-1$. It is known~\cite{Zhandry12} that no quantum algorithm performing at most $q$ queries can distinguish between random oracles and distributions of $2q$-wise independent functions.

Another common technique is to emulate a RO with a PRF, which is useful if one does not know a priori an upper bound to the number of adversarial queries. In the QROM we need something analogous, but classical PRFs alone cannot work. One idea might be to use {\em post-quantum PRFs} (we will define them in the next section), but actually for emulating a QRO, classical access to the PRF is not enough, so we need something more: {\em quantum-secure (superposition-secure) PRFs} will be defined in the next chapter.

\subsection{QROM Reprogramming}

It is important to analyze what happens when reprogramming a quantun random oracle \QRO. In particular, a useful technique often consists in {\em injecting} some fixed value $y$ for a subset $\S \subset \X$ of possible input query values, so that $\h(x) := y$ for all $x \in \S$. Intuitively, if the set $S$ is `very small', it is going to be very hard for a quantum algorithm to distinguish the modified oracle from a true QRO. However, some proofs might rely explicitly on the probability of the adversary querying one of those values, so it is important to have a detailed quantitative analysis for these probabilities.

We start by recalling~\cite{ZhandryPRF} a tool known as {\em semi-constant distributions}.

\begin{definition}[Semi-Constant Distributions]\label{def:scdist}
Let $\H := \set{\h : \X \to \Y}$ be the family of functions between sets $\X$ and $\Y$, and let $\delta \in [0,1]$. We define the {\em $\delta$-fraction semi-constant distribution} $\scDist$ as the distribution over $\H$ resulting from the following procedure:
\begin{algorithmic}[1]
\State sample $y \rand \Y$
\ForAll{$x \in \X$}
	\State $p \rand [0,1]$
	\If{$p \leq \delta$}
		\State define $\h(x) := y$
	\Else
		\State sample $y' \rand \Y$
		\State define $\h(x) := y'$
	\EndIf
\EndFor
\State \textbf{Return:} $\h$
\end{algorithmic}
\end{definition}

Notice that $\U^0$ is the uniform distribution, while $\U^1$ is a constant distribution. Also note that the distribution, when used within an oracle, is consistent in the sense that the settings are chosen once at the outset. We will use this definition to describe a QRO which has been `reprogrammed' on a fraction $\delta$ of its possible inputs. The following lemma~\cite{Zhandry12} gives an upper bound on the probability that a quantum algorithm's behavior changes when switching from a truly QRO to a quantum oracle for a function drawn from $\scDist$ in terms of statistical distance.

\begin{lemma}[{\cite[Corollary 4.3]{Zhandry12}}]\label{lem:zhandrycustom}
Let $\A^\QRO$ be a \QPT algorithm making at most $q_h$ queries to the quantum random oracle $\QRO$. Let $\delta \in (0,1)$ and let $\scQRO$ be the classical-quantum oracle obtained by reprogramming $\RO$ on a fraction $\delta$ of its possible inputs, i.e., $\scQRO$ is described by the semi-constant distribution $\scDist$. Then, the following holds:
$$
\left| \A^\QRO - \A^\scQRO \right| \leq \frac{8}{3}\cdot q_h^4 \cdot  \delta^2.
$$
\end{lemma}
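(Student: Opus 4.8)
The plan is to reduce the statement to a bound on distinguishing advantage and then invoke Zhandry's polynomial method for query algorithms. First, note that a function $\h$ distributed according to $\scDist$ can be sampled in two stages: draw $y \rand \Y$ and an ordinary uniformly random function $H\colon \X \to \Y$, then draw a random subset $S \subseteq \X$ by including each $x$ independently with probability $\delta$, and set $\h(x) := y$ for $x \in S$ and $\h(x) := H(x)$ otherwise. Since $\left| \A^\QRO - \A^\scQRO \right|$ is, by definition of statistical distance, the supremum over all distinguishing events applied to the final state of $\A$, it suffices to bound $\left| \Pr[\A^\QRO \to 1] - \Pr[\A^\scQRO \to 1] \right|$ for an arbitrary final Boolean measurement; so from now on we treat $\A$ as outputting a single bit.

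The main tool is the fact that a $q$-query quantum algorithm accesses an oracle only through low-degree statistics of its truth table: if $\A$ makes at most $q_h$ queries to an oracle $\O$ drawn from a distribution $D$ over functions $\X \to \Y$, then $\Pr_{\O \from D}[\A^\O \to 1]$ is a fixed real-linear combination---with coefficients depending only on $\A$ and not on $D$---of the joint value probabilities $\Pr_{\O \from D}[\O(x_1) = r_1, \ldots, \O(x_{2q_h}) = r_{2q_h}]$ over all $2q_h$-tuples of inputs and outputs. In particular, two oracle distributions whose marginals on every $2q_h$-subset of inputs coincide are perfectly indistinguishable by $\A$, and distributions whose $2q_h$-marginals are close give close acceptance probabilities. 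I would cite this from~\cite{Zhandry12} rather than re-derive it.

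The next step is to compare the $2q_h$-point marginals of the uniform oracle and of $\scDist$. Fix $2q_h$ distinct inputs $x_1, \ldots, x_{2q_h}$ and let $K$ be the number of them lying in $S$, so $K$ is binomial with parameters $2q_h$ and $\delta$. Conditioned on $K \leq 1$, the joint law of $(\h(x_1), \ldots, \h(x_{2q_h}))$ is exactly that of $2q_h$ independent uniform values: the at most one reprogrammed coordinate takes the value $y$, which is itself uniform and independent of everything else, and in particular there is no forced collision. The only contribution to the discrepancy between the two marginals is therefore the event $K \geq 2$, whose probability is at most $\binom{2q_h}{2}\delta^2 = O(q_h^2\delta^2)$. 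Combining this with the linear expansion of the previous paragraph, the function $\delta \mapsto \Pr[\A^\scQRO \to 1]$ is a polynomial of degree at most $2q_h$, it maps $[0,1]$ into $[0,1]$, it equals $\Pr[\A^\QRO \to 1]$ at $\delta = 0$, and its first derivative at $\delta = 0$ vanishes (the discrepancy enters only at order $\delta^2$). A Markov--brothers-type bound on the second derivative of such a polynomial---or, equivalently, Zhandry's explicit accounting of the coefficients in the linear combination---then gives $\left| \Pr[\A^\scQRO \to 1] - \Pr[\A^\QRO \to 1] \right| \leq \tfrac{8}{3} q_h^4 \delta^2$, and taking the supremum over final measurements yields the claim.

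The qualitative skeleton is easy---a single reprogrammed point is invisible because the constant $y$ is a fresh uniform value, so the leading correction is quadratic in $\delta$, and the relevant polynomial has degree linear in $q_h$. The delicate part, which I expect to be the real obstacle, is pinning down the precise constant $\tfrac{8}{3}$ together with the exponent $q_h^4$: these arise from bounding the magnitudes of the coefficients in the polynomial-method expansion (a degree-$2q_h$ polynomial bounded on $[0,1]$ with vanishing derivative at $0$ naturally produces a $q_h^4$-type factor via Markov's inequality for the second derivative). This is exactly the content of the result underlying Corollary 4.3 in~\cite{Zhandry12}, which the surrounding text already treats as an available external tool.
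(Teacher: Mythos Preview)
The paper does not prove this lemma at all: it is imported verbatim as \cite[Corollary 4.3]{Zhandry12} and used as a black box, with only a sentence of informal discussion afterwards. So there is no ``paper's own proof'' to compare against.

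That said, your sketch is a faithful reconstruction of Zhandry's actual argument. The two-stage sampling of $\scDist$, the observation that a single reprogrammed point is invisible (because the planted value $y$ is itself fresh and uniform), and the conclusion that the acceptance probability is a degree-$2q_h$ polynomial in $\delta$ with vanishing first derivative at $0$ are exactly the ingredients. One small point worth tightening: your Markov--brothers estimate, carried out naively (degree $d=2q_h$, second-derivative bound $\tfrac{d^2(d^2-1)}{3}$ on $[-1,1]$, rescaled to $[0,1]$, then integrated twice), yields a constant of $\tfrac{32}{3}$ rather than $\tfrac{8}{3}$; getting the sharper constant requires the more careful coefficient accounting in Zhandry's paper rather than the generic Markov bound. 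You already flag this as the delicate part, which is appropriate given that the surrounding text treats the result as an external citation anyway.
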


The above lemma is quite general, because it does not take into account the specific values where the reprogramming happens, but just a generic fraction $\delta$ of all possible values. Therefore, it is especially useful in those cases where the quantum random oracle is reprogrammed randomly, i.e., by just replacing some of its values with a certain probability $\delta$. However, in all those cases where it is possible to track the specific amplitudes (across the oracle queries) of the elements to be reprogrammed, then one can usually find better bounds, for example by using Lemma~\ref{lem:queryprob}.

\section{Post-Quantum Assumptions, Building Blocks}\label{sec:pqbuildingblocks}

In this section we redefine the basic assumptions and building blocks for the post-quantum setting.

\subsection{Post-Quantum OWFs}

As in the \QS0 case, the existence of {\em post-quantum one-way functions (pqOWF)} is a basic security assumptions. Because a OWF's code is public, and recalling the \QS1 principle, we expect quantum adversaries to be able to query a OWF on a superposition of values. However, for the same reason, since in the definition of OWF the quantifier is {\em `for all'} \PPT algorithms, without mentioning oracle access, it is enough to define post-quantum OWFs by just replacing \PPT adversaries with \QPT adversaries.

\begin{definition}[Post-Quantum One-Way Functions (pqOWF) and Permutations (pqOWP)]\label{def:pqowfowp}
Let $\F = \family{\F}$ be a \DPT algorithm, with $\F_\secpar : \X_\secpar \to \words$. \F is a (family of) {\em post-quantum one-way functions (pqOWF)} iff for any \QPT algorithm \A it holds:
$$
\Pr_{x \rand \X} \left[ \A(\F(x)) \to x' : \F(x) = \F(x') \right] \leq \negl.
$$
Moreover, in the special case where $\F_\secpar : \X_\secpar \to \X_\secpar$ are permutations on $\X_\secpar$ for every \secpar, \F is a (family of) {\em post-quantum one-way permutations (OWP)}.
\end{definition}

The definition of {\em post-quantum hard-core predicates} is as in the \QS0 case.

\begin{definition}[Post-Quantum Hard-Core Predicate]\label{def:pqhc}
Let $\F: \X \to \Y$ be a OWF. A polynomial-time computable function $\hc_\F : \X \to \bin$ is a {\em post-quantum hard-core predicate of \F} iff, for any \QPT algorithm \A it holds:
$$
\Pr_{x \rand \X} \left[ \A(\F(x)) \to \hc_\F(x) \right] \leq \half + \negl.
$$
\end{definition}

\begin{proposition}\label{prop:pqhc}
Let \F be a pqOWF (resp., pqOWP). Then it is possible to efficiently transform \F into a pqOWF (resp., pqOWP) \H such that at least one post-quantum hard-core predicate $\hc_\H$ exists.
\end{proposition}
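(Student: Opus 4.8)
The plan is to mimic the classical proof of Proposition~\ref{prop:hc} (the Goldreich--Levin construction, \cite{HILL99}), and simply observe that the single place where the original argument invokes the computational power of the adversary is compatible with \QPT adversaries. Concretely, recall the standard construction: given a pqOWF $\F:\X\to\words$, define $\H:\X\times\X \to \words\times\X$ by $\H(x,r) := (\F(x),r)$, i.e.\ we append a uniformly random string $r$ of the same length as $x$ to the output, and declare $\hc_\H(x,r) := \langle x,r\rangle = \bigoplus_i x_i r_i$ the inner product mod $2$. If $\F$ is a permutation, then so is $\H$ (it is a bijection on $\X\times\X$ whenever $\F$ is a bijection on $\X$), so the ``resp., pqOWP'' clause follows for free. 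One first checks that $\H$ is still a pqOWF (resp.\ pqOWP): an inverter for $\H$ on $(\F(x),r)$ immediately yields an inverter for $\F$ on $\F(x)$ by discarding the $r$-component, so one-wayness is preserved against any class of adversaries, \QPT included.

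The substantive part is showing that $\hc_\H$ is a post-quantum hard-core predicate for $\H$, i.e.\ no \QPT algorithm predicts $\langle x,r\rangle$ from $(\F(x),r)$ with advantage better than negligible. First I would recall the Goldreich--Levin reconstruction argument in its ``list-decoding'' form: from any predictor $\A$ that, on input $(\F(x),r)$ with $r$ uniform, outputs $\langle x,r\rangle$ with probability $\tfrac12 + \epsilon$ over the choice of $x$ and $r$ (and $\A$'s internal randomness), one builds an algorithm $\B$ that on input $\F(x)$ outputs a short list of candidates containing $x$ with noticeable probability, and hence (using that $\F(x)$ determines $x$ on the preimage, or just testing each candidate through $\F$) inverts $\F$. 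The key step is that $\B$ only uses $\A$ \emph{as a black box}: it feeds $\A$ inputs of the form $(\F(x), r\xor e_j)$ for chosen perturbations, collects the classical output bits, and runs a Fourier/Hadamard-style majority-and-pairwise-independent-sampling procedure. None of this requires rewinding $\A$, inspecting $\A$'s internal memory, cloning $\A$'s state, or querying $\A$ in superposition --- $\B$ simply runs independent fresh invocations of $\A$ on classical inputs. Therefore, if $\A$ is \QPT, then $\B$ is \QPT, and $\B$ contradicts the post-quantum one-wayness of $\F$ (Definition~\ref{def:pqowfowp}).

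I would then package this as: assume for contradiction a \QPT $\A$ with $\Pr_{x\rand\X}[\A(\H(x,r))\to\hc_\H(x,r)] \geq \tfrac12 + \frac{1}{\p(\secpar)}$ for infinitely many $\secpar$ and some polynomial $\p$; apply the (black-box) Goldreich--Levin reconstruction to obtain a \QPT $\B$ with $\Pr_{x\rand\X}[\B(\F(x))\to x' : \F(x)=\F(x')] \geq \frac{1}{\q(\secpar)}$ for a polynomial $\q$ and infinitely many $\secpar$, contradicting that $\F$ is a pqOWF. Finally, I would remark that the permutation case needs no extra work since $\H$ preserves the permutation property, completing the ``resp., pqOWP'' claim.

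The main obstacle --- really the only thing to be careful about --- is making precise the claim that the Goldreich--Levin reduction is ``black-box in a quantum-friendly way.'' I would need to state explicitly that the reduction $\B$ (i) invokes $\A$ only on classical inputs, (ii) uses only the classical output of $\A$, and (iii) runs polynomially many \emph{independent} copies of $\A$ rather than rewinding a single execution; these are exactly the properties that survive the transition from \PPT to \QPT, and they are satisfied by the textbook Goldreich--Levin algorithm. So rather than re-deriving the inner-product-extraction analysis, I would cite \cite{HILL99} (and \cite{GL89}) for the quantitative reconstruction bound and confine the new content of the proof to the (routine but essential) observation that the reduction is of ``semi-classical'' type in the sense of Section~\ref{sec:qred}, hence preserves \QPT-ness.
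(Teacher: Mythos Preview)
Your proposal is correct. Note, however, that the paper does not actually give a proof of this proposition: it is stated bare, immediately following Definition~\ref{def:pqhc}, in exact analogy with the classical Proposition~\ref{prop:hc} (which is simply attributed to~\cite{HILL99} without argument). The paper treats the post-quantum version as equally folklore and moves on.

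Your write-up therefore supplies what the paper omits. The approach---observe that the Goldreich--Levin reconstruction is a black-box reduction that invokes the predictor only on classical inputs, reads only classical outputs, and uses fresh independent executions rather than rewinding---is the standard one, and your identification of it as a ``semi-classical reduction'' in the terminology of Section~\ref{sec:qred} is exactly the right framing within the paper's own taxonomy. Nothing is missing.
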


Given the above, from now on for simplicity we assume that every pqOWF admits post-quantum hard-core predicates. In the case that $\F:\X\to\X$ (in particular, if \F is a pqOWP), the construction of hard-core bits can be iterated as in Proposition~\ref{prop:hcmult}.

\subsection{Post-Quantum OWTPs}

The same discussion in the case of post-quantum OWFs applies for the assumption of the existence of {\em post-quantum one-way trapdoor permutations (pqOWTP)}. As usual, we express a family of pqOWTPs as indexed through efficiently sampleable index family \I and associated trapdoor space \T.

\begin{definition}[Post-Quantum One-Way Trapdoor Permutation (pqOWTP)]\label{def:pqowtp}
A (family of) {\em post-quantum one-way trapdoor permutations (pqOWTP)} is a tuple $(\Gen,\Eval,\Invert)$ of \PPT algorithms:
\begin{enumerate}
\item $\Gen: \to \I \times \T$;
\item $\Eval: \I \times \X \to \X$;
\item $\Invert: \I \times \T \times \X \to \X \cup \set{\bot}$,
\end{enumerate}
and such that:
\begin{enumerate}
\item for any \QPT algorithm \A it holds:
$$
\Pr_{\substack{x\!\rand\!\!\X \\(i,t) \from \Gen}} \left[ \A(i,\Eval(i,x)) \to x \right] \leq \negl; \text{ and}
$$
\item $\Invert(i,t,y) = \Eval(i,x), \forall x \in \X , \foral (i,t) \from \Gen , \foral y \from \Eval(i,x)$.
\end{enumerate}
\end{definition}

As in the \QS0 case, the existence of pqOWTP implies the existence of pqOWP and pqOWF.

\begin{proposition}[pqOWTP $\implies$ pqOWP $\implies$ pqOWF]\label{prop:pqOWTPtopqOWF}
Let $\P := (\Gen,\Eval,\Invert)$ be a pqOWTP on \X. Then, for all but a negligible fraction of possible sequences $\left( (i_\secpar,t_\secpar) \right)_\secpar$ of outputs of $\Gen(\secpar) \implies \Eval(i_\secpar,.)$ is a pqOWP (and hence a pqOWF) on $\X$.
\end{proposition}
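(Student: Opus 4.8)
The plan is to transcribe the classical argument behind Proposition~\ref{prop:OWTPtoOWF} almost verbatim, the point being that the reduction involved is black-box with an \emph{entirely classical} interface to the hypothetical inverter: it feeds the inverter a classical string $(i,\Eval(i,x))$ and reads back a classical string, never copying or inspecting the inverter's internal quantum state. Consequently none of the obstacles of Section~\ref{sec:prooffailures} (no-cloning, rewinding, lookup tables, measurement) arise --- in the terminology of Section~\ref{sec:qred} this is a fortiori a strong quantum reduction. The argument splits into three parts: (i) every $\Eval(i,\cdot)$ is a permutation; (ii) for all but a negligible fraction of index sequences, $\Eval(i_\secpar,\cdot)$ is one-way against \QPT adversaries; and (iii) ``pqOWP $\implies$ pqOWF'', which by Definition~\ref{def:pqowfowp} is merely the remark that a pqOWP \emph{is} a pqOWF whose members happen to be permutations.

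Part (i) follows at once from the correctness condition of Definition~\ref{def:pqowtp}: for every $(i,t)$ in the support of $\Gen$, the map $\Invert(i,t,\cdot)$ inverts $\Eval(i,\cdot)$ on its image, so $\Eval(i,\cdot)$ is injective, and since $\X_\secpar$ is finite it is therefore a permutation of $\X_\secpar$. In particular the one-wayness requirement for $\Eval(i,\cdot)$ collapses to the requirement that the adversary output exactly the preimage $x$; moreover, being a pqOWP depends only on the index $i_\secpar$ and not on the companion trapdoor $t_\secpar$, so it suffices to reason about the $\I$-marginal of $\Gen$.

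For part (ii) I would fix an enumeration $(\A^{(k)})_{k\in\NN}$ of all \QPT algorithms, each supplied with the index $i$ as part of its input exactly like the adversary in Definition~\ref{def:pqowtp}, and each with an associated polynomial bounding its size and running time. For a fixed $k$ set $p_k(i) := \Pr_{x \rand \X}\!\left[ \A^{(k)}(i,\Eval(i,x)) \to x \right]$. The pqOWTP assumption is precisely the statement that the $\Gen$-average of $p_k$ is negligible, i.e.
$$
\Pr_{(i,t)\from\Gen,\; x \rand \X}\!\left[ \A^{(k)}(i,\Eval(i,x)) \to x \right] \leq \negl ,
$$
so Markov's inequality yields, for every polynomial $q$, that $\Pr_{(i,t)\from\Gen}\!\left[\, p_k(i) > 1/q(\secpar) \,\right] \leq q(\secpar)\cdot\negl$, which is again negligible. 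By the definition of \negl, any index sequence $(i_\secpar)_\secpar$ for which $\Eval(i_\secpar,\cdot)$ fails to be a pqOWP must, for some pair $(k,q)$, have infinitely many coordinates $i_\secpar$ on which $\A^{(k)}$ beats the threshold $1/q(\secpar)$; a union bound over the countably many pairs $(k,q)$ then shows that the ``bad'' sequences form a negligible set under the measure induced by $\Gen$, exactly as in the classical case. The reduction witnessing each individual Markov bound is immediate: on a pqOWTP challenge $(i,\Eval(i,x))$, run the hypothetical inverter on that very pair and output its answer; this is \QPT whenever the inverter is.

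The step I expect to need the most care is the bookkeeping in part (ii): passing from the per-$\secpar$, per-adversary Markov bound to the assertion that the bad index sequences form a negligible set. This is exactly why the proposition is phrased ``for all but a negligible fraction of sequences'' rather than ``$\Eval(i,\cdot)$ is a pqOWP'', and it is inherited unchanged from the classical Proposition~\ref{prop:OWTPtoOWF} --- the quantum world contributes nothing new beyond replacing \PPT by \QPT in the enumeration and in the reduction.
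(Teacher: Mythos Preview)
The paper does not actually give a proof of this proposition: it is stated immediately after Definition~\ref{def:pqowtp} and followed directly by the next subsection, exactly as the classical Proposition~\ref{prop:OWTPtoOWF} is stated without proof in Chapter~\ref{chap:QS0}. The paper's implicit position is that the classical argument carries over verbatim once \PPT is replaced by \QPT, which is precisely the point you make in your opening paragraph.

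Your proposal is therefore strictly more detailed than what the paper provides, and the content is correct. The observation that the reduction has a purely classical interface to the inverter (feed it $(i,\Eval(i,x))$, read back a bit string) and hence sidesteps the issues of Section~\ref{sec:prooffailures} is exactly the right justification, and indeed makes this a semi-classical reduction in the sense of Section~\ref{sec:qred}. The Markov step in part~(ii) is the standard averaging argument. Your caveat about the bookkeeping --- passing from per-adversary, per-parameter Markov bounds to a statement about a negligible set of \emph{sequences} under the product measure induced by $\Gen$ --- is well placed: the paper's phrasing ``negligible fraction of possible sequences'' is itself somewhat informal, and you have identified the one place where genuine care is needed, though nothing quantum is at stake there.
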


\subsection{Post-Quantum PRNGs}

Again, the same principle from OWF and OWTP applies when translating PRNGs to the post-quantum setting. Remember that the security property for PRNGs does not mention any kind of oracle access or code emulation, but it just says that no efficient adversaries, by looking at the stream of (classical) values output by the PRNG, can distinguish such stream from a random stream. So, the interaction is still classical, and the only change is that the adversary is now a quantum algorithm.

\begin{definition}[Post-Quantum PRNG (pqPRNG)]\label{dfn:pqPRNG}
Let $\p$ be a polynomial such that $\p(\secpar) \geq \secpar+1, \forall \secpar \in \NN$. A {\em post-quantum pseudorandom number generator (pqPRNG)} with expansion factor $\p$ is a \DPT algorithm $\PRNG$ such that:
\begin{enumerate}
\item given as input a bit string $s \in \bin^\secpar$, (the {\em seed}), outputs a bit string $\PRNG(s) \in \p(\secpar)$; and
\item for any \QPT algorithm $\D$:
$$
\left| \Pr \left[ \D(r) \to 1 \right] - \Pr \left[ \D(\PRNG(s)) \to 1 \right] \right| \leq \negl,
$$
where $r\rand\bin^{\p(\secpar)}, s\rand\bin^\secpar$, and the probabilities are taken over the choice of $r$ and $s$, and the randomness of $\D$.
\end{enumerate}
\end{definition}

Moreover, as noticed in Section~\ref{sec:PRNG}, the proof of Theorem~\ref{thm:OWFtoPRNG} still goes through in the post-quantum scenario, because it does not make any assumption on the query capabilities of the adversary.

\begin{theorem}[{\cite[Lemma 19]{ABF+16}}]\label{thm:pqOWFtopqPRNG}
If \F is a pqOWF, then $\PRNG_\F$ (defined as in Construction~\ref{constr:goldreichlevin}) is a pqPRNG.
\end{theorem}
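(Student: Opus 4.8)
The plan is to show that the classical proof of Theorem~\ref{thm:OWFtoPRNG} is, when read carefully, a \emph{black-box reduction with a purely classical interface} to the distinguisher, and hence survives intact when that distinguisher is upgraded from \PPT to \QPT. In the terminology of Section~\ref{sec:qred} this makes it (at least) a \emph{strong quantum reduction}: the hypothetical distinguisher is treated as an external resource fed classical inputs and queried for classical bits, and the argument never rewinds it, copies its internal state, counts its queries, or measures any of its quantum registers. Note also that no `quantum access' subtlety enters here, because the seed $s$ is sampled classically and the distinguisher's input is a classical \secpar-bit string; the only difference from \QS0 is that the distinguisher is now internally a quantum circuit.

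Concretely, first I would assume toward a contradiction that some \QPT algorithm \D and polynomial \p satisfy
$$
\left| \Pr\left[ \D(r) \to 1 \right] - \Pr\left[ \D(\PRNG_\F(s)) \to 1 \right] \right| \geq \frac{1}{\p(\secpar)},
$$
with $r \rand \bin^{\secpar}$, $s \rand \X$, and then run the standard hybrid argument over the output blocks $\hc_\F(x), \hc_{\F^2}(x), \ldots, \hc_{\F^\secpar}(x)$ of $\PRNG_\F(s)$. Defining hybrids $H_0, \ldots, H_\secpar$ that replace these hard-core bits one at a time by uniform bits (so that $H_\secpar$ is the law of $\PRNG_\F(s)$ and $H_0$ is uniform), an averaging step extracts an index $j$ for which \D separates $H_{j-1}$ from $H_j$ with advantage at least $\frac{1}{\secpar\,\p(\secpar)}$. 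This manipulates only probabilities over the choice of $x$ and of the trailing uniform bits, and invokes \D as a classical black box, so it transfers verbatim to quantum \D. Next I would apply the distinguisher-to-predictor transformation — the unpredictability/pseudorandomness equivalence recalled in Section~\ref{sec:PRNG}, again black-box and classical-interface — to obtain a \QPT algorithm \D' that, on input $\F^j(x)$ for $x \rand \X$, outputs $\hc_{\F^j}(x)$ with probability at least $\half + \frac{1}{\poly(\secpar)}$.

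To close the argument I would invoke post-quantum hard-coreness: by Proposition~\ref{prop:pqhc} we may assume \F already carries a post-quantum hard-core predicate, and by the post-quantum analogue of Proposition~\ref{prop:hcmult} (proved exactly as in \QS0) each iterate $\F^j$ is again a pqOWF whose post-quantum hard-core predicate is $\hc_{\F^j}$. Then \D' contradicts Definition~\ref{def:pqhc} for $\F^j$, and we conclude that $\PRNG_\F$ is a pqPRNG.

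The hard part will not be any single estimate but the audit itself: one must walk through the entire chain of sub-reductions — the hybrid, the distinguisher-to-predictor step, the Goldreich--Levin hard-core-predicate construction (Propositions~\ref{prop:hc} and~\ref{prop:pqhc}), and the iteration of Proposition~\ref{prop:hcmult} — and confirm for each that it touches the next-level adversary only through a classical input/output interface, with no rewinding, memory snapshot, or state measurement. The one genuinely delicate point is the Goldreich--Levin list-decoding algorithm~\cite{GL89} that turns a hard-core-bit predictor into an inverter for the underlying one-way function: here one must check that its Hadamard/Fourier decoding procedure queries the prediction oracle only on classically described points and never needs to rewind it or project its internal state — which is indeed the case. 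This careful-audit character is exactly why the statement, though usually regarded as folklore, first received a fully formal proof in~\cite{ABF+16}.
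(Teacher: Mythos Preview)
Your proposal is correct and takes essentially the same approach as the paper. The paper does not give a standalone proof here; it simply observes (in the sentence preceding the theorem) that the proof of Theorem~\ref{thm:OWFtoPRNG} ``still goes through in the post-quantum scenario, because it does not make any assumption on the query capabilities of the adversary,'' and defers the formal details to~\cite{ABF+16}. Your audit of the hybrid argument, the distinguisher-to-predictor step, and the Goldreich--Levin inversion procedure is exactly the kind of verification that substantiates this one-line claim, and your framing in terms of the reduction classification of Section~\ref{sec:qred} is apt.
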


\begin{corollary}[pqOWF $\iff$ pqPRNG]\label{cor:pqOWFiffPRNG}
pqOWFs exist iff pqPRNGs exist.
\end{corollary}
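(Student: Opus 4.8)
The plan is to obtain the corollary directly from results already in hand, proving one implication in each direction. For pqOWF $\implies$ pqPRNG I would first invoke Proposition~\ref{prop:pqhc} to replace an arbitrary pqOWF by a pqOWF \H admitting a post-quantum hard-core predicate $\hc_\H$, and then apply Theorem~\ref{thm:pqOWFtopqPRNG} to \H, concluding that the Goldreich--Levin generator $\PRNG_\H$ of Construction~\ref{constr:goldreichlevin} is a pqPRNG. This is the verbatim analogue of how Corollary~\ref{cor:OWFiffPRNG} is derived in the \QS0 setting, and nothing in it is quantum-specific.

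For the converse, pqPRNG $\implies$ pqOWF, the plan is to show that a pqPRNG is itself a pqOWF once its stretch is large enough. First I would use the stream-generation remark from Section~\ref{sec:PRNG} --- whose underlying hybrid argument makes no assumption on the query behaviour of the distinguisher and hence transfers unchanged to \QPT distinguishers --- to assume w.l.o.g. that the expansion factor satisfies $\p(\secpar) \geq 2\secpar$. Then I would set $\F := \PRNG$, regarded as a \DPT family $\F_\secpar : \bin^\secpar \to \words$ whose image lies inside $\bin^{\p(\secpar)}$, and establish one-wayness by the standard inversion-to-distinguishing reduction: from a \QPT inverter \A that succeeds with non-negligible probability $\epsilon$, build the \QPT distinguisher \D which on input $y$ runs $s' \from \A(y)$ and outputs $1$ iff $\F(s') = y$. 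The routine estimates are $\Pr[\D(\F(s)) \to 1] \geq \epsilon(\secpar)$ for a uniform seed $s \rand \bin^\secpar$, while $\Pr[\D(r) \to 1] \leq 2^{\secpar}/2^{\p(\secpar)} \leq 2^{-\secpar}$ for uniform $r \rand \bin^{\p(\secpar)}$, since \D can accept only strings in the image of $\F_\secpar$; hence \D has distinguishing advantage at least $\epsilon(\secpar) - 2^{-\secpar}$, which is non-negligible, contradicting the pqPRNG property.

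The point I would stress is that this reduction is black-box and interacts with \A only classically --- a single invocation followed by a classical evaluation and comparison --- so \D is genuinely \QPT and none of the quantum proof-failure phenomena of Section~\ref{sec:prooffailures} (rewinding, no-cloning, quantum oracle queries, lookup tables) intervene; this is exactly why the classical argument survives in \QS1. The only load-bearing step is the stretch amplification: with expansion factor merely $\secpar + 1$ the advantage bound degrades to $\epsilon(\secpar) - \half$, which is vacuous, so I expect that to be the one place requiring care --- and it is dealt with entirely by a classical, query-independent preprocessing of the generator, hence remains sound against quantum adversaries. Combining the two implications yields the corollary.
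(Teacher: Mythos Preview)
Your proposal is correct and matches the paper's (implicit) argument: the paper states the corollary without proof, relying on Theorem~\ref{thm:pqOWFtopqPRNG} together with Proposition~\ref{prop:pqhc} for the forward direction, and on the remark in Section~\ref{sec:PRNG} that a PRNG is itself a OWF (``by encoding the input to the OWF as a seed for the PRNG'') for the converse, with both reductions carrying over to \QPT adversaries precisely because they are black-box and query-free. Your explicit treatment of stretch amplification is a detail the paper leaves unsaid, but it is the right point to flag and your justification for why it survives in \QS1 is sound.
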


Clearly, a pqPRNG it is also a PRNG. However, the opposite is not believed to hold, as the following example shows.

\begin{lemma}\label{lem:blummicali}
Under the DLP hardness assumption, there exists a PRNG $\PRNGBM$ which is {\em quantumly predictable}. I.e., there exists a non-negligible function $\delta$ and a \QPT algorithm $\D$ which, on input $n$ sequential values output by $\PRNGBM$ on any random seed, predicts the $(n+1)$-th value output by $\PRNGBM$ with probability at least $\delta(n)$.
\end{lemma}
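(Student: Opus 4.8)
The plan is to realize $\PRNGBM$ as a Blum--Micali--flavoured generator whose only underlying hardness assumption is the \emph{classical} hardness of $\mathrm{DLP}$, and then to observe that Shor's algorithm annihilates precisely that assumption in $\QPT$. Concretely, fix a family of primes $p=p_n$ of bit length $n$ for which $\mathrm{DLP}$ in $\ZZ_p^\times$ is hard, together with generators $g=g_n$, and let $f_{p,g}\colon e\mapsto g^e\bmod p$ be the discrete--exponentiation permutation. Under the $\mathrm{DLP}$ hardness assumption $f_{p,g}$ is a OWP, so applying the standard (Goldreich--Levin) instantiation of the transformation in Proposition~\ref{prop:hc} we pass to the closely related OWP $\H(e,r):=\bigl(g^e\bmod p,\ r\bigr)$ with hard-core predicate $\hc_\H(e,r):=\langle e,r\rangle$. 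I would then \emph{define} $\PRNGBM(s):=\H(s)\,\|\,\hc_\H(s)$, with the encoding of $\H(s)$ padded to exactly $n$ bits, so that the first $n$ output bits carry $\H(s)=(g^e\bmod p,\,r)$ \emph{in the clear} and the $(n+1)$-st bit is $\hc_\H(s)$ (iterating the usual stretching construction then gives a longer stream, but a single extension bit already suffices for the statement). The key design point, which must be preserved throughout, is that this $\H$ is $\QPT$-invertible exactly when $f_{p,g}$ is.

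First I would establish classical security. For a uniformly random seed $s$, the value $\H(s)$ is statistically close to uniform on $n$-bit strings, since $\H$ is a permutation and $g^{(\cdot)}\bmod p$ is a bijection onto $\ZZ_p^\times$ --- the only subtlety here being the statistically negligible defect from encoding $\ZZ_p^\times$ as bit strings, which is handled by taking $p$ close to $2^n$. Since $\hc_\H$ is a classical hard-core predicate for $\H$, the pair $(\H(s),\hc_\H(s))$ is computationally indistinguishable from $(\H(s),\rho)$ for a uniformly random bit $\rho$ (the standard distinguishing reformulation of the hard-core property), and hence $\PRNGBM(s)\approx_c U_{n+1}$ for every \PPT distinguisher. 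This is exactly the ``a hard-core bit extends a PRNG by one bit'' argument that underlies Theorem~\ref{thm:OWFtoPRNG}, so I would invoke that reasoning rather than reproduce the hybrid in full; together with Definition~\ref{dfn:PRNG} it shows $\PRNGBM$ is a PRNG.

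The substance of the claim is the quantum attack. The $\QPT$ predictor $\D$ receives the first $n$ output bits of $\PRNGBM(s)$, namely the encoding of $\H(s)=(g^e\bmod p,\,r)$. It runs Shor's algorithm to compute a discrete logarithm in $\ZZ_p^\times$, thereby recovering $e$ from $g^e\bmod p$ (and it reads $r$ directly off the prefix); having the full seed $s=(e,r)$, it evaluates $\hc_\H(s)=\langle e,r\rangle$ classically and outputs it as its guess for the $(n+1)$-st value. Since Shor's algorithm succeeds with probability $1-2^{-\Omega(n)}$ and can be amplified, $\D$ predicts the next value with overwhelming probability; in particular its success probability exceeds $\half$ by a non-negligible $\delta(n)$, as required. (If $\PRNGBM$ is iterated to emit a longer stream, the same $\D$ in fact predicts every subsequent value from the recovered $s$.)

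I expect the main obstacle to be not any isolated hard step but getting the construction calibrated so that both halves go through at once: the output prefix must expose a discrete-logarithm image of (enough information to reconstruct) the seed --- otherwise the quantum predictor has nothing to invert --- while that same prefix must remain, to a classical machine, merely the image of a one-way permutation together with at most one hard-core bit, so that classical pseudorandomness is not compromised. Checking that it is the explicit Goldreich--Levin instantiation of Proposition~\ref{prop:hc} (and hence a $\QPT$-invertible one) that is being used, and carrying the statistical bookkeeping for encoding group elements as bit strings, are the points that need care; everything else is routine given Shor's algorithm.
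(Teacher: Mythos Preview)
Your argument is correct and in some ways cleaner than the paper's, but it takes a genuinely different route. The paper instantiates $\PRNGBM$ as the \emph{textbook} Blum--Micali generator, where the state evolves by $s_{i+1}=g^{s_i}\bmod p$ and only a single hard-core bit of each $s_{i+1}$ is emitted, so the stream never exposes a group element directly; the quantum break is then obtained by citing an external attack (based on Shor and Grover) that recovers the seed from that bit stream. You instead \emph{engineer} a PRNG whose first $n$ output bits are literally $(g^e\bmod p,r)$, so Shor alone inverts the prefix and the predictor is immediate. This buys you a fully self-contained proof with no literature dependency, at the price of using a PRNG that is visibly tailored to be quantum-broken rather than the standard one the lemma's name alludes to.

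One loose end worth tightening: your parenthetical that ``iterating the usual stretching construction then gives a longer stream'' and that ``the same $\D$ in fact predicts every subsequent value from the recovered $s$'' is not automatic. The \emph{usual} stretching of a one-bit-expanding PRNG feeds the length-$n$ prefix back as the new seed and emits only the extra bit; doing that here would output only hard-core bits and never $(g^e\bmod p,r)$ in the clear, so your Shor-based predictor would have nothing to invert --- which is precisely why the paper needs the heavier cited attack for the iterated generator. For the lemma as stated your one-shot $n\!+\!1$-bit construction already suffices, so this does not break your proof; but if you want the iterated claim (e.g.\ for the downstream PathORAM application) you must specify an iteration that keeps $\H(s)$ visible in the first $n$ bits and re-argue classical pseudorandomness for that variant.
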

\begin{proof}
A counterexample $\PRNGBM$ is the modular exponentiation Blum-Micali generator~\cite{KatzLindell}, but many other similar variants work as well~\cite{GuedesAL13}. This construction is based on exponentiation of a public generator $g$ modulo a public large prime $p$, and it is a classically secure PRNG under the assumption that computing discrete logarithms is computationally hard. More specifically, if $s_i$ is the current state of the generator, one output bit is computed as a hardcore predicate of the value $s_{i+1} = g^{s_i} \mod p$ (where $s_{i+1}$ becomes the next state of the generator). Thus, starting from a secret seed $s_0$, a pseudorandom bit string can be generated by applying iteratively the procedure.

However, there exists a quantum attack~\cite{GuedesAL13} (based on variants of both Shor's and Grover's algorithms) which, given $p,g$ and a sequence $(r_1,\ldots,r_\secpar)$ of values output by $\PRNGBM$, can recover the initial state $s_0$ with probability $\delta$ non-negligible in $\secpar$. This, in turns, allows to predict the whole sequence of $\PRNGBM$. 
\endproof
\end{proof}

\subsection{Post-Quantum PRFs}\label{sec:pqPRF}

The case of pseudorandom functions, instead, is a bit different. Definition~\ref{def:PRF} specifically conditions the existence of (classical) PRFs to the query capabilities of the adversary, so we should make a distinction whether, in the post-quantum case, these queries should still be classical or not.

The \QS1 principle comes handy here. In a reasonable security model, should the adversary be able to implement the code of the PRF on his local computing device? The answer is: {\em ``normally, no, because he does not know the secret key''}. After all, the whole point of a PRF is that the adversary should not be able to distinguish the output of the PRF from the output of an (abstractly defined) completely random function, which in particular means that the adversary should not be able to see the PRF's code, because there might be {\em no code at all}. This is in striking contrast with the QROM, and the reason is that a QRO models a {\em public hash function}, which everybody can compute, while a PRF exists {\em as long as the key remains secret}.

In other words, {\em post-quantum pseudorandom functions (pqPRFs)} are defined by merely replacing the \PPT adversary with a \QPT adversary, and keeping the oracle access classical. {\em Quantum-secure PRFs} instead, as defined in~\cite{QROM,ZhandryPRF}, are a different object, and they will be presented in the next chapter in the context of the domain \QS2.

\begin{definition}[Post-Quantum Pseudorandom Function (pqPRF)]\label{def:pqPRF}
A (family of) {\em post-quantum pseudorandom functions (pqPRF)} from \X to \Y with key space \K is a \DPT algorithm $\PRF: (k \in \K, x \in \X) \mapsto y \in \Y$ such that for any \QPT algorithm \D it holds:
$$
\left| \Pr_{k \rand \K} \left[ \D^{\PRF_k} \to 1 \right] - \Pr_{\h \rand \Y^\X} \left[ \D^{\O_\h} \to 1 \right] \right| \leq \negl,
$$
where $\O_\h$ is an oracle for \h (i.e., a random oracle), and the probabilities are over the choice of $k$ and $\h$, and the randomness of $\D$.
\end{definition}

Moreover, the same proofs for Theorems~\ref{thm:PRFimPRNG} and~\ref{thm:PRNGimPRF} go through unchanged, because we are not modifying the oracle access mode, but just the adversary computation model. As a consequence, we can state the following.

\begin{theorem}[pqPRF $\iff$ pqPRNG]\label{thm:pqPRFiffpqPRNG}
pqPRFs exist iff pqPRNGs exist.
\end{theorem}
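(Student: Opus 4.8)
The plan is to reuse the two classical implications verbatim: Theorem~\ref{thm:PRFimPRNG} ($\PRF \implies \PRNG$) for the forward direction and Theorem~\ref{thm:PRNGimPRF} ($\PRNG \implies \PRF$, the GGM construction) for the reverse, and to check that both reductions remain valid when \PPT adversaries are replaced by \QPT ones. The point I would emphasise first, following Section~\ref{sec:pqPRF}, is that the pqPRF experiment keeps oracle access \emph{classical}: the distinguisher \D interacts with $\PRF_k$ (or with the random function $\O_\h$) only on classical inputs. So neither reduction ever has to deal with superposition queries, and the only thing requiring checking is that each reduction is black-box, \QPT, and does not covertly rely on the adversary being classical.

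For pqPRF $\implies$ pqPRNG I would take the candidate generator from the proof of Theorem~\ref{thm:PRFimPRNG}: on seed $s$ (read as a key) output $\PRF_s(1)\|\PRF_s(2)\|\cdots$ truncated to the desired expansion length. Given a \QPT distinguisher \D separating this stream from uniform, I would build a \QPT pqPRF distinguisher $\D'$ that queries its oracle classically on the distinct points $1,2,\ldots$, concatenates the answers, runs \D on the result, and echoes \D's guess. With oracle $\PRF_s$ the input to \D is the pseudorandom stream; with a truly random function it is a uniform stream (the queried points being distinct); hence $\D'$ inherits \D's advantage, contradicting pqPRF security. Since $\D'$ is \QPT and uses only classical queries, it is a legitimate pqPRF adversary, so the implication holds.

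For pqPRNG $\implies$ pqPRF I would use GGM as in Theorem~\ref{thm:PRNGimPRF}: a length-doubling pqPRNG unfolded into a depth-$\secpar$ binary tree, with $\PRF_k(x)$ the leaf reached by following the bits of $x$. A \QPT adversary \D against this pqPRF makes $q = \poly(\secpar)$ \emph{classical} queries and therefore visits at most $q\cdot\secpar$ internal nodes. I would run the usual hybrid argument over those polynomially many visited nodes: successive hybrids differ by swapping one PRNG output for a uniform string at a node actually reached by \D's classical queries, so a \QPT distinguisher between consecutive hybrids yields a \QPT distinguisher against the pqPRNG (lazily sampling the random part of the tree, which is feasible precisely because only $q\cdot\secpar$ nodes are ever needed). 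Summing over the polynomially many hybrids bounds \D's advantage by $\poly(\secpar)\cdot\negl = \negl$.

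The hard part --- indeed the only place where quantumness genuinely matters --- will be defending the GGM hybrid argument: it works exactly because the adversary's oracle access is classical, so only polynomially many tree nodes are ever touched. If one instead allowed \emph{superposition} queries to $\PRF_k$, a single query could reach all $2^{\secpar}$ leaves at once, the number of hybrids would become exponential, and the classical proof would break down; recovering security in that regime needs the genuinely different machinery of quantum-secure PRFs of Section~\ref{sec:qPRF}. For pqPRFs as defined here no such obstruction arises, so the proof is ultimately a routine lift of the classical one, as the theorem statement anticipates.
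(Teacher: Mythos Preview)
Your proposal is correct and matches the paper's approach exactly: the paper's entire justification for this theorem is the one-sentence remark that ``the same proofs for Theorems~\ref{thm:PRFimPRNG} and~\ref{thm:PRNGimPRF} go through unchanged, because we are not modifying the oracle access mode, but just the adversary computation model.'' You have simply unpacked that sentence in both directions, and your emphasis on the classical-query restriction being what makes the GGM hybrid survive (and what distinguishes pqPRF from the qPRF setting of Section~\ref{sec:qPRF}) is precisely the point the paper is relying on.
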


\begin{corollary}\label{cor:pqOWFiffpqPRF}
pqOWF exist iff pqPRF exist.
\end{corollary}

\subsection{Post-Quantum PRPs}

The case of post-quantum PRPs is analogous to the one for pqPRFs.

\begin{definition}[Post-Quantum Weak PRP (pqWPRP)]\label{def:pqWPRP}
A (family of) {\em post-quantum weak pseudorandom permutations (pqWPRP)} on \X with key space \K is a pair of \DPT algorithms $(\PRP,\PRP^{-1}): (k \in \K, x \in \X) \mapsto x' \in \X$ such that:
\begin{enumerate}
\item $\forall k \in \K \implies \PRP_k, \PRP^{-1}_k$ are permutations on $\X$;
\item $\forall k \in \K \implies (\PRP_k)^{-1} = \PRP^{-1}_k$; and
\item for any \QPT algorithm \D it holds:
$$
\left| \Pr_{k \rand \K} \left[ \D^{\PRP_k} \to 1 \right] - \Pr_{\p \rand S(\X)} \left[ \D^{\O_\p} \to 1 \right] \right| \leq \negl,
$$
where $\O_\p$ is an oracle for \p, and the probabilities are over the choice of $k$ and $\p$, and the randomness of $\D$.
\end{enumerate}
\end{definition}

\begin{definition}[Post-Quantum Strong PRP (pqSPRP)]\label{def:pqSPRP}
A (family of) {\em post-quantum strong pseudorandom permutations (pqSPRP)} on \X with key space \K is a pair of \DPT algorithms $(\PRP,\PRP^{-1}): (k \in \K, x \in \X) \mapsto x' \in \X$ such that:
\begin{enumerate}
\item $\forall k \in \K \implies \PRP_k, \PRP^{-1}_k$ are permutations on $\X$;
\item $\forall k \in \K \implies (\PRP_k)^{-1} = \PRP^{-1}_k$; and
\item for any \QPT algorithm \D it holds:
$$
\left| \Pr_{k \rand \K} \left[ \D^{\PRP_k,\PRP^{-1}_k} \to 1 \right] - \Pr_{\p \rand S(\X)} \left[ \D^{\O_\p,\O_{\p^{-1}}} \to 1 \right] \right| \leq \negl,
$$
where $\O_\p$ is an oracle for \p, $\O_{\p^{-1}}$ is an oracle for $\p^{-1}$, and the probabilities are over the choice of $k$ and $\p$, and the randomness of $\D$.
\end{enumerate}
\end{definition}

When left unspecified, by `pqPRP' we mean the strong version. A pqPRP is clearly also a pqPRF, but the converse does not necessarily hold. Again, as we are not modifying the oracle access mode, the classical constructions of PRPs from PRFs go through unchanged in the post-quantum setting. Therefore, the existence of pqPRPs is also equivalent to the existence of pqOWFs.

\begin{theorem}[pqPRF $\iff$ pqPRP]\label{thm:pqPRFiffpqPRP}
pqPRFs exist iff pqPRPs exist.
\end{theorem}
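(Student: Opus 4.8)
The plan is to mirror the classical equivalence $\PRF \iff \PRP$ (Theorem~\ref{thm:PRFiffPRP}), observing that every ingredient in the classical argument is insensitive to the change from \PPT to \QPT adversaries, provided we never alter the oracle access mode. The two directions are asymmetric in difficulty: one is immediate from the definitions, the other requires re-examining the Luby--Rackoff (Feistel) construction.

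First I would dispatch the easy direction, $\text{pqPRP} \implies \text{pqPRF}$. A pqSPRP is in particular a family $(\PRP_k)$ that is computationally indistinguishable (by \QPT distinguishers with classical oracle access) from a random permutation $\p \rand S(\X)$. Since the statistical distance between a uniformly random permutation on $\X_\secpar$ and a uniformly random function $\h \rand \Y^\X$ (with $\Y=\X$) queried at most $q=\poly(\secpar)$ times is bounded by $O(q^2/|\X_\secpar|) = \negl$ by the birthday bound, any \QPT distinguisher making polynomially many classical queries cannot tell $\PRP_k$ from $\O_\h$. Hence $(\PRP_k)$ is a pqPRF. Note this step uses only classical queries, so the birthday bound applies verbatim.

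For the converse, $\text{pqPRF}\implies\text{pqPRP}$, the plan is to invoke the classical Feistel/Luby--Rackoff construction: from a pqPRF one builds a candidate pqSPRP by composing (at least) four Feistel rounds keyed by independent PRF keys (derived from a single key via Corollary~\ref{cor:pqOWFiffpqPRF} / Theorem~\ref{thm:pqPRFiffpqPRNG}, say by a pqPRF evaluated on distinct labels). Concretely, I would: (i) recall the construction and its correctness (it is a permutation with an efficiently computable inverse for every key); (ii) give the security reduction by the standard hybrid argument, replacing each round function by a truly random function one at a time; (iii) in each hybrid, observe that a \QPT distinguisher against the pqSPRP game (with \emph{classical} oracle access to $\PRP_k$ and $\PRP_k^{-1}$) making $q=\poly(\secpar)$ queries yields a \QPT distinguisher against the pqPRF (again with classical oracle access), since the reduction only has to answer polynomially many classical forward/inverse Feistel queries, each of which is simulated with a constant number of classical queries to the round oracle. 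The hybrid gap is then bounded by the classical Luby--Rackoff analysis plus $q \cdot \adv^{\text{pqPRF}} = \negl$.

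The main obstacle is purely expository: one must be careful that the Luby--Rackoff security bound being quoted is the \emph{classical} one (against classically-querying distinguishers), because the pqSPRP \emph{definition} in this paper only grants classical oracle access --- the reduction never needs superposition queries to the inner PRF, so no quantum Luby--Rackoff analysis (which is genuinely harder and due to Hosoyamada--Iwata and others) is required. The correct slogan, in the language of Section~\ref{sec:QS1princ}, is that neither the pqPRF nor the pqPRP game lets the adversary run the keyed primitive's code locally, so the oracle access mode stays classical throughout, and every classical reduction step goes through unchanged with \PPT replaced by \QPT. Once this is made explicit the proof reduces to a citation of Theorem~\ref{thm:PRFiffPRP} with the observation that all constituent reductions are ``\QPT-relativizing'' in this sense. \qed
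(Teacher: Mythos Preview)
Your proposal is correct and matches the paper's approach exactly: the paper does not even give a separate proof, but simply observes in the text preceding the theorem that ``as we are not modifying the oracle access mode, the classical constructions of PRPs from PRFs go through unchanged in the post-quantum setting.'' Your elaboration---the birthday-bound bridge for the easy direction, and the explicit remark that only the \emph{classical} Luby--Rackoff analysis is needed because Definitions~\ref{def:pqPRF} and~\ref{def:pqSPRP} grant only classical oracle access---is more detailed than what the paper provides, but the underlying reasoning is identical.
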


\section{Post-Quantum Encryption}\label{sec:QS1enc}

{\em Post-quantum encryption schemes} are classical encryption schemes meant to retain their security also against quantum adversaries. It is common for this scenario to just assume the same definitions and security notions we saw in Chapter~\ref{chap:QS0}, and just replacing \PPT adversaries with \QPT ones. However, in the case of public-key encryption, one must be a bit careful in doing so.

\subsection{Post-Quantum Secret-Key Encryption}

Following the \QS1 principle, in {\em post-quantum secret-key encryption} one can just `blindly' replace classical adversaries with quantum ones, because the adversary itself is never supposed to run encryption or decryption procedures locally (after all, he does not have the secret key). So we discuss here the modified security definitions as follows (we do it just for the IND and IND-CPA notions, but the same procedures yields equivalent post-quantum security notions for SEM, IND-CCA1, and IND-CCA2). As usual, $\E := \E_{\K,\X,\Y} := (\KGen,\Enc,\Dec)$ denotes a SKES with plaintext space \X, ciphertext space \Y, and key space \K.

\begin{definition}[Post-Quantum IND Adversary]\label{def:pqINDadv}
Let $\E$ be a SKES. A {\em post-quantum IND (pq-IND) adversary \A for \E} is a pair of \QPT algorithms $\A := (\M,\D)$, where:
\begin{enumerate}
\item $\M: \to \X \times \X \times \Hilbert$ is the {\em pq-IND message generator};
\item $\D: \Y \times \Hilbert \to \bin$ is the {\em pq-IND distinguisher},
\end{enumerate}
where \Hilbert is a Hilbert space of appropriate dimension, modeling the state communication register between \M and \D.
\end{definition}

\begin{experiment}[$\gamepqIND$]\label{expt:pqIND}
Let $\E$ be a SKES, and $\A:=(\M,\D)$ a pq-IND adversary. The {\em pq-IND experiment} proceeds as follows:
\begin{algorithmic}[1]
\State \textbf{Input:} $\secpar \in \NN$
\State $k \from \KGen$
\State $(m^0,m^1,\ket{\state}) \from \M$
\State $b \rand\bin$
\State $c \from \Enc_k(m^b)$
\State $b' \from \D(c,\ket{\state})$
\If{$b = b'$}
	\State \textbf{Output:} $1$
\Else
	\State \textbf{Output:} $0$
\EndIf
\end{algorithmic}
The {\em advantage of \A} is defined as:
$$
\advpqIND := \Pr \left[ \gamepqIND \to 1 \right] - \half .
$$
\end{experiment}

\begin{definition}[Post-Quantum Indistinguishability (pq-IND)]\label{def:pqIND}
A SKES $\E$ has {\em post-quantum indistinguishable encryptions (or, it is pq-IND secure)} iff, for any pq-IND adversary $\A$ it holds that: $\advpqIND \leq \negl$.
\end{definition}

\begin{experiment}[$\gamepqINDCPA$]\label{expt:pqINDCPA}
Let $\E$ be a SKES, and $\A:=(\M,\D)$ a pq-IND adversary. The {\em pq-IND-CPA experiment} proceeds as follows:
\begin{algorithmic}[1]
\State \textbf{Input:} $\secpar \in \NN$
\State $k \from \KGen$
\State $(m^0,m^1,\ket{\state}) \from \M^{\Enc_k}$
\State $b \rand\bin$
\State $c \from \Enc_k(m^b)$
\State $b' \from \D^{\Enc_k}(c,\ket{\state})$
\If{$b = b'$}
	\State \textbf{Output:} $1$
\Else
	\State \textbf{Output:} $0$
\EndIf
\end{algorithmic}
The {\em advantage of \A} is defined as:
$$
\advpqINDCPA := \Pr \left[ \gamepqINDCPA \to 1 \right] - \half .
$$
\end{experiment}

\begin{definition}[Post-Quantum Indistinguishability of Ciphertexts under Chosen Plaintext Attack (pq-IND-CPA)]\label{def:pqINDCPA}
A SKES $\E$ has {\em post-quantum indistinguishable encryptions under chosen plaintext attack (or, it is pq-IND-CPA secure)} iff, for any pq-IND adversary $\A$ it holds that: $\advpqINDCPA \leq \negl$.
\end{definition}

Clearly, pq-IND-CPA is at least as strong as IND-CPA.

\begin{theorem}[pq-IND-CPA $\implies$ IND-CPA]\label{thm:pqINDCPAtoINDCPA}
If a SKES is pq-IND-CPA secure, then it is also IND-CPA secure.
\end{theorem}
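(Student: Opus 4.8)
The plan is to proceed by contraposition: assuming $\E$ is \emph{not} IND-CPA secure, I will exhibit a pq-IND-CPA adversary with exactly the same (non-negligible) advantage, contradicting pq-IND-CPA security. Concretely, suppose by Definition~\ref{def:INDCPA} that there is an IND adversary $\A := (\M,\D)$, a pair of \PPT algorithms as in Definition~\ref{def:INDadv}, with $\advINDCPA$ non-negligible in $\secpar$. From $\A$ I build a pq-IND adversary $\A' := (\M',\D')$ in the sense of Definition~\ref{def:pqINDadv}, and show $\advpqINDCPA = \advINDCPA$, which by Definition~\ref{def:pqINDCPA} is the desired contradiction.

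The construction is essentially an identity embedding, and the only thing that needs checking is that it is well-defined. First I would recall the standard fact that every \PPT algorithm is also a \QPT algorithm: a uniform family of polynomial-size Boolean circuits can be turned into a uniform family of quantum circuits by replacing each Boolean gate with the corresponding classical reversible gate (e.g.\ Toffoli) acting on the computational basis, adjoining polynomially many ancilla qubits initialised to $\ket 0$, and measuring the output register at the end; this incurs only constant-factor blow-up, and the classical Turing machine generating the Boolean family is easily adapted to output the quantum descriptions. Hence $\M$ and $\D$, run coherently in the computational basis, are legitimate \QPT algorithms. I then set the state register $\Hilbert$ of Definition~\ref{def:pqINDadv} to be the Hilbert space whose computational basis is indexed by the (polynomially length-bounded) set of possible state strings, let $\M'$ run $\M$ and place the classical string $\state$ into this register as the basis state $\ket{\state}$, and let $\D'$ receive $(c,\ket{\state})$, measure the state register in the computational basis to recover $\state$, and run $\D(c,\state)$, outputting its bit $b'$.

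Next I would observe that the experiments $\gameINDCPA$ and $\gamepqINDCPA$ have identical control flow — sample $k \from \KGen$, obtain the two challenge messages (and the state) from the message generator with classical oracle access to $\Enc_k$, flip $b \rand \bin$, compute $c \from \Enc_k(m^b)$, obtain $b'$ from the distinguisher with classical oracle access to $\Enc_k$, and output $[b = b']$. Crucially, in $\gamepqINDCPA$ the oracle is the classical $\Enc_k$ (consistent with the \QS1 principle, since the adversary has no key and does no local evaluation), so the classical queries of $\M,\D$ are exactly the queries made by $\M',\D'$; and since measuring a basis state returns it deterministically, $\D'$ reproduces $\D$ verbatim. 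Therefore the joint distribution of $(b,b')$ in the pq-game run with $\A'$ coincides with that in the classical game run with $\A$, giving $\advpqINDCPA = \advINDCPA$, non-negligible, a contradiction.

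There is essentially no obstacle here — the statement is routine — and the only point deserving (mild) care is the formal justification that $\PPT \subseteq \QPT$ together with a faithful, polynomially bounded embedding of the classical state-communication register into the register $\Hilbert$ of Definition~\ref{def:pqINDadv}; the rest is a line-by-line comparison of the two experiments. I would remark that the same argument yields the analogous implications pq-IND $\Rightarrow$ IND, pq-IND-CCA1 $\Rightarrow$ IND-CCA1 and pq-IND-CCA2 $\Rightarrow$ IND-CCA2, and, via Theorem~\ref{thm:INDiffSEM} and its post-quantum counterpart, the corresponding statements for semantic security.
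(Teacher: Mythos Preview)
Your proposal is correct and gives the natural argument; the paper itself does not supply a proof for this theorem, treating the implication as immediate folklore (it simply states the theorem and moves on to the separation in the other direction). Your contrapositive embedding of a \PPT\ IND-CPA adversary as a \QPT\ pq-IND-CPA adversary, together with the observation that the two experiments have identical control flow and the encryption oracle remains classical, is exactly the expected justification.
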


It is common folklore that, unlike some PKES, the most widely used constructions for SKES are actually also post-quantum secure. However, the converse of Theorem~\ref{thm:pqINDCPAtoINDCPA} does not hold, and it is important to remember that post-quantum notions for SKES are actually {\em strictly stronger} than the classical ones in \QS0.

\begin{theorem}[IND-CPA SKES \nimplies pq-IND-CPA SKES]\label{thm:INDCPAnotopqINDCPA}
Under standard hardness assumptions, there exist SKES which are IND-CPA secure, but not pq-IND-CPA secure.
\end{theorem}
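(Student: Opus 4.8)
The plan is to build a separating example by taking any IND-CPA secure SKES and bolting on a "quantum trapdoor": an extra ciphertext component that is useless to a classical adversary but lets a quantum adversary recover the secret key. The cleanest way to do this is to rely on the failure of some classical hardness assumption against quantum computers — the DLP is the natural choice, since Definition~\ref{def:dlp} and the surrounding discussion already give us a group in which DLP is classically hard but quantumly easy via Shor's algorithm.

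First I would fix a cyclic group $(\G,\star)$ of order exponential in \secpar with generator $g$ in which the DLP hardness assumption holds classically but which is broken by a \QPT algorithm (Shor). Let $\E = (\KGen,\Enc,\Dec)$ be any IND-CPA secure SKES with \secpar-bit keys (Corollary~\ref{cor:INDCPAfromOWF} guarantees one exists under OWFs, which are implied by the DLP assumption). I would then define $\E' = (\KGen',\Enc',\Dec')$ by having $\KGen'$ run $k \from \KGen$, sample $r \rand \set{1,\ldots,|\G|}$, and output the key $(k,r)$; encryption sets $\Enc'_{(k,r)}(m) := \left(\Enc_k(m), g^r, g^{r\cdot k}\right)$ where $k$ is read as an integer, and decryption ignores the last two components. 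Correctness is immediate. Classically, the pair $(g^r, g^{rk})$ is a Diffie–Hellman-style tuple: extracting $k$ from it is at least as hard as DLP (given an oracle solving DLP one can read off $r$ from $g^r$ and hence $k$), so by a standard reduction $\E'$ remains IND-CPA secure against \PPT adversaries — any pq-IND adversary breaking $\E'$ either breaks $\E$ or solves a DLP instance. A \QPT adversary, however, applies Shor to $g^r$ to recover $r$, then to $g^{rk}$ (or computes $(g^{rk})^{r^{-1}}$ and takes a discrete log) to recover $k$, and then decrypts the challenge ciphertext outright, winning the pq-IND-CPA game with probability negligibly close to $1$.

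The main steps in order: (i) state the construction of $\E'$ and verify the correctness condition of Definition~\ref{def:skes}; (ii) prove IND-CPA security of $\E'$ by a reduction — given a \PPT IND-CPA adversary against $\E'$ with non-negligible advantage, use a hybrid argument to replace $(g^r,g^{rk})$ with $(g^r,g^{s})$ for independent random $s$ (this step uses a DDH-style assumption, or alternatively one argues directly against DLP by noting that the adversary's advantage must come from the $\Enc_k(m^b)$ component once the trapdoor is randomized), then observe the residual adversary breaks IND-CPA of $\E$; (iii) exhibit the explicit \QPT attack using Shor's algorithm and conclude $\advpqINDCPA \not\leq \negl$.

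The step I expect to be the main obstacle is (ii): making the IND-CPA reduction go through cleanly with only the DLP assumption rather than the stronger DDH assumption. If one wants to avoid invoking DDH, the trapdoor should instead be constructed so that it is information-theoretically independent of the challenge but still quantumly invertible — for instance, replacing $g^{rk}$ by an independent element $g^s$ together with an encryption $\Enc_k(s \oplus k)$-style gadget is circular, so the more robust fix is to simply append $(g^{k})$ alone: this leaks nothing classically beyond a DLP instance for the key, so IND-CPA of $\E'$ reduces directly to DLP hardness and IND-CPA of $\E$, with no DDH needed, while Shor on $g^k$ recovers $k$ immediately. I would therefore likely present the construction with the single extra component $g^k$ (reading $k$ as an integer in $\set{0,\ldots,2^\secpar-1}$, with the group order large enough that this embedding is injective), which simplifies both directions of the argument.
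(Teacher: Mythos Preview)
Your overall strategy---attach a ``quantum trapdoor'' to each ciphertext that is classically useless but quantumly reveals the secret key---is exactly the paper's approach. The gap is in step (ii), and it is not repaired by your simplification to appending $g^k$.

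The problem is this: to argue that $\E'$ with $\Enc'_k(m) = (\Enc_k(m), g^k)$ is IND-CPA, you need a reduction. A reduction to IND-CPA of $\E$ must simulate the component $g^k$ without knowing $k$ (it only has oracle access to $\Enc_k$); a reduction to DLP must simulate the oracle $\Enc_k$ without knowing $k$ (it only has the instance $g^k$). Neither is possible in a black-box way, and ``DLP is hard and $\E$ is IND-CPA'' does \emph{not} imply ``$\E$ is IND-CPA given auxiliary leakage $g^k$ on the key''---that is a leakage-resilience statement, not a consequence of one-wayness. Your first construction with $(g^r, g^{rk})$ has the identical issue: embedding a DDH challenge forces $k$ to equal an unknown exponent, so you again cannot run $\Enc_k$ inside the reduction.

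The paper's proof fixes exactly this point by wrapping $k$ in an IND-CPA (but not post-quantum secure) PKES under a fixed, known public key: $\Enc'_k(m) = (\Enc_k(m), \mathsf{PKE.Enc}_\pk(k))$. Now the hybrid works cleanly: the reduction \emph{chooses $k$ itself}, plays the IND-CPA game against the PKES with challenge plaintexts $(k,0)$, and uses the returned PKES ciphertext as the second component---so it can evaluate $\Enc_k$ freely. Once $\mathsf{PKE.Enc}_\pk(k)$ is replaced by $\mathsf{PKE.Enc}_\pk(0)$, what remains is the IND-CPA game for $\E$ with an independent extra string attached. Quantumly, the adversary runs Shor on $\pk$ (ElGamal or an RSA variant) to recover $\sk$, decrypts to obtain $k$, and wins. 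Replacing your raw $g^k$ with an ElGamal encryption of $k$ under a known public key is the minimal change that makes your argument go through, and it is precisely what the paper does.
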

\begin{proof}[Proof (sketch)]
It is sufficient to consider an IND-CPA SKES which appends to every ciphertext the secret key used, encrypted with another, IND-CPA but non--post-quantum secure PKES (e.g., some RSA variant) under a fixed, known public key. With the knowledge of the public key, a quantum adversary can emulate a quantum oracle for the encryption of the PKES, which can then be broken by, e.g., Shor's algorithm, thus revealing the SKES's secret key.
\end{proof}

The Goldreich scheme from Construction~\ref{constr:goldreich} is pq-IND-CPA when instantiated with a pqPRF, because the same arguments used in Theorem~\ref{thm:GoldreichINDCPA} go through as long as the adversary is unable to distinguish the PRF from a real source of randomness.

\begin{theorem}\label{thm:pqGoldreich}
Let $\E_\PRF$ be the SKES from Construction~\ref{constr:goldreich} implemented through a pqPRF \PRF. Then $\E_\PRF$ in a pq-IND-CPA SKES.
\end{theorem}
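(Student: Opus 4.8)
The plan is to mimic the proof sketch of Theorem~\ref{thm:GoldreichINDCPA}, but carefully track where the `post-quantum' upgrade is needed and verify that each classical step survives the replacement of \PPT adversaries by \QPT ones. Recall that Construction~\ref{constr:goldreich} encrypts $x$ as $(x \xor \PRF_k(r), r)$ for fresh $r \rand \X$, so the only cryptographic ingredient is the pseudorandomness of \PRF; the one-time-pad layer is information-theoretically secure regardless of the adversary's power. The argument proceeds by a reduction: given a pq-IND-CPA adversary $\A := (\M,\D)$ against $\E_\PRF$ with non-negligible advantage, we build a \QPT distinguisher \B that breaks the pq-PRF security of \PRF in the sense of Definition~\ref{def:pqPRF}.

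First I would set up the reduction. The distinguisher \B receives classical oracle access to either $\PRF_k$ (for $k \rand \K$) or a truly random oracle $\O_\h$, $\h \rand \Y^\X$. It runs $\A$ internally, simulating the pq-IND-CPA experiment: whenever $\M$ or $\D$ makes an encryption query on a plaintext $m$, \B samples $r \rand \X$, queries its own oracle on $r$ to get $s$, and returns $(m \xor s, r)$; for the challenge query on $(m^0,m^1)$, \B picks $b \rand \bin$, samples a fresh $r^* \rand \X$, queries $r^*$ to get $s^*$, and returns $(m^b \xor s^*, r^*)$. (Here I note the one subtlety flagged after Theorem~\ref{thm:PRNGimPRF}: unlike the proof of Theorem~\ref{thm:OWFtoPRNG}, this reduction does bound the number of oracle queries, namely by $\A$'s query count plus one — but since $\A$ is \QPT this is polynomial, and crucially the oracle access remains \emph{classical}, matching the classical-access oracle in Definition~\ref{def:pqPRF}.) Finally \B outputs $1$ iff $\A$'s guess $b'$ equals $b$.

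Next I would analyze the two cases. If \B's oracle is $\PRF_k$, then \B perfectly simulates $\gamepqINDCPA$, so $\Pr[\B^{\PRF_k} \to 1] = \half + \advpqINDCPA$. If \B's oracle is the random oracle $\O_\h$, then — using the observation, as in the classical proof, that across all encryption queries and the challenge the sampled values $r$ are distinct with overwhelming probability (the collision probability is at most quadratic in the number of queries over $|\X| = 2^\secpar$, hence negligible) — the pads $s = \h(r)$ fed into the one-time pad are independent uniform values never reused, so by perfect secrecy of the OTP the challenge ciphertext $(m^b \xor s^*, r^*)$ is statistically independent of $b$. Therefore $\Pr[\B^{\O_\h} \to 1] \leq \half + \negl$. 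Subtracting, \B's pq-PRF distinguishing advantage is at least $\advpqINDCPA - \negl$, which is non-negligible — contradicting the assumed pq-PRF security of \PRF. Hence no such $\A$ exists and $\E_\PRF$ is pq-IND-CPA secure.

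**The main obstacle**, as the paper itself stresses, is conceptual rather than technical: one must be sure that the reduction only ever uses \emph{classical} access to \PRF, so that the pqPRF assumption (Definition~\ref{def:pqPRF}, with its classical oracle) genuinely suffices and we do not secretly need the stronger quantum-secure PRF notion of Chapter~\ref{chap:QS2}. This holds here precisely because in the pq-IND-CPA game the encryption oracle $\Enc_k$ is classical (the \QS1 principle: the adversary never runs $\Enc$ on his own device since he lacks $k$), so $\A$'s queries to $\Enc_k$ — and thus \B's induced queries to \PRF — are classical strings, and the internal randomness $r$ is chosen by the (honest) challenger, not the adversary. I would make this point explicit in the proof. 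The only routine care beyond that is the distinctness-of-$r$-values argument and the bookkeeping that $\A$ being \QPT keeps \B \QPT with polynomially many queries; both are standard and I would state them briefly rather than belabor them.
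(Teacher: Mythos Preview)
Your proposal is correct and follows exactly the approach the paper intends: the paper does not spell out a proof for this theorem at all, merely remarking that ``the same arguments used in Theorem~\ref{thm:GoldreichINDCPA} go through as long as the adversary is unable to distinguish the PRF from a real source of randomness.'' Your write-up is in fact a more careful and explicit version of that classical reduction, and your emphasis on the key conceptual point --- that the encryption oracle in the pq-IND-CPA game is classical, so the reduction only ever queries \PRF classically and hence the pqPRF assumption (Definition~\ref{def:pqPRF}) suffices rather than the stronger qPRF notion --- is precisely the observation the paper wants the reader to take away from the \QS1 discussion.
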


The same relations and separations examples between pq-IND, pq-IND-CPA, pq-IND-CCA1, and pq-IND-CCA2, hold as from Section~\ref{sec:SKES}, 
and with analogous separation examples from their classical counterparts as in Theorem~\ref{thm:INDCPAnotopqINDCPA}. 
Therefore, the relations between security notions for SKES in \QS0 and \QS1 are as summarized in Figure~\ref{fig:QS1relations}.

\begin{figure}[t]
\begin{center}
\includegraphics[width=\textwidth]{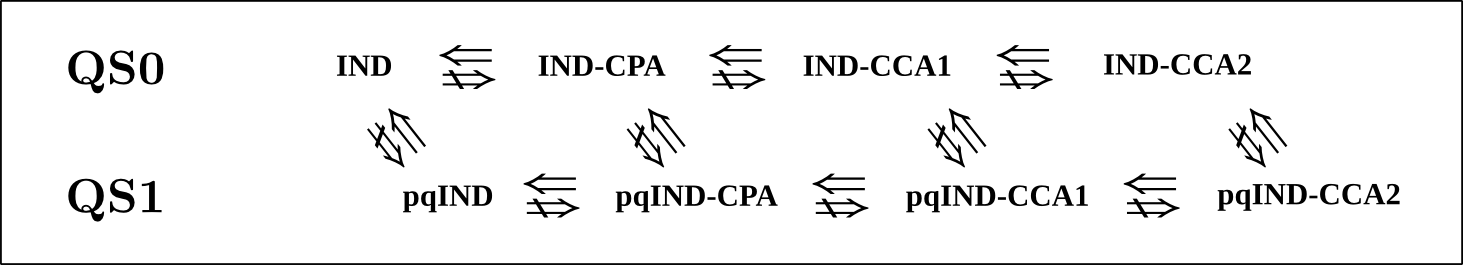}
\end{center}
\caption{Relations for SKES security notions in \QS0 and \QS1.}
\label{fig:QS1relations}
\end{figure}

\subsection{Post-Quantum Public-Key Encryption}\label{sec:pqPKES}

In {\em post-quantum public-key encryption schemes} the situation is quite different. The reason is that, in this case, the presence of a public-key allows the adversary to compute encryptions autonomously. In this scenario, following the \QS1 principle, the encryption oracle $\Enc_\pk$ should be replaced by the quantum counterpart $\ket{\Enc_\pk}$. However, this is only true for the {\em learning phases} during the security game (recall that, for PKES, IND security alone does not constitute a meaningful notion). The IND phase, on the other hand, models the attack of the adversary against the encryption of some unknown message, encryption that, therefore, is performed by some {\em classical} third party (the {\em IND challenger}). Moreover, as \M and \D are \QPT algorithms, giving them the public key \pk as input automatically implies access to $\ket{\Enc_\pk}$.

The resulting post-quantum IND-CPA security game is modified as follows.

\begin{experiment}[$\gamepqINDCPA$ for PKES]\label{expt:pqINDCPApk}
Let $\E$ be a PKES, and $\A:=(\M,\D)$ a pq-IND adversary. The {\em pq-IND-CPA experiment} (in the post-quantum public-key setting) proceeds as follows:
\begin{algorithmic}[1]
\State \textbf{Input:} $\secpar \in \NN$
\State $(\pk,\sk) \from \KGen$
\State $(m^0,m^1,\ket{\state}) \from \M(\pk)$
\State $b \rand\bin$
\State $c \from \Enc_\pk(m^b)$
\State $b' \from \D(c,\ket{\state},\pk)$
\If{$b = b'$}
	\State \textbf{Output:} $1$
\Else
	\State \textbf{Output:} $0$
\EndIf
\end{algorithmic}
The {\em advantage of \A} is defined as:
$$
\advpqINDCPA := \Pr \left[ \gamepqINDCPA \to 1 \right] - \half .
$$
\end{experiment}

Notice how only the encryption oracle during the learning phases is replaced by a quantum oracle, but it is still classical during the IND phase. This notion was introduced in~\cite{BZ13}, but we will discuss more the implications of this important difference in Section~\ref{sec:QS2enc}. Also notice how $\gamepqINDCPA=\gamepqIND[\A^{\ket{\Enc_\pk}}]$ only holds for the public-key setting.

The security notions pq-IND-CCA1 and pq-IND-CCA2 in the public-key setting are a straightforward modification of the ones for the SKES case, 
by giving to the adversary quantum oracle access to $\ket{\Enc_\pk}$ -- but the oracle $\Dec_\sk$ remains classical. It is well-known that certain PKES which are IND-CPA secure under standard assumptions are {\em not} pq-IND-CPA secure (examples are RSA, ElGamal EC-based schemes, etc.) Instead, pq-IND-CPA (or stronger) PKESs can be constructed under other quantum-hardness assumptions, as discussed in Section~\ref{sec:assumptions}.

\section{Post-Quantum Signatures}\label{sec:QS1sig}

In the case of {\em post-quantum signature schemes}, as the oracle access to $\Sign_\sk$ is kept classical according to the \QS1 principle, the definition of existential unforgeability is modified in the standard post-quantum way, e.g., by merely replacing \PPT adversaries with \QPT ones.

\begin{experiment}[$\gamepqEUFCMA$]\label{expt:pq-EUF-CMA}
Let $\Sigscheme$ be a DSS, and \A a \QPT algorithm. The {\em pq-EUF-CMA experiment} proceeds as follows:
\begin{algorithmic}[1]
\State \textbf{Input:} $\secpar, q_s \in \NN$
\State $(\pk,\sk) \from \KGen$
\State $(x,\sig) \from \A^{\Sign_\sk}(\pk)$ after making at most $q_s$ queries to $\Sign_\sk$, receiving signatures $(x_1,\sig_1) , \ldots (x_{q_s}, \sig_{q_s})$
\If{$\SVer(\pk,x,\sig) = 1$ and $\x \neq \x_i \foral i = 1,\ldots,q_s$}
	\State \textbf{Output:} $1$
\Else
	\State \textbf{Output:} $0$
\EndIf
\end{algorithmic}
The {\em advantage of \A} is defined as:
$$
\advpqEUFCMA(\secpar,q_s) := \Pr \left[ \gamepqEUFCMA(\secpar,q_s) \to 1 \right].
$$
\end{experiment}

\begin{definition}[Post-Quantum Existential Unforgeability under Chosen Message Attack (pq-EUF-CMA)]\label{def:pqEUFCMA}
A DSS $\Sigscheme$ is {\em post-quantum existentially unforgeable under chosen message attack (or, it is pq-EUF-CMA secure)} iff, for any \QPT algorithm \A it holds that:
$$
\advpqEUFCMA \leq \negl.
$$
\end{definition}

However, the situation changes in the case of signatures in the random oracle model: in this case, it would not make sense to define a notion of post-quantum security without switching to the quantum random oracle model. The resulting security notion should be called, for consistency with our naming conventions, pq-EUF-CMA-QRO. However, it is clear that the presence of QRO automatically implies \QPT adversaries, which in turn implies a post-quantum security notion {\em at least}. Therefore, for simplicity, we will call this new security notion just EUF-CMA-QRO.

\begin{experiment}[$\gameEUFCMAQRO$]\label{expt:EUF-CMA-QRO}
Let $\Sigscheme$ be a DSS, \RO a random oracle with corresponding quantum random oracle \QRO, and \A a \QPT algorithm. The {\em EUF-CMA-QRO experiment} proceeds as follows:
\begin{algorithmic}[1]
\State \textbf{Input:} $\secpar, q_s, q_h \in \NN$
\State $(\pk,\sk) \from \KGen^\RO$
\State $(x,\sig) \from \A^{\Sign_\sk,\QRO}(\pk)$ after making at most $q_h$ queries to \QRO, and $q_s$ queries to $\Sign_\sk$ receiving signatures $(x_1,\sig_1) , \ldots (x_{q_s}, \sig_{q_s})$
\If{$\SVer(\pk,x,\sig) = 1$ and $\x \neq \x_i \foral i = 1,\ldots,q_s$}
	\State \textbf{Output:} $1$
\Else
	\State \textbf{Output:} $0$
\EndIf
\end{algorithmic}
The {\em advantage of \A} is defined as:
$$
\advEUFCMAQRO(\secpar,q_s,q_h) := \Pr \left[ \gameEUFCMAQRO(\secpar,q_s,q_h) \to 1 \right].
$$
\end{experiment}

Notice how, in the above experiment, only the adversary has access to \QRO, while honest parties have only access to \RO.

\begin{definition}[(Post-Quantum) Existential Unforgeability under Chosen Message Attack in the Quantum Random Oracle Model (EUF-CMA-QRO)]\label{def:EUFCMAQRO}
A DSS $\Sigscheme$ is {\em (post-quantum) existentially unforgeable under chosen message attack in the quantum random oracle model (or, it is EUF-CMA-QRO secure)} iff, for any \QPT algorithm $\A$ it holds that:
$$
\advEUFCMAQRO \leq \negl.
$$
\end{definition}

\vfill

\section{Fiat-Shamir in the QROM}\label{sec:FSQROM}

The Fiat-Shamir transformation is a fascinating example of how things can go wrong when blindly switching to \QPT adversaries in defining post-quantum security notions. The presence of a random oracle and, especially, of rewinding in the security proof makes this a case to be treated carefully.

In the last few years, a few works have been presented dealing with the FS transformation in a quantum world. Here, we only discuss the results from Dagdelen et al.~\cite{DFG13}, which was hystorically the first work in the direction of assessing the security of FS in the quantum world.

\subsection{Preliminaries}

We start by defining {\em quantum-hard languages} as the `post-quantum analogue' of hard languages.

\begin{definition}[Quantum-Hard Language]
A hard language \hardL is a {\em quantum-hard language} iff for any \QPT algorithm \A it holds:
$$
\Pr_{(x,w) \from \Inst} \left[ (x,\A(x)) \in \R \right] \leq \negl.
$$
\end{definition}

Next, we identify a special class of $\Sigma$-protocols, where the prover's commitment $\com$ does not depend on the witness $w$.

\begin{definition}[$\Sigma$-Protocol with Witness-Independent Commitments]\label{def:witindcom}
A $\Sigma$-protocol \sigmaproto for a hard language \hardL has {\em witness-independent commitments} iff there exists a \PPT algorithm \Com which, on input a statement $x \in \L$, produces the same distribution as the prover’s first message $\com(x,w)$ for input $(x, w) \from \Inst$. In this case, we also write the first message as $\com \from \Com(x)$.
\end{definition}

Many $\Sigma$-protocols are actually of this type. Examples are the well known graph-isomorphism proof \cite{GMW86}, the Schnorr proof of knowledge \cite{Schnorr91}, or the protocol for lattices used in an anonymous credential system~\cite{CNR12}. A typical example of non--witness-independent commitment $\Sigma$-protocol is the graph $3$-coloring ZKPoK scheme \cite{GMW86}, where the prover commits to a random permutation of the coloring.

Finally, we define a class of $\Sigma$-protocols, where the prover's commitment $\com$ can be actually generated obliviously by the verifier instead.

\begin{definition}[$\Sigma$-Protocol with Oblivious Commitments]\label{def:oblcom}
A $\Sigma$-protocol \sigmaproto for a hard language \hardL has {\em oblivious commitments} iff there exist \PPT algorithms $\Com$ and $\SmplRnd$ such that the following distributions are statistically indistinguishable:
\begin{multicols}{2}
\begin{algorithmic}[1]
\State \textbf{Input:} $n \in \NN, (x,w) \in \R$
\State $r \rand \bin^{\poly(\secpar)}$
\State $\com \from \Com(x;r)$
\State $\ch \from \V(x,\com)$
\State $\resp \from \P(x,w,\com,\ch)$
\State \textbf{Output:} $(x,w,r,\com,\ch,\resp)$
\end{algorithmic}
\vfill
\columnbreak
\begin{algorithmic}[1]
\State \textbf{Input:} $n \in \NN, (x,w) \in \R$
\State $(\com,\ch,\resp) \from \left( \P(x,w),\V(x)\right)$
\State $r \from \SmplRnd(x,\com)$
\State \textbf{Output:} $(x,w,r,\com,\ch,\resp)$
\end{algorithmic}
\vfill
\end{multicols}
\end{definition}

Notice that a $\Sigma$-protocol with oblivious commitments has, in particular, witness-independent commitments. With oblivious commitments, the prover is able to compute a response from the given commitment $\com$ without knowing the randomness used to compute the commitment. This is usually achieved by placing some extra trapdoor into the witness $w$. For example, for the Guillou-Quisquater RSA based proof of knowledge~\cite{GQ88} where the prover shows knowledge of $w \in \ZZ_n^*$ with $w^e= y \bmod{n}$ for $x=(e,n,y)$, the prover would need to compute an $e$-th root for a given commitment $r \in \ZZ_n^*$. If the witness would contain the prime factorization of $n$, instead of the $e$-th root of $y$, this would indeed be possible.

$\Sigma$-protocols with oblivious commitments allow to move the generation of the commitment from the prover to the honest verifier. For most schemes this infringes on active security, because a malicious verifier could generate the commitment `non-obliviously'. However, the scheme remains honest-verifier zero-knowledge, and this suffices for deriving secure signature schemes through the FS transformation. We call such modified scheme a {\em $\Lambda$-protocol}\footnote{The choice of the symbol `$\Lambda$', in analogy to the choice of `$\Sigma$' in `$\Sigma$-protocol', is meant as a mnemonic graphical representation of the protocol flow. For $\Sigma$-protocols, in fact, the $\Sigma$ recalls a stylization of the left-to-right (and viceversa) arrows denoting exchange of messages between one `prover side' to the left and one `verifier side' to the right when representing the protocol as a workflow, with the direction of time going down. Analogously, $\Lambda$-protocols can be seen as $\Sigma$-protocols where part of the interaction (i.e., some `arrows') are removed. This is stylized by rotating the $\Lambda$ by 90 degrees.}.

\begin{definition}[$\Lambda$-Protocol]\label{def:lambdaproto}
Let \sigmaproto be a $\Sigma$-protocol for a hard language $\hardL$ with oblivious commitments. The {\em $\Lambda$-protocol \lambdaproto  associated to \sigmaproto} is a $3$-move interactive protocol with exchange of messages $r,(\com,\ch),\resp$ between two \PPT algorithms \PL and \VL such that:
\begin{enumerate}
\item $\PL(x,w) \to r$, where $r \rand \bin^{\poly(\secpar)}$
\item $\VL(x) \to (\com,\ch)$, where $\com \from \Com(x;r)$, and $\ch \from \V(x,\com)$
\item $\PL(x,w,\com,\ch) \to \resp$, where $\resp \from \P(x,w,\com,\ch;r')$, and\\ $r' \from \SmplRnd(x,\com)$
\item $\VL(x,\com,\ch,\resp) := \V(x,\com,\ch,\resp)$
\end{enumerate}
\end{definition}

The generation of the initial randomness $r$ can be performed by \VL himself, so that a $\Lambda$-protocol can generally be seen as a $2$-move interactive protocol.

\subsection{Impossibility Result for Post-Quantum Fiat-Shamir}

In this section, we use a meta-reduction technique to rule out the existence of strongly black-box reductions for the Fiat-Shamir transformation of actively secure $\Sigma$-protocols under certain conditions. That is: {\em it is not possible to find reductions with strong security guarantees for the Fiat-Shamir transformation in the QRO, by only relying on the active security of certain $\Sigma$-protocols}. Before assessing more in detail the strength of this result, we outline here the proof. Recall that, classically, if \sigmaproto is a $\Sigma$-protocol, then its FS transform in the ROM, \FSSigma, is an EUF-CMA-RO secure digital signature scheme (Theorem~\ref{thm:FS}).

\begin{enumerate}
\item First we describe a hypothetical, all-powerful adversary $\A^\QRO$ with quantum access to the random oracle (and no oracle access to the signing algorithm \Sign at all), able to break the EUF-CMA-RO security (generate forgeries) for \FSSigma for any input public key. This adversary does not need to exist in practice -- it is sufficient for our meta-reduction to successfully emulate it. The adversary $\A^\QRO$ uses his unbounded power to find a secret key \sk to its input \pk, and then uses a (single) query to the random oracle to generate a forgery. Moreover, such adversary uses the quantum access to the random oracle to `hide' his query in a superposition (this prevents any strong quantum reduction to apply the rewinding techniques of Pointcheval and Stern~\cite{PS00} as in the classical setting). Finally, this hypothetical adversary uses the secret key and the random oracle query to output a valid forgery.

\item Then we describe the behavior of a strongly black-box reduction \B reducing the EUF-CMA-RO security of \FSSigma to the weak security of an identification scheme \idscheme. We show how this is equivalent to finding valid witnesses for statements in a quantum-hard language \hardL by having only classical access to an efficient adversary for \FSSigma. We call these very powerful reductions {\em strong quantum extractors} (or, in short, just `extractors').

\item Then we build a reduction \M which breaks the active security of \sigmaproto by having classical access to an extractor \B.

\item Finally, we show how \M can successfully emulate the all-powerful adversary \A for \B by interacting with the honest prover \P and with the same random oracle \RO generated by \B. That is, \M is actually a {\em meta-reduction} which breaks the active security of \sigmaproto by using \B.
\end{enumerate}

We give such impossibility result in respect to the subclass of witness-independent $\Sigma$-protocols, while leaving open the other cases. Moreover, we assume that the strong quantum extractor is {\em input-preserving} (i.e., it forwards $x$ faithfully to the adversary). In this case we can easily derandomize the adversary (with respect to classical randomness) by `hardwiring' a key of a random function into it, which he initially applies to its input $x$ to recover the same classical randomness for each run. Since the strong extractor has to work for all adversaries, it in particular needs to succeed for those where we pick the function randomly but fix it from thereon.

\begin{theorem}[Impossibility Result for Fiat-Shamir]\label{thm:impossFS}
If \sigmaproto is an actively and weakly secure $\Sigma$-protocol with witness-independent commitments, then it does not admit any input-preserving strong quantum extractor.
\end{theorem}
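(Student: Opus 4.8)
## Proof Plan

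The plan is to carry out the meta-reduction outlined in the four steps preceding the theorem statement, combining a hypothetical all-powerful adversary, a hypothetical strong quantum extractor, and the active security of the $\Sigma$-protocol into a contradiction. The overall logical skeleton is: assume towards a contradiction that $\sigmaproto$ is actively and weakly secure \emph{and} that it admits an input-preserving strong quantum extractor \B for the EUF-CMA-RO security of $\FSSigma$ in the QROM. We then build a meta-reduction \M that uses \B to break the active security of $\sigmaproto$, contradicting one of the two assumptions.

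First I would construct the hypothetical adversary $\A^\QRO$ against $\FSSigma$. Given an input public key $\pk = x$, this adversary uses unbounded computational power to find \emph{some} witness $\sk = w$ with $(x,w) \in \R$ (possible since $x \in \L$). It then picks a message $m$, and instead of querying the random oracle classically on $(x,\com,m)$ for a fixed commitment $\com$, it prepares a \emph{uniform superposition} over many commitment values $\com$ (using the witness-independent commitment algorithm \Com), queries \QRO once on that superposition to obtain the corresponding challenges $\ch = \h(x,\com,m)$, computes responses $\resp$ using $w$, and measures to output a single valid forgery $(m,(\com,\resp))$. The crucial point — which I would emphasize — is that because the single oracle query is "smeared" across a large superposition, the query magnitude $\q_{(x,\com,m)}$ on any individual classical point is tiny (inverse-polynomial or smaller); by Lemma~\ref{lem:queryprob}, any classical reprogramming of \QRO on a small set of such points is undetectable to $\A$. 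This is what defeats the Pointcheval--Stern rewinding strategy~\cite{PS00}: a classical reduction cannot "catch" the adversary's query point and reprogram it to extract a second transcript.

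Next I would describe the reduction \B (the strong quantum extractor): by definition of a strong quantum reduction reducing EUF-CMA-QRO security of $\FSSigma$ to the weak active security of $\sigmaproto$, \B interacts \emph{classically} with the adversary \A (it ignores or cannot usefully measure \A's quantum oracle queries), simulates the random oracle \RO for \A, and must output a valid witness $w'$ for the statement $x$ it receives (here using input-preservation: \B forwards $x$ faithfully to \A). Then I build \M: \M receives $(x)$ in the active security game of $\sigmaproto$, with oracle access to the honest prover $\P(x,w)$ during the active phase; it runs \B, and whenever \B expects to talk to the adversary \A, \M \emph{emulates} \A. The emulation is the heart of the argument: \M does not know $w$, so it cannot genuinely run $\A$'s unbounded key-recovery step; instead, \M uses its \emph{single} active-phase interaction with the real prover $\P(x,w)$ to obtain a genuine commitment $\com^*$ (and later a genuine response), and then — using the derandomization trick (hardwiring a random function into \A so its classical randomness is fixed) together with the fact that \B's classical view of \A's superposition query is statistically close whether or not reprogramming occurs — \M produces for \B exactly the classical transcript \B would see when interacting with the real $\A^\QRO$. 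Since \B is guaranteed to output a valid witness $w'$ for $x$ with non-negligible probability against \emph{every} (input-preserving-compatible) adversary, in particular against the one \M emulates, \M outputs $w'$ and wins the active security game — contradiction.

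The main obstacle — and the step deserving the most care — is showing that \M's emulation of $\A^\QRO$ is indistinguishable from the real $\A^\QRO$ \emph{from \B's perspective}, given that \M only has a single honest-prover interaction whereas $\A^\QRO$ conceptually "runs a superposition of many provers." This is exactly where witness-independence of commitments and the oblivious/obliviously-samplable structure are needed: witness-independence lets \M generate the superposition of commitments itself via \Com without $w$, and the single real-prover response together with Lemma~\ref{lem:queryprob} lets \M argue that \B — which only has classical access — cannot distinguish the faithfully-simulated random oracle from one that has been implicitly "reprogrammed" to be consistent with the injected real transcript, because the total query weight on the reprogrammed points is negligible, so by Lemma~\ref{lem:queryprob} (and Lemma~\ref{lem:distances}) \B's output distribution shifts by at most a negligible amount. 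One must also be careful that \B is \emph{strongly} black-box, i.e. it genuinely only uses classical interaction with \A; this is precisely the hypothesis that blocks \B from measuring \A's query and adapting, and it is why the impossibility is stated for \emph{strong} quantum extractors rather than standard ones. Finally, the input-preserving assumption is what allows the derandomization argument to go through uniformly over the random choice of the hardwired function, so that \B must succeed on the specific derandomized adversary \M emulates.
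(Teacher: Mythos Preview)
Your high-level plan (meta-reduction against active security, using an all-powerful superposition adversary and an assumed strong extractor) matches the paper's, and your description of the adversary $\A^\QRO$ is correct. However, the crucial emulation step is not right as you describe it, and the tools you invoke there are the wrong ones.

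The core difficulty you identify is real: \M must produce, for \B, a forgery distributed like $\A$'s, without knowing a witness. But your proposed mechanism --- arguing via Lemma~\ref{lem:queryprob} that \B cannot distinguish a ``reprogrammed'' oracle consistent with the injected prover transcript --- does not make sense in this setting: \M does not control the random oracle, \B does. \M cannot reprogram \B's oracle, so there is nothing to which Lemma~\ref{lem:queryprob} applies. (That lemma, and the query-weight intuition you invoke, is morally why Pointcheval--Stern rewinding fails for \B, but it plays no role in \M's emulation.) Also, the mention of ``obliviously-samplable structure'' is a red herring here; oblivious commitments are used only in the \emph{positive} result for $\Lambda$-protocols, not in this impossibility proof.

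The paper's actual mechanism is simpler and cleaner, and rests on one additional property you do not mention: the extractor is assumed to be \emph{RO-broadcasting}, i.e.\ it exposes a public (classical) interface to evaluate the function $\h$ it is simulating. Given this, \M proceeds as follows: it prepares the same superposition query $\ket{\phi}$ as $\A$ (possible by witness-independence), sends it to \B's QRO, receives the response $\ket{\psi}$, and \emph{simply discards it}. Then \M initiates one interaction with the real prover $\P(x,w)$, obtains a commitment $\com$, queries $\h$ \emph{classically} via the public interface to get $\ch = \h(\pk,\com,m)$, forwards $\ch$ to $\P$ to obtain $\resp$, and outputs $(\com,\ch,\resp)$ as the forgery. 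This forgery is genuinely valid under \B's own $\h$ (no reprogramming needed), and it is distributed exactly as $\A$'s forgery would be: witness-independence guarantees that the prover's $\com$ has the same distribution as a measured commitment from $\A$'s superposition, and the $(\ch,\resp)$ pair is determined by $\com$, $\h$, and a valid witness in both cases. Since \B is strong (it does not inspect the quantum query beyond applying the oracle unitary), its classical view is identical in the real and emulated executions, so it must output a witness $w'$ with the same probability --- contradicting active weak security.
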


\begin{proof}
We follow the proof sketch above by giving explicit descriptions of the adversary \A, the extractor \B, and the meta-reduction \M. At the beginning of the game, the honest prover \P generates a public/secret key pair $(\pk,\sk) \from \KGen$ for the DSS \FSSigma (which is actually a valid statement/witness pair $(x,w) \from \Inst$ for the quantum hard language \hardL). The public key \pk is also given to the honest verifier \V.

\

\textbf{The Adversary.} Our hypothetical, all-powerful adversary \A works as follows (see Figure~\ref{fig:adv}). He receives as input the public key $\pk=x$ and first uses its unbounded computational power to compute a random witness $w'$ (according to uniform distributions of coin tosses $\D$ subject to $\Inst(\secpar;\D) \to (x,w')$, but where $\D$ is a random function of~$x$). Then \A prepares all possible random strings $r \in \bin^{\r(\secpar)}$ (for some appropriate polynomial function \r) for the prover's algorithm in superposition, i.e., \A prepares the state:
$$
\sum_{r = 0}^{2^\r - 1} \frac{1}{\sqrt{2^\r}} \ket{r}
$$
(this can be done efficiently by using Hadamard gates). In the next step, \A evaluates (a unitary version of) the classical witness-independent algorithm \Com for (deterministically) computing the prover's commitment \com on this superposition (and on $x$) in order to obtain a superposition of all $\ket{r,\com := \Com(x;r)}$ plus an extra $\ket{0}$ ancilla register, i.e., the state:
$$
\ket{\phi} := \sum_{r = 0}^{2^\r - 1} \frac{1}{\sqrt{2^\r}} \ket{r, \com, 0 }.
$$
At this point, \A evaluates the QRO \QRO in superposition on the \com component of the above state (and using the public-key \pk and a chosen message $m$), thereby obtaining the state:
$$
\ket{\psi} := \sum_{r = 0}^{2^\r - 1} \frac{1}{\sqrt{2^\r}} \ket{r, \com, \ch := \h(\pk,\com,m) }.
$$
Then \A computes, in superposition, responses $\resp \from \P(x,w',\com,\ch,r)$ for all values in the superposition, by using $w'$ to emulate a valid prover, obtaining the state:
$$
\sum_{r = 0}^{2^\r - 1} \frac{1}{\sqrt{2^\r}} \ket{r, \com, \ch, \resp },
$$
Finally, \A measures  such state, obtaining a valid transcript $(\com,\ch,\resp)$, and hence a valid forgery \sig for \FSSigma.

\begin{center}
\begin{figure}[t]
\begin{center}
\includegraphics[width=\textwidth]{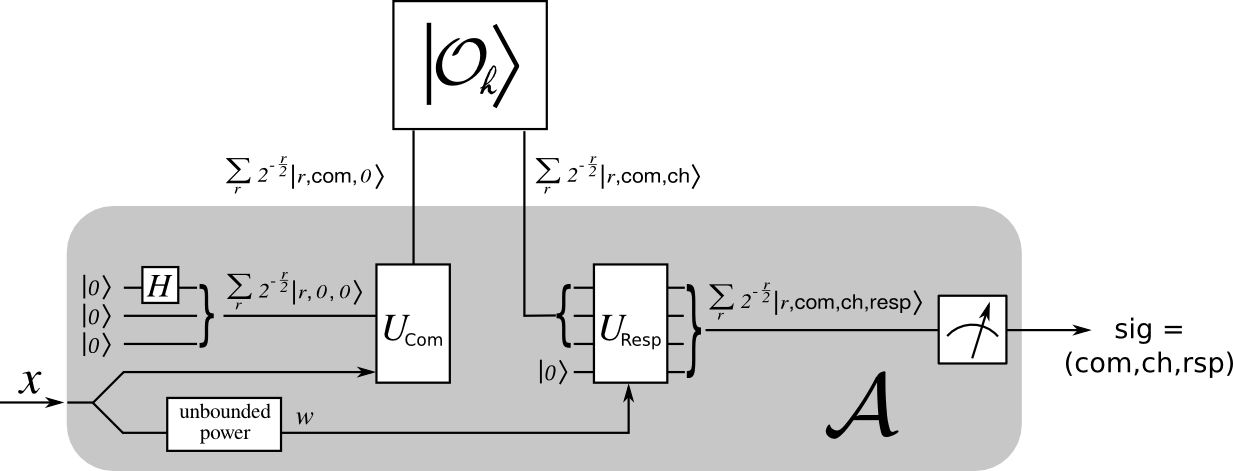}
\end{center}
\caption{the all-powerful adversary.}
\label{fig:adv}
\end{figure}
\end{center}

\textbf{The Extractor.} An extractor \B for \sigmaproto is a strong (black-box) quantum reduction which uses an adversary against \FSSigma in order to break the weak security of \sigmaproto. Therefore, it has the following characteristics.
\begin{itemize}
\item \B is a \QPT algorithm, taking as input a public-key \pk for \FSSigma (i.e., a statement $x$ in \hardL).
\item Because he wants to break the {\em weak} security of \sigmaproto, the goal of \B is eventually to output a valid witness $w''$ for $x$.
\item \B is a black-box reduction, so it works by interacting with {\em any} successful adversary against the EUF-CMA-RO security of \FSSigma, but without having any information about the internal workings of the adversary. In particular, it must work for the all-powerful adversary \A.
\item Because \A eventually wants to interact with a quantum random oracle, \B must also emulate a valid \QRO for \A. In particular, \B {\em \ must} be a quantum reduction.
\item However, since \B is a {\em strong} extractor, he is not allowed to tamper with \A's queries to \QRO. That is, \B cannot perform measurements or other quantum operations on those queries, except the evaluation through \QRO (but \B could, for example, reprogram the oracle, or rewind \A).
\end{itemize}

For example, such extractors might work by running \A twice, obtaining two distinct signature forgeries for the same messages, and then applying the special soundness property of \sigmaproto to extract a valid witness $w''$. These extractors can be passive or active (i.e., interacting with \P), there is no restriction on that as long as they output a valid $w''$.

On the other hand, we restrict our impossibility result to extractors with the two following additional properties:
\begin{enumerate}
\item\label{item:extr1} they are {\em input-preserving}, that is, the same statement $x$ (public-key \pk) input to \B is relayed as input to the black-box adversary; and
\item\label{item:extr2} they are {\em RO-broadcasting}, that is, they provide a public interface for evaluating \RO to be used by other external parties, not only exclusively by the black-box adversary.
\end{enumerate}

It is important to notice that this last condition is perfectly natural: recall that the ROM idealizes a publicly known hash function, so that it is reasonable to postulate that, once \B has set up the emulated \QRO, everyone can have access to it. Actually, for this reason, one could also assume that the extractor is {\em QRO-broadcasting} (i.e., providing a public quantum interface for evaluating \QRO), but for our result it is sufficient for the meta-reduction to query \RO classically, and a single query is enough.

\

\textbf{The Meta-Reduction.} We illustrate the meta-reduction \M in Figure~\ref{fig:metared}. Assume that there exists an extractor $\B$ with black-box access to an underlying quantum adversary $\A$, and which on input a statement (public-key) $x$ sampled according to $\Inst$, is able to extract a witness $w''$ to $x$ by running several resetting executions of $\A$, each time answering $\A$'s QRO queries $\ket{\phi}$ by emulating a QRO \QRO for a classical, possibly probabilistic function $\h$ for which \B also provides a public interface to be (at least classically) accessed by \M. Then \M can use $w''$ to break the (weak and strong) security of the underlying $\Sigma$-protocol \sigmaproto by impersonating a valid prover for $x$ against \V, against the assumption, and thereby concluding the proof.

It is left to show how \M can succesfully simulate a quantum adversary for \B. In particular, we describe here how \M can simulate the all-powerful adversary \A. Clearly, \M can produce the same query $\ket{\phi}$ that \A produces, because of the witness-independence of \sigmaproto. However, upon receiving back the reply $\ket{\psi}$ from \QRO, this state is discarded and ignored, and a valid forgery is instead generated in a different way. Namely, \M initiates a \sigmaproto execution with the valid prover \P for $x$, receiving a commitment \com. \M can now compute a valid challenge $\ch := \h(\com)$ by using the public interface provided by \B for evaluating \h, that is, \M is simulating a valid verifier \V for \P. At this point, a valid response \resp is computed by \P, and \M can use the transcript $(\com,\ch,\resp)$ to output a valid forgery for \B.
\endproof
\begin{center}
\begin{figure}[t]
\centering
\includegraphics[width=\textwidth]{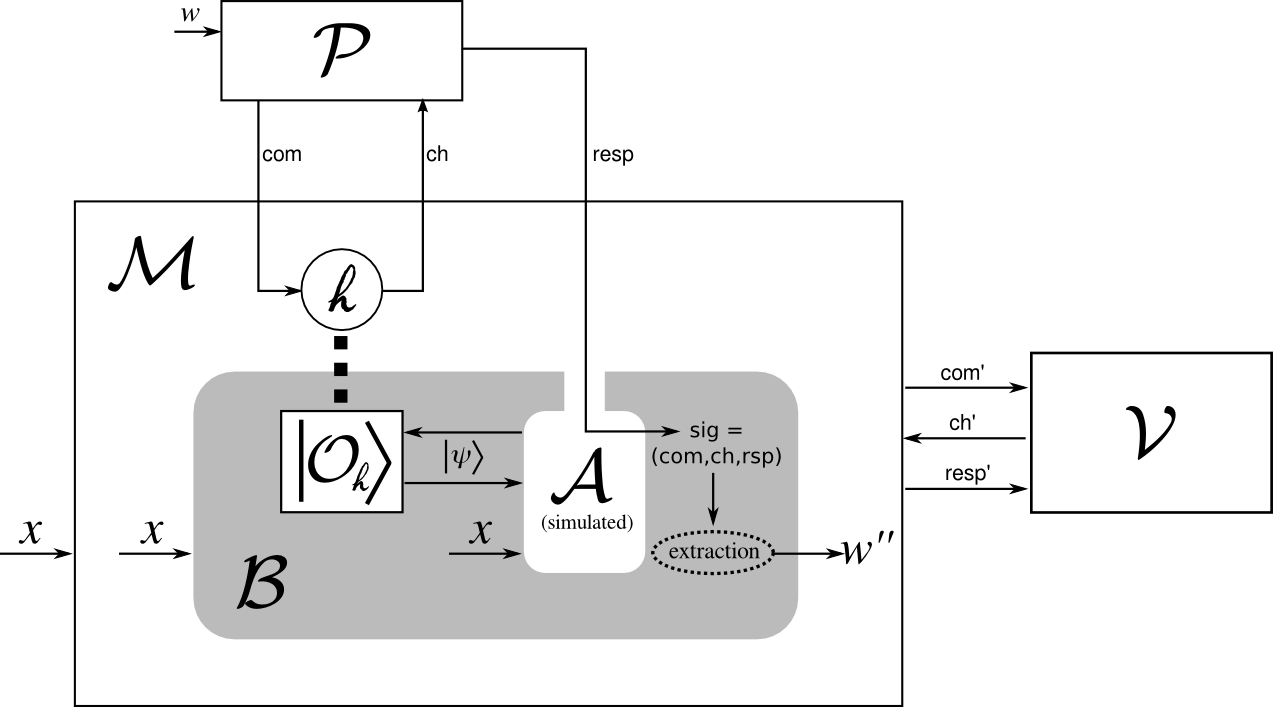}
\caption{An overview of our meta-reduction}
\label{fig:metared}
\end{figure}
\end{center}
\end{proof}

The above theorem is a special case of~\cite[Theorem~3.3]{DFG13} with DSS in mind, but the Fiat-Shamir transform can also be cast in the scenario of non-interactive zero-knowledge proofs. It is important to notice that the above impossibility result has the following limitations:
\begin{itemize}
\item it only holds for witness-independent commitment $\Sigma$-protocols.
\item It only holds for {\em strong black-box quantum extractors}. I.e., the extractor is not allowed to tamper with the adversary's queries to the QRO.
\item The extractors must be input-preserving, i.e., they use their underlying black-box adversary by giving as input the same public-key used to break the $\Sigma$-protocol.
\item It only holds for extractors breaking weak security, that is, {\em witness-extracting} -- they are stronger than extractors who just win the impersonation game in the $\Sigma$-protocol.
\item It is necessary that the extractor allows the meta-reduction to evaluate \RO at least once.
\end{itemize}

Before discussing more in detail some of the above limitations, it is important to put this result in hystorical perspective: this was the {\em first} impossibility result for Fiat-Shamir in the quantum world, and following works~\cite{QuantRewinding,Unruh15} rely on more advanced tools. As already discussed, the witness-independence of the commitments is not a strong limitation, as most $\Sigma$-protocols have this property. Finally, notice that the existence of strong black-box extractors is {\em not} an unreasonable assumption -- and therefore the above impossibility result is not unreasonably weak. In fact, Theorem~\ref{thm:FSLambda} in the next section shows that certain $\Sigma$-protocols {\em do} indeed admit such extractors.

As we have already noticed, the extractor has to choose and provide public classical access to a classical function $\h$ for answering random oracle queries. While this may be considered a `gray-box' restriction in general interactive quantum proofs, it seems to be inevitable in the QROM; it is rather a consequence of the approach where a quantum adversary mounts attacks in a classical setting. After all, both the honest parties as well as the adversary expect a classical hash function. The adversary is able to check this property easily, even if it treats the hash function otherwise as a black box (and may thus not be able to spot that the hash function uses (pseudo)randomness). We remark that this approach also complies with previous efforts \cite{QROM,BZ13auth,Zhandry12,ZhandryPRF} and the positive result in the next section to answer such hash queries. Moreover, notice that in the above proof technically \M only needs to evaluate \h {\em once}, i.e., it must not necessarily require unlimited access to \RO. For these reasons, the meta-reduction still qualifies as {\em black-box}.

Furthermore, the extractor \emph{can} rewind the quantum adversary to any point before the final measurement. Recall that for this impossibility result it is assumed, to the advantage of the extractor, that the adversary does not perform any measurement until the very end. Since the extractor can re-run the adversary from scratch for the same classical randomness, and the `no-cloning restriction' does not apply to our adversary with classical input, the extractor can therefore easily put the adversary in the same (quantum) state as in a previous execution, up to  the final measurement. However, because we consider \emph{strong black-box} extractors, the extractor can only influence the adversary's behavior via the answers it provides to \A's external communication. In this sense, the extractor may always rewind the adversary to such communication points. The extractor is also allowed to measure and abort at such communication points.

The extraction strategy by Pointcheval and Stern~\cite{PS00} in the purely classical case \emph{can} be cast in the strong black-box extractor framework. For this the extractor would run the adversary for the same classical randomness twice, providing a lazy-sampling--based hash function description, with different replies in the $i$-th answers in the two runs. The extractor then extracts the witness from two valid signatures. This shows that a different approach than in the classical setting is necessary for extractors in the QROM.

One might ask why the meta-reduction does not apply to the Fiat-Shamir transform when adversaries have only classical access to the random oracle. The reason is the following: if the adversary made a classical query about a single commitment (and so would the meta-reduction), then one could apply the rewinding technique of Pointcheval and Stern changing the random oracle answers, and extract the underlying witness via special soundness of the identification scheme. The quantum adversary here, however, queries the random oracle in a superposition. In this scenario, as we explained above, the extractor is not allowed to `read' the query of the adversary unless it makes the adversary stop. In other words, the extractor cannot measure the query and then keep running the adversary until a valid witness is output. This intrinsic property of strong black-box quantum extractors, hence, makes `quantum' rewinding impossible. Note that rewinding in the classical sense -- as described by Pointcheval and Stern -- is still possible, as this essentially means to start the adversary with the same random coins. This does not cover the case where \B measures (at least partially) the query state without disturbing \A's behavior significantly (i.e., {\em non-strong} extractors), but subsequent works~\cite{Unruh15} have also ruled out this possibility.

Finally, we briefly discuss that active security is basically necessary for an impossibility result as above. That is, we outline a three-move protocol for any quantum-hard language which, when applying the FS transformation, supports a straight-line extractor, and is honest-verifier zero-knowledge, but not actively secure. This holds as long as there are {\em post-quantum dense encryption schemes}, and {\em post-quantum non-interactive zero-knowledge proofs}. The latter are classical non-interactive zero-knowledge proofs (in the common random string model) for which simulated and genuine proofs are indistinguishable, even for \emph{quantum} distinguishers. The former are pq-IND-CPA encryption schemes where honestly generated public keys are quantum-indistinguishable from random strings. The construction is based on the (classical) non-interactive zero-knowledge proofs of knowledge of De Santis and Persiano \cite{DP92} and works as follows. The first message is irrelevant, e.g., we let the prover simply send the constant $0$ (potentially padded with redundant randomness). In the second message the verifier sends a random string which the prover interprets as a public key $\pk$ of the dense encryption scheme and a common random string $\crs$ for the NIZK. The prover encrypts the witness under $\pk$ and gives a NIZK that the encrypted value forms a valid witness for the public value $x$. The verifier only checks the NIZK proof. The protocol is clearly not secure against active (classical) adversaries because such an adversary can create a public key $\pk$ via the key generation algorithm, thus, knowing the secret key and allowing the adversary to recover the witness from a proof by the prover. It is, however, honest-verifier zero-knowledge against quantum distinguishers, because of the pq-IND-CPA security and the simulatability of the NIZK hide the witness and allow for a simulation.

\vfill

\subsection{Security Result for Post-Quantum Fiat-Shamir}\label{sec:positiveFS}

In this section, we show how it is possible to actually resurrect the security of the FS transformation for a certain class of $\Sigma$-protocols able to overcome the previous impossibility result. The intuition is the following: as such impossibility result works by exploiting the active security of the $\Sigma$-protocol, and since such property is not needed for the FS transformation to yield secure signature schemes, we can `patch' the $\Sigma$-protocol by removing its active security. That is, by {\em weakening} the security guarantees of a $\Sigma$-protocol (seen as an identification scheme) we work toward {\em strengthening} the properties of its FS transform (seen as a DSS).

We achieve this goal by considering the FS transform of $\Lambda$-protocols obtained by $\Sigma$-protocols with oblivious commitments. In particular, using random oracles one can hash directly into pairs $(\com,\ch)$ by first computing the output of the hash function obtaining a (public-coin) challenge \ch and some randomness $r'$, and then running $\Com(x;r')$ to sample a commitment \com obliviously. The existence of $\SmplRnd$ guarantees that we could `bend' this value back to an actual pre-image $r$ for $\com$. In the sequel we therefore often identify $r'$ with $\Com(x;r')$ in the sense that we assume that the hash function maps to $\Com(x;r')$ directly, and for a (randomized) hash function \h and message $m$ we write $(\com,\ch) \from \h(x,m,r)$. The modified FS transformation then looks as follows.
 
\begin{definition}[FS Transform of a $\Lambda$-Protocol]\label{def:FSLambda}
Let \lambdaproto be a $\Lambda$-protocol for a hard language \hardL, with commitment space \X (with associated randomness space $\X^{-1} = \set{r:r \from \SmplRnd} := \bin^{\poly(\secpar)}$), challenge space \Y, and response space \Z. Let \RO be a random oracle for a random function $\h: \L \times \M \times \X^{-1} \to \Y$. The {\em FS transform of \lambdaproto in the ROM, $\FSLambda$}, is a DSS with message space $\M$, signature space $\T := \Y \times \Z$, and key space $\K := \L \times \W$, defined as follows:
\begin{enumerate}
\item $\KGen \to (\pk,\sk)$, where $(\pk,\sk):=(x,w) \from \Inst$
\item $\Sign^\RO(\sk,m) \to \sig := (r,\resp)$,\\where $r \rand \X^{-1}$, $(\com,\ch) \from \h(\pk,m,r)$,\\and $\resp \from \PL(\pk,\sk,\com,\ch,r)$
\item $\SVer^\RO(\pk,m,\sig) \to b$,\\where $\sig := (r,\resp)$, $b \from \V(\pk,\h(\pk,m;r),\resp)$
\end{enumerate}
\end{definition}

As we have already discussed, this modified FS transformation eludes the impossibility result from the previous section. In order to show its security, we exploit the special soundness of the $\Lambda$-protocol: by reprogramming the QRO \QRO for a forgery-generating adversary \A, eventually we obtain two related transcripts $(\com^\star,\ch^\star,\resp^\star)$ and $(\com^\star,\ch',\resp')$ for $\ch^\star \neq \ch'$, and thus extracting a valid witness for $x$ and breaking the weak security of \lambdaproto. The idea of the proof is as follows.

\begin{enumerate}
\item First, we run the HVZK simulator \S of the $\Lambda$-protocol to obtain a valid transcript $(\com^\star,\ch^\star,\resp^\star)$.
\item We reprogram the QRO \QRO by `injecting' the value $(\com^\star,\ch')$ (for $\ch^\star \neq \ch'$) on a fraction $\delta$ of the possible oracle answers. That is, we replace \RO with a semi-constant distribution \scDist.
\item Then, we run the adversary \A against the modified quantum oracle, obtaining a forgery for \FSLambda for some message $m$, and hence a valid transcript $(\com,\ch,\resp)$ for \lambdaproto.
\item Finally, if it happens that $\com = \com^\star$ and $\ch \neq \ch^\star$, we can use the special soundness extractor \J to obtain a valid witness for $x$ and breaking the weak security of \lambdaproto, concluding the proof.
\end{enumerate}

In order for this proof strategy to work, the following two (seemingly contradictory) conditions have to be fulfilled:
\begin{itemize}
\item we need to ensure that \A eventually outputs a valid signature yielding a transcript for the commitment $\com^\star$ of our choice (the one we obtained from the zero-knowledge simulator of the underlying $\Sigma$-protocol). This requires that $\com^\star$ appears with sufficiently large probability in the responses for oracle queries.
\item On the other hand, we still require that \A has a small probability of distinguishing a true QRO $\QRO$ from the reprogrammed one. Otherwise, the adversary may refuse to give a valid signature at all.
\end{itemize}

The following technical lemma shows that both conditions can be satisfied simultaneously by choosing $\delta$ carefully.

\begin{lemma}\label{lem:output}
Let \lambdaproto be a $\Lambda$-protocol for a quantum-hard language \hardL, and let $\O'$ be the oracle obtained by reprogramming $\RO$ on a 
fraction $\delta$ of its possible inputs $(\pk,m,r)$ such that $\O'(\pk, m, r) = (\com^\star, \ch')$ with probability $\delta \in (0,1)$ for fixed values $\com^\star$ and $\ch'$. Let \A be a \QPT algorithm such that $\A^\QRO(\pk)$ outputs a valid forgery for \FSLambda for a public key \pk with probability at least $\epsilon$ after performing $q_h$ queries to \QRO, and let $(\com,\ch,\resp)$ the transcript obtained by the output of the same algorithm $\A^{\ket{\O'}}(\pk)$ running against the reprogrammed quantum oracle. Then:
$$
\Pr \left[ \VL^{\O'}(x,\com,\ch,\resp) \to 1 \land (\com,\ch) = (\com^\star,\ch') \right] \geq \delta \cdot \epsilon - \frac{8}{3} \cdot q_h^4  \delta^2.
$$
\end{lemma}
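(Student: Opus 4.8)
The plan is to use the semi-constant distribution machinery of Lemma~\ref{lem:zhandrycustom} to pass from the true quantum random oracle \QRO to the reprogrammed oracle $\O'$, and then to exploit the fact that $\O'$ is built by tagging each input \emph{independently} with probability $\delta$.

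First I would note that verification in $\FSLambda$ uses the oracle only once: given a candidate forgery $(m,\sig)$ with $\sig=(r,\resp)$, one queries the oracle once to obtain $(\com,\ch)$ at the point $(\pk,m,r)$, and then runs the oracle-free verifier $\VL(x,\com,\ch,\resp)$ (recall $\VL^{\O'}=\VL$ since $\VL$ makes no oracle calls). Hence I would fold this single verification query into \A, obtaining a \QPT algorithm $\hat\A$ making $q_h$ (up to an additive constant) queries that outputs $1$ exactly when \A's output is a valid forgery \emph{with respect to whichever oracle $\hat\A$ is run against}. By hypothesis $\Pr[\hat\A^\QRO(\pk)\to 1]\geq\epsilon$, and since $\O'=\scQRO$ is precisely an oracle described by a semi-constant distribution $\scDist$ (with the distinguished value pinned to $(\com^\star,\ch')$ rather than random, which does not affect the bound), Lemma~\ref{lem:zhandrycustom} gives $\Pr[\hat\A^{\ket{\O'}}(\pk)\to 1]\geq\epsilon-\tfrac{8}{3}q_h^4\delta^2$; that is, with at least this probability $\A^{\ket{\O'}}(\pk)$ outputs $(m,r,\resp)$ such that, writing $(\com,\ch):=\O'(\pk,m,r)$, one has $\VL(x,\com,\ch,\resp)\to 1$.

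Next comes the combinatorial heart. Recall $\O'$ is generated by drawing a uniform function and then, independently for each input $z$, with probability $\delta$ overwriting its value by $(\com^\star,\ch')$; let $T$ be the resulting random set, so $z\in T\Rightarrow\O'(z)=(\com^\star,\ch')$. On the output point $z^\star:=(\pk,m,r)$ the event $\{z^\star\in T\ \wedge\ \A\text{ forges w.r.t. }\O'\}$ is therefore contained in the target event $\{\VL(x,\com,\ch,\resp)\to 1\ \wedge\ (\com,\ch)=(\com^\star,\ch')\}$, so it suffices to lower-bound $\Pr[z^\star\in T\ \wedge\ \A\text{ forges w.r.t. }\O']$. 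I would compare with the auxiliary experiment in which \A runs against a \emph{true} QRO and $T$ is then sampled by an independent $\delta$-biased coin per input: there $z^\star$ is statistically independent of $T$, so the analogous probability is exactly $\delta\cdot\Pr[\hat\A^\QRO\to 1]\geq\delta\epsilon$. The two experiments differ only in replacing the true QRO by $\scQRO$, so the same semi-constant distribution bound yields $\Pr[z^\star\in T\ \wedge\ \A\text{ forges w.r.t. }\O']\geq\delta\epsilon-\tfrac{8}{3}q_h^4\delta^2$, and the containment above concludes.

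The step I expect to require the most care is the comparison in the previous paragraph: one must argue that the indistinguishability bound of Lemma~\ref{lem:zhandrycustom} continues to hold against a ``distinguisher'' that additionally learns whether the single point $z^\star$ lies in the reprogrammed set $T$ (on top of the extra verification query). This holds because the bound is really an oracle-indistinguishability statement and the extra information is one classical bit; but spelling it out cleanly --- e.g. by augmenting $\hat\A$ with one further classical query that reveals whether its output point was reprogrammed, and tracking the harmless $q_h\mapsto q_h+O(1)$ --- is the one place where the argument is not pure bookkeeping.
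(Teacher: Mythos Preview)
Your approach is essentially the paper's: establish $\Pr[\A^\QRO \text{ forges} \wedge z^\star\in T]\geq\delta\epsilon$ by independence, then switch $\QRO$ to $\O'$ at a cost of $\frac{8}{3}q_h^4\delta^2$ via Lemma~\ref{lem:zhandrycustom}, and use the containment $\{z^\star\in T\}\subseteq\{(\com,\ch)=(\com^\star,\ch')\}$. Your first paragraph (bounding $\Pr[\hat\A^{\ket{\O'}}\to 1]$ alone) is not needed for the final bound, and the subtle point you flag---that the switch must be justified for a predicate that also reads the single bit ``$z^\star\in T$''---is exactly the step the paper absorbs into its one-line appeal to Lemma~\ref{lem:zhandrycustom}.
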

\begin{proof}
Consider the probability that we first run \A on the original oracle $\QRO$ and check if it successfully forges a signature $(r,\resp)$ for \pk and some message $m$ (leading to a transcript $(\com,\ch,\resp)$), and then, independently, we also verify that $(\pk,m,r)$ is mapped to $(\com^\star, \ch')$ under $\O'$. Then:
$$
\Pr \left[ \A^\QRO(\pk) \text{ succeeds } \land \O'(\pk,m,r) = (\com^\star,\ch') \right] \geq \delta \cdot \epsilon.
$$
This follows from the independence of the events: the oracle $\O'$ reprograms the output with probability $\delta$, independently of $\A$'s behavior, but at the same time we know that $\A^{\QRO}$ succeeds with probability at least $\epsilon$ by assumption. Next, we replace \QRO with $\ket{\O'}$ for \A, and we consider the new output $(m,r,\resp)$, arguing that:
$$
\Pr \left[ \A^{\ket{\O'}}(\pk) \text{ succeeds } \land \O'(\pk,m;r) = (\com^\star,\ch') \right] \geq \delta \cdot \epsilon - \frac{8}{3} \cdot q_h^4  \delta^2.
$$
This follows from Lemma~\ref{lem:zhandrycustom}: switching to the new oracle can change the distance of the output distribution of $\A$ by at most $\frac{8}{3} \cdot q_h^4 \delta^2$, and adding the verification step $\VL^{\O'}(x,\com,\ch,\resp) \to 1$ cannot increase this distance. Therefore, we conclude that the probability for the event
$$
\VL^{\O'}(x,\com,\ch,\resp) \to 1 \land (\com,\ch) = (\com^\star,\ch')
$$
cannot be smaller than the claimed bound, because $(\com,\ch) := \O'(\pk,m,r)$ by construction.
\end{proof}

The previous lemma informally tell us that, in order to succeed, we have to balance between a large $\delta$ to increase the chances of the adversary outputting a signature containing our desired $\com^\star$, and a small $\delta$ to avoid that the adversary detects the reprogrammed oracle. We are now ready to prove the main theorem.

\begin{theorem}[Security of a Fiat-Shamir Transform for $\Lambda$-Protocols]\label{thm:FSLambda}
Let \lambdaproto be a $\Lambda$-protocol for a quantum-hard language. Then $\FSLambda$ is an EUF-CMA-QRO secure DSS.
\end{theorem}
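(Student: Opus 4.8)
The plan is to argue by contraposition and reduce to the quantum-hardness of \hardL. Suppose $\A$ is a \QPT forger against $\FSLambda$ that makes $q_h$ quantum queries to \QRO and $q_s$ classical signing queries and wins the EUF-CMA-QRO experiment with non-negligible probability $\epsilon$. I would build a \QPT algorithm $\B$ that, given a statement $x$ sampled by $\Inst$, outputs a witness $w$ with $(x,w)\in\R$ with non-negligible probability, contradicting quantum-hardness. Correctness of $\FSLambda$ as a DSS is immediate from completeness of \lambdaproto together with the $\SmplRnd$/$\Com$ consistency guaranteed by oblivious commitments, so it suffices to treat unforgeability.

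\emph{Simulating the signing oracle.} First $\B$ sets $\pk:=x$ and runs $\A$. Since $\B$ does not hold the witness, it answers each signing query on a message $m_i$ using honest-verifier zero-knowledge: it runs the HVZK simulator \S of the underlying $\Sigma$-protocol to obtain a transcript $(\com_i,\ch_i,\resp_i)$, uses $\SmplRnd(x,\com_i)$ to obtain randomness $r_i$ explaining $\com_i=\Com(x;r_i)$, reprograms its emulated oracle so that $(\pk,m_i,r_i)\mapsto(\com_i,\ch_i)$, and returns $\sig_i:=(r_i,\resp_i)$. By completeness of \lambdaproto and the statistical closeness of \S's output to a genuine transcript, each $\sig_i$ verifies except with negligible probability; and since every reprogrammed point carries a fresh $r_i$ uniform over $\X^{-1}=\bin^{\poly(\secpar)}$, the total query weight that $\A$'s $q_h$ superposition queries can place on the (polynomially many) reprogrammed points is negligible in expectation, so by Lemma~\ref{lem:queryprob} this reprogramming perturbs $\A$'s final state, hence its forgery distribution, by only a negligible amount. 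Thus $\A$, run against $\B$'s simulated signing oracle and an otherwise-true \QRO, still produces a valid forgery with probability at least $\epsilon-\negl$.

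\emph{Witness extraction via a semi-constant distribution.} On top of the signing-query reprogramming, $\B$ applies the trick of Section~\ref{sec:QROM}: it runs \S once more to obtain a transcript $(\com^\star,\ch^\star,\resp^\star)$ for $x$, picks any $\ch'\neq\ch^\star$ in the super-polynomially large challenge space $\bin^{\poly(\secpar)}$, and replaces the base \RO by the quantum oracle $\scQRO$ in which a $\delta$-fraction of the inputs map to the fixed pair $(\com^\star,\ch')$ (the oracle of Lemma~\ref{lem:output}), for a parameter $\delta$ fixed below. Because the semi-constant choices are made at the outset, independently of $\A$'s view, the argument of Lemma~\ref{lem:output} carries over to the present setting, absorbing the signing-simulation error, and shows that the transcript $(\com,\ch,\resp)$ underlying $\A$'s forgery $\sig=(r,\resp)$ on a message $m$ — with $(\com,\ch):=\scQRO(\pk,m,r)$ — satisfies $\VL^{\scQRO}(x,\com,\ch,\resp)\to1$ and $(\com,\ch)=(\com^\star,\ch')$ with probability at least $\delta\epsilon-\frac{8}{3}q_h^4\delta^2-\negl$; here one uses that the forged $m$ differs from every signed $m_i$, so the hash value at the forgery point is governed purely by the semi-constant distribution. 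In that good event $\B$ holds two valid transcripts $(\com^\star,\ch^\star,\resp^\star)$ and $(\com^\star,\ch',\resp)$ for the same commitment with $\ch^\star\neq\ch'$, so it invokes the special-soundness extractor $w\from\J(x,\com^\star,\ch^\star,\resp^\star,\ch',\resp)$ and outputs $w$. Optimizing over $\delta$ — the choice $\delta=\Theta(\epsilon/q_h^4)$ makes the bound $\Omega(\epsilon^2/q_h^4)$ — this success probability is non-negligible whenever $\epsilon$ is, contradicting the quantum-hardness of \hardL and proving the theorem.

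\emph{Main obstacle.} The crux is the handling of the two layers of QROM reprogramming against a genuinely superposition-querying adversary: one must show simultaneously that (i) programming the $q_s$ signing points is undetectable, which rests on the freshness of the $r_i$ and Lemma~\ref{lem:queryprob}, and (ii) programming the $\delta$-fraction for extraction is detectable with probability only $O(q_h^4\delta^2)$ yet forces the target commitment $\com^\star$ into the forgery with probability $\Omega(\delta)$; this pair of seemingly opposing demands is exactly what Lemma~\ref{lem:output} reconciles, and checking that its proof still goes through once the signing oracle is being simulated is the delicate bookkeeping step. A minor but necessary point is that \S's transcript is only statistically correct, so $\B$ must verify it passes \VL before handing it to $\J$.
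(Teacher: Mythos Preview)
Your proposal is correct and mirrors the paper's argument: HVZK-based signature simulation with pointwise QRO patching (undetectable by Lemma~\ref{lem:queryprob}), a semi-constant $\delta$-fraction injection of $(\com^\star,\ch')$ governed by Lemma~\ref{lem:output}, and special-soundness extraction, optimizing $\delta=\Theta(\epsilon/q_h^4)$ to obtain success probability $\Omega(\epsilon^2/q_h^4)$; the paper merely orders the hybrids the other way round (semi-constant reprogramming first, then the $q_s$ signature patches, then the switch to HVZK), which lets Lemma~\ref{lem:output} apply directly to the original adversary rather than to one already running against a patched oracle. One small slip: in $\FSLambda$ the $r$ in a signature $(r,\resp)$ is the hash \emph{input}, not the commitment randomness (the latter is derived from the hash \emph{output}), so your call to $\SmplRnd(x,\com_i)$ is unnecessary---since you reprogram the oracle at $(\pk,m_i,r_i)\mapsto(\com_i,\ch_i)$ anyway, no relation between $r_i$ and $\com_i$ is needed, and the paper simply draws $r_i\rand\X^{-1}$ uniformly, which also makes the freshness argument for Lemma~\ref{lem:queryprob} immediate.
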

\begin{proof}
We assume towards contradiction the existence of an efficient quantum adversary $\A$ which, on input a public key $\pk$, outputs a valid forgery $(m,\sig)$ under $\pk$ with non-negligible probability $\epsilon$, hence breaking the existential unforgeability of \FSLambda. This adversary has access to a quantum-accessible random oracle $\QRO$ with $\h(\pk,m_i,r_j)= (\com_{i,j},\ch_{i,j})$, and to a signing oracle $\Sign_\sk$ for the secret key $\sk$ (where $(\pk,\sk) := (x,w) \in \R$) producing, on input a message $m$, a (classical) signature $\sig=(r,\resp) \from \Sign^\RO(\sk,m)$.

The adversary $\A$ gets $\pk$ as an input, and is then allowed to perform up to $q_h=\poly(\secpar)$ quantum queries to $\QRO$, and up to $q_s=\poly(\secpar)$ classical queries to $\Sign_\sk$. Then, after running for $\poly(\secpar)$ time, $\A$ produces (with non-negligible probability $\epsilon$) a forgery $(m,\sig)$ such that $m$ has never been asked to the signing oracle $\Sign_\sk$ throughout $\A$'s execution (i.e., $m$ is a fresh message). We assume that $q_h$ also covers a classical query of the verifier to check the signature. 

Under these assumptions we show how to build a strong black-box quantum extractor $\B$, with access to $\A$ as a subroutine, and which is able to break the hardness of \hardL with non-negligible probability. That is, $\B$ on input $x \in \L$ generated according to $\Inst$, is able to output a valid witness $w'$ such that $(x,w') \in \R$ by only interacting classically with \A. The quantum extractor $\B$ works as follows:
\begin{itemize}
\item on input statement $x$, it first runs the simulator \S of the underlying $\Lambda$-protocol to obtain a valid transcript $(\com^\star,\ch^\star,\resp^\star)$. This is possible because of the honest-verifier zero-knowledge property. Note also that this does not require access to the random oracle. As already explained, we assume for simplicity that the oblivious commitment is a random string; else we would need to run $\SmplRnd$ on $(\pk,\com^\star)$ to derive a preimage randomness $r$, and then use $r$ in the hash reply (and argue that this is indistinguishable).
\item Then, $\B$ simulates a quantum-classical oracle $\ket{\O_0} := \scQRO$ which is obtained by reprogramming a (simulated) quantum random oracle $\QRO$ over a fraction $\delta$ of its possible inputs $(\pk,m,r)$ with the value $(\com^\star, \ch')$. Here, $\delta$ is some non-negligible probability in the security parameter (whose optimal value will be computed later), and $\ch'$ is an arbitrarily chosen challenge different from $\ch^\star$. That is, $\O_0 (\pk,m,r) = (\com^\star,\ch')$ with probability $\delta$, and random elsewhere.
\item Next, $\B$ invokes $\A$ on input $\pk = x$. 
\item Whenever $\A$ performs the $i$-th (classical) query to $\Sign_\sk$ for signing a message $m_i$, $\B$ does the following:
	\begin{itemize}
	\item choose a random value $r_i \rand \X^{-1}$;
	\item execute the honest-verifier zero-knowledge simulator $\S$ of the $\Lambda$-protocol, obtaining a valid (simulated) transcript$(\com_i, \ch_i, \resp_i)$;
	\item reprogram $\O_{i-1}$ with value $(\com_i, \ch_i)$ for the input $(\pk,m_i,r_i)$. We denote by $\O_i$ the reprogrammed oracle after the $i$-th query to the signing oracle;
	\item then output $\sig_i := (r_i,\com_i,\ch_i,\resp_i)$ as $\Sign_\sk$'s reply to $\A$.
	\end{itemize}
\item Finally, when $\A$ outputs a (hopefully valid) fresh forgery $(m, \sig)$, where $\sig=(r,\resp)$ and $\O_{q_s}(\pk,m;r)=(\com,\ch)$, the extractor $\B$ aborts if $\com \neq \com^\star$ or $\ch = \ch^\star$. Otherwise, it uses the special soundness extractor $\J$ of the underlying $\Lambda$-protocol on input $(\com^\star,\ch^\star,\resp^\star)$ and $(\com,\ch,\resp)$ to obtain a valid witness $w'$ for $x$, concluding the attack.
\end{itemize}

Note that we can formally let $\B$ implement the dynamic reprogramming of the quantum-classical oracle, 
basically hardwiring all changes due to reprogramming into the code of the underlying classical algorithm. In a second step we can emulate the quantum oracle as explained in Section~\ref{sec:QROM}.

We next show that the success probability of our extraction procedure $\B$ is non-negligible given a successful $\A$. The proof follows the common game-hopping technique where we gradually deprive the adversary of (a negligible amount of) its success probability.

\

$\mathbf{Game_1}:$ this is $\game^{\mathsf{EUF-CMA-QRO}}_{\FSLambda,\A}$ describing $\A$'s original attack against \FSLambda constructed according to Definition~\ref{def:FSLambda}, played against a public key $\pk$. By assumption we have:
$$
\Pr \left[ \A \text{ wins } \game_1 \right] \geq \epsilon 
$$
for some non-negligible value $\epsilon$.

\

$\mathbf{Game_2}:$ this game is identical to $\game_1$, except that we abort if $\A$ outputs a valid fresh forgery $(m,\sig)$ where $\sig$ {\em does not} contain a randomness leading to 
the pre-selected commitment $\com^\star$ and challenge $\ch'$. Furthermore, we replace the random oracle $\RO$ with the 
oracle $\O_0$. Recall that $\O_0$ is obtained by reprogramming $\RO$ on a fraction $\delta$ of its entries with the value $(\com^\star,\ch')$. By Lemma~\ref{lem:output} we have:
$$
\Pr \left[ \A \text{ wins } \game_2 \right] \geq \delta\epsilon - \frac{8}{3} q_h^4 \delta^2.
$$

\

$\mathbf{Game_3}$ is actually a sub-sequence of $q_s$ different experiments denoted by $\game_3^{(i)}$ for $i = 1,\ldots,q_s$.

\

$\mathbf{Game_3^{(1)}}:$ this is as $\game_2$, but this time $\O_0$ is reprogrammed to $\O_1$ (i.e., $\O_1(\pk,m_1,r_1) := (\com_1,\ch_1)$) as soon as $\A$ performs its $1^{st}$ classical query $m_1$ to $\Sign_\sk$. From then on, the oracle $\O_1$ always answers consistently with this value. We need to show that this switching does not change the winning probability significantly. For this we basically need to show that, so far, the amplitudes of this value $(\pk,m_1,r_1)$ in the queries to the quantum oracle are small, or else the adversary may be able to spot some inconsistency.

Let $\X^{-1}$ the randomness space from \SmplRnd as from Definition~\ref{def:FSLambda}, and let $\card{\X^{-1}} = 2^{\r}$ for some function \r polynomial in the security parameter. We define the value $(\pk,m'_i,r'_j)$ to have {\em high amplitude} if there exists at least one of the quantum queries $\ket{\phi_1},\ket{\phi_2},\ldots$ to the quantum oracle $\ket{\O_0}$ {\em before the current ($1^{st}$) signing query}, where the amplitude $a_{i,j}$ associated to the corresponding basis element of $(\pk,m'_i,r'_j)$ is such that $|a_{i,j}|^2 \geq 2^{\frac{-{\r}}{2}}$. Otherwise, the tuple is said to have {\em low amplitude}. Note that each query to the quantum oracle can have at most $2^{\frac{\r}{2}}$ tuples with high amplitude, because the (square of the) amplitudes need to sum up to $1$.

When $\O_0$ is reprogrammed to $\O_1$, the choice of $m_1$ is fixed (i.e., determined by the $1^{st}$ query of $\A$ to $\Sign_\sk$), but $r_1$ is still chosen uniformly at random in $\X^{-1}$. Since $\A$ performs at most $q_h$ queries to $\ket{\O_0}$ before the signing query, we have thus at most $q_h \cdot 2^{\frac{{\r}}{2}}$ tuples with high amplitude before this query. The probability of hitting such a tuple is then given by:
\begin{equation}\label{eqn:highamp}
\Pr \left[ (\pk,m_1,r_1) \text{ has high amplitude} \right] \leq q_h \cdot 2^{\frac{-{\r}}{2}}.
\end{equation}
Moreover, provided $(\pk,m_1,r_1)$ has {\em low} amplitude, and since there are at most $q_h +q_s$ query steps, using Lemma~\ref{lem:distances} and 
Lemma~\ref{lem:queryprob} we obtain:
\begin{equation}\label{eqn:BBB}
\left| \A^{\ket{\O_0}} - \A^{\ket{\O_1}} \right| \leq 4 \sqrt{(q_h + q_h) \cdot 2^{\frac{-{\r}}{2}}}.
\end{equation}
Let us assume, on behalf of the adversary, that $\A$ fails whenever $(\pk,m_1,r_1)$ has high amplitude. Still, from equations~\eqref{eqn:highamp} and~\eqref{eqn:BBB}, we have:
\begin{align*}
\Pr \left[ \A \text{ wins } \game_3^{(1)} \right] & \geq \Pr \left[ \A \text{ wins } \game_2 \right] - 4 \sqrt{(q_h + q_s) \cdot 2^{\frac{-{\r}}{2}}} - q_H \cdot 2^{\frac{-{\r}}{2}} \\
& = \delta \epsilon - \frac{8}{3} q_h^4 \delta^2 - \negl.
\end{align*}
Here, we use the fact that reprogramming the oracle for $(\pk,m_1,r_1)$ does not change the adversary's success probability for a forgery {\em for a fresh message $m$}. That is, since the adversary's forgery is for $m \neq m_1,m_2,\dots$ it cannot simply copy a signature query as a forgery, but must still forge on the original oracle $\O_0$. So the argument about the winning probability applies as it did for $\O_0$.

We now repeat at most $q_s$ times the game hopping, from $\game_3^{(1)}$ to $\game_3^{(q_s)}$, every time repeating the previous game but switching from $\O_{i-1}$ to $\O_i$ during the $i^{th}$ query to $\Sign_\sk$, each time losing at most a negligible factor in the winning probability. Note that the probability of hitting a high amplitude with the signature generation in each hop increases to at most $q_h \cdot 2^{\frac{-{\r}}{2}}+q_s\cdot 2^{-\r}$ when taking into account the at most $q_s$ hash queries in the previous signature requests, but this remains negligible.

\

After $q_s$ steps we reach the following game.

$\mathbf{Game_3^{(q_s)}}:$ as $\game_2$, but now $\O_0$ is dynamically reprogrammed as a sequence $\O_1 , \ldots , \O_{q_s}$ throughout all of the $\A$'s queries to $\Sign_\sk$. We have:
$$
\Pr \left[ \A \text{ wins } \game_3^{(q_s)} \right]  \geq \delta \epsilon - \frac{8}{3} q_h^4 \delta^2 - \negl.
$$

\

$\mathbf{Game_4:}$ as before, but now $\Sign_\sk$ is just simulated through the zero-knowledge simulator $\S$ of the underlying $\Lambda$-protocol. If, by contradiction, $\A$'s winning probability is affected by more than a negligible amount in so doing, then we could use $\A$ to build an efficient distinguisher between `real' and `simulated' transcripts of the $\Lambda$-protocol. This would require a distinguisher with access to a random oracle, in order to simulate the game. According to~\cite[Theorem 6.1]{Zhandry12}, however, we can simulate the oracle via $q$-wise independent functions (which exists without requiring cryptographic assumptions). Furthermore, a hybrid argument can be used to reduce the case of $q_s$ proofs to a single proof. Therefore:
$$
\Pr \left[ \A \text{ wins } \game_4 \right]  \geq \delta \epsilon - \frac{8}{3} q_h^4 \delta^2 - \negl.
$$

\

$\mathbf{Game_5:}$ finally, in this game the special soundness extractor $\J$ is run on the transcript obtained from $\A$'s output from the previous game. Change the winning condition of $\A$ such that the adversary wins if this extraction yields a valid witness $w'$ for $x$. If the winning probability in this game is more than negligibly far from the winning probability of $\A$ in the previous game then this can only be due to the fact that the simulated proof with $(\com^\star,\ch^\star,\resp^\star)$ cannot be accepted by the verifier; else the extractor would be guaranteed to work for this proof and the (accepted) signature. But this would allow an easy distinguisher against the zero-knowledge property, similar to the previous games. Hence:
$$
\Pr \left[ \A \text{ wins } \game_5 \right]  \geq \delta \epsilon - \frac{8}{3} q_h^4 \delta^2 - \negl.
$$
Note that $\A$'s winning condition in the final game corresponds exactly to the probability of $\B$ successfully deriving a witness $w'$ for its input $x$. This winning probability can be maximized (by zeroing the first derivative in $\delta$) by choosing:
$$
\delta := {\frac{3\epsilon}{16 q_h^4}}.
$$
This yields:
$$
\Pr \left[ \A \text{ wins } \game_5 \right] \geq \frac{3 \epsilon^2}{16 q_H^4} - \negl,
$$
which is non-negligible. This concludes the proof of the theorem.
\endproof
\end{proof}

The results from this section regarding the security and impossibility results for the Fiat-Shamir transform of witness-independent commitments in the QROM is summarized in Figure~\ref{fig:FSresults}: a security proof can be found for $\Sigma$-protocols with oblivious commitments (that is, $\Lambda$-protocols), while strong extractors can be ruled out whenever the FS transformation is applied to $\Sigma$-protocols which are actively secure (seen as identification schemes). However, some of these schemes can be `patched' by using commitment trapdoors in order to make them oblivious commitment and remove their active security, yielding signature schemes in a way similar to the hash-and-sign paradigm~\cite{GPV08}. This is for example the situation in the lattice-based signature scheme by Lyubashevsky~\cite{Lyu12}, which can be patched in such a way to be rendered EUF-CMA-QRO secure according to Theorem~\ref{thm:FSLambda}, as explained in~\cite{DFG13}.

\begin{figure}[t]
\begin{center}
\includegraphics[width=0.8\textwidth]{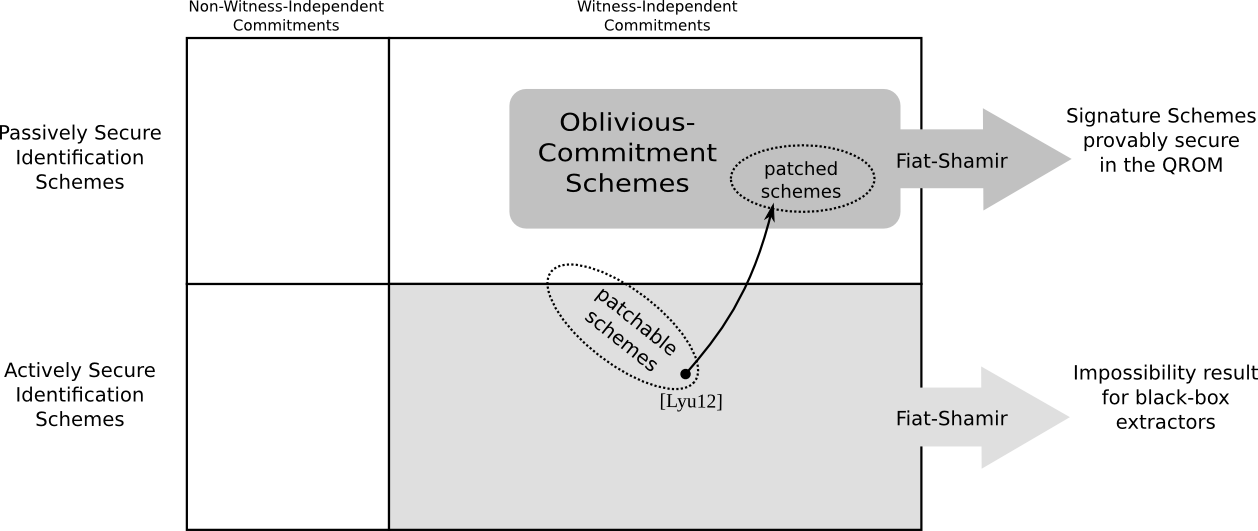}
\end{center}
\caption{Security results for the Fiat-Shamir transformation in the QROM.}
\label{fig:FSresults}
\end{figure}

\section{Post-Quantum ORAMs}\label{sec:pqORAM}

In this section we look at the post-quantum security of ORAMs. First of all, we define a suitable security model. Then we show that the extension of a classically secure ORAM to its post-quantum secure counterpart is not necessarily trivial. To this end, we examine \pathoram and we show that merely substituting the underlying encryption scheme with a post-quantum one does not generally yield a post-quantum ORAM. The idea is to exploit the weakness of other components of the ORAM construction under examination (in this case, the PRNG used). This is not surprising, because it has to be somewhat expected that post-quantum security can only be achieved by hardening {\em all} the underlying components of a cryptographic scheme, not only the encryption. However, it is important to keep this possibility in mind.

Then, we show that building post-quantum secure ORAMs is possible. We do it by showing that \pathoram, instantiated with a post-quantum secure SKES {\em and} a post-quantum PRNG, achieves post-quantum security. This is important from an application perspective, because it shows that efficient and post-quantum secure ORAMs can indeed be obtained in a straightforward way. Moreover, the proof of this fact is a straightforward adaptation from Theorem~\ref{thm:pathoramsec}, and the resulting security reduction is semi-classical, therefore offering very strong security guarantees, as discussed in Section~\ref{sec:qred}.

\subsection{Post-Quantum Security of ORAMs}\label{sec:oramsec}

Since the security model for ORAM only involves a classical communication channel and there is no oracle access involved, we can simply switch to a post-quantum model of security for ORAMs in the usual way: we keep the AP-IND-CQA game as from Experiment~\ref{expt:gameORAM}, but we switch to \QPT adversaries.

\begin{definition}[Quantum ORAM Adversary]\label{dfn:pqoramadv}
A {\em quantum ORAM adversary} \A is a \QPT algorithm which is computationally indistinguishable from an honest server \S for every ORAM client \C. In particular, the ORAM's soundness is preserved.
\end{definition}

\begin{definition}[Post-Quantum Access Pattern Indistinguishability Under Adaptive Chosen Query Attack]\label{def:pq-AP-IND-CQA}
An ORAM construction $\oram$ has {\em post-quantum computationally indistinguishable access patterns under adaptive chosen query attack} (or, it is pq-AP-IND-CQA-secure) iff for any quantum ORAM adversary \A it holds that $\advORAM \leq \negl$.
\end{definition}

Clearly, if an ORAM is pq-AP-IND-CQA-secure, then it is also AP-IND-CQA-secure. The converse does not hold (under standard hardness assumptions) as we will see.

\subsection{The Impossibility Result}

In order to show that one cannot in general obtain post-quantum ORAMs by just using a post-quantum SKES in a black-box way, we provide the following counterexample.

\begin{theorem}\label{thm:pqORAMsep}
Let $\E=(\KGen,\Enc,\Dec)$ be a pq-IND-CPA SKES according to Definition~\ref{def:pqINDCPA}, and let $\PRNGBM$ be the Blum-Micali PRNG from Lemma~\ref{lem:blummicali}. Let $\pathoramBM$ be the ORAM obtained by instantiating the \pathoram construction from Definition~\ref{dfn:pathoram} using $\E$ and $\PRNGBM$. Then, under the DLP hardness assumption, $\pathoramBM$ is an AP-IND-CQA secure ORAM, but not pq-AP-IND-CQA secure.
\end{theorem}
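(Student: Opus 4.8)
The plan is to prove the two halves of the statement separately, with the second (negative) half being where all the work is. For the positive direction, the claim that $\pathoramBM$ is AP-IND-CQA secure is essentially immediate: it is an instance of $\pathoram$ built from an IND-CPA SKES (since pq-IND-CPA implies IND-CPA by Theorem~\ref{thm:pqINDCPAtoINDCPA}) and a PRNG (since $\PRNGBM$ is a classically secure PRNG under the DLP assumption, by Lemma~\ref{lem:blummicali}). Hence Theorem~\ref{thm:pathoramsec} applies verbatim and gives AP-IND-CQA security. I would state this in one or two sentences.

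For the negative direction, the idea is to exhibit an explicit \QPT\ adversary \A\ against $\gamePORAM[\pathoramBM]$ with non-negligible advantage. The key observation is that in the $\pathoram$ construction the leaf positions stored in the position map — and revealed in each access pattern as the value $r_i$ sent to the server — are generated by truncating outputs of the PRNG $\PRNGBM$. By Lemma~\ref{lem:blummicali}, a quantum distinguisher $\D$, given a sequence of consecutive $\PRNGBM$ outputs on any seed, can recover the seed $s_0$ (hence predict all future outputs) with non-negligible probability $\delta$. So the adversary's strategy is: during the first CQA learning phase, issue a polynomial sequence of data requests $\dr_1,\dr_2,\dots$ and collect from the successive access patterns the leaf values $r_{i}$ that \C\ sends to \S. (Care is needed here: each $r_i$ is a \emph{truncation} of $\PRNGBM$'s output, and fresh remapped leaves are generated by truncating $\ksize-\tsize$ bits; but for $\dbsize=\maxsize$ we have $\tsize=\ksize$ and no truncation occurs, so by having \A\ choose $\dbsize=\maxsize$ the adversary recovers the raw $\PRNGBM$ stream. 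This is the place where the adversary exploits its freedom to pick ORAM parameters.) Running the quantum seed-recovery attack on this stream, \A\ recovers the generator's internal state, and can therefore predict, for every block identifier $i$, the next leaf $r'_i$ to which that block will be remapped on its next access.

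With this predictive power the challenge phase is easy to win. At the challenge, \A\ outputs two data requests $\dr^0=(\mathrm{read},i_0,\bot)$ and $\dr^1=(\mathrm{read},i_1,\bot)$ touching two distinct blocks $i_0\neq i_1$ whose current leaf positions are known to be distinct (the adversary has been tracking the position map via its recovered PRNG state). The challenge access pattern $\ap_{q_1+1}$ contains the leaf value $r$ that \C\ sent to the server; since \A\ has predicted both $r_{i_0}$ and $r_{i_1}$, it simply checks which one matches the observed $r$ and outputs the corresponding bit $b'$. Whenever the seed-recovery succeeded (probability $\geq\delta$, non-negligible) this guess is correct with probability $1$; otherwise \A\ guesses at random. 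This gives advantage at least $\delta/2$, non-negligible, contradicting pq-AP-IND-CQA security. I would also remark that this attack works regardless of the encryption scheme \E\ used, precisely because it targets the PRNG component and the leaf-label leakage, not the ciphertexts — which is the conceptual point the theorem is making.

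The main obstacle I anticipate is a careful bookkeeping issue rather than a deep one: making sure the adversary can actually read off a clean, unmodified stream of $\PRNGBM$ outputs from the access patterns. In the full $\pathoram$ construction the PRNG is also used \emph{inside} $\init$ to populate the initial position map, and subsequently only the truncated ``fresh remapped leaf'' values appear in transcripts; one must argue that these still form (a deterministic function of) a contiguous segment of the PRNG stream, so that the recovery attack of~\cite{GuedesAL13} applies, and that the truncation (when $\dbsize<\maxsize$) does not destroy enough information — which is why restricting to $\dbsize=\maxsize$ is the clean route. A secondary subtlety is handling the ``non-meaningful challenge query'' phenomenon noted in the proof of Theorem~\ref{thm:pathoramsec}: here the adversary deliberately picks a \emph{meaningful} challenge (distinct blocks at distinct leaves), so the two requests induce genuinely different leaf values in the transcript, and distinguishing reduces to the purely combinatorial matching step above. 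Once these points are pinned down the reduction is routine.
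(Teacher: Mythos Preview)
Your proposal is correct and follows the paper's strategy: the positive half is immediate from Theorem~\ref{thm:pathoramsec}, and the negative half builds a \QPT adversary that harvests leaf values from the access patterns, runs the quantum attack on $\PRNGBM$, and distinguishes the challenge by matching the observed leaf against a prediction. The one substantive difference is how failure of the quantum PRNG attack is handled. You invoke full state recovery (the stronger claim inside the proof of Lemma~\ref{lem:blummicali}), which can be self-verified against the observed output stream, so that when recovery fails the adversary knows it and falls back to a random guess; you should make this verification step explicit, since without it the clause ``otherwise $\A$ guesses at random'' is not justified. The paper instead uses $\D$ only as a black-box next-value predictor, as in the literal statement of Lemma~\ref{lem:blummicali}, and then worries that an unsuccessful $\D$ might return a \emph{maliciously} wrong $r^*$ (for instance, the other challenge block's current leaf); to rule this out it inserts a post-challenge ``sanity check'' in the second CQA phase---one additional access to block $i$ or $j$ to test whether $r^*$ was genuine---before committing to $b'$. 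Your route is a bit cleaner, and your remark that choosing $\dbsize=\maxsize$ eliminates the truncation of PRNG outputs is a detail the paper leaves implicit; the paper's route stays closer to the lemma as stated and, incidentally, demonstrates why the second CQA learning phase is useful in the model.
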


At the light of Theorem~\ref{thm:pathoramsec} and Definition~\ref{def:pq-AP-IND-CQA}, in order to prove Theorem~\ref{thm:pqORAMsep} we only need to show the following lemma.

\begin{lemma}\label{lem:pqORAMsep}
There exists a \QPT algorithm $\A$ winning 
\mbox{$\game_{\A,\pathoramBM}^{\texttt{AP-IND-CQA}}$} with 
non-negligible advantage over guessing.
\end{lemma}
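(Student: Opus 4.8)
The plan is to build a \QPT adversary \A against $\gamePORAM[\A][\pathoramBM]$ that exploits the quantum predictability of $\PRNGBM$ established in Lemma~\ref{lem:blummicali}. The key observation is that in the security proof of Theorem~\ref{thm:pathoramsec}, the step that collapses the two challenge data requests $\dr^0,\dr^1$ to statistically indistinguishable access patterns relies crucially on the position map values being \emph{truly random}: if the remapped leaf $r'_i$ is predictable, then an adversary can tell which block was touched, hence which data request was executed. So \A will proceed as follows: during the first CQA learning phase, \A (acting as the server) observes the communication transcripts $\com_1,\com_2,\ldots$, each of which contains the leaf index $r_i$ sent by the client. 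Since these leaf indices are produced by truncating $\PRNGBM$'s output on the client's secret seed, \A collects a sequence of consecutive pseudorandom values output by $\PRNGBM$ (reconstructing, if necessary, the untruncated outputs — or simply working with the truncated stream, to which the attack of~\cite{GuedesAL13} still applies with non-negligible success, since recovering $s_0$ determines the entire stream including its truncations).

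Next, \A runs the quantum attack of Lemma~\ref{lem:blummicali} on the observed sequence to recover the initial state $s_0$ (equivalently the seed) of $\PRNGBM$ with non-negligible probability $\delta$. Once \A knows $s_0$, it can compute \emph{all} future outputs of the client's PRNG, and in particular it knows exactly which leaf each future block access will be remapped to. At the challenge step, \A outputs two data requests $\dr^0 = (\text{read}, i_0, \bot)$ and $\dr^1 = (\text{read}, i_1, \bot)$ touching two distinct block identifiers $i_0 \neq i_1$ that currently map to two \emph{different} leaves $\ell_0 \neq \ell_1$ in the (predicted) position map — \A can arrange this because it has been tracking the position map evolution throughout the game, knowing the full randomness stream. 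When the challenge access pattern $\ap_{q_1+1}$ is revealed, \A reads off the leaf $r$ contained in its transcript: if $r = \ell_0$ then \A guesses $b'=0$, if $r = \ell_1$ then \A guesses $b'=1$ (and guesses randomly otherwise, which essentially never happens). Since the downloaded branch in \pathoram is precisely the path to $\leaf_{r_i}$ for the accessed block $i$, the leaf in the challenge transcript deterministically reveals which of the two requests was executed, so \A wins with certainty whenever the PRNG state was successfully recovered, and with probability $\tfrac12$ otherwise, yielding overall advantage at least $\tfrac{\delta}{2}$, which is non-negligible. On the other hand, the classical AP-IND-CQA security of $\pathoramBM$ follows directly from Theorem~\ref{thm:pathoramsec}, since $\PRNGBM$ is a classically secure PRNG under the DLP assumption; combined with the above this proves Theorem~\ref{thm:pqORAMsep}.

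The main obstacle I expect is technical bookkeeping around \emph{when} \A has enough information to mount the attack and how the truncation of $\PRNGBM$'s output to $\tsize$-bit leaf values interacts with the state-recovery algorithm of~\cite{GuedesAL13}. Specifically, one must argue that the truncated leaf-index stream still leaks enough to recover $s_0$: this is fine because the attack of Lemma~\ref{lem:blummicali} works given $\secpar$ consecutive outputs, and even truncated outputs constrain the internal state sufficiently when enough of them are collected — \A is free to run a polynomially long first learning phase and harvest as many transcripts as needed. A second subtlety is ensuring \A can \emph{choose} challenge requests whose blocks sit at distinct leaves: since \A knows the entire randomness stream from the start (after recovering $s_0$ mid-game, \A can re-derive the leaves assigned at $\init$ and every remapping since), it can simply pick any two block identifiers that the predicted position map places on different leaves — generically almost any pair works, since with $2^\tsize$ leaves a collision between two fixed blocks is the exception. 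One should also handle the degenerate case where the state recovery must happen before the challenge query is issued; this is automatic since \A controls the pacing of the learning phase and can defer the challenge until recovery succeeds. None of these points is deep, but they require care to present cleanly; the conceptual core — predictable leaf indices break access-pattern indistinguishability — is immediate.
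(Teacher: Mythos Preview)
Your proposal is correct and shares the paper's core insight: the leaf index $r_i$ is visible in every transcript, so a quantum attack on $\PRNGBM$ lets $\A$ determine which block the challenge touched. The paper's execution differs slightly. It uses only the next-value \emph{prediction} guaranteed by the statement of Lemma~\ref{lem:blummicali}, not full seed recovery: $\A$ fixes one block $i$, accesses it $\k$ times so that the observed remapped leaves are consecutive PRNG outputs, feeds them to the predictor $\D$ to get a candidate $r^*$ for the current leaf $r_i^{(\k)}$, and challenges $(\text{write},i,\data)$ versus $(\text{write},j,\data)$. Since a failed predictor might adversarially output $r_j^{(0)}$, the paper adds a post-challenge \emph{sanity check} (one extra query in the second CQA phase) to detect this before committing to a guess. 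Your seed-recovery route is cleaner --- a recovered $s_0$ can be verified against the observed stream and then determines the whole position map, so no sanity check is needed --- but it relies on the stronger property mentioned only in the lemma's proof. Two small fixes to your write-up: be explicit about the access pattern (arbitrary queries do not yield a consecutive PRNG stream; repeatedly hitting the same block does), and sidestep the truncation issue entirely by having $\A$ choose $\dbsize = \maxsize$ so that $\tsize = \ksize$ and no bits are dropped.
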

\begin{proof}
We start by making a key observation concerning the access patterns produced in \pathoram. Let $\dr=(\op,i,\data)$ be a data request sent by \C. By only examining the communication transcript $\com$ resulting from the execution of this data request, one can see which path (branch of the tree) \S sent to \C, thus learning the leaf $r_i$ to which $i$ was mapped to, even without knowing $i$ itself. In normal circumstances, this is of no use to an adversary, because this value $r_i$ becomes immediately obsolete, being replaced by a new fresh value output by the PRNG in the position map. But it will be important in our attack as we will see.

Let $\D$ be the \BQP algorithm (the `PRNG predictor') of Lemma~\ref{lem:blummicali}. We build the adversary $\A$ with oracle access to $\D$. First of all $\A$ chooses $\secpar, \dbsize \leq \maxsize$ and starts the AP-IND-CQA game by calling $\init(\secpar,\dbsize)$. For his attack, \A fixes an arbitrary identifier 
$i \in \set{1,\ldots,\dbsize}$, and an arbitrary data unit $\data \in \bin^\dsize$.

During the first CQA learning phase, \A asks \C to execute $\k = \poly(\secpar)$ consecutive data requests of the form $(\text{`write'},i,\data)$. \A records the resulting access patterns from all these queries, $\ap_1, \ldots, \ap_{\k}$, which include the communication transcripts $\com_1,\ldots,\com_{\k}$ and then, by the observation made before, a `history' $(r_i^{(0)},\ldots,r_i^{(\k-1)})$ of the past mappings of block $i$ at the beginning of the execution of every data request from $1$ to $\k$. These mappings, in turn, are $\k$ outputs of $\PRNGBM$, and they are given as input to the algorithm $\D$, which then outputs a candidate prediction $r^*$ for the current secret leaf value $r_i^{(\k)}$.

Then \A executes his challenge query by using data requests $(\dr^0,\dr^1)$ with $\dr^0  = (\text{`write'},i,\data)$, and $\dr^1  = (\text{`write'},j,\data)$ for $j\neq i$, and records the resulting access pattern $\ap_{\k+1} = \ap(\dr^b)$ (where $b$ is the secret bit to be guessed). At this point, the adversary looks at this last communication transcript $\com_{\k+1}$ and, by the observation made at the beginning of the proof, checks the leaf index $r$ related to the tree branch exchanged during the execution of the challenge query. If $r = r^*$, then \A sets $b' = 0$ (where $b'$ is \A's current `guess' at $b$), otherwise \A sets $b' = 1$.

However, before outputting his guess $b'$ in order to win the AP-IND-CQA game, \A has to perform an additional check (during the second CQA challenge phase) in order to verify whether \D had correctly guessed the right value $r_i^{(\k)}$ or not. The problem here is that, if \D is unsuccessful (which happens with probability as high as $1-\delta$), we cannot say anything about the predicted value $r^*$. In fact, in that case \D could potentially act maliciously against \A, and output a value $r^*$ which maximizes the probability of $b'$ being wrong in the above strategy: for example, $r^* = r_j^{(0)}$. For this reason \A performs the following `sanity check' after the challenge query:
\begin{itemize}
\item if $b' = 1$, then \A demands the execution of an additional query of the form $(\text{`write'},i,\data)$, and verifies that the resulting path leads to leaf $r^*$. This guarantees that $r^*$ was actually correct, and it was not observed during the challenge query just because $\dr^1$ was chosen, as guessed.
\item Otherwise, if $b'=0$, then \A demands the execution of an additional query of the form $(\text{`write'},j,\data)$, and verifies that the resulting tree branch {\em does not} lead to leaf $r^*$. This guarantees with high probability that \D did not maliciously output the secret leaf state for element $j$ instead of $i$.
\end{itemize}
It is easy to see that in the case of misbehavior of \D, both of the above tests fail with high probability. In fact, in the case $b'=1$, the current mapping of element $i$ leads to leaf $r_i^{(\k)}$, which was {\em not} correctly predicted by \D by assumption. In the latter case instead, recall that \A had guessed $b'=0$ because during the execution of the challenge query he observed the leaf $r^*$; this could only lead to a fail in the case that $r_j^{(0)} = r_i^{(\k)}$, which only happens with negligible probability at most $\epsilon$, or if $r_j^{(0)} = r^*$, which is detected by the sanity check.

Finally, if the above sanity check is passed, \A outputs $b'$, otherwise he outputs a random bit.

Notice that (provided \D was successful) this strategy is always correct, {\em except} in the case that: $\dr_1$ was chosen (probability $\half$) {\em and} the initial mapping of $\block_j$ (which is $r_j^{(0)}$), coincides with $r_i^{(\k)}$. As already mentioned, the latter event can only happen at most with probability $\epsilon$ negligible in the bit size of $\PRNGBM$'s output, and hence in the security parameter $\secpar$ (it is easy to see that this is a minimum requirement for any classically secure PRNG, as $\PRNGBM$ is). Thus:
\begin{equation}\label{eqnwin}
\Pr \left[ \game_{\A,\pathoramBM}^{\texttt{AP-IND-CQA}} \to 1
\middle| \D \text{ succeeds} \right] \geq 
 1 - \frac{\epsilon}{2}.
\end{equation}
On the other hand, if $\D$ fails (which happens with probability $(1-\delta)$ at most) and predicts a wrong value $r^* \neq r_i^{(\k)}$, the above strategy still succeeds with probability at least $\half - \frac{\epsilon}{2}$ (again, because of the remote possibility that $r_j^{(0)} = r_i^{(\k)}$). Hence:
\begin{equation}\label{eqnfail}
\Pr \left[ \game_{\A,\pathoramBM}^{\texttt{AP-IND-CQA}} \to 0 
\middle| \D \text{ fails} \right] \leq 
\frac{1}{2} \left( 1 + \epsilon \right).
\end{equation}
Thus, combining~\ref{eqnwin} and~\ref{eqnfail}, the adversary's overall success probability is:
$$
\Pr \left[ \game_{\A,\pathoramBM}^{\texttt{AP-IND-CQA}} \to 1 \right] 
$$$$
= 
\Pr \left[ \hbox{$\A$ wins} \right] \cdot \Pr \left[ \hbox{$\D$ succeeds} \right] + 
\left( 1 - \Pr \left[ \hbox{$\A$ loses} \right] \cdot \Pr \left[ \hbox{$\D$ fails} 
\right]\right)
$$$$
\geq \delta \left( 1 - \frac{\epsilon}{2} \right) + \left( 1 - \left( 1 - \delta \right) 
\frac{1}{2} \left( 1 + \epsilon \right) \right) \geq 
\frac{1}{2} + \frac{1}{2} \delta - \frac{1}{2} \epsilon,
$$
which concludes the proof, because $\epsilon$ is negligible, while $\delta$ is not.
\endproof
\end{proof}

\subsection{Construction of a Post-Quantum ORAM}

A careful examination of \pathoram's construction details reveals that an important role in the security is played by the pseudorandom number generator used to map a block to a leaf during every access. As we have just shown, a PRNG which is not post-quantum secure is enough to break \pathoram's security in a quantum setting. It is natural then to wonder whether the attack on \pathoram can be avoided by using a post-quantum PRNG, {\em in addition} to a post-quantum secure encryption scheme, when instantiating \pathoram. Here, we give a positive answer to such question.

\begin{theorem}\label{thm:pqsecpathoram}
Let $\E$ be a pq-IND-CPA SKE according to Definition~\ref{def:pqINDCPA}, and let $\PRNG$ be a pq-PRNG as from Definition~\ref{dfn:pqPRNG}. Then, \pathoram instantiated using $\E$ and $\PRNG$ is a pq-AP-IND-CPA secure ORAM.
\end{theorem}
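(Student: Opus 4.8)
The plan is to adapt the proof of Theorem~\ref{thm:pathoramsec} essentially verbatim, checking that each classical argument step survives the switch to a \QPT adversary, exactly as in the spirit of the \QS1 principle. Recall that in Theorem~\ref{thm:pathoramsec} we assumed a classical adversary $\A$ winning $\gamePORAM$ with non-negligible advantage $\ell$, and built a \PPT distinguisher $\D$ which simulates a \pathoram client (storing a plaintext mirror of the server database), plays the AP-IND-CQA game against $\A$ while using an external $\Enc_k$-oracle from the IND-CPA game, and finally injects the IND-challenge ciphertext $c$ during a guessed challenge data request $\dr^{b^*}$. First I would replace $\A$ by a \QPT adversary and $\D$ by a \QPT distinguisher, and replay the same construction: the mirror tree, the simulation of $\Dec_k$ by table lookup, the re-randomization via the external $\Enc_k$-oracle, the meaningful/non-meaningful challenge dichotomy $m^0\neq m^1$, and the final guessing step with bit $b^*$ all go through unchanged at the level of information flow.

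The two points that need explicit justification are the following. First, the PRNG replacement: in the classical proof we said ``w.l.o.g.\ replace $\G$ with a real source of randomness'' because $\G$'s outputs are indistinguishable from random; here we must invoke Definition~\ref{dfn:pqPRNG} (post-quantum PRNG) rather than the classical one, since the distinguisher at play is now a \QPT algorithm $\D$ combined with $\A$, which together form a \QPT distinguisher against $\G$. Because $\pathoram$ uses $\G$ only to produce the position-map leaf values (classical outputs consumed classically), no oracle-access subtlety arises, and pq-security of $\G$ suffices to justify the swap. Second, the encryption reduction: the final inequalities~\eqref{qoramwin} and~\eqref{qoramfail} only use that, conditioned on a correct guess $b=b^*$, the access pattern fed to $\A$ is perfectly distributed, so $\A$'s non-negligible advantage carries over; conditioned on $b\neq b^*$, $\D$ still wins with probability $\half$ by outputting a fresh random bit. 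None of this references $\A$'s internal structure, so it applies verbatim to a quantum $\A$, yielding $\Pr[\gamepqINDCPA[\D]=1]\geq\half+\frac{\ell}{2}$, contradicting the pq-IND-CPA security of $\E$ (Definition~\ref{def:pqINDCPA}).

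Concretely, the proof would read: ``By assumption $\G$ is a pq-PRNG, so for any \QPT adversary (in particular the one induced by $\D$ together with $\A$ below) its output stream is indistinguishable from uniform; we therefore analyze $\pathoram$ with $\G$ replaced by a true random source, incurring only a negligible loss. Suppose a \QPT adversary $\A$ and non-negligible $\ell$ satisfy $\Pr[\gamePQORAM=1]=\half+\ell$. Construct the \QPT distinguisher $\D$ exactly as in the proof of Theorem~\ref{thm:pathoramsec}, simulating a \pathoram client with a plaintext mirror database, answering $\A$'s data-request executions using the external $\Enc_k$-oracle of $\gamepqINDCPA[\D]$, and at the challenge step forming $m^0,m^1$ as there (noting again the challenge must be meaningful, $m^0\neq m^1$, else the two resulting access-pattern distributions coincide, which is a statement about distributions independent of $\A$'s quantumness). $\D$ then embeds the challenge ciphertext $c$ into a guessed request $\dr^{b^*}$ and outputs $b^*$ iff $\A$'s final bit equals $b^*$. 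The same conditional computations give $\Pr[\gamepqINDCPA[\D]=1\mid b=b^*]\geq\half+\ell$ and $\Pr[\gamepqINDCPA[\D]=1\mid b\neq b^*]\geq\half$, hence $\advpqINDCPA\geq\frac{\ell}{2}$, contradicting pq-IND-CPA security of $\E$. Since the simulation $\D$ performs is classical-to-classical apart from running $\A$ as a black-box subroutine, the resulting reduction is in fact semi-classical.''

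The main obstacle — really the only non-routine point — is making the PRNG-to-random swap rigorous in the quantum setting: one must be careful that the combined object ``$\D$ plus $\A$'' is a legitimate \QPT distinguisher against $\G$ that only accesses $\G$'s outputs classically (which it does, since the position map holds classical leaf indices), so that Definition~\ref{dfn:pqPRNG} applies and the swap costs only $\negl$. Everything else is a transcription of the classical reduction, and I would explicitly remark that the reduction is semi-classical in the sense of Section~\ref{sec:qred}, matching the claim made in the introduction to Section~\ref{sec:pqORAM} that this gives the strongest form of security guarantee.
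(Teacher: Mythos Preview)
Your proposal is correct and follows essentially the same approach as the paper: both argue that the proof of Theorem~\ref{thm:pathoramsec} carries over verbatim once the PRNG swap is justified via pq-PRNG security (Definition~\ref{dfn:pqPRNG}) and the encryption reduction is upgraded to pq-IND-CPA (Definition~\ref{def:pqINDCPA}), with the key observation that no quantum oracle access is involved. Your write-up is in fact more detailed than the paper's, which gives only a brief sketch, and your explicit remark that the reduction is semi-classical matches what the paper claims in the introduction to Section~\ref{sec:pqORAM}.
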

\begin{proof}
The proof follows step-by-step the proof of Theorem~\ref{thm:pathoramsec}. In fact this time, since $\PRNG$ is a pq-PRNG by assumption, the new output values used to update the position map in \pathoram are indistinguishable from random (and therefore, in particular, unpredictable) even for \QPT adversaries. As $\PRNG$ has an internal state which is completely unrelated to $\E$'s internal randomness, and because there is no quantum oracle access involved, the security arguments at every step in the proof of Theorem~\ref{thm:pathoramsec} remain unchanged. Therefore, any \QPT adversary who can distinguish the execution of two data request sequences with probability non-negligibly better than guessing, can be turned into a successful adversary against the pq-IND-CPA security of $\E$, or against the pqPRNG, against the security assumptions.
\endproof
\end{proof}

\chapter{QS2: Quantum (Superposition-Based) Security}\label{chap:QS2}

In this chapter we conclude our study of quantum security notions for classical cryptographic objects by presenting the quantum security class \QS2. In this domain, the schemes are classical and the adversaries are quantum, as in \QS1. However, unlike in \QS1, the adversaries are {\em always} given quantum access to classical oracles, not only when the `realistic' model requires it. So, for example, encryption schemes in \QS2 must provide security against adversaries with quantum access to the encryption oracle, even in the secret-key case, and digital signature schemes must be unforgeable toward adversaries with quantum access to the signing oracle, even if such schemes are still classical. What we call here the {\em \QS2 principle} states: {\em ``Whenever an adversary has access to a classical oracle, then such oracle should be accessible by the adversary in a quantum way.''}

As we will see, the resulting security notions can be strictly stronger than `post-quantum' notions as defined in the previous chapter. Constructions which are secure in \QS2 retain in particular their security in \QS1, but the converse does not always hold. \QS2 is, in a sense, quantum security {\em beyond} post-quantum security. When a cryptographic construction is secure in the \QS2 sense, we will just call it {\em quantum-secure}.

In the following sections first we discuss the motivations for considering this scenario, and then we introduce security models and definitions for quantum-secure cryptographic building blocks and secret-key encryption schemes.

\subsection{My Scientific Contribution in this Chapter}

Quantum-secure PRPs (Definitions~\ref{def:qWPRP} and~\ref{def:qSPRP}), and all the results in Section~\ref{sec:QS2enc} first appeared in~\cite{GHS16}, which is a joint work with Andreas Hülsing and Christian Schaffner. Theorem~\ref{thm:pqINDCPAnotoINDqCPA} is considered folklore but, to the best of my knowledge, the first formal proof appears in this thesis.

\section{Why Superposition Access?}\label{sec:whysuperposition}

The obvious question one might ask is: {\em ``why considering quantum access to classical primitives, in the case where the adversary does not implement the primitive's code himself? Doesn't this clash with the \QS1 principle?''} Actually, it does not: the \QS1 principle only states that whenever a quantum adversary can implement some code locally, this should be modeled as a quantum access, {\em but it does not say anything about the converse}. In fact, classical access to an oracle can be seen just as a special case of quantum access, where the adversary is limited to queries in the form of basis states. So, the first `trivial' reason why one should consider quantum access is the following.

\paragraph{Reason \#1: it is a more general model.} Nothing is lost, in terms~of security, by considering adversaries able to execute superposition queries. The resulting security notions will be at least {\em as strong} as the corresponding post-quantum security notions, and sometimes strictly so, as we will see. Of course this does not make post-quantum notions obsolete: for example it might be impossible (or much harder, or worse in performance) to achieve certain \QS2 notions in contrast with the analogous \QS1 notions. It will be the model and the circumstances to dictate whether post-quantum security is enough, or something more should be requested. But for sure, all other factors being equal, one does not lose anything by requesting security in the more challenging scenario considered in \QS2.

\

There are, of course, less `trivial' reasons. We have already met one in Section~\ref{sec:QROMemu} about the emulation of a quantum random oracle: since the QROM describes an object with quantum superposition access by definition, emulating it using post-quantum PRFs would not be enough, because post-quantum PRFs are only accessed classically, and their security model says nothing about what happens when the access is quantum. For this reason, if we want to emulate a quantum oracle with PRFs, we need a security model which covers the quantum superposition access, {\em even} if we are using the quantum random oracle `only' in a post-quantum security proof.

Another example is the case of post-quantum obfuscation, in particular {\em indistinguishability obfuscation (iO)}. This is a relatively recent branch of cryptographic techniques which, roughly speaking, achieves certain functionalities by `obfuscating' the code of some algorithm in a secure way. One typical example (which has also received interest~\cite{obfAES} from an application perspective) is how to build PKES from SKES. The idea is to hardcode the secret key of the SKES in the code of the encryption routine, and then obfuscate the code and distribute it as a public key. In the standard model, it is known~\cite{IR88} that it is impossible to achieve key-exchange and public-key encryption in a black-box way just from one-way functions. However, Corollary~\ref{cor:pqOWFiffpqPRF} and Theorem~\ref{thm:pqGoldreich} tell us that, using iO, it might be possible to build post-quantum PKES from pqOWF. Regardless whether iO is a reasonable assumption or not, it is clear that for this to work, the post-quantum security of the underlying SKES would {\em not} be enough because, as discussed in Section~\ref{sec:pqPKES}, post-quantum PKES can be queried in superposition. Therefore, for this application we also need a superposition-based security notion for SKES.

Summing up, we can say the following.

\paragraph{Reason \#2: it is useful for post-quantum security proofs.} If a security reduction for an object in \QS0 fails when `translating' it to \QS1, one of the reasons (in addition to the ones described in Section~\ref{sec:prooffailures}) might be that the security of some of the underlying building blocks should be `lifted' to \QS2, not just \QS1.

\

A less obvious reason regards the physical interaction between the adversary and the device where the cryptographic code is running. An adversary able to `trick' a classical computation device into quantum behavior might exploit such behavior to gain superposition access to the function computed by the device. In order to fix the ideas on what this actually means we give a motivating example. In this mind experiment, we consider a not-so-distant future where the target of an attack is a tiny encryption chip, e.g., integrated into an RFID tag or smart-card. It is reasonable to assume that it will include elements of technology currently researched but undeployed (i.e., extreme miniaturization, optical electronics, etc.) Regardless, the chip we consider is a purely classical device, performing classical encryption (e.g., AES) on classical inputs, and outputting classical outputs. Consider an adversary equipped with some future technology which subjects the device to a fault-injection environment, by varying the physical parameters (temperature, power, speed, etc.) under which the device usually operates. As a figurative example, our `quantum hacker' could place the chip into an isolation pod, which keeps the device at a very low temperature and shields it from any external electromagnetic or thermal interference. This situation would be analogous to what happens when security researchers perform side channel analysis on cryptographic hardware in nowaday’s labs, using techniques such as thermal or electromagnetic manipulation which were previously considered futuristic. There is no guarantee that, under these conditions, the chip does not start to show full or partial quantum behaviour. At this point, the adversary could query the device on a superposition of plaintexts by using, e.g., a laser and an array of beam splitters when feeding signals into the chip via optic fiber. It is unclear today what a future attacker might be able to achieve using such an attack. As traditionally done in cryptography, we assume the worst-case scenario where the attacker can actually query the target device in superposition. Classical and post-quantum security notions such as IND-CPA do not cover this scenario. This setting is an example of what we mean by
`tricking classical parties into quantum behaviour'.

Another example of a sort of `quantum fault attack' occurs in a situation where one party using a quantum computer encrypts messages for another party that uses a classical computer, and the adversary is able to observe the outcome of the quantum computation before measurement.

\paragraph{Reason \#3: it covers quantum fault attack scenarios.} Also notice that the threat deriving from these kind of attacks is potentially high considering that, unlike for the post-quantum scenario, they do not necessarily require the adversary to build a fully-fledged quantum computer.

\

Finally, it is important to consider superposition-based quantum security in all those cases where a classical cryptographic object is used as a building block for more complex quantum protocols (meant to run natively on quantum computing devices). Post-quantum guarantees alone are usually not enough to ensure secure composition in these scenarios.

\paragraph{Reason \#4: it might be necessary for securely composing fully quantum constructions.} For instance, we will see an example in the next chapter where schemes for securely encrypting quantum data can be built by adapting classical encryption schemes, but only if such schemes are (superposition-based) quantum-secure.

\section{Quantum-Secure Building Blocks}\label{QS2:bb}

We look first at the basic (superposition-based) quantum-secure building blocks. As already discussed in Section~\ref{sec:pqbuildingblocks}, there is nothing to say about quantum-secure OWF, OWTP, and PRNG. In the first two cases, the superposition access is already implied by the post-quantum definition, so that the post-quantum and the superposition-based quantum security notions coincide. We will use the two terms interchangeabily, as the meaning is the same. In the latter case instead, a superposition-based security notion for PRNG makes no sense, because PRNG security, by definition, is based on a stream of classical data, and there is no oracle access involved. As we mentioned already, the situation is instead quite different in the case of PRF and PRP.

\subsection{Quantum-Secure PRF}\label{sec:qPRF}

In the case of pseudorandom functions, an adversary might be able to distinguish the PRF \PRF from a random function by gaining quantum access to the oracle for \PRF, which we denote by $\ket{\qPRF}$. Since a PRF is a keyed family of functions, we write sometimes $\ket{\qPRF_k}$ to denote the quantum-classical oracle for \PRF keyed by $k$.

\begin{definition}[Quantum-Secure Pseudorandom Function (qPRF)]\label{def:qPRF}
A (family of) {\em quantum-secure pseudorandom functions (qPRF)} from \X to \Y with key space \K is a \DPT algorithm $\qPRF: (k \in \K, x \in \X) \mapsto y \in \Y$ such that for any \QPT algorithm \D it holds:
$$
\left| \Pr_{k \rand \K} \left[ \D^{\ket{\qPRF_k}} \to 1 \right] - \Pr_{\h \rand \Y^\X} \left[ \D^\QRO \to 1 \right] \right| \leq \negl,
$$
where \QRO is a quantum-classical oracle for \h (i.e., a quantum random oracle), and the probabilities are over the choice of $k$ and $\h$, and the randomness of $\D$.
\end{definition}

Obviously, a qPRF is also a pqPRF and, in particular, a PRF. As discussed in Section~\ref{sec:PRF}, and unlike in the case of pqPRFs in Section~\ref{sec:pqPRF}, the security proof of Theorem~\ref{thm:PRNGimPRF} does {\em not} go through, because of the impossibility of dealing with the quantum oracle access in the standard way required for such proof. However,~\cite{ZhandryPRF} shows that qPRFs {\em can} indeed be built from post-quantum OWF using standard constructions, so the analogue of Corollary~\ref{cor:pqOWFiffpqPRF} still holds. The following is a corollary of~\cite[Theorem~4.5]{ZhandryPRF}.

\begin{theorem}\label{cor:pqOWFiffqPRF}
pqOWF exist iff qPRF exist.
\end{theorem}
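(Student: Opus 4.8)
The statement is an ``iff'', so the plan is to prove both directions. The direction ``qPRF exist $\implies$ pqOWF exist'' is the easy one: a qPRF is in particular a pqPRF (classical queries are a special case of quantum queries), hence a pqPRF, and by Corollary~\ref{cor:pqOWFiffpqPRF} (together with Corollary~\ref{cor:pqOWFiffPRNG} and Theorem~\ref{thm:pqPRFiffpqPRNG}) the existence of a pqPRF yields a pqOWF. Alternatively, and more directly, one can observe that for any fixed input $x_0$ the map $k \mapsto \qPRF_k(x_0)$ (padded/stretched appropriately) is a pqOWF: inverting it would let a \QPT distinguisher recover a consistent key and thereby distinguish $\ket{\qPRF_k}$ from \QRO. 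Either argument is a one-paragraph routine check, so I would state it briefly and move on.

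The substantive direction is ``pqOWF exist $\implies$ qPRF exist'', and here the plan is simply to invoke the machinery of Zhandry~\cite{ZhandryPRF}. First I would recall that, classically, a OWF yields a pqPRNG (Theorem~\ref{thm:pqOWFtopqPRNG}, via the Goldreich--Levin construction) and a pqPRNG yields a PRF via the GGM tree construction (Theorem~\ref{thm:PRNGimPRF}). The point of~\cite[Theorem 4.5]{ZhandryPRF} is that the GGM construction, when instantiated from a post-quantum secure PRNG, is in fact secure even against adversaries making \emph{quantum} superposition queries -- i.e.\ it is a qPRF, not merely a pqPRF. So the chain is: pqOWF $\Rightarrow$ pqPRNG (Thm~\ref{thm:pqOWFtopqPRNG}) $\Rightarrow$ qPRF (GGM, analyzed quantum-query-securely in~\cite{ZhandryPRF}). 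I would write this as: ``By Theorem~\ref{thm:pqOWFtopqPRNG}, a pqOWF yields a pqPRNG; by~\cite[Theorem 4.5]{ZhandryPRF}, the GGM construction applied to a pqPRNG is a qPRF,'' and thereby conclude.

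The main obstacle -- more precisely, the only place where genuine content lives -- is the quantum-query security analysis of GGM, which is exactly the content of~\cite[Theorem 4.5]{ZhandryPRF} and which the excerpt explicitly allows us to cite (``\cite{ZhandryPRF} shows that qPRFs can indeed be built from post-quantum OWF using standard constructions''). Since the theorem is stated as a corollary of that result, the honest proof is essentially a pointer. If I wanted to say a word about why the classical GGM proof does not transfer verbatim, I would note that the classical hybrid argument walks through the $2^n$ leaves of the tree one query-path at a time, which breaks when a single quantum query touches all leaves in superposition; Zhandry's fix is the small-range distribution / oracle-indistinguishability technology, and I would mention this only as a remark rather than reproving it. Thus the proof reads:

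\begin{proof}
One direction is immediate: a qPRF is in particular a pqPRF (restricting the distinguisher to classical queries), so by Corollary~\ref{cor:pqOWFiffpqPRF} the existence of a qPRF implies the existence of a pqOWF. For the converse, assume a pqOWF \F exists. By Theorem~\ref{thm:pqOWFtopqPRNG}, $\PRNG_\F$ is a pqPRNG. Applying the GGM tree construction~\cite{GGM84} to this pqPRNG yields a keyed family of functions which, by~\cite[Theorem~4.5]{ZhandryPRF}, is indistinguishable from a random function even for \QPT distinguishers with quantum (superposition) oracle access, i.e.\ a qPRF in the sense of Definition~\ref{def:qPRF}. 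Hence pqOWF exist iff qPRF exist.
\end{proof}
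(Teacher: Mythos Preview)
Your proposal is correct and matches the paper's approach: the paper simply states this as a corollary of~\cite[Theorem~4.5]{ZhandryPRF} without writing out a proof, and your argument (qPRF $\Rightarrow$ pqPRF $\Rightarrow$ pqOWF for one direction; pqOWF $\Rightarrow$ pqPRNG $\Rightarrow$ qPRF via GGM and Zhandry's quantum-query analysis for the other) is exactly the intended unpacking of that citation.
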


\subsection{Quantum-Secure PRP}\label{sec:qPRP}

Quantum-secure PRPs are defined in a similar way as qPRFs, denoting by $\ket{\qPRP_k}$ the quantum-classical oracle evaluating \qPRP with secret key $k$.

\begin{definition}[Quantum-Secure Weak PRP (qWPRP)]\label{def:qWPRP} \ 
A (family of) {\em quantum-secure weak pseudorandom permutations (qWPRP)} on \X with key space \K is a pair of \DPT algorithms $(\!\qPRP\!,\!\qPRP^{-1}\!)\!:\!(k\!\in\!\K, x\!\in\!\X)\!\mapsto\!x'\!\in\!\X$ such that:
\begin{enumerate}
\item $\forall k \in \K \implies \qPRP_k, \qPRP^{-1}_k$ are permutations on $\X$;
\item $\forall k \in \K \implies (\qPRP_k)^{-1} = \qPRP^{-1}_k$; and
\item for any \QPT algorithm \D it holds:
$$
\left| \Pr_{k \rand \K} \left[ \D^{\ket{\qPRP_k}} \to 1 \right] - \Pr_{\p \rand S(\X)} \left[ \D^{\ket{\O_\p}} \to 1 \right] \right| \leq \negl,
$$
where $\ket{\O_\p}$ is a quantum-classical oracle for \p, and the probabilities are over the choice of $k$ and $\p$, and the randomness of $\D$.
\end{enumerate}
\end{definition}

\begin{definition}[Quantum-Secure Strong PRP (qSPRP)]\label{def:qSPRP}
A (family of) \emph{quan\-tum-secure strong pseudorandom permutations (qSPRP)} on \X with key space \K is a pair of \DPT algorithms $(\qPRP,\qPRP^{-1}):(k\in\K, x\in\X)\mapsto x'\in\X$ such that:
\begin{enumerate}
\item $\forall k \in \K \implies \qPRP_k, \qPRP^{-1}_k$ are permutations on $\X$;
\item $\forall k \in \K \implies (\qPRP_k)^{-1} = \qPRP^{-1}_k$; and
\item for any \QPT algorithm \D it holds:
$$
\left| \Pr_{k \rand \K} \left[ \D^{\ket{\qPRP_k},\ket{\qPRP^{-1}_k}} \to 1 \right] - \Pr_{\p \rand S(\X)} \left[ \D^{\ket{\O_\p},\ket{\O_{\p^{-1}}}} \to 1 \right] \right| \leq \negl,
$$
where $\ket{\O_\p}$ is a quantum-classical oracle for \p, $\ket{\O_{\p^{-1}}}$ is a quantum oracle for $\p^{-1}$, and the probabilities are over the choice of $k$ and $\p$, and the randomness of $\D$.
\end{enumerate}
\end{definition}

It is important to notice that building provably secure qPRPs is not trivial. Kuwakado and Morii showed~\cite{KuwakadoM10,KuwakadoM12} that the two most commonly used constructions for building PRPs are actually {\em quantum-insecure}, in the sense that there exist specific quantum attacks (using a modified version of Simon's algorithm) able to distinguish such constructions from random. Their attacks are limited to the (3-round) Feistel construction (for building WPRPs from PRFs) and the (1-round) Even-Mansour construction (for building SPRPs from public random permutations). However, Zhandry~\cite{ZhandryPRP} shows that qSPRPs {\em can} indeed be built from qPRFs (and hence by post-quantum OWF) using constructions based on {\em format-preserving encryption}, so the analogue of the result from Theorem~\ref{thm:pqPRFiffpqPRP} still holds.

\begin{theorem}[qPRF $\iff$ qPRP]\label{thm:qPRFiffqPRP}
qPRFs exist iff qPRPs exist.
\end{theorem}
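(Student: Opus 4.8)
The statement is a two-way existence equivalence, and the plan is to re-run no construction from scratch but instead to use post-quantum one-way functions (pqOWF) as a pivot, invoking the external results already referenced in the surrounding text. For the direction `qPRPs exist $\Rightarrow$ qPRFs exist', I would take a qSPRP $(\qPRP,\qPRP^{-1})$ and first restrict every adversary to computational-basis queries: this shows at once that $\qPRP$ is a pqPRP, hence also a pqPRF (a permutation family being in particular a function family, exactly as noted in the post-quantum setting), hence by Corollary~\ref{cor:pqOWFiffpqPRF} its existence implies the existence of a pqOWF; feeding that pqOWF into Theorem~\ref{cor:pqOWFiffqPRF} then produces a qPRF. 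If one instead wants an argument that stays entirely inside the quantum model, one can observe that, ignoring the inverse oracle, a qSPRP is quantum-indistinguishable from a uniformly random permutation on $\X$, and then apply a quantum PRF/PRP switching lemma --- a random permutation and a random function on $\X_\secpar = \bin^\secpar$ are quantum-indistinguishable up to a quantity negligible in $\secpar$ because the domain is exponentially large --- to conclude that the qSPRP is itself a qPRF; but I would use the pqOWF route in the write-up since it relies only on statements proved above.

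For the converse, `qPRFs exist $\Rightarrow$ qPRPs exist', the whole content is the construction of Zhandry~\cite{ZhandryPRP} already cited just before the theorem: from any qPRF (equivalently, by Theorem~\ref{cor:pqOWFiffqPRF}, from any pqOWF) it builds a quantum-secure \emph{strong} pseudorandom permutation by means of format-preserving encryption, together with a proof that the resulting permutation family remains indistinguishable from a random permutation even under quantum forward \emph{and} inverse queries. Since `qPRP' denotes the strong version by convention, I would simply invoke this construction and its security proof as a black box, which closes the equivalence.

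The main point to get right is not a calculation --- the argument is essentially bookkeeping over the equivalences pqOWF $\iff$ qPRF and the cited implication pqOWF $\Rightarrow$ qSPRP --- but rather the fact, already flagged in the text, that the $3$-round Feistel and $1$-round Even--Mansour constructions used to establish the classical Theorem~\ref{thm:pqPRFiffpqPRP} are \emph{quantum-insecure} by the Kuwakado--Morii attacks~\cite{KuwakadoM10,KuwakadoM12}; so the `qPRF $\Rightarrow$ qPRP' direction genuinely cannot be proved with the textbook constructions and must go through Zhandry's format-preserving-encryption construction. A secondary subtlety, in the easy direction, is that `qPRP' means indistinguishable from a random \emph{permutation}, not a random function, which is exactly why that direction is routed through pqOWF (or through a quantum switching lemma) rather than asserting verbatim that a qPRP is a qPRF.
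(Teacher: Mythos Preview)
Your proposal is correct and matches the paper's approach: the theorem is stated without an explicit proof, relying entirely on the preceding discussion that cites Zhandry~\cite{ZhandryPRP} for the qPRF $\Rightarrow$ qSPRP direction (via format-preserving encryption, precisely because the Kuwakado--Morii attacks rule out Feistel/Even--Mansour) and leaves the reverse direction implicit. Your write-up is in fact more explicit than the paper on the qPRP $\Rightarrow$ qPRF direction, routing it through the pqOWF pivot via Corollary~\ref{cor:pqOWFiffpqPRF} and Theorem~\ref{cor:pqOWFiffqPRF}, which is exactly the right chain of equivalences given what is available.
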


\section{Quantum-Secure Secret-Key Encryption}\label{sec:QS2enc}

In Section~\ref{sec:pqPKES}, we have seen how security notions for public-key encryption in the post-quantum setting should allow for an adversary to query the encryption oracle in superposition. Following the \QS2 principle, in this section we extend such a possibility to the secret-key scenario (we limit our analysis here to the CPA case). We start by considering indistinguishability notions for SKESs where the IND phase is still classical, but the adversary has oracle access to the encryption oracle (this would be the analogue, for SKESs, of the pq-IND-CPA notion for PKESs).

Then we look at what happens when also the IND query becomes quantum. 
We also discuss a modification of such scenario, which can be useful in certain situations, where the adversary is restricted to working with quantum messages having efficient classical representations.

We conclude with a brief discussion on the extension of the above models to the CCA1 and CCA2 scenarios.

\subsection{Classical IND, Quantum CPA}

The first indistinguishability notion with quantum CPA query phase, called IND-qCPA, was proposed in~\cite{BZ13}. Formally, the base adversarial model is the same pq-IND adversary from Definition~\ref{def:pqINDadv}.

\begin{experiment}[$\gameINDqCPA$]\label{expt:INDqCPA}
Let $\E$ be a SKES, and $\A:=(\M,\D)$ a pq-IND adversary. The {\em IND-qCPA experiment} proceeds as follows:
\begin{algorithmic}[1]
\State \textbf{Input:} $\secpar \in \NN$
\State $k \from \KGen$
\State $(m^0,m^1,\ket{\state}) \from \M^{\ket{\Enc_k}}$
\State $b \rand\bin$
\State $c \from \Enc_k(m^b)$ 
\State $b' \from \D^{\ket{\Enc_k}}(c,\state)$
\If{$b = b'$}
	\State \textbf{Output:} $1$
\Else
	\State \textbf{Output:} $0$
\EndIf
\end{algorithmic}
The {\em advantage of \A} is defined as:
$$
\advINDqCPA := \Pr \left[ \gameINDqCPA \to 1 \right] - \half .
$$
\end{experiment}

Notice how, as in the \QS0 case, we have: $\gameINDqCPA=\gameIND[\A^{\ket{\Enc_k}}]$.

\begin{definition}[Indistinguishability of Ciphertexts under Quantum Chosen Plaintext Attack (IND-qCPA)]\label{def:INDqCPA}
A SKES $\E$ has {\em indistinguishable encryptions under quantum chosen plaintext attack (or, it is IND-qCPA secure)} iff, for any pq-IND adversary $\A$ it holds that: $\advINDqCPA \leq \negl$.
\end{definition}

Clearly, IND-qCPA is at least as strong as pq-IND-CPA (and it is actually equivalent for PKES). But the converse is not true.

\begin{theorem}[IND-qCPA $\implies$ pq-IND-CPA]\label{thm:INDqCPAtopqINDCPA}
If a SKES is IND-qCPA secure, then it is also pq-IND-CPA secure.
\end{theorem}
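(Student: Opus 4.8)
The plan is to prove this by a direct observation that the IND-qCPA game is strictly more permissive for the adversary than the pq-IND-CPA game, so that any adversary winning the latter can be used (without modification) to win the former. Concretely, I would argue the contrapositive: suppose $\E$ is \emph{not} pq-IND-CPA secure, so there exists a pq-IND adversary $\A = (\M,\D)$ with non-negligible $\advpqINDCPA$. I will show this same $\A$ is a legitimate IND-qCPA adversary with the same advantage, hence $\E$ is not IND-qCPA secure.

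The key step is to observe that classical oracle access to $\Enc_k$ is a special case of quantum oracle access to $\ket{\Enc_k}$: a classical query on input $x$ is obtained by preparing the basis state $\ket{x,0}$, submitting it to $\ket{\Enc_k}$, receiving $\ket{x,\Enc_k(x)}$, and measuring in the computational basis. Since $\A$ in the pq-IND-CPA game only ever makes classical queries to $\Enc_k$, we can regard $\M$ and $\D$ as performing exactly this restricted pattern of quantum queries to $\ket{\Enc_k}$ in Experiment~\ref{expt:INDqCPA}. All other components of the two experiments — the key generation, the sampling of $b$, the classical IND challenge query $c \from \Enc_k(m^b)$, the state register $\ket{\state}$ passed from $\M$ to $\D$, and the winning condition $b = b'$ — are syntactically identical. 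Therefore the output distribution of $\gamepqINDCPA$ with adversary $\A$ equals that of $\gameINDqCPA$ with adversary $\A$, giving $\advINDqCPA = \advpqINDCPA$, which is non-negligible. This contradicts IND-qCPA security, completing the proof.

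There is essentially no real obstacle here — the statement is a definitional inclusion, and the only thing to be careful about is to make the embedding of classical queries into quantum queries precise (including the handling of the ancilla register initialized to $\ket{0}$, as in Definition~\ref{def:qcoracle}) and to note that a \QPT adversary is in particular still \QPT when we only relax which oracle it accesses, so $\A$ remains admissible in the IND-qCPA game. One could alternatively phrase the proof in the forward direction: assume IND-qCPA security of $\E$ and take an arbitrary pq-IND-CPA adversary $\A$; view it as an IND-qCPA adversary by the same embedding; conclude $\advpqINDCPA = \advINDqCPA \leq \negl$. Either phrasing works, and the whole argument is a few lines. It is worth remarking, as the paper already signals via Theorem~\ref{thm:pqINDCPAnotoINDqCPA} (referenced as ``folklore'' with a formal proof promised), that the converse of Theorem~\ref{thm:INDqCPAtopqINDCPA} fails, so this inclusion is strict; but that separation is a separate statement and is not needed for the proof of the implication itself.
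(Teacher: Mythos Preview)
Your proposal is correct and matches the paper's approach: the paper states this theorem without proof, prefacing it only with ``Clearly, IND-qCPA is at least as strong as pq-IND-CPA,'' which is exactly the definitional inclusion you spell out (classical queries to $\Enc_k$ are a special case of quantum queries to $\ket{\Enc_k}$, and all other components of the two experiments coincide).
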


\begin{theorem}[pq-IND-CPA SKES \nimplies IND-qCPA SKES]\label{thm:pqINDCPAnotoINDqCPA}
Under standard hardness assumptions, there exist SKES which are pq-IND-CPA secure, but not IND-qCPA secure.
\end{theorem}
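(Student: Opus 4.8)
The plan is to build a secret-key encryption scheme that looks fine to a \QPT adversary with \emph{classical} oracle access but whose encryption routine hides a Simon-type period that a family of \emph{superposition} queries extracts. Concretely I would use a nonce-randomised variant of the Even--Mansour construction. Fix a post-quantum strong pseudorandom permutation $P$ on $\bin^\secpar$ with key space $\K$ (which exists under standard post-quantum assumptions, cf.\ Definition~\ref{def:pqSPRP}, Theorem~\ref{thm:pqPRFiffpqPRP} and Corollary~\ref{cor:pqOWFiffpqPRF}), and define $\E' := (\KGen',\Enc',\Dec')$ on message space $\bin^\secpar$ by: $\KGen'$ samples $k_P \rand \K$ and $k_1, k_2 \rand \bin^\secpar$; $\Enc'_{(k_P,k_1,k_2)}(m)$ samples a fresh nonce $N \rand \bin^\secpar$ and outputs $\bigl(k_P,\ N,\ P_{k_P}(m \xor k_1 \xor N) \xor k_2\bigr)$; and $\Dec'$ recovers $m = P_{k_P}^{-1}(c \xor k_2) \xor k_1 \xor N$ (ignoring, or checking, the first component). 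Correctness is immediate, and since $P_{k_P}$ is a permutation each map $m \mapsto \Enc'_{(k_P,k_1,k_2)}(m;N)$ is injective, so $\E'$ is a well-formed SKES.

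For \textbf{pq-IND-CPA security} (Definition~\ref{def:pqINDCPA}) I would argue in two steps. First, using the security of $P$ as a post-quantum SPRP, replace $P_{k_P}$ inside the challenger's encryptions by a truly random permutation $\Pi$; since the challenger evaluates $P_{k_P}$ only on classically chosen points, any resulting change in a \QPT adversary's advantage is bounded by the post-quantum SPRP distinguishing advantage, hence $\negl$ (the publicly revealed $k_P$ in ciphertexts is harmless here --- it is merely an extra classical string that the reduction samples on its own, and the adversary's own computations involving it are irrelevant to the reduction). Second, because a fresh uniform nonce $N$ is drawn for every encryption, in the challenge ciphertext the point $m^b \xor k_1 \xor N$ is uniform and, except with probability $q_c/2^\secpar$ where $q_c = \poly(\secpar)$ bounds the number of encryption queries, distinct from all previously evaluated points; hence the challenge ciphertext is statistically $\negl$-close to a distribution independent of $b$. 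Together these give $\advpqINDCPA \leq \negl$.

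For the \textbf{separation} I would show $\E'$ is not IND-qCPA secure (Definition~\ref{def:INDqCPA}) by running the Kuwakado--Morii quantum attack~\cite{KuwakadoM10,KuwakadoM12} in the quantum learning phase. On a superposition query $\sum_m \ket{m}\ket{0}$, the oracle $\ket{\Enc'_{(k_P,k_1,k_2)}}$ returns $\ket{k_P}\ket{N} \otimes \sum_m \ket{m}\,\ket{P_{k_P}(m \xor k_1 \xor N)\xor k_2}$ (the first two registers being constant across the superposition). The adversary reads $k_P$ and $N$ off these registers, and then, since it now knows $k_P$ and $N$, coherently XORs $P_{k_P}(m \xor N)$ into the third register (computing it into a scratch register and uncomputing the scratch), obtaining $\sum_m \ket{m}\,\ket{f(m \xor N)}$ with $f(x) := P_{k_P}(x \xor k_1) \xor k_2 \xor P_{k_P}(x)$. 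The function $f$ has period $k_1$, so the same holds for $m \mapsto f(m \xor N)$, and --- crucially --- this period is independent of the nonce; moreover, with $P_{k_P}$ behaving like a random permutation, $f$ is $2$-to-$1$ with period exactly $\langle k_1 \rangle$ except with small probability. A Hadamard transform on the message register followed by a measurement then yields a vector orthogonal to $k_1$, with the constant $\ket{k_P}\ket{N}$ register acting as harmless (invariant) Simon garbage. Repeating this with $O(\secpar)$ independent (fresh-nonce) superposition queries recovers $k_1$ by linear algebra; a single classical known-plaintext query on $m = 0^\secpar$, returning $(k_P, N_0, c)$, then yields $k_2 = c \xor P_{k_P}(k_1 \xor N_0)$; and with $(k_P, k_1, k_2)$ in hand the adversary decrypts the classical IND challenge ciphertext and wins with probability negligibly close to $1$.

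The \textbf{main obstacle}, and what dictates the shape of the construction, is the tension between the randomisation that pq-IND-CPA forces on $\Enc'$ and the rigidity that period-finding needs: for fixed randomness every correct encryption map is injective in the message, so a naive Simon attack on a message superposition is blocked because the remaining ciphertext register stays entangled with the message and decoheres it. The design resolves this by pushing the \emph{secret} periodic offset $k_1$ inside a permutation whose randomisation the adversary can strip away --- the nonce $N$ and the permutation key $k_P$ are both in the clear --- so what survives after the adversary's post-processing is a \emph{nonce-independent} period and a leftover register that is itself period-invariant. The routine-but-delicate points left to discharge are the generalised-Simon analysis (treating the constant $k_P,N$ registers as garbage and bounding the per-query failure probability from accidental collisions of $f$, handled by the random-permutation replacement as in~\cite{KuwakadoM10,Simon97}), the linear-algebra success bound for recovering $k_1$ from $O(\secpar)$ samples, and verifying that exposing $k_P$ genuinely does not compromise the classical reduction above.
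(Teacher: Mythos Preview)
Your pq-IND-CPA argument has a genuine gap. The scheme reveals $k_P$ in every ciphertext, and once the adversary knows $k_P$ the pqSPRP guarantee on $P$ (Definition~\ref{def:pqSPRP}) is vacuous --- pseudorandomness of $P_{k_P}$ holds only while the key stays hidden from the distinguisher. Your game-hop ``replace $P_{k_P}$ by a truly random $\Pi$'' therefore cannot be justified by a reduction: the reduction $R$, having only oracle access to $O \in \{P_{k_P},\Pi\}$, does not know $k_P$ and so cannot place the correct first ciphertext component in front of $\A$. Your proposed fix --- let $R$ sample an independent string $s$ and use that --- does not simulate Game~$0$: in the real scheme the revealed string and the permutation actually used are \emph{the same object}, whereas in $R$'s simulation they are independent, and an adversary who evaluates $P_s$ locally (which it can, in superposition) is faced with ciphertexts $c_i$ that, after three or more queries, are generically \emph{inconsistent} with any pair $(k_1,k_2)$ under $P_s$. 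You give no argument that this inconsistency is undetectable by a \QPT adversary; the parenthetical ``the adversary's own computations involving it are irrelevant to the reduction'' is exactly the step that fails. The same unjustified leap --- ``$P_{k_P}$ behaving like a random permutation'' once $k_P$ is public --- reappears in your attack analysis when you claim $f$ is $2$-to-$1$ with period exactly $\langle k_1\rangle$. What you have really written down is nonce-based Even--Mansour in the quantum-accessible \emph{ideal permutation model}; both directions plausibly go through there, but that is an idealised model, not a standard-model hardness assumption as the theorem requires.

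The paper takes an entirely different route that sidesteps this tension: it does not plant a Simon period at all, but instead appends to every ciphertext an encryption of the SKES secret key under a classically IND-CPA yet quantumly broken PKES (an RSA-type scheme), with the PKES public key now generated by $\KGen$ and kept secret. The separation then comes from the adversary's ability, given quantum oracle access to the combined encryption, to break that PKES layer and recover the SKES key, whereas with only classical oracle access the hidden $\pk$ blocks Shor-type attacks.
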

\begin{proof}[Proof (sketch)]
Consider the same counterexample described in the proof of Theorem~\ref{thm:INDCPAnotopqINDCPA}, but where this time the public key used for the (IND-CPA but non--post-quantum secure) PKES is generated by $\KGen$ and kept secret. This way, in the post-quantum setting the adversary would lose access to the quantum encryption oracle for the PKES, and hence the pq-IND-CPA security notion coincides with the IND-CPA notion, which the resulting scheme achieves by construction. However, an adversary for the IND-qCPA security notion would still have access to such encryption oracle, thereby being able to break the security of the PKES, and thus recovering the SKES key.
\end{proof}

A simple modification from {\cite[Theorem~4.10]{BZ13}} shows that Construction~\ref{constr:goldreich} is IND-qCPA when instantiated with a {\em quantum-secure PRF}.

\begin{theorem}\label{thm:qGoldreich}
Let $\E_\PRF$ be the SKES from Construction~\ref{constr:goldreich} implemented through a qPRF \PRF. Then $\E_\PRF$ in an IND-qCPA SKES.
\end{theorem}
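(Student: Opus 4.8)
The plan is to mirror the proof of Theorem~\ref{thm:GoldreichINDCPA} (and its quantum refinement from \cite[Theorem~4.10]{BZ13}), observing that the only place the original argument could break when passing to the \QS2 setting is the step where we argue that giving the adversary an encryption oracle is `as good as' giving it an oracle for the underlying pseudorandom function. So first I would set up the standard game-hopping sequence: in $\game_0$ the adversary $\A=(\M,\D)$ plays $\gameINDqCPA$ against $\E_\PRF$ from Construction~\ref{constr:goldreich}, so both $\M$ and $\D$ have quantum oracle access to $\ket{\Enc_k}$, and the challenge ciphertext $c=(m^b\xor\PRF_k(r),r)$ is computed with a fresh uniformly random $r$.

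Next I would define $\game_1$ to be identical to $\game_0$ except that every invocation of $\PRF_k$ is replaced by an invocation of a truly random function $\h\rand\Y^\X$. The key point is that a single quantum query of $\A$ to $\ket{\Enc_k}$ on $\sum_{x}a_x\ket{x,0}$ is implemented by sampling a fresh randomness register (itself provided in superposition over the sampled values, which by the discussion in Section~\ref{sec:QS1princ} is equivalent to using a fixed $r$ together with a qPRF to expand it), evaluating $\ket{\O_{\PRF_k}}$ on the randomness register, and XOR-ing into the message register; hence a $\QPT$ distinguisher between $\game_0$ and $\game_1$ immediately yields a $\QPT$ distinguisher between $\ket{\qPRF_k}$ and a quantum random oracle $\QRO$, contradicting Definition~\ref{def:qPRF}. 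This is exactly where we \emph{need} the qPRF property rather than merely a pqPRF: the adversary's encryption queries, being in superposition, translate into superposition queries to the underlying function, so classical-query pseudorandomness would not suffice. Therefore $\left|\Pr[\A\text{ wins }\game_0]-\Pr[\A\text{ wins }\game_1]\right|\leq\negl$.

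Finally I would bound $\Pr[\A\text{ wins }\game_1]$. In $\game_1$ the challenge ciphertext is $(m^b\xor\h(r^\star),r^\star)$ for a uniformly random $r^\star$. With overwhelming probability, $r^\star$ is a value on which $\h$ has never been `pinned down' by any of $\A$'s (polynomially many) quantum encryption queries in a detectable way — more precisely, conditioned on the randomness registers used in those queries, the value $\h(r^\star)$ remains uniformly random and independent, so $m^b\xor\h(r^\star)$ is a uniformly random string independent of $b$. Hence $\Pr[\A\text{ wins }\game_1]\leq\half+\negl$, and combining the two bounds gives $\advINDqCPA\leq\negl$, so $\E_\PRF$ is IND-qCPA. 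The main obstacle is the careful handling of the second step: one must argue that querying $\ket{\Enc_k}$ in superposition genuinely reduces to querying $\ket{\qPRF_k}$ in superposition, including the subtlety of how fresh encryption randomness is drawn for a superposition query (cf.\ Section~\ref{sec:QS1princ} and \cite{BZ13}), and that the final information-theoretic step about independence of $\h(r^\star)$ is not spoiled by the adversary having made superposition queries that `touch' $r^\star$ with small amplitude — here a BBBV-style argument (Lemma~\ref{lem:queryprob}) controls exactly that amplitude.
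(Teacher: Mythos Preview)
Your proposal is correct and follows the standard game-hopping argument from \cite[Theorem~4.10]{BZ13}, which is exactly what the paper cites in lieu of giving its own proof. One minor remark: in the fixed-randomness-per-query convention (Section~\ref{sec:QS1princ}) the final independence of $\h(r^\star)$ follows from a plain collision bound on the classically sampled randomness values, so the BBBV hedge via Lemma~\ref{lem:queryprob} is not actually needed there, though it does no harm.
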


\subsection{Type-$(2)$ Oracles}

Before discussing other quantum security notions, we must provide a technical tool arising from the following consideration. In quantum computing, the `canonical' way of evaluating an oracle for a classical function $\f$ in superposition is, as discussed in Section~\ref{sec:QS1princ}, by using an auxiliary register and then the canonical quantum-classical oracle:
$$
\ket{\O_\f}: \sum_{x,y} a_{x,y} \ket{x,y} \mapsto \sum_{x,y}  a_{x,y} \ket{x,y \xor \f(x)}.
$$
This way ensures that the resulting operator is invertible, even if $\f$ itself is not. We call these {\em type-$(1)$ transformations}, and we denote them by $\ket{\O_\f}_{(1)}$ when necessary to specify (by default, we assume $\ket{\O_\f} = \ket{\O_\f}_{(1)}$). For SKES, if $\Enc_k$ is an encryption mapping $m$-bit plaintexts to $c$-bit ciphertexts, the resulting operator in this case will act on $m+c$ qubits in the following way:
$$
\ket{\Enc_k}_{(1)}: \sum_{x,y} a_{x,y} \ket{x,y} \mapsto \sum_{x,y} a_{x,y} \ket{x,y \xor \Enc_k(x)},
$$
where the $y$'s are ancillary values.

In our case, though, we do not consider arbitrary functions, but encryptions, which act as {\em bijections} on some bit-string spaces (assuming that the randomness, if in presence of a randomized SKES, is treated as an input, although never chosen by the adversary.) Therefore, provided that the encryption does not change the size of a message, the following transformation is also invertible:
\begin{equation}\label{eq:UEnc2}
\sum_{x} a_{x} \ket{x} \mapsto \sum_{x} a_{x} \ket{\Enc_k(x)}.
\end{equation}
For the more general case of arbitrary message expansion factors, we will consider transformations of the form:
$$
\sum_{x,y} a_{x,y} \ket{x,y} \mapsto \sum_{x,y} a_{x,y} \ket{\phi_{x,y}},
$$
where the length of the ancilla register is $|y|\!=\!|\Enc_k(x)|-|x|$ and ${\phi_{x,0}\!= \Enc_k(x)}$ for every $x$ -- i.e., initializing the ancilla register in the $\ket{0}$ state produces a correct encryption, which is what we expect from an honest execution of the encryption. As in the SKES case we always assume that encryption and decryption oracles are actually provided by an honest third party (usually the {\em challenger}), we will not consider cases where $y\neq0$. We call the resulting operator {{\em type-$(2)$ transformations}}\footnote{These are called {\em minimal quantum oracles} in~\cite{KKVB02}.}, and we denote them by $\ket{\O_\f}_{(2)}$ when necessary to specify), that is:
$$
\Encq: \sum_{x} a_{x} \ket{x,0} \mapsto \sum_{x} a_{x} \ket{\Enc_k(x)},
$$
where the ancillary $\ket{0}$ is of the necessary qubit-size.

Notice that, in general, type-$(1)$ and type-$(2)$ transformations are very different: having quantum gate access to a type-$(2)$ unitary encryption oracle (that is, quantum oracle access to $\Encq$ and its adjoint $\Encqd$) also gives access to the related type-$(2)$ {\em decryption oracle} $\Decq : \sum_x a_x \ket{\Enc_k(x)} \mapsto \sum_x a_x \ket{x}$. In fact, notice that $\Encqd = \Decq$, while the adjoint of a type-$(1)$ encryption operator, $\ket{\Enc_k}_{(1)}^{\dagger}$, is generally {\em not} a type-$(1)$ decryption operator. In particular, type-$(2)$ operators are `more powerful' in the sense that knowledge of the secret key is required in order to build any efficient quantum circuit implementing them. However, we stress the fact that whenever access to a decryption oracle is allowed, the two models are completely equivalent, because then we can simulate a type-$(2)$ operator by using ancilla qubits and `uncomputing' the resulting garbage lines (see Figure~\ref{fig:equiv12}). This is in fact the case in our security model for SKES, as it is not the adversary himself who computes the encryptions, but they are instead provided by a challenger who, in particular, already knows the secret key.

\begin{figure}[h]
\center
 \includegraphics[width=\textwidth,keepaspectratio]{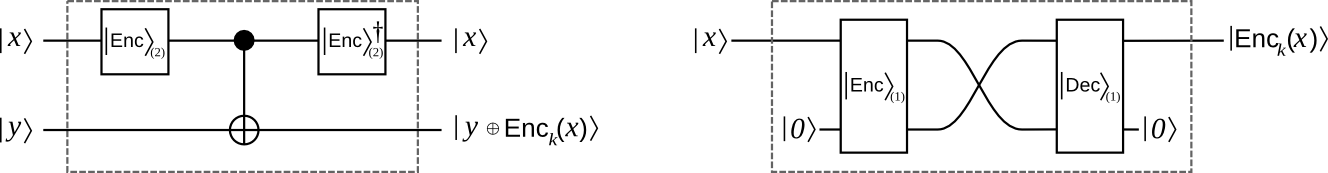}
 \caption{Equivalence between type-$(1)$ and type-$(2)$ in the case of \mbox{$1$-qubit} messages. Left: building a \mbox{type-$(1)$} encryption oracle by using a \mbox{type-$(2)$} encryption oracle (and its inverse) as a \mbox{black-box}. Right: building a \mbox{type-$(2)$} encryption oracle by using \mbox{type-$(1)$} encryption and decryption oracles as \mbox{black-boxes}.\label{fig:equiv12}}
\end{figure}

\subsection{Quantum Indistinguishability}\label{sec:qIND}

When trying to apply the \QS2 principle at its fullest in the context of security notions for SKESs, the main difficulty is how to properly define a quantum version of the IND notion. As shown in~\cite{BZ13}, trying to define a new notion where all the communication and interaction of IND is blindly moved into quantum registers does not work, because the resulting notion would be trivially unachievable. The work~\cite{GHS16} presents an in-depth discussion about other possible strategies spanning a {\em `security tree'} of definitions. Most of these strategies lead to quantum indistinguishability notions that are either unachievable, or equivalent to IND-qCPA. However, some of them lead to more meaningful notions for the \QS2 setting. These notions, in~\cite{GHS16}, are called {\em quantum indistinguishability (qIND)} and {\em general quantum indistinguishability (gqIND)}. However, for the purpose of this work, we rename them as {\em weak quantum indistinguishability (wqIND)} and {\em quantum indistinguishability (qIND)} respectively, because the latter is of more direct interest to our framework. That is, what we call `qIND' in this work was originally called `gqIND' in~\cite{GHS16}, and what we call `wqIND' was originally called `qIND' in~\cite{GHS16}. We will use such denomination from now on, and we will discuss qIND in this section, while presenting wqIND at a later point.

We give the qIND model for the most general case of adversaries able to query oracles on mixed states. This can happen if, for example, the adversary queries the oracle on a state which is entangled with another state kept by the adversary. Basically, what happens in the qIND experiment is the following:
\begin{enumerate}
\item first, the adversary outputs two quantum states $\phi^0,\phi^1$ representing the challenge plaintexts of his choice. These states can be thought as superpositions of classical plaintexts, but in general can also be mixed states, possibly entangled together or with some other state kept by the adversary.
\item Then, these two states are sent over a quantum channel to some abstract challenger algorithm. This challenger selects at random one of the two states and traces out the other one. The selected state is encrypted according to \Encq with a secret key $k$ generated by the challenger, and sent back to the adversary.
\item Finally the adversary, upon receiving such encrypted state, has to guess which of the two states was selected.
\end{enumerate}

More formally, we define the following.

\begin{definition}[Quantum IND Adversary]\label{def:qINDadv}
Let $\E$ be a SKES with plaintext space \X and ciphertext space \Y. A {\em quantum IND (qIND, or QIND) adversary \A for \E} is a pair of \QPT algorithms $\A := (\M,\D)$, where:
\begin{enumerate}
\item $\M: \to \states{\Hilbert_\X} \times \states{\Hilbert_\X} \times \QEnv$ is the {\em qIND (or QIND) message generator};
\item $\D: \states{\Hilbert_\Y} \times \QEnv \to \bin$ is the {\em qIND (or QIND) distinguisher},
\end{enumerate}
where $\Hilbert_{\com}$ is a Hilbert space of appropriate dimension, modeling the state communication register (or, {\em environment}) between \M and \D.
\end{definition}

\begin{experiment}[$\gameqIND$]\label{expt:qIND}
Let $\E$ be a SKES, and $\A:=(\M,\D)$ a qIND adversary. The {\em qIND experiment} proceeds as follows:
\begin{algorithmic}[1]
\State \textbf{Input:} $\secpar \in \NN$
\State $k \from \KGen$
\State $(\phi^0,\phi^1,\sigma) \from \M$
\State $b \rand\bin$
\State $\psi \from \Encq(\phi^b)$
\State trace out $\phi^{1-b}$
\State $b' \from \D(\psi,\sigma)$
\If{$b = b'$}
	\State \textbf{Output:} $1$
\Else
	\State \textbf{Output:} $0$
\EndIf
\end{algorithmic}
The {\em advantage of \A} is defined as:
$$
\advqIND := \Pr \left[ \gameqIND \to 1 \right] - \half .
$$
\end{experiment}

\begin{definition}[Quantum Indistinguishability of Ciphertexts (qIND)]\label{def:qIND}
A SKES $\E$ has {\em quantum indistinguishable encryptions (or, it is qIND secure)} iff, for any qIND adversary $\A$ it holds that: $\advqIND \leq \negl$.
\end{definition}

We can strengthen this security notion by adding quantum CPA capabilities to the adversary.

\begin{experiment}[$\gameqINDqCPA$]\label{expt:qINDqCPA}
Let $\E$ be a SKES, and $\A:=(\M,\D)$ a qIND adversary. The {\em qIND-qCPA experiment} proceeds as follows:
\begin{algorithmic}[1]
\State \textbf{Input:} $\secpar \in \NN$
\State $k \from \KGen$
\State $(\phi^0,\phi^1,\sigma) \from \M^\Encq$
\State $b \rand\bin$
\State $\psi \from \Encq(\phi^b)$
\State trace out $\phi^{1-b}$
\State $b' \from \D^\Encq(\psi,\sigma)$
\If{$b = b'$}
	\State \textbf{Output:} $1$
\Else
	\State \textbf{Output:} $0$
\EndIf
\end{algorithmic}
The {\em advantage of \A} is defined as:
$$
\advqINDqCPA := \Pr \left[ \gameqINDqCPA \to 1 \right] - \half .
$$
\end{experiment}

\begin{definition}[Quantum Indistinguishability of Ciphertexts Under Quantum Chosen Plaintext Attack (qIND-qCPA)]\label{def:qINDqCPA}
A SKES $\E$ has {\em quantum indistinguishable encryptions under quantum chosen plaintext attack (or, it is qIND-qCPA secure)} iff, for any qIND adversary $\A \Rightarrow \advqINDqCPA \leq \negl$.
\end{definition}

Clearly, qIND-qCPA is at least as strong as IND-qCPA, because a classical IND query is a special case of a quantum IND query.

\begin{theorem}[qIND-qCPA $\implies$ IND-qCPA]\label{thm:qINDqCPAtoINDqCPA}
If a SKES is qIND-qCPA secure, then it is also IND-qCPA secure.
\end{theorem}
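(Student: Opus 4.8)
The plan is to show that every IND-qCPA adversary can be perfectly simulated by a qIND-qCPA adversary with exactly the same advantage, so that qIND-qCPA security immediately yields IND-qCPA security. First I would fix an arbitrary IND-qCPA adversary $\A:=(\M,\D)$ against $\E$ (a pq-IND adversary in the sense of Definition~\ref{def:pqINDadv}) and build a qIND adversary $\A':=(\M',\D')$ as follows. The message generator $\M'$ runs $\M$ internally. When $\M$ outputs a pair of classical plaintexts $m^0,m^1\in\X$ together with its state $\ket{\state}$, $\M'$ emits the two computational basis states $\ketbra{m^0},\ketbra{m^1}\in\states{\HX}$ and forwards $\ket{\state}$ in the environment register. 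In $\gameqINDqCPA[\A']$ the challenger then draws $b\rand\bin$ and returns $\psi\from\Encq(\ketbra{m^b})$; since $\ketbra{m^b}$ is a basis state and the ancilla register is initialized to $\ket 0$, measuring $\psi$ in the computational basis reproduces exactly the classical distribution $\Enc_k(m^b)$ (this holds whether $\E$ is deterministic or randomized, with the encryption randomness treated as an auxiliary input as in Section~\ref{sec:QS1princ}). Hence $\D'$ measures $\psi$ to obtain a classical ciphertext $c$, runs $\D$ on $(c,\state)$, and outputs the resulting bit $b'$.

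The remaining bookkeeping concerns the CPA oracle: $\M$ and $\D$ expect the canonical type-$(1)$ encryption oracle $\ket{\Enc_k}$, whereas the qIND-qCPA game supplies the type-$(2)$ operator $\Encq$. Here I would invoke the equivalence of the two oracle forms established in the Type-$(2)$ Oracles subsection: whenever a decryption oracle (equivalently, the adjoint $\Encqd$) is available — which is the case in the SKES model, where the challenger holds $k$ — one can build $\ket{\Enc_k}_{(1)}$ from $\Encq$ and $\Encqd$ using one ancilla register and uncomputing the garbage line, as in Figure~\ref{fig:equiv12}. Thus both $\M'$ and $\D'$ can answer all of $\M$'s and $\D$'s type-$(1)$ CPA queries faithfully.

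By construction the entire view of $\A$ inside $\A'$ is distributed exactly as in $\gameINDqCPA$, so $\A'$ wins $\gameqINDqCPA[\A']$ precisely when $\A$ wins $\gameINDqCPA$; in particular the advantage of $\A'$ equals $\advINDqCPA$. Consequently, if $\E$ is qIND-qCPA secure then the advantage of $\A'$ is negligible, and therefore $\advINDqCPA\leq\negl$ for every IND-qCPA adversary $\A$, which is exactly the claim. I expect the only genuinely delicate step to be the oracle-type argument of the previous paragraph — one must be careful to check that, with the ancilla set to $\ket 0$, the composed circuit behaves as the honest type-$(1)$ oracle on \emph{all} inputs, and to make explicit that this simulation is legitimate precisely because in this model encryptions are performed by the key-holding challenger rather than by the adversary itself.
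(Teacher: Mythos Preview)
Your proposal is correct and in fact more detailed than the paper itself, which does not give a proof at all: the paper simply states ``Clearly, qIND-qCPA is at least as strong as IND-qCPA, because a classical IND query is a special case of a quantum IND query.'' Your reduction---embedding the classical messages $m^0,m^1$ as the basis states $\ketbra{m^0},\ketbra{m^1}$ and measuring the returned ciphertext---is exactly the unpacking of that one-line remark, and you are right to flag the type-$(1)$/type-$(2)$ oracle mismatch as the only delicate point; your resolution via the equivalence of Figure~\ref{fig:equiv12} (the challenger holds $k$ and can therefore supply $\Encqd$) is precisely the argument the paper itself gives in the Type-$(2)$ Oracles subsection to justify treating the two oracle forms interchangeably in this model.
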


However, as we will show later, the converse is not necessarily true.

\begin{corollary}[of Theorem~\ref{thm:impossqIND} and Corollary~\ref{cor:qINDtowqIND}]\label{cor:impossqIND}
There exist SKES which are IND-qCPA secure, but not qIND-qCPA secure.
\end{corollary}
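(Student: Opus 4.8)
The plan is to read off this separation from the two results it is attributed to, plus the routine observation that removing the chosen-plaintext oracle can only weaken an adversary. First I would take $\E$ to be the SKES furnished by Theorem~\ref{thm:impossqIND}: a scheme that is IND-qCPA secure but provably fails weak quantum indistinguishability — the impossibility there being driven by the fact that a quasi-length-preserving encryption acts on superpositions essentially as a bijection, so an adversary who submits a uniform superposition as one challenge plaintext and a single basis state as the other can tell them apart after an inverse quantum Fourier transform on the returned ciphertext register. The goal is then to argue that this same $\E$ witnesses the corollary, i.e. that it is not qIND-qCPA secure.

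The argument is by contraposition and is essentially a two-step implication chase. First I would note that qIND-qCPA security implies qIND security, since a qIND adversary is exactly a qIND-qCPA adversary that issues no queries to $\Encq$ during the learning phases; this mirrors Theorem~\ref{thm:qINDqCPAtoINDqCPA} verbatim. Second, by Corollary~\ref{cor:qINDtowqIND}, qIND security implies wqIND security. Chaining these, if $\E$ were qIND-qCPA secure it would be wqIND secure — contradicting Theorem~\ref{thm:impossqIND}. Hence $\E$ is IND-qCPA secure but not qIND-qCPA secure, which is exactly the claim. (If the impossibility of Theorem~\ref{thm:impossqIND} is instead phrased directly for the qIND notion, the invocation of Corollary~\ref{cor:qINDtowqIND} simply becomes redundant and the chain shortens to qIND-qCPA $\implies$ qIND, contradiction.)

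Since all the substantive work lives in the two cited results, the corollary's own proof has no real obstacle beyond keeping the directions of the implications straight. The point I would be most careful about is making sure the separating scheme is genuinely IND-qCPA secure and not merely IND-CPA or pq-IND-CPA secure: if the construction used in Theorem~\ref{thm:impossqIND} is a length-preserving variant of Construction~\ref{constr:goldreich}, one has to verify that instantiating it with a qPRF still yields IND-qCPA security after the modification, in the spirit of Theorem~\ref{thm:qGoldreich}, so that the example sits strictly between IND-qCPA and qIND-qCPA rather than collapsing out of the IND-qCPA class entirely.
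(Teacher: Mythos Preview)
Your implication chain is correct and matches the paper's argument exactly: Construction~\ref{constr:goldreich} instantiated with a qPRF is IND-qCPA by Theorem~\ref{thm:qGoldreich}, is quasi--length-preserving (its core function $x \oplus \PRF_k(r)$ has the same length as $x$), hence not wqIND by Theorem~\ref{thm:impossqIND}, hence not qIND by the contrapositive of Corollary~\ref{cor:qINDtowqIND}, hence not qIND-qCPA. Two small clarifications on your write-up: Theorem~\ref{thm:impossqIND} does \emph{not} itself furnish an IND-qCPA scheme (that input comes separately from Theorem~\ref{thm:qGoldreich}, as you correctly suspect at the end), and no ``length-preserving variant'' is needed --- Construction~\ref{constr:goldreich} is already quasi--length-preserving as stated, so your worry about the modification breaking IND-qCPA does not arise; also, the distinguishing attack in Theorem~\ref{thm:impossqIND} uses $H\ket{0^m}$ versus $H\ket{1^m}$ (both full superpositions, differing only in phase pattern) followed by Hadamard, not a uniform superposition versus a basis state.
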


In particular, Construction~\ref{constr:goldreich} (which is IND-qCPA secure according to Theorem~\ref{thm:qGoldreich}) is not qIND-qCPA secure, because it is covered in the impossibility result from Section~\ref{sec:impossqIND}. However,~\cite{GHS16} shows how to build qIND-qCPA secure SKES from qPRPs.

\begin{construction}[{\cite[Construction 6.4]{GHS16}}]\label{constr:qIND}
Let $(\qPRP,\qPRP^{-1})$ be a qPRP over $\X\times\R$ with key space \K, where \X and \R are both of size superpolynomial in \secpar. Define $\E = \E_{\K,\X,\X\times\R}:=(\KGen,\Enc,\Dec)$ as a SKES with key space $\K $, plaintext space \X, and ciphertext space $\X\times\R$, in the following way:
\begin{enumerate}
\item $\KGen \to k$, with  $k \rand \K$;
\item $\Enc_k(x) \to \qPRP_k(x \| r)$, where $r \rand \R$;
\item $\Dec_k(y) := \qPRP^{-1}(y)\project{\X}$.
\end{enumerate}
\end{construction}

Instead of proving the qIND-qCPA security of this construction directly, we prove it instead for another construction which {\em generalizes} it. Construction~\ref{constr:qIND} has the drawback that the message length is upper bounded by the input length of the qPRP (minus the bit length of the randomness). However, like in the case of block ciphers, we can overcome this issue with a {\em mode of operation}. More specifically, we can handle arbitrary message lengths by splitting the message into blocks of a fixed length and applying the encryption algorithm of Construction~\ref{constr:qIND} independently to each message block (using the same key but new randomness for each block). This procedure is akin to a `randomized ECB mode', in the sense that each message block is processed separately, like in the ECB (Electronic Code Book) mode, but in our case the underlying cipher is inherently randomized (since we use fresh randomness for each block), so we can still achieve qCPA security. For simplicity we consider only message lengths which are multiples of the chosen blocksize. The construction can be generalized to arbitrary message lengths using standard padding techniques. Moreover, the randomness for every block can be generated efficiently using a single random seed and a pqPRNG.

\begin{construction}[{\cite[Construction 6.6]{GHS16}}]\label{constr:extension}
Let $(\qPRP,\qPRP^{-1})$ be a qPRP over $\X\times\R$ with key space \K, where \X and \R are both of size superpolynomial in \secpar. For a polynomial function $\l$, let $\M:= \X^\l$ and $\C:= (\X \times \R)^\l$. Define $\E = \E_{\K,\M,\C}:=(\KGen,\Enc,\Dec)$ as a SKES with key space $\K $, plaintext space \M, and ciphertext space $\C$, in the following way:
\begin{enumerate}
\item $\KGen \to k$, with  $k \rand \K$;
\item $\Enc_k(x_1 \| \ldots \| x_\l) \to \qPRP_k(x_1 \| r_1) \| \ldots \| \qPRP_k(x_\l \| r_\l)$,\\where $r_i \rand \R, \foral i = 1,\ldots,\l$;
\item $\Dec_k(y_1 \| \ldots \y_\l) := \qPRP^{-1}(y_1)\project{\X} \| \ldots \| \qPRP^{-1}(y_\l)\project{\X}$.
\end{enumerate}
\end{construction}

Before proving the security of Construction~\ref{constr:extension}, we need a technical lemma. Let us assume w.l.o.g. that $\X = \bin^{\m(\secpar)}, \R = \bin^{\r(\secpar)}$ for polynomial functions \m and \r, so that $\C = \bin^{\l\cdot(\m+\r)}$.

\begin{lemma}[{\cite[Lemma~6.7]{GHS16}}]\label{lem:diamond}
Let $\Phi$ be the quantum channel that takes as input an arbitrary $\m$-qubit state, attaches other $\r$ qubits in state $\ket{0}$, and then applies a permutation picked uniformly at random from $S(\bin^{\m+\r})$ to the computational basis space. Let $\Psi$ be the constant quantum channel which maps any $\m$-qubit state to the totally mixed state $\tau:=\frac{\Id}{2^{\m+\r}}$ on $\m+\r$ qubits. Then, $\| \Phi - \Psi \|_{\diamond} \leq 2^{-\r+2}$.
\end{lemma}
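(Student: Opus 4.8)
The plan is to reduce the whole computation to an explicit formula for the uniform twirl over the symmetric group. Set $N := 2^{\m+\r}$ and let $\Theta$ be the twirling channel $\Theta(\sigma) := \frac{1}{N!}\sum_{\pi \in S(\bin^{\m+\r})} P_\pi\,\sigma\,P_\pi^{\dagger}$, where $P_\pi$ is the permutation matrix $P_\pi\ket{x} = \ket{\pi(x)}$; then $\Phi$ is just ``append $\ket{0}^{\otimes\r}$, then apply $\Theta$'', i.e.\ $\Phi(\rho) = \Theta(\rho \otimes \ketbra{0^\r})$. First I would write $\Theta$ out in the computational basis: since for fixed $x$ the image $\pi(x)$ is uniform over $\bin^{\m+\r}$, we get $\Theta(\ketbra{x}) = \frac1N\Id$; and since for $x\neq y$ the ordered pair $(\pi(x),\pi(y))$ is uniform over the $N(N-1)$ pairs of distinct elements, we get $\Theta(\ket{x}\!\bra{y}) = \frac{1}{N(N-1)}(J-\Id)$, where $J := \sum_{z,w}\ket{z}\!\bra{w} = N\ketbra{+_N}$ is the all-ones matrix and $\ket{+_N} := N^{-1/2}\sum_z\ket{z}$.

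Next I would introduce a reference system $K$, an $\r$-qubit ancilla register $B$, and an arbitrary input $\rho_{AK}$ (the estimate I will obtain holds for \emph{every} density operator $\rho_{AK}$ on any $\Hilbert_A\otimes\Hilbert_K$, so the supremum defining the diamond norm is immediate). Expanding $\rho_{AK}\otimes\ketbra{0^\r}$ in the computational basis of the $(\m+\r)$-qubit register $AB$ and letting $\Theta\otimes\Id_K$ act on the $AB$-indices, the diagonal terms collapse to $\frac{\Id_{AB}}{N}\otimes\rho_K$ with $\rho_K := \tr_A(\rho_{AK})$, while the off-diagonal terms collapse to $\frac{J_{AB}-\Id_{AB}}{N(N-1)}\otimes(\omega_K-\rho_K)$, where $\omega_K := 2^{\m}\,(\bra{+_A}\otimes\Id_K)\,\rho_{AK}\,(\ket{+_A}\otimes\Id_K)$. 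Here I would use that only the basis vectors $\ket{a}\ket{0^\r}$ occur (which turns the naive prefactor $N$ into $2^{\m}$), together with $\ket{+_N}=\ket{+_A}\otimes\ket{+_B}$ and $|\braket{+_B|0^\r}|^2 = 2^{-\r}$. On the other side $(\Psi\otimes\Id_K)(\rho_{AK}) = \tau\otimes\rho_K = \frac{\Id_{AB}}{N}\otimes\rho_K$, since $\Psi$ is the constant channel ``trace out $A$ and replace by $\tau$''. Subtracting, the $\frac{\Id_{AB}}{N}\otimes\rho_K$ terms cancel and I am left with
\[
(\Phi\otimes\Id_K)(\rho_{AK}) - (\Psi\otimes\Id_K)(\rho_{AK}) \;=\; \frac{J_{AB}-\Id_{AB}}{N(N-1)}\otimes(\omega_K-\rho_K).
\]

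Finally I would bound the trace norm of this tensor product factor by factor. The operator $J_{AB}-\Id_{AB} = N\ketbra{+_N}-\Id$ has eigenvalue $N-1$ once and $-1$ with multiplicity $N-1$, so $\big\|\frac{J_{AB}-\Id_{AB}}{N(N-1)}\big\|_{\tr} = \frac{2(N-1)}{N(N-1)} = \frac{2}{N}$. For the second factor, $\omega_K$ and $\rho_K$ are both positive semidefinite with $\tr\rho_K = 1$ and $\tr\omega_K = 2^{\m}\braket{+_A|\rho_A|+_A}\le 2^{\m}$, hence $\|\omega_K-\rho_K\|_{\tr} \le \tr\omega_K + \tr\rho_K \le 2^{\m}+1 \le 2^{\m+1}$. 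Multiplying the two estimates, $\big\|(\Phi\otimes\Id_K)(\rho_{AK}) - (\Psi\otimes\Id_K)(\rho_{AK})\big\|_{\tr} \le \frac{2}{N}\cdot 2^{\m+1} = 2^{-\r+2}$, and taking the supremum over all $\rho_{AK}$ yields $\|\Phi-\Psi\|_{\diamond} \le 2^{-\r+2}$, as claimed.

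I expect the only real difficulty to be bookkeeping: carrying the reference register $K$ correctly through the twirl, i.e.\ verifying that the diagonal part of $\Theta\otimes\Id_K$ really produces $\frac{\Id_{AB}}{N}\otimes\tr_{AB}(\cdot)$, that the off-diagonal part produces the $(J-\Id)$-term tensored with the appropriate $K$-operator, and that this $K$-operator is exactly $\omega_K-\rho_K$ with the $2^{\m}$ (not $N$) prefactor --- the latter being the point where appending $\ket{0^\r}$ actually buys the $2^{-\r}$ saving. No genuinely delicate estimate is needed once the twirl formula is in hand; the factor of $4$ in $2^{-\r+2}$ is merely the triangle inequality $\|\omega_K-\rho_K\|_{\tr}\le\tr\omega_K+\tr\rho_K$ together with $2^{\m}+1\le2^{\m+1}$, and a sanity check is that dropping the reference ($\omega_K$ replaced by the scalar $2^{\m}\braket{+_A|\rho_A|+_A}$) recovers the tighter $2^{-\r+1}$ bound on $\|\Phi(\rho)-\tau\|_{\tr}$.
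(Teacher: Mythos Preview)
Your proof is correct and follows essentially the same approach as the paper's. Both arguments split the permutation twirl into its action on diagonal versus off-diagonal matrix units, identify the difference $(\Phi-\Psi)\otimes\Id$ as a tensor product of the operator $\frac{J-\Id}{N(N-1)}$ on the ciphertext register with an operator on the reference, and bound the two trace norms separately to get $\frac{2}{N}\cdot(2^{\m}+1)\le 2^{-\r+2}$. Your packaging is a bit cleaner: you compute the twirl formula $\Theta(\ket{x}\!\bra{y})=\frac{J-\Id}{N(N-1)}$ once up front and recognise the reference-side operator directly as $\omega_K-\rho_K$ with $\omega_K$ manifestly positive of trace $2^{\m}\braket{+_A|\rho_A|+_A}\le 2^{\m}$, whereas the paper first restricts to a pure purification, expands in coordinates, and then bounds the analogous operator via Cauchy--Schwarz; but the underlying computation is identical.
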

\begin{proof}
In order to consider the fact that the $\m$-qubit input state might be entangled with something else, we have to start with a purification of such a state. This is a bipartite pure $2\m$-qubit state $\ket{\phi}_{XY} = \sum_{x,y} a_{x,y} \ket{x}_X \ket{y}_Y$ whose $\m$-qubit $Y$ register is input into the channel and gets transformed into $\Id_X \otimes \Phi( \ketbra{\phi} ) = \tr_{Z} \ketbra{\psi}$, where:
$$
\ket{\psi} := \sum_{x \in \bin^\m,y \in \bin^\m,\p \in S(\bin^{\m+\r})} a_{x,y} \ket{x}_X \ket{\p(y \| 0 \ldots 0)}_C \ket{\pi}_{Z} .
$$
By definition of the diamond norm, we have to show that for any $2\m$-qubit state $\rho$, we have that $\|(\Id \otimes \Phi)(\rho) - (\Id \otimes \Psi)(\rho)\|_{\tr} \leq 2^{-\r+2}$. Due to the convexity of the trace distance, we may assume that $\rho = \ketbra{\phi}$ is pure with $\ket{\phi}_{XY} = \sum_{x,y} a_{x,y} \ket{x}_X \ket{y}_Y$. Hence, we obtain:
\begin{align*}
(\Id_X &\otimes \Phi)( \ketbra{\phi} ) = \tr_{Z} \ketbra{\psi} \\
&=\frac{1}{2^{\m+\r}!} \sum_{x,x',y,y',\p} a_{x,y} \overline{a_{x',y'}} \ketbra{x}{x'}_X \otimes \ket{\p(y \| 0 \ldots 0)}\bra{ \p(y' \| 0 \ldots 0)}_C\\
&=\frac{1}{2^{\m+\r}!} \sum_{x,x',y} a_{x,y} \overline{a_{x',y}} \ketbra{x}{x'}_X \otimes \sum_\p \ket{\p(y \| 0 \ldots 0)}\bra{ \p(y \| 0 \ldots 0)}_C\\
&\quad + \frac{1}{2^{\m+\r}!} \sum_{x,x',y \neq y'} a_{x,y} \overline{a_{x',y'}} \ketbra{x}{x'}_X \otimes \sum_\p \ket{\p(y \| 0 \ldots 0)}\bra{ \p(y' \| 0 \ldots 0)}_C \\
\end{align*}
\begin{align*}
&=\sum_{x,x',y} a_{x,y} \overline{a_{x',y}} \ketbra{x}{x'}_X \otimes \frac{1}{2^{\m+\r}} \sum_z \ketbra{z}{z}_C\\
&\quad + \sum_{x,x',y \neq y'} a_{x,y} \overline{a_{x',y'}} \ketbra{x}{x'}_X \otimes \frac{1}{2^{\m+\r}(2^{\m+\r}-1)}\sum_{z \neq z'} \ketbra{z}{z'}_C\\
&=\tr_Y \ketbra{\phi} \otimes \tau_C + \chi_{XC}\\
&=(\Id_X \otimes \Psi) (\ketbra{\phi}) + \chi_{XC},
\end{align*}
where we defined the `difference state':
$$
\chi_{XC} := \sum_{x,x',y \neq y'} a_{x,y} \overline{a_{x',y'}} \ket{x}\!\!\bra{x'}_X \otimes \frac{1}{2^{\m+\r}(2^{\m+\r}-1)}\sum_{z \neq z'} \ket{z}\!\!\bra{z'}_C .
$$
In order to conclude, it remains to show that $\| \chi_{XC} \|_{\tr} \leq 2^{-\r+2}$. For the $C$-register $\chi_C = \frac{1}{2^{\m+\r}(2^{\m+\r}-1)}\sum_{z \neq z'} \ket{z}\!\!\bra{z'}_C$, one can verify that the $2^{\m+\r}$ eigenvalues are $(\lambda \cdot (2^{\m+\r}-1),-\lambda,-\lambda,\ldots,-\lambda)$ where $\lambda:=\frac{1}{2^{\m+\r}(2^{\m+\r}-1)}$. Hence, the trace norm (which is the sum of the absolute eigenvalues) is exactly $\lambda \cdot 2(2^{\m+\r}-1) = 2^{-\m-\r+1}$.

For the $X$-register, we split $\chi_X$ into two parts $\chi_X = \xi_X - \xi'_X$ where:
\begin{align*}
\xi_X &:= \sum_{x,x'} \ket{x}\!\!\bra{x'} \sum_{y,y'} a_{x,y} \overline{a_{x',y'}};\\
\xi'_X &:= \sum_{x,x'} \ket{x}\!\!\bra{x'} \sum_{y} a_{x,y} \overline{a_{x',y}},
\end{align*}
and use the triangle inequality for the trace norm $\|\chi_X\|_{\tr} = \|\xi_X - \xi'_X\|_{\tr} \leq \|\xi_X\|_{\tr} + \|\xi'_{X}\|_{\tr}$. Observe that $\| \xi_X \|_{\tr} = \| \sum_{x,y} a_{x,y} \ket{x} \sum_{x',y'} \overline{a_{x',y'}} \bra{x'} \|_{\tr} = \| \ketbra{s} \|_{\tr}$ for the (non-normalized) vector $\ket{s} := \sum_{x,y} a_{x,y} \ket{x}$. Hence, the trace norm $\| \xi_X \|_{\tr} = | \braket{s|s} | = \sum_x | \sum_y a_{x,y} |^2 \leq \sum_x \sum_y |a_{x,y}|^2 \cdot 2^\m = 2^\m$ by the Cauchy-Schwarz inequality and the normalization of the $a_{x,y}$'s. Furthermore, we note that $\xi'_X$ is exactly the reduced density matrix of $\ket{\phi}_{XY}$ after tracing out the $Y$ register. Hence, $\xi'_X$ is positive semi-definite and its trace norm is equal to its trace which is $1$. In summary, we have shown that:
\begin{align*}
\| \chi_{XC} \|_{\tr} &= \| \chi_X \|_{\tr} \cdot \| \chi_C \|_{\tr} \leq (\|\xi_X - \xi'_X \|_{\tr} ) \cdot 2^{-\m-\r+1}\\
&\leq (\|\xi_X\|_{\tr} + \| \xi'_X \|_{\tr} ) \cdot 2^{-\m-\r+1} \leq (2^\m + 1) \cdot 2^{-\m-\r+1} \leq 2^{-\r+2}.
\end{align*}\endproof
\end{proof}

If we consider a slightly different encryption channel $\Phi^T$ which still maps $\m$ qubits to $\m+\r$ qubits but where the permutation $\p$ is not picked uniformly from the whole set $S(\bin^{\m+\r})$, but instead we are guaranteed that a certain subset $T \subset \bin^{\m+\r}$ of outputs never occurs in these permutations, we can see such permutations as picked uniformly at random from a smaller set $S(\bin^{\m+\r} \setminus T)$. In this setting, we are interested in the distance of the channel (modeling the encryption operation) $\Phi^T$ from the slightly different constant channel $\Psi^T$ which maps all inputs to the $(\m+\r)$-qubit state $\tau^T$ which is completely mixed on the smaller set $\bin^{\m+\r} \setminus T$ of basis elements. The set $T$ represents `forbidden' values that the encryption algorithm does never produce if we assume certain conditions on the randomness used. This technique will be used in the proof of the next theorem. By modifying slightly the proof of Lemma~\ref{lem:diamond} we get the following.

\begin{corollary}[{\cite[Corollary~6.8]{GHS16}}]\label{cor:diamond}
Let $\Phi^T,\Psi^T$ be quantum channels described as above. Then:
\begin{equation}\label{eq:corbound}
\| \Phi^T - \Psi^T \|_{\diamond} \leq \frac{4}{2^{\r} - |T|/2^\m}.
\end{equation}
\end{corollary}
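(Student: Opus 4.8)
The plan is to replay the proof of Lemma~\ref{lem:diamond} almost verbatim, making exactly one structural change: everywhere the uniform average over the full symmetric group $S(\bin^{\m+\r})$ appears, it is replaced by the uniform average over $S(\bin^{\m+\r}\setminus T)$, and correspondingly the quantity $2^{\m+\r}$ (the dimension of the output space, equivalently the support size of the target maximally mixed state) is replaced by $N' := 2^{\m+\r}-|T|$. As in Lemma~\ref{lem:diamond}, by convexity of the trace distance and the definition of the diamond norm it suffices to fix an arbitrary purification $\ket{\phi}_{XY} = \sum_{x,y} a_{x,y}\ket{x}_X\ket{y}_Y$ of the input to $\Id_X\otimes\Phi^T$ (on a register of dimension $\geq 2^{\m}$) and to show $\| (\Id_X\otimes\Phi^T)(\ketbra{\phi}) - (\Id_X\otimes\Psi^T)(\ketbra{\phi}) \|_{\tr} \leq \frac{4}{2^\r - |T|/2^\m}$. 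Here one uses (harmlessly, up to relabelling) that the zero-padded input strings $w := y\|0\ldots0$ lie in $\bin^{\m+\r}\setminus T$, and that $|T|\leq 2^{\m+\r}-2^\m$ so that $N'\geq 2^\m$ and the model makes sense.

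First I would expand $(\Id_X\otimes\Phi^T)(\ketbra{\phi})$ exactly as in the proof of Lemma~\ref{lem:diamond}, sorting the terms according to whether the two $Y$-indices $y,y'$ (hence $w,w'$) agree. The only two averages that change are: for fixed $w\in\bin^{\m+\r}\setminus T$, since $\p(w)$ is uniform on $\bin^{\m+\r}\setminus T$ we get $\frac{1}{N'!}\sum_{\p}\ketbra{\p(w)} = \frac{1}{N'}\sum_{z\in\bin^{\m+\r}\setminus T}\ketbra{z} = \tau^T$; and for $w\neq w'$ we get $\frac{1}{N'!}\sum_{\p}\ket{\p(w)}\!\!\bra{\p(w')} = \frac{1}{N'(N'-1)}\sum_{z\neq z'}\ket{z}\!\!\bra{z'}$, the sums running over $\bin^{\m+\r}\setminus T$. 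The diagonal-in-$y$ contribution then assembles to $\tr_Y\ketbra{\phi}\otimes\tau^T_C = (\Id_X\otimes\Psi^T)(\ketbra{\phi})$, and the off-diagonal contribution is a difference state of product form $\chi^T_{XC} = \chi_X\otimes\chi^T_C$, where $\chi_X$ is \emph{literally the same} operator as in Lemma~\ref{lem:diamond} (it depends only on the amplitudes $a_{x,y}$ and never sees $T$), and $\chi^T_C = \frac{1}{N'(N'-1)}\sum_{z\neq z'}\ket{z}\!\!\bra{z'}$ on the $N'$-dimensional space spanned by $\bin^{\m+\r}\setminus T$.

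It remains to bound $\|\chi^T_{XC}\|_{\tr} = \|\chi_X\|_{\tr}\cdot\|\chi^T_C\|_{\tr}$. The operator $\chi^T_C$ equals $\lambda'(J-\Id)$ restricted to the $N'$-dimensional space, with $\lambda' := \frac{1}{N'(N'-1)}$ and $J$ the all-ones matrix, so its eigenvalues are $\lambda'(N'-1)$ (once) and $-\lambda'$ (with multiplicity $N'-1$), whence $\|\chi^T_C\|_{\tr} = 2(N'-1)\lambda' = \frac{2}{N'}$. The factor $\|\chi_X\|_{\tr}\leq 2^\m+1$ is obtained exactly as in Lemma~\ref{lem:diamond}: writing $\chi_X = \xi_X - \xi'_X$, the Cauchy--Schwarz inequality and normalization of the $a_{x,y}$ give $\|\xi_X\|_{\tr}\leq 2^\m$, while $\xi'_X = \tr_Y\ketbra{\phi}$ is a density matrix and hence has trace norm $1$. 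Multiplying and using $2^\m+1\leq 2^{\m+1}$ together with $N' = 2^{\m+\r}-|T| = 2^\m\bigl(2^\r - |T|/2^\m\bigr)$ yields $\|\chi^T_{XC}\|_{\tr}\leq (2^\m+1)\cdot\frac{2}{N'}\leq \frac{2^{\m+2}}{2^\m(2^\r - |T|/2^\m)} = \frac{4}{2^\r - |T|/2^\m}$, which is the bound~\eqref{eq:corbound}. There is no real obstacle here: the one point deserving a line of care is checking that the replacement $S(\bin^{\m+\r})\to S(\bin^{\m+\r}\setminus T)$ leaves the $X$-register analysis untouched and affects only the two normalization constants in the $C$-register, and noting the side hypothesis that the zero-padded inputs avoid $T$ and that $|T|$ is bounded suitably away from $2^{\m+\r}$ (otherwise the right-hand side of~\eqref{eq:corbound} exceeds the trivial bound $2$ and the statement is vacuous). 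As a sanity check, setting $T = \emptyset$ recovers $\|\Phi-\Psi\|_{\diamond}\leq 2^{-\r+2}$ of Lemma~\ref{lem:diamond}.
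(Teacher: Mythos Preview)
Your proposal is correct and is precisely what the paper intends: it explicitly states that the corollary follows ``by modifying slightly the proof of Lemma~\ref{lem:diamond}'', and your argument carries out exactly that modification---replacing $S(\bin^{\m+\r})$ by $S(\bin^{\m+\r}\setminus T)$ and the normalization $2^{\m+\r}$ by $N'=2^{\m+\r}-|T|$ throughout, while noting that the $X$-register analysis is unaffected. Your side remarks about the inputs lying in $\bin^{\m+\r}\setminus T$ and the sanity check $T=\emptyset$ recovering Lemma~\ref{lem:diamond} are appropriate and show you have tracked where the modification actually bites.
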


We can now prove the qIND-qCPA security of Construction~\ref{constr:extension}.

\begin{theorem}[{\cite[Theorem 6.9]{GHS16}}]\label{thm:extension}
Let $\E$ be the SKES from Construction~\ref{constr:extension} implemented through a (weak) qPRP family $(\qPRP,\qPRP^{-1})$. Then $\E$ in a qIND-qCPA SKES.
\end{theorem}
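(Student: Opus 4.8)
The plan is to prove qIND-qCPA security by a short sequence of game hops that first removes the pseudorandom permutation and then kills all dependence of the ciphertexts on the plaintexts, using exactly the diamond-norm estimates of Lemma~\ref{lem:diamond} and Corollary~\ref{cor:diamond}. \textbf{Step 1 (replace the qPRP).} First I would use the pseudorandomness of $(\qPRP,\qPRP^{-1})$ to replace $\qPRP_k$, throughout the whole $\gameqINDqCPA$ experiment, by a permutation $\p\rand S(\X\times\R)$ chosen uniformly at random. Concretely, a qIND-qCPA adversary $\A$ is turned into a distinguisher $\D'$ for the qPRP: $\D'$ simulates the experiment, sampling the per-block randomness itself, answering $\A$'s type-$(2)$ encryption queries and producing the challenge ciphertext by building the type-$(2)$ oracle $\Encq$ from its permutation oracle, and finally outputs $1$ iff $\A$ wins. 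When $\D'$'s oracle is $\ket{\qPRP_k}$ it perfectly simulates the real experiment; when it is a random permutation it simulates an idealised experiment $\game_0$; hence qPRP security bounds the gap between the two winning probabilities by $\negl$. (If the block-randomness is instead derived from a single pqPRNG seed, as suggested after Construction~\ref{constr:extension}, one inserts one more hop replacing the pqPRNG stream by true randomness, again at negligible cost.)

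\textbf{Step 2 (the channel argument).} In $\game_0$ every ciphertext block has the form $\p(x_i\|r_i)$ with fresh $r_i\rand\R$, and all blocks --- across every CPA query of $\M$ and of $\D$ as well as the challenge --- are pushed through the single random permutation $\p$. Order these block-encryptions chronologically; there are at most $N=\poly(\secpar)$ of them. I would run a hybrid argument over $\game_0 = H_0, H_1, \ldots, H_N$, where in $H_j$ the first $j$ block-encryptions are performed honestly (each consuming a fresh input/output pair of $\p$) while the remaining ones are replaced by the \emph{constant} channel that discards its plaintext block and outputs a fresh totally mixed state on the set of $(\m+\r)$-bit strings not yet used as a ciphertext block. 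To bound $|H_{j}-H_{j+1}|$ one isolates the $(j{+}1)$-st block-encryption: conditioned on the at most $j$ input/output pairs of $\p$ already fixed, and on the event that $r_{j+1}$ is distinct from all previously used randomness (which fails with probability at most $N/|\R|=\negl$), this channel is precisely the channel $\Phi^T$ of Corollary~\ref{cor:diamond} with $|T|\le N-1$ (after absorbing the fixed shift $X^{r_{j+1}}$ into the permutation), and is therefore $\tfrac{4}{2^{\r}-|T|/2^{\m}}$-close in diamond norm to the corresponding constant channel $\Psi^T$ --- the single-block base case being exactly Lemma~\ref{lem:diamond}. Since $\R$ has superpolynomial size this is negligible, and since diamond-norm distances compose additively along a hybrid of adaptively queried channels, even in the presence of side registers, $|H_0-H_N|\le N\cdot\tfrac{4}{2^{\r}-N/2^{\m}}+\negl=\negl$.

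\textbf{Step 3 (conclude).} In $H_N$ every answer to a CPA query, and the challenge ciphertext itself, is a totally mixed state produced independently of all plaintexts, in particular of the secret bit $b$; hence $\D$'s output is statistically independent of $b$ and the advantage in $H_N$ is exactly $0$. Chaining this with the bounds of Steps~1 and~2 gives $\advqINDqCPA\le\negl$, as claimed.

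\textbf{Main obstacle.} The delicate part is the bookkeeping in Step~2. Because the CPA and challenge queries are interleaved adaptively and the challenge plaintexts may be entangled with the adversary's environment register $\sigma$, the hybrid distances must be measured in diamond norm rather than trace norm, and I must argue that each successive block-encryption, once we condition on the previously consumed input/output pairs of $\p$ and on the fresh randomness actually being fresh, really does satisfy the hypotheses of Corollary~\ref{cor:diamond} with a forbidden set $T$ of only polynomial size. No new inequality is needed --- Lemma~\ref{lem:diamond}/Corollary~\ref{cor:diamond} already provide the per-block estimate --- so the work lies in organising the hybrids and the conditioning cleanly. A secondary subtlety is in Step~1: the simulation of the type-$(2)$ encryption oracle inside the reduction uses the permutation oracle in both directions, which is the point where one must be careful about which variant of the qPRP assumption is invoked.
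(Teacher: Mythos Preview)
Your proposal follows the same strategy as the paper: hop from the qPRP to a truly random permutation, force distinctness of the per-block randomness, and then use Lemma~\ref{lem:diamond}/Corollary~\ref{cor:diamond} block by block to argue that each ciphertext block is diamond-norm close to a totally mixed state, so the challenge is independent of $b$. The paper organises the hybrids slightly differently---it inserts an extra hop ($\game_3$) that hands the randomness values $r_1,\ldots,r_\l$ to the adversary, and it applies Corollary~\ref{cor:diamond} only to the $\l$ challenge blocks while absorbing the entire first qCPA phase into a single forbidden set $T$---but the substance is the same.

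One point needs correcting: your bound $|T|\le N-1$ is too optimistic. Each honestly processed block applies $\p$ as a type-(2) unitary to a superposition over all of $\X\times\{r_i\}$, so it constrains up to $2^{\m}$ input/output pairs of $\p$, not one; the phrase ``each consuming a fresh input/output pair of $\p$'' is classical thinking. The paper therefore takes the conservative bound $|T|=q\cdot\l\cdot 2^{\m}$. This does not break your argument: Corollary~\ref{cor:diamond} only requires $|T|/2^{\m}$ to be polynomial, which it still is, so the per-block bound remains $\frac{4}{2^{\r}-\poly(\secpar)}=\negl$ and your hybrid goes through with the corrected count. On your secondary subtlety, the paper explicitly argues (just after Corollary~\ref{cor:qIND}) that weak qPRPs suffice because the adversary never gets direct access to the inverse; this is where the fact that $\Encq$ is only evaluated as an oracle by the challenger, not implemented by the adversary, is used.
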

\begin{proof}
We want to show that no \QPT adversary can win the qIND-qCPA game with probability substantially better than guessing. We first transform the game through a short game-hopping sequence into a computationally equivalent game for which we can bound the success probability of the quantum distinguisher \D.

\

$\mathbf{Game_0:}$ this is the original qIND-qCPA game.

\

$\mathbf{Game_1:}$ this is like $\game_0$, but instead of using a permutation drawn from the qPRP family $\qPRP$, a random permutation $\p \in S(\bin^{\m+\r})$ is chosen from the set of all permutations over $\bin^{\m+\r}$. The difference in the success probability of \D winning one or the other of these two games is negligible, otherwise, we could use \D to distinguish a random permutation drawn from $\qPRP$ from one drawn from $S(\bin^{\m+\r})$. This would contradict the assumption that $\qPRP$ is a qPRP. 

\

$\mathbf{Game_2:}$ this is like $\game_1$, but \D is guaranteed that the randomness used for each encryption query are $\l$ new random $\r$-bit strings that were not used before. In other words, the challenger keeps track of all random values used so far and excludes those when sampling a new randomness. Since in $\game_1$ the same randomness is sampled twice only with negligible probability, the probabilities of winning these two games differ at most negligibly.

\

$\mathbf{Game_3:}$ this is like $\game_2$, except that the answer to each query asked by \D also contains the randomness $r_1,\ldots,r_{\l}$ used by the challenger for answering that query. Clearly, \D's probability of winning this game is at least the probability of winning $\game_2$. 

\

When $\game_3$ starts, the qIND message generator \M (where $\A=(\M,\D)$ is the qIND adversary as in Definition~\ref{def:qINDadv}) chooses two different plaintext states. One of them is chosen at random and sent back encrypted with fresh randomness values $\hat{r}_1, \ldots,\hat{r}_{\l}$. Let $Q$ denote the set of $q \cdot \l = \poly(\secpar)$ query values used during the previous $q$ queries to $\ket{\Enc_k}$ in the first learning qCPA-phase. We have to consider that from this phase, \D knows a set $T \subset \bin^{\m+\r}$ of `taken' outputs (ciphertexts), i.e., he knows that any $\p(x\|\hat{r}_i)$ will not take one of these values, as $\hat{r}_i$ has not been used before. So, from the adversary's point of view, $\p$ is a permutation randomly chosen from $S'$, the set of those permutations over $\bin^{\m+\r}$ that fix these $|T|$ values. In order to simplify the proof, we will consider a very conservative bound where $|T| = q \cdot \l \cdot 2^\m$, and the size of $S'$ is $|S'| = (2^{\m+\r}-|T|)!$. Notice that this bound is very conservative because it assumes that the adversary learns $2^\m$ different (classical) ciphertexts for each one of the $q \cdot \l$ `taken' randomness values but, as we will see, this knowledge is still insufficient to win the game.

By construction, the encryption of an $(\l \cdot \m)$-qubit (possibly mixed) state $\rho$ is performed in $\l$ separate blocks of $\m$ qubits each. We are guaranteed that fresh randomness is used in each block, hence it follows from Corollary~\ref{cor:diamond} that $\Enc_k(\rho)$ is negligibly close to the ciphertext state where the first $\m+\r$ qubits are replaced with the completely mixed state (by noting that $\frac{|T|}{2^\m} = \m \cdot q$ is polynomial in $\secpar$ in our case, and hence the right-hand side of~\eqref{eq:corbound} is negligible). Another application of Corollary~\ref{cor:diamond} gives negligible distance to the ciphertext state where the first $2(\m+\r)$ qubits are replaced with the completely mixed state, etc. After $\l$ applications of Corollary~\ref{cor:diamond}, we have shown that $\Enc_k(\rho)$ is negligibly close to the totally mixed state on $\l(\m+\r)$ qubits. As this argument can be made for any plaintext state $\rho$, we have shown that, from $\D$'s point of view, all encrypted states have negligible distance from the totally mixed state, and therefore cannot be distinguished. This holds regardless of any additional query during the second qCPA phase, because a polynomial number of such queries cannot change this distance by more than a negligible amount.
\endproof
\end{proof}

\begin{corollary}[{\cite[Theorem 6.9]{GHS16}}]\label{cor:qIND}
Let $\E$ be the SKES from Construction~\ref{constr:qIND} implemented through a (weak) qPRP family $(\qPRP,\qPRP^{-1})$. Then $\E$ in a qIND-qCPA SKES.
\end{corollary}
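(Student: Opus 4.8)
The plan is to observe that Construction~\ref{constr:qIND} is nothing but the single-block special case of Construction~\ref{constr:extension}, obtained by setting the block count $\l=1$: a plaintext $x \in \X$ is padded with a fresh uniform randomness string $r \rand \R$ and the pair $x\|r$ is pushed through the (weak) qPRP, while decryption applies $\qPRP^{-1}$ and projects onto the $\X$-register. The key space, plaintext space, and ciphertext space of Construction~\ref{constr:qIND} coincide with those of Construction~\ref{constr:extension} for $\l=1$ (namely $\M=\X^1=\X$ and $\C=(\X\times\R)^1=\X\times\R$), and the encryption and decryption algorithms agree verbatim. Therefore the statement follows immediately from Theorem~\ref{thm:extension}: any \QPT adversary breaking the qIND-qCPA security of Construction~\ref{constr:qIND} is, literally, a qIND-qCPA adversary against the $\l=1$ instantiation of Construction~\ref{constr:extension}, whose advantage is negligible by that theorem.

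For the reader's convenience one can also unwind the underlying argument directly in this simpler setting, which is what I would include as the body of the proof. First replace the qPRP by a truly random permutation $\p \in S(\bin^{\m+\r})$, losing only a negligible amount by the defining property of a (weak) qPRP (note that no inverse oracle is needed, since decryption is never given to the adversary and the challenger can implement $\qPRP$ directly). Next, move to a game in which the challenger never reuses a randomness value across the polynomially many encryption queries, which changes the winning probability only negligibly since collisions among uniform $\r$-bit strings are negligibly likely; and then hand the adversary the randomness used in each answer for free, which can only help. At this point the encryption of the (possibly entangled) challenge plaintext block uses a fresh $r$, so from the adversary's viewpoint $\p$ is uniform over the permutations fixing the polynomially many already-observed ciphertext values; a single application of Corollary~\ref{cor:diamond} shows that the resulting ciphertext state is within negligible trace distance of the totally mixed state on the relevant basis, independently of the plaintext, so the distinguisher's advantage is negligible.

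The only genuinely delicate point is the one already handled by Corollary~\ref{cor:diamond} (derived from Lemma~\ref{lem:diamond}): bounding the diamond-norm distance between the ``encrypt with a fresh random permutation, padding with $\ket{0}^{\otimes\r}$'' channel and the constant channel outputting the totally mixed state, taking into account the ``forbidden'' output set $T$ learned during the qCPA phase. Since I am allowed to invoke that corollary, and since $|T|/2^\m$ is polynomial in \secpar in the single-block case (so the right-hand side of~\eqref{eq:corbound} is negligible), there is no real obstacle here; the proof is essentially a citation of Theorem~\ref{thm:extension} with $\l=1$, and the main thing to get right is simply checking that the two constructions coincide under that specialization.
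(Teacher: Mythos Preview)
Your proposal is correct and matches the paper's approach exactly: the paper explicitly states that Construction~\ref{constr:extension} generalizes Construction~\ref{constr:qIND}, proves Theorem~\ref{thm:extension} for the former, and then records the corollary as the $\l=1$ special case without further argument. Your optional unwinding of the game-hopping sequence is accurate and mirrors the proof of Theorem~\ref{thm:extension} in the single-block setting.
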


Notice how the security of Constructios~\ref{constr:qIND} and~\ref{constr:extension} does not require {\em strong} qPRPs. The reason is that, even if we are considering type-$(2)$ transformations (which could be used to compute $\p^{-1}$), these transformations are never implemented directly by the adversary, but only evaluated as oracles. And since we only consider CPA quantum oracles here, and not CCA, the adversary is never granted access to the decryption oracle. Hence, $\p^{-1}$ is not needed by the reduction. However, extending the constructions to CCA1 security {\em would} require strong qPRPs.

\subsection{Weak Quantum Indistinguishability}

Before providing further results related to the qIND notion, we introduce here a slight relaxation of qIND which might be of use in certain contexts which we explain in this section. The idea is to restrict the power of the adversary in the qIND notion, by only allowing quantum states of a certain form for the qIND challenge phase. This notion was originally introduced in~\cite{GHS16} as `qIND' but, as already mentioned at the beginning of this section, we relabel it as `wqIND' (where `w' stands for `weak') for consistency with our framework.

We start by defining the `restricted' quantum states which can be used by the adversary in the new security notion.

\begin{definition}[Classical Description of Quantum States]\label{def:classrepr}
A {\em classical description} of a quantum state $\rho$ is a (classical) bit string $\desc{\rho}$ describing a quantum circuit which (takes no input but starts from a fixed initial state $\ket{0}$ and) outputs $\rho$.
\end{definition}

We deviate here from the traditional meaning of `classical description' referring to individual numerical entries of the density matrix. The reason is that Definition~\ref{def:classrepr} also covers the cases where those numerical entries are not easily computable, as long as we can give an explicit constructive procedure for that state. Clearly, every pure quantum state $\ket{\phi}$ has a classical description given by a description of the quantum circuit which implements the unitary that maps $\ket{0}$ to $\ket{\phi}$. The classical description of a mixed state $\rho_A$ is given by the circuit which first creates a purification $\ket{\phi}_{AR}$ of $\rho_A$ and then only outputs the $A$ register. Note that a state admitting a classical description cannot be entangled with any other system. We say that a state has an {\em efficient classical representation} if it has a classical representation, and such representation has a bit size at most polynomial in some security parameter \secpar. In this case, we assume the existence of a (fixed, public, canonical) \QPT algorithm \qbuild which, given as input a classical description of a quantum state, outputs that state, i.e., $\qbuild(\desc{\rho}) \to \rho$ (the notation for the output is probabilistic, because $\rho$ could be a mixed state, i.e., a distribution on pure states).

In classical models, there is no difference between sending a description of a message or the message itself. In the quantum world, there is a big difference between these two cases, as the latter allows the adversary \A to establish entanglement of the message(s) with other registers. This is not possible when using classical descriptions. It might intuitively appear that the more general model considered for the qIND notion is more natural. However, the above scenario models the case where \A is well aware of the message that is encrypted, but the message is not constructed by \A himself. Giving \A the ability to choose the challenge messages for the qIND game models the worst case that might happen: \A knows that the ciphertext he receives is the encryption of one out of the two messages that he can distinguish best. This closely reflects the intuition behind the classical IND notion: in that game, the adversary is allowed to send the two messages not because in the real world he would be allowed to do so, but because we want to achieve security even for the best possible choice of messages from the adversary's perspective. Hence, the model using classical descriptions of quantum states is a valid alternative.

\begin{experiment}[$\gamewqINDqCPA$]\label{expt:wqINDqCPA}
Let $\E$ be a SKES, and $\A:=(\M,\D)$ a qIND adversary. The {\em wqIND-qCPA experiment} proceeds as follows:
\begin{algorithmic}[1]
\State \textbf{Input:} $\secpar \in \NN$
\State $k \from \KGen$
\State $(\desc{\phi^0},\desc{\phi^1},\sigma) \from \M^\Encq$
\State $b \rand\bin$
\State $\phi^b \from \qbuild(\desc{\phi^b})$
\State $\psi \from \Encq(\phi^b)$
\State $b' \from \D^\Encq(\psi,\sigma)$
\If{$b = b'$}
	\State \textbf{Output:} $1$
\Else
	\State \textbf{Output:} $0$
\EndIf
\end{algorithmic}
The {\em advantage of \A} is defined as:
$$
\advwqINDqCPA := \Pr \left[ \gamewqINDqCPA \to 1 \right] - \half .
$$
\end{experiment}

\begin{definition}[Weak Quantum Indistinguishability of Ciphertexts Under Quantum Chosen Plaintext Attack (wqIND-qCPA)]\label{def:wqINDqCPA}
A SKES $\E$ has {\em weakly quantum indistinguishable encryptions under quantum chosen plaintext attack (or, it is wqIND-qCPA secure)} iff, for any qIND adversary $\A$ it holds: $\advwqINDqCPA \leq \negl$.
\end{definition}

Clearly, qIND-qCPA is at least as strong as wqIND-qCPA, because quantum states admitting an efficient classical description (used in wqIND) are just a special case of arbitrary quantum plaintext states (used in qIND).

\begin{theorem}[{\cite[Theorem 3.3]{GHS16}}]\label{thm:qINDqCPAtowqINDqCPA}
If a SKES is qIND-qCPA secure, then it is also wqIND-qCPA secure.
\end{theorem}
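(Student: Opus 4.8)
The plan is to prove the statement by a direct reduction, exploiting the fact that the wqIND-qCPA game is literally a restriction of the qIND-qCPA game: in the former the adversary commits to two quantum plaintexts via classical descriptions $\desc{\phi^0},\desc{\phi^1}$ that are then reconstructed by \qbuild, whereas in the latter the adversary may hand the challenger arbitrary (possibly entangled) quantum states $\phi^0,\phi^1$ directly. Since any state produced by \qbuild from a classical description is by Definition~\ref{def:classrepr} a stand-alone state that cannot be entangled with any register kept by the adversary, every wqIND-qCPA adversary can be repackaged as a qIND-qCPA adversary with no loss.

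Concretely, I would start from an arbitrary qIND adversary $\A=(\M,\D)$ attacking \E in the wqIND-qCPA sense, and build a qIND adversary $\A'=(\M',\D')$ for the qIND-qCPA game as follows. The message generator $\M'$ runs $\M^\Encq$ (forwarding $\M$'s quantum chosen-plaintext queries to its own type-$(2)$ encryption oracle), obtains $(\desc{\phi^0},\desc{\phi^1},\sigma)$, then computes $\phi^0 \from \qbuild(\desc{\phi^0})$ and $\phi^1 \from \qbuild(\desc{\phi^1})$, and outputs $(\phi^0,\phi^1,\sigma')$ where $\sigma'$ is $\sigma$ together with (a copy of) the two classical descriptions, if needed by $\D$. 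The distinguisher $\D'$ on input $(\psi,\sigma')$ simply runs $\D^\Encq(\psi,\sigma)$, again relaying its oracle queries, and outputs whatever $\D$ outputs. Both stages of $\A'$ are \QPT because \qbuild is \QPT and the descriptions have size $\poly(\secpar)$.

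The key step is to argue that $\A$'s internal view inside $\A'$ in $\gameqINDqCPA[\A']$ is distributed exactly as in $\gamewqINDqCPA$. The only structural difference is that the qIND challenger additionally traces out $\phi^{1-b}$; but since $\phi^{1-b}$ was freshly generated by \qbuild and is not entangled with $\psi$, $\sigma'$, or anything $\D$ holds, this partial trace acts on a decoupled register and changes nothing in $\D$'s view. Likewise, $\Encq(\phi^b)$ in the qIND game coincides with "$\phi^b \from \qbuild(\desc{\phi^b})$ followed by $\psi \from \Encq(\phi^b)$" in the wqIND game. Hence the two experiments produce the same output bit with the same probability, so $\advwqINDqCPA = \advqINDqCPA[\A']$. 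If \E is qIND-qCPA secure, the right-hand side is negligible for all qIND adversaries, in particular for $\A'$, giving $\advwqINDqCPA \leq \negl$ for every $\A$, which is exactly wqIND-qCPA security.

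The main obstacle — really the only nontrivial point — is the quantum-specific bookkeeping: one must verify that passing the classical descriptions (rather than the states themselves) through the environment register, relaying the type-$(2)$ CPA oracle faithfully, and handling the challenger's trace-out step all preserve the joint state exactly, using that states with classical descriptions are unentangled (and that \qbuild is a fixed public algorithm, so $\M'$ and $\D'$ can both invoke it consistently). Everything else is routine, and no computational assumption beyond the hypothesized qIND-qCPA security is needed.
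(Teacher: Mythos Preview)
Your proposal is correct and takes essentially the same approach as the paper. The paper does not actually give a formal proof here: it simply observes (in the sentence preceding the theorem) that quantum states admitting an efficient classical description are a special case of arbitrary quantum plaintext states, and cites \cite{GHS16} for the formal statement. Your reduction is the natural unfolding of that observation, and your handling of the trace-out step (using that \qbuild-generated states are unentangled, per Definition~\ref{def:classrepr}) is exactly the right quantum bookkeeping.
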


\begin{corollary}[qIND $\implies$ wqIND]\label{cor:qINDtowqIND}
If a SKES is qIND secure, then it is also wqIND secure.
\end{corollary}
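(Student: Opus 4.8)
The plan is to prove that qIND security implies wqIND security by a direct, view-preserving reduction: every wqIND adversary can be converted into a qIND adversary with exactly the same success probability, so if no qIND adversary has non-negligible advantage then neither does any wqIND adversary. This mirrors the argument behind Theorem~\ref{thm:qINDqCPAtowqINDqCPA}, specialised to the setting without quantum CPA oracles. Throughout, let $\game^{\mathsf{wqIND}}_{\E,\A}$ denote the wqIND experiment, i.e.\ the variant of $\gamewqINDqCPA[\A]$ in which the message generator and the distinguisher receive no access to \Encq.

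First I would fix an arbitrary wqIND adversary $\A := (\M,\D)$ for \E and build a qIND adversary $\A' := (\M',\D')$. The message generator $\M'$ runs $\M$ to obtain a triple $(\desc{\phi^0},\desc{\phi^1},\sigma)$ consisting of two classical descriptions of quantum states and a state register $\sigma$. It then invokes the fixed public \QPT algorithm \qbuild on each description, sampling $\phi^0 \from \qbuild(\desc{\phi^0})$ and $\phi^1 \from \qbuild(\desc{\phi^1})$, and outputs $(\phi^0,\phi^1,\sigma')$, where the environment register $\sigma'$ carries $\sigma$ together with the bit strings $\desc{\phi^0}$ and $\desc{\phi^1}$; by definition of \emph{efficient classical representation} these strings have size $\poly(\secpar)$, so the environment register accommodates them. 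The distinguisher $\D'$, on input the challenge ciphertext $\psi$ and $\sigma'$, extracts $\sigma$ (and the descriptions, if $\D$ uses them), runs $\D(\psi,\sigma)$, and returns its output bit. Since \qbuild, $\M$, and $\D$ are \QPT, so are $\M'$ and $\D'$, hence $\A'$ is a legitimate qIND adversary.

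Next I would check that the view of the embedded adversary $\A$ inside $\gameqIND[\A']$ is distributed exactly as in $\game^{\mathsf{wqIND}}_{\E,\A}$, conditioned on each value of the challenge bit $b$. In $\gameqIND[\A']$ the challenger picks $b \rand \bin$, applies \Encq to $\phi^b$ to obtain $\psi$, and traces out $\phi^{1-b}$; in $\game^{\mathsf{wqIND}}_{\E,\A}$ the challenger picks $b \rand \bin$, constructs $\phi^b \from \qbuild(\desc{\phi^b})$, and applies \Encq to it. The state $\qbuild(\desc{\phi^b})$ produced inside the wqIND game has the same distribution as the state $\phi^b$ that $\M'$ already handed to the qIND challenger; tracing out $\phi^{1-b}$ is invisible to the adversary since it never receives that register; and the classical descriptions that $\D'$ forwards to $\D$ are precisely the strings $\M$ produced. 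Hence the pair $(\psi,\sigma)$ reaching $\D$ has identical conditional distributions in the two games, so $\Pr[\gameqIND[\A'] \to 1] = \Pr[\game^{\mathsf{wqIND}}_{\E,\A} \to 1]$, and the qIND advantage of $\A'$ equals the wqIND advantage of $\A$. If \E is qIND secure, the former is negligible for every \QPT adversary, so the latter is negligible for every wqIND adversary, i.e.\ \E is wqIND secure. (The qCPA version, Theorem~\ref{thm:qINDqCPAtowqINDqCPA}, follows by the identical construction, additionally routing every \Encq query of the embedded adversary straight to the qIND-qCPA oracle.)

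There is essentially no difficult step: the reduction is perfect rather than merely computationally close, and the only mild book-keeping concern is making sure the environment register can also carry the polynomial-size classical descriptions in case $\D$ depends on them. The conceptual content — that the wqIND game is precisely the qIND game restricted to message generators whose outputs are unentangled and admit an efficient classical description — is immediate from Definition~\ref{def:qIND}, Definition~\ref{def:wqINDqCPA}, and Definition~\ref{def:classrepr}.
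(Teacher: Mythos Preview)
Your proposal is correct and follows essentially the same approach as the paper. The paper treats the corollary as immediate from Theorem~\ref{thm:qINDqCPAtowqINDqCPA} (simply dropping the oracle), and justifies that theorem with a one-line observation that states admitting an efficient classical description are a special case of arbitrary quantum plaintext states; your argument spells out the reduction explicitly by having $\M'$ run \qbuild on both descriptions, which is exactly the intended content.
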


Finding a separation between wqIND and qIND is an open problem, as explained in~\cite{GHS16}. Morally, the notion wqIND-qCPA should lie somewhere between IND-qCPA and qIND-qCPA, because it covers indistinguishability for messages which are not necessarily classical, but not arbitrarily quantum. The reason for considering the seemingly artificial wqIND is that in the context of classical encryption schemes resistant to superposition quantum access, it is important to not lose focus of what the capabilities of a `reasonable' adversary should be. Namely, recall the following classical IND argument: {\em `allowing the adversary to send plaintexts to the challenger is equivalent to the fact that indistinguishability must hold even for the most favorable case from the adversary's perspective'}. Such an argument does {\em not} hold anymore quantumly. In fact, the qIND model presents the following issues:
\begin{enumerate}
\item it allows entanglement between the adversary and the IND challenger: \A could prepare a state of the form $\rho_{AB} = \frac{1}{\sqrt{2}}\ket{00}+\frac{1}{\sqrt{2}}\ket{11}$, sending $\rho_A$ as a plaintext but keeping $\rho_B$; and
\item it allows the adversary to create certain non-reproduceable states. For example, consider the state $\ket{\psi} = \sum_{x \in \X} \frac{1}{\sqrt{|\X|}} \ket{x,\h(x)}$, where $\h$ is a collision-resistant hash function. \A could measure the second register, obtaining a random outcome $y$, and knowing therefore that the remaining state is the superposition of the preimages of $y$, i.e.:
$$
\ket{\psi_y} = \sum_{x \in \X :\h(x)=y} \frac{1}{\sqrt{|\set{x \in \X:\h(x)=y}|}} \ket{x}.
$$
\A could then use $\ket{\psi_y}$ as a plaintext in the challenge phase, but note that $\A$ cannot reproduce $\ket{\psi_y}$ for a given value $y$.
\end{enumerate}
Both of the above examples highlight adversary capabilities which might be considered unreasonably strong in certain scenarios. Entanglement between \A and the IND challenger \C represents a sort of `quantum watermarking' of messages, which goes beyond what a meaningful notion of indistinguishability should achieve. Knowledge of intermediate, unpredictable measurements also renders \A too powerful, because it gives \A access to information not available to \C itself; e.g., in the example above \C would not even know the value of $y$. As it is \C who prepares the state to be encrypted by running \qbuild, it is reasonable to assume that it is \C who should know these intermediate measurements, not \A. In the example above, what \A could see instead (provided he knows the circuit generating the state, as we assume in wqIND) is that the plaintext is a mixture $\Psi=\sum_y \psi_y$ for all possible values of $y$.

The possibility offered by qIND of allowing the adversary to play the IND game with arbitrary states is certainly elegant from a theoretical point of view, but from the perspective of the quantum security of the kind of schemes we are considering, it is sometimes useful to consider the restricted notion wqIND, because it  inherently provides guidelines and reasonable limitations on what a quantum adversary can or cannot do. Also, wqIND is often easier to deal with: notice that in such a model, unlike in the qIND model, \A always receives back an unentangled state from a challenge query. In security reductions, this means that we can more easily simulate the challenger, and that we do not have to take care of measures of entanglement when analyzing the properties of quantum states - for example, indistinguishability of states can be shown by only resorting to the {\em trace norm} instead of the more general {\em diamond norm} as in the proof of Theorem~\ref{cor:qIND}.

Finally, it is important to notice that it is actually unclear whether a separation between qIND and wqIND can be found at all in the realm of classical encryption schemes. In fact, all the positive results present in~\cite{GHS16} hold for the more general qIND notion, while the impossibility result we present in Section~\ref{sec:impossqIND} holds for both qIND and wqIND.

\subsection{Quantum Semantic Security}

In this section, we discuss notions of semantic security in \QS2. All of them have been presented before in~\cite{GHS16}. We start by defining a semantic security equivalent of IND-qCPA, called {\em SEM-qCPA}. This is just the usual notion of SEM, augmented by giving to the adversary qCPA capabilities. In order to not overload notation, we refer to `adversary' and `simulator' simply as \QPT versions of the \PPT algorithms from Definition~\ref{def:SEMadvsim}.

\begin{definition}[{\cite[Definition 4.1]{GHS16}}]\label{def:SEM-qCPA}
A SKES $\E$ is {\em semantically secure under quantum chosen plaintext attack (or, it is SEM-qCPA secure)} iff, for any \QPT adversary \A there exists a \QPT simulator \S such that, for every efficiently computable $\f,\h:\bin^* \to \bin^*$ polynomially bounded in the input bit size, for every probability ensemble $\M := \family{\M}$, where $\M_n$ are probability distributions over $\X_\secpar$ with $\card{\M_n} = \poly(n)$, such that:
$$
\left| \Pr \left[ \gameSEMA[\A^{\ket{\Enc_k}}](\M,\f,\h) \to 1\right] - \Pr \left[ \gameSEMS[\S^{\ket{\Enc_k}}](\M,\f,\h) \to 1\right] \right| \leq \negl,
$$
where $k \from \KGen$ is the secret key generated during the experiments, and the probabilities are taken over the randomness of $\A,\E,\M,\S$.
\end{definition}

Unsurprisingly, the above notion is equivalent to IND-qCPA. The proof is a straightforward modification of Theorem~\ref{thm:INDiffSEM} by also accounting for the quantum CPA queries.

\begin{theorem}[{\cite[Theorem 5.1]{GHS16}}]\label{thm:INDqCPAiffSEMqCPA}
A SKES is IND-qCPA secure iff it is SEM-qCPA secure.
\end{theorem}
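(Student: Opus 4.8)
The plan is to establish both implications by transplanting the classical argument of Theorem~\ref{thm:INDiffSEM} (IND $\iff$ SEM) essentially verbatim, the one new ingredient being the bookkeeping needed to route the quantum chosen-plaintext oracle $\ket{\Enc_k}$ consistently between each machine we build and the sub-machine it runs. It is worth noting up front that the quantum register $\ket{\state}$ a pq-IND adversary $\A=(\M,\D)$ passes from its first to its second stage is irrelevant to these reductions: in every adversary constructed below it can be taken to be a classical computational-basis string, so no subtlety about entanglement arises and the trace/diamond-norm machinery of Section~\ref{sec:qIND} is not needed.

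\textbf{SEM-qCPA $\implies$ IND-qCPA.} Assume $\E$ is SEM-qCPA and let $\A=(\M,\D)$ be an arbitrary pq-IND adversary against the IND-qCPA game. By a standard averaging argument one may assume $\M$ commits to a fixed pair of equal-length messages $m^0\neq m^1$ together with fixed (classically described) side information, losing nothing in the advantage. I then realise the IND-qCPA game as a special case of the semantic-security game: the message ensemble runs $\M^{\ket{\Enc_k}}$, samples $b\rand\bin$, and outputs $m^b$; the target function $\f$ returns the selection bit ($\f(m^0)=0$, $\f(m^1)=1$, well defined since $m^0\neq m^1$); and the auxiliary function $\h$ returns only the common length (together with the fixed side information), hence is independent of $b$. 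A SEM adversary that runs $\D^{\ket{\Enc_k}}$ on the challenge ciphertext and on $\h(m)$ --- forwarding the oracle queries of $\D$ and $\M$ to its own $\ket{\Enc_k}$ --- and outputs the resulting bit guesses $\f(m)=b$ with probability $\half+\advINDqCPA$. On the other hand, any SEM simulator $\S^{\ket{\Enc_k}}(\h(m))$ receives an input that is statistically independent of $b$ and therefore outputs $b$ with probability exactly $\half$. SEM-qCPA security of $\E$ thus bounds $\advINDqCPA$ by a negligible function.

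\textbf{IND-qCPA $\implies$ SEM-qCPA.} This is the substantive direction. Let $\A$ be a \QPT SEM adversary with $\ket{\Enc_k}$ access; I construct the required simulator. On input $\h(m)$ the simulator $\S^{\ket{\Enc_k}}$ reads off $\ell:=|m|$ (w.l.o.g. $\h$ always contains the plaintext length, as noted after Definition~\ref{def:SEMadvsim}), obtains the dummy ciphertext $c':=\Enc_k(0^\ell)$ by a single basis-state query to its oracle, runs $\A^{\ket{\Enc_k}}(c',\h(m))$ with all of $\A$'s queries forwarded to that same oracle, and returns $\A$'s output; note that $\S$ depends only on $\A$, not on $\f,\h,\M$. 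Hence $\Pr[\gameSEMS[\S^{\ket{\Enc_k}}]\to1]=\Pr_{m\from\M,k}[\A^{\ket{\Enc_k}}(\Enc_k(0^{|m|}),\h(m))=\f(m)]=:p_\S$ while $\Pr[\gameSEMA[\A^{\ket{\Enc_k}}]\to1]=\Pr_{m,k}[\A^{\ket{\Enc_k}}(\Enc_k(m),\h(m))=\f(m)]=:p_\A$. To bound $|p_\A-p_\S|$ I build an IND-qCPA adversary $\B=(\M_\B,\D_\B)$: $\M_\B$ samples $m\from\M$ and outputs the pair $(m,0^{|m|})$ together with the classical state $(\h(m),\f(m))$; then $\D_\B^{\ket{\Enc_k}}(c,(\h(m),\f(m)))$ runs $\A^{\ket{\Enc_k}}(c,\h(m))$, forwards queries, obtains $f'$, and outputs $0$ iff $f'=\f(m)$. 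A direct computation gives $\Pr[\gameINDqCPA[\B]\to1]=\half+\half(p_\A-p_\S)$, so the IND-qCPA advantage of $\B$ equals $\half(p_\A-p_\S)$; since $\f,\h$ are efficiently computable, $\B$ is \QPT, and IND-qCPA security of $\E$ makes this quantity negligible, uniformly over $\f,\h,\M$. Therefore $|\Pr[\gameSEMA]-\Pr[\gameSEMS]|\leq\negl$, i.e. $\E$ is SEM-qCPA.

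\textbf{The main obstacle} is the consistent handling of the quantum encryption oracle, which is the only point where the argument genuinely departs from its classical template. In both directions the outer machine must answer its sub-machine's $\ket{\Enc_k}$ queries using the outer machine's own (real) oracle; and in the simulator $\S$ it is essential that the \emph{same} oracle instance is used both to synthesise the dummy ciphertext $\Enc_k(0^{|m|})$ and to answer $\A$'s queries, so that $\A$ sees a perfectly consistent world distributed exactly as the $b=1$ branch of $\B$'s IND-qCPA experiment. A related point is a staging mismatch: a pq-IND adversary's message generator $\M$ may query $\ket{\Enc_k}$ before committing to $(m^0,m^1)$, and this pre-challenge querying has to be identified with the oracle-aided first stage of the semantic-security game. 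Once this accounting is in place the classical inequalities carry over unchanged; in particular the reductions only ever feed computational-basis states to, and read classical outputs from, their sub-machines --- no rewinding, no query counting, and no measurement of a sub-machine's internal quantum state is ever performed --- so the whole equivalence is, in the terminology of Section~\ref{sec:qred}, semi-classical in character.
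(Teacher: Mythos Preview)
Your proposal is correct and takes essentially the same approach as the paper: the paper's own proof is the single sentence ``a straightforward modification of Theorem~\ref{thm:INDiffSEM} by also accounting for the quantum CPA queries,'' and this is precisely what you carry out in detail --- running the classical IND\,$\Leftrightarrow$\,SEM reductions while forwarding all $\ket{\Enc_k}$ queries through the outer machine's oracle. Your discussion of the oracle-consistency bookkeeping and the staging mismatch is more explicit than anything the paper provides, but the underlying strategy is identical.
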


We might ask what happens if the above definition is strenghtened by providing the adversary (and the simulator) {\em quantum} advice, instead of a classical advice $\h(x)$ for some plaintext $x$. The following two cases appear.
\begin{itemize}
\item We might replace the classical function $\h$ with a unitary operator $U$ which, acting on a basis element $\ket{x}$ for a (classical) plaintext $x$, produces a quantum advice state $\ket{\xi}$. The resulting security notion is called {\em quantum advice semantic security under quantum chosen plaintext attack (qaSEM-qCPA)}~\cite[Definition D.1]{GHS16}, and it turns out to be meaningless, because trivially achievable by {\em any} SKES. The reason is that a unitary $U$ can always be inverted as $U^\dagger$ by both adversary {\em and} simulator. Both of them are then able to recover the plaintext given the quantum advice.
\item To fix the above problem, we might allow more general quantum circuits $\U$ that can somehow provide non-reversible information, for example by applying some partial measurement at the end, or by providing \A (resp. \S) only with some output qubits, while tracing out the others. Towards this end let $\U$ be an arbitrary quantum circuit (the {\em advice circuit}) that takes as input a basis element $\ket{x}$ and a quantum state $\rho$ provided by \A (resp. \S) (that includes possibly needed auxiliary registers), and computes a (possibly mixed) quantum advice state $\xi$. The resulting security notion is called {\em ideal quantum advice semantic security under quantum chosen plaintext attack (iqSEM-qCPA)}~\cite[Definition D.2]{GHS16}, and it turns out to be equivalent to IND-qCPA. The reason is that the proof in Theorem~\ref{thm:INDqCPAiffSEMqCPA} only uses the advice function to transmit classical information, and therefore iqSEM-qCPA can be reduced to IND-qCPA.
\end{itemize}

It seems therefore that introducing quantum advice states is not meaningful as long as the messages are still classical. We proceed now instead to present a quantum security notion equivalent to the wqIND-qCPA notion. First of all, we redefine the meaning of quantum SEM adversary and simulator.

\begin{definition}[Quantum SEM Adversary, Quantum SEM Simulator]\label{def:qSEMadvsim}
Let $\E := \E_{\K,\X,\Y}$ be a SKES, and $\Hilbert_\f,\Hilbert_\h$ two Hilbert spaces of appropriate dimension (exponential in the security parameter). A {\em quantum SEM adversary \A for \E} is a \QPT algorithm $\A: \states{\Hilb{\Y}} \times \states{\Hilbert_\h} \to \states{\Hilbert_\f}$. A {\em quantum SEM simulator \S for \E} is a \QPT algorithm $\S: \states{\Hilbert_\h} \to \states{\Hilbert_\f}$.
\end{definition}

The wqSEM notion is given by replacing classical functions $\f$ and $\h$ with quantum CPTP maps $\Sigma,\Xi$, which are quantum circuits taking as input $\m$-qubit quantum states (where $\m$ is the bit size of plaintexts, polynomial in \secpar) and outputting $\poly(\m)$-qubit quantum states. The idea is that, since we are using quantum states with efficient classical representations, we can sample some classical randomness once, and reuse it with \qbuild to create many copies of the same plaintext state.

\begin{experiment}[$\gamewqSEMA$]\label{expt:wqSEMA}
Let $\E$ be a SKES, and $\A$ a quantum SEM adversary. The {\em wqSEM experiment} proceeds as follows:
\begin{algorithmic}[1]
\State \textbf{Input:} $n \in \NN$, CPTP maps $\Sigma,\Xi$ with $\m$-qubit input and $\poly(\m)$-qubit output, $\M := \family{\M}$, where $\M_\secpar$ are probability distributions over a family of randomness spaces $\family{\R}$ ,with $\card{\M_\secpar} = \poly(\secpar)$
\State $k \from \KGen$
\State $r \from \M_n$
\State $\phi \from \qbuild(r)$ \Comment{$\qbuild$ is invoked with randomness $r$}
\State $\psi \from \Encq(\phi)$
\State $\phi \from \qbuild(r)$ \Comment{a second copy of $\phi$ is generated, using the same $r$}
\State $\xi \from \Xi(\phi)$ \Comment{this is the quantum advice state}
\State $\sigma \from \A(\psi,\xi)$
\If{$\sigma$ is computationally indistinguishable from $\Sigma(\phi)$}
	\State \textbf{Output:} $1$
\Else
	\State \textbf{Output:} $0$
\EndIf
\end{algorithmic}
\end{experiment}

We use `computationally indistinguishable' as a shorthand for: `for every \QPT algorithm \D with outputs in \bin (a quantum distinguisher), the probability that the output differs on the two states given as input is negligible'. As usual, a third copy of $\phi$ (to be processed by $\Sigma$) can be generated using the same randomness $r$ and the \qbuild algorithm.

\begin{experiment}[$\gamewqSEMS$]\label{expt:wqSEMS}
Let $\E$ be a SKES, and $\S$ a quantum SEM simulator. The {\em simulated wqSEM experiment} proceeds as follows:
\begin{algorithmic}[1]
\State \textbf{Input:} $n \in \NN$, CPTP maps $\Sigma,\Xi$ with $\m$-qubit input and $\poly(\m)$-qubit output, $\M := \family{\M}$, where $\M_\secpar$ are probability distributions over a family of randomness spaces $\family{\R}$ ,with $\card{\M_\secpar} = \poly(\secpar)$
\State $k \from \KGen$
\State $r \from \M_n$
\State $\phi \from \qbuild(r)$
\State $\xi \from \Xi(\phi)$
\State $\sigma \from \S(\xi)$ \Comment{\S only gets the quantum advice, not the ciphertext}
\If{$\sigma$ is computationally indistinguishable from $\Sigma(\phi)$}
	\State \textbf{Output:} $1$
\Else
	\State \textbf{Output:} $0$
\EndIf
\end{algorithmic}
\end{experiment}

\begin{definition}[Weak Quantum Semantic Security (wqSEM)]\label{def:wqSEM}
A SKES $\E$ is {\em weakly quantumly semantically secure (wqSEM)} iff, for any quantum SEM adversary \A there exists a quantum SEM simulator \S such that, for every CPTP maps $\Sigma,\Xi$ with $\m$-qubit input and $\poly(\m)$-qubit output, for every probability ensemble $\M := \family{\M}$ with polynomial-size support over some randomness space, it holds:
$$
\left| \Pr \left[ \gamewqSEMA(\M,\Sigma,\Xi) \to 1 \right] - \Pr \left[ \gamewqSEMS(\M,\Sigma,\Xi) \to 1 \right] \right| \leq \negl,
$$
where the probabilities are taken over the randomness of $\A,\E,\M,\S$.
\end{definition}

\begin{definition}[Weak Quantum Semantic Security Under Quantum Chosen Plaintext Attack (wqSEM-qCPA)]\label{def:wqSEMqCPA}
A SKES $\E$ is {\em weakly quantumly semantically secure under quantum chosen plaintext attack (wqSEM-qCPA)} iff, for any quantum SEM adversary \A there exists a quantum SEM simulator \S such that, for every CPTP maps $\Sigma,\Xi$ with $\m$-qubit input and $\poly(\m)$-qubit output, for every probability ensemble $\M := \family{\M}$ with polynomial-size support over some randomness space, it holds:
$$
\left| \Pr \! \left[ \! \gamewqSEMA[\A^\Encq]\!(\M,\Sigma,\Xi) \!\to\! 1\! \right] \!-\! \Pr\! \left[ \!\gamewqSEMS[\S^\Encq]\!(\M,\Sigma,\Xi)\! \to\! 1 \!\right] \right| \!\leq \!\negl,
$$
where the probabilities are taken over the randomness of $\A,\E,\M,\S$.
\end{definition}

The resulting wqSEM-qCPA notion is equivalent to wqIND-qCPA.

\begin{theorem}[{\cite[Theorem 5.4]{GHS16}}]\label{thm:wqINDqCPAiffwqSEMqCPA}
A SKES is wqIND-qCPA secure iff it is wqSEM-qCPA secure.
\end{theorem}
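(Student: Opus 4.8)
The plan is to prove the two implications by explicit reductions, reusing the machinery behind the classical equivalence of Theorem~\ref{thm:INDiffSEM} and its quantum--CPA adaptation in Theorem~\ref{thm:INDqCPAiffSEMqCPA}, while accounting for three features that distinguish the wq-setting from the classical one: (i) the challenge plaintexts are quantum states handed over through their classical descriptions (Definition~\ref{def:classrepr}) and rebuilt by \qbuild, so whoever holds such a description can regenerate the state at will; (ii) the target map $\Sigma$ and the advice map $\Xi$ are quantum CPTP circuits applied to a \emph{fresh} copy of the plaintext, hence recomputable from its description; and (iii) the accepting condition of the experiments $\gamewqSEMA$ and $\gamewqSEMS$ is computational indistinguishability of output states, not exact equality, so every game hop must be argued at the level of quantum distinguishers rather than string comparisons. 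Throughout, both reductions simply forward the type-$(2)$ encryption-oracle queries of whatever algorithm they run internally to their own oracle, as in the proof of Theorem~\ref{thm:INDqCPAiffSEMqCPA}; the key of the experiment being simulated is the key used for these forwarded queries.

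For the direction wqIND-qCPA $\Rightarrow$ wqSEM-qCPA I would fix an arbitrary quantum SEM adversary $\A$ (Definition~\ref{def:qSEMadvsim}) together with a message distribution $\M_\secpar$ and maps $\Sigma,\Xi$, and build the simulator $\S$ that, on input the advice state $\xi$, uses its encryption oracle to produce $\psi_0 := \Encq(\ket{0})$ (the encryption of a fixed dummy plaintext) and outputs $\A(\psi_0,\xi)$, forwarding $\A$'s oracle queries. To bound the simulation gap I would assume a quantum distinguisher witnesses a non-negligible difference between $\gamewqSEMA$ and $\gamewqSEMS$ and turn it into a wqIND-qCPA distinguisher (Definition~\ref{def:wqINDqCPA}): the message generator samples a description $r$ from $\M_\secpar$, sets the two challenge plaintexts to $\qbuild(r)$ and $\ket{0}$, and passes through the environment register the genuine advice $\Xi(\qbuild(r))$ together with polynomially many copies of the target state $\Sigma(\qbuild(r))$ (all obtained locally by re-running \qbuild on $r$); the distinguisher runs $\A$ on the challenge ciphertext and the advice, then runs the witnessing distinguisher against its copies of $\Sigma(\qbuild(r))$ and echoes the verdict. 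Since the two branches of the challenge feed $\A$ exactly the inputs it receives in $\gamewqSEMA$ respectively in $\S$, the wqIND-qCPA advantage equals the simulation gap up to negligible terms, contradicting security.

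For the direction wqSEM-qCPA $\Rightarrow$ wqIND-qCPA I would argue the contrapositive: from a wqIND-qCPA adversary $\A=(\M,\D)$ with non-negligible advantage I would exhibit a quantum SEM adversary and a key-independent triple $(\M_\secpar,\Sigma,\Xi)$ for which no simulator succeeds. The shape of the construction follows Goldreich's SEM-from-IND reduction: $\Sigma$ should output the challenge bit, $\Xi$ should carry nothing about that bit, and the SEM adversary should internally re-run $\M$ (forwarding its oracle queries, since the challenge messages depend on the experiment's key and therefore cannot be hard-wired into the key-independent $\M_\secpar$) and then run $\D$ on the received ciphertext and the reconstructed environment state, outputting \qbuild applied to the description picked by $\D$'s guess. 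One then checks that $\D$ transfers its advantage to the quality of the SEM adversary's output relative to $\Sigma(\phi)$ --- using that two challenge plaintexts with distinguishable encryptions must themselves be at non-negligible trace distance, since encryption is a CPTP map --- whereas any simulator, seeing only the bit-free advice $\Xi(\phi)$, produces a state that cannot be computationally close to $\Sigma(\phi)$ for both values of the bit, and hence succeeds with probability at most $\half+\negl$.

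The main obstacle I expect is making the bookkeeping of this last direction genuinely correct, i.e.\ reproducing Goldreich's SEM$\iff$IND accounting faithfully in the quantum setting: simultaneously arranging that $\Sigma$ is a legitimate function of the plaintext alone yet pins down the wqIND-challenge bit, that $\Xi$ is oblivious to that bit, and that the \emph{adaptive, key-dependent} message choice of $\M$ is routed through the SEM adversary's own type-$(2)$ oracle rather than through the fixed message distribution --- and doing all of this while the acceptance predicate is ``computationally indistinguishable from $\Sigma(\phi)$'' rather than ``equal to $f(m)$''. The latter means every hop in the game sequence must be shown to preserve state-indistinguishability and that the distinguisher witnessing any residual gap must be threaded explicitly through the reduction; getting these quantitative steps to line up --- as opposed to the structurally straightforward reductions in Theorem~\ref{thm:INDqCPAiffSEMqCPA} --- is where the real work lies.
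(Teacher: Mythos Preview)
Your overall architecture matches the paper's proof closely; both directions are argued by the same kind of reductions the paper uses. Two points are worth sharpening.

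\medskip

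\textbf{First direction (wqIND-qCPA $\Rightarrow$ wqSEM-qCPA).} Your simulator feeds $\A$ the encryption of a fixed dummy plaintext obtained through the \emph{real} oracle $\Encq$ and forwards $\A$'s queries to that same oracle. The paper instead has $\S$ generate a \emph{fresh} key $k'$, answer all of $\A$'s queries with $\Enc_{k'}$, and hand $\A$ the bogus ciphertext $\Enc_{k'}(\ket{1\ldots1})$; the reduction to wqIND-qCPA then compares the real key against the simulator's fresh key. Both routes are valid --- yours is arguably more direct since $\S$ already has oracle access in Definition~\ref{def:wqSEMqCPA} --- but be aware they are not identical; your reduction compares $\Encq(\phi)$ against $\Encq(\ket{0})$ under the same key, whereas the paper's compares two independent keys.

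\medskip

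\textbf{Second direction (wqSEM-qCPA $\Rightarrow$ wqIND-qCPA).} Here your description is internally inconsistent. You write that ``$\Sigma$ should output the challenge bit'', yet you have the SEM adversary output $\qbuild$ applied to the description picked by $\D$'s guess, i.e.\ the \emph{state} $\phi^{b'}$. These do not match: if $\Sigma(\phi^b)$ is the bit $\ket{b}$, the adversary must output a bit, not a plaintext state. The paper resolves this cleanly by taking $\Sigma$ to be the \emph{identity} map and $\Xi$ to be the constant map $\ket{1\ldots1}$; the SEM adversary then outputs $\phi^{b'}$, which equals $\Sigma(\phi^b)=\phi^b$ exactly when $\D$ guesses correctly, while any simulator --- receiving only the useless constant advice --- is oblivious to $b$ and can match $\phi^b$ with probability at most $\tfrac12$. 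Your extra step (arguing that distinguishable ciphertexts force distinguishable plaintexts via CPTP contractivity) is then unnecessary: with $\Sigma=\Id$, the simulator's failure follows directly from the fact that its output is independent of $b$. You also flag the key-dependence of $\M$'s challenge messages as an obstacle to fixing a key-independent $\M_\secpar$; the paper sidesteps this by having the SEM adversary produce the template $(\M,\Xi,\Sigma)$ \emph{after} running the wqIND message generator with forwarded oracle queries --- in effect treating the template as adaptively chosen, which is the same loosening you gesture at but do not commit to.
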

\begin{proof}
The proof closely follows the one for Theorem~\ref{thm:INDiffSEM}, with some careful modifications. We prove the theorem by splitting it in two parts.

$\mathbf{wqIND-qCPA \implies wqSEM-qCPA}.$ Let \A be an efficient quantum SEM adversary. We want to show that a quantum SEM simulator \S exists, with roughly the same success probability as \A, by exploiting the wqIND-qCPA security of the encryption scheme. The idea of the proof is to hand \A's circuit as non-uniform advice to the simulator \S. This is allowed, because \A is a \QPT adversary against the wqSEM-qCPA game, and hence \A's circuit has a short classical representation. \S can then build and run \A's circuit, and simulate a qSEM-qCPA experiment for \A by generating a new key and answering all of \A's queries using this key. When \S performs his `real' wqSEM challenge query (using the challenge query generated by \A), he does not receive back a valid ciphertext. However, \S can generate a bogus ciphertext by encrypting (with his own key) the $\ket{1\ldots1}$ basis element of the same size as the original plaintext state. It follows from the indistinguishability of encryptions that \A's success probability in this game must be negligibly close to its success probability with a real ciphertext, otherwise \A would be an efficient distinguisher for the scheme \E.

$\mathbf{wqSEM-qCPA \implies wqIND-qCPA}.$ Assume there exists an efficient wqIND-qCPA distinguisher \D for the scheme \E. Then we show how to construct a \QPT algorithm $\A$ that has oracle access to \D and breaks the wqSEM-qCPA security of the scheme, in the sense that no simulator \S can do better than \A. The construction works as follows: \A starts the \gamewqSEMA game, and then he runs \D, emulating the quantum encryption oracle by simply forwarding all the qCPA queries performed by \D to its own oracle (the $\Encq$ oracle of the wqSEM-qCPA game). When \D executes the wqIND challenge query by sending classical descriptions of two states $\phi^0$ and $\phi^1$, $\A$ produces the wqSEM template $(\M,\Xi,\Sigma)$, with $\M$ such that $\qbuild(r)$ outputs $\phi^0$ for half of the possible values $r \from \M$ and $\phi^1$ for the other half, $\Xi$ is the constant map outputting $\ket{1\ldots 1}$, and $\Sigma$ is the identity map $\Sigma(\rho)=\rho$. Then $\A$ performs a qSEM challenge query with this template. Given challenge ciphertext state $\Encq(\phi^b)$ (for $b\in\bin$), \A forwards it as an answer to \D's wqIND challenge query. As \D distinguishes $\Encq(\phi^0)$ from $\Encq(\phi^1)$ with non-negligible success probability by assumption, \D returns the correct value of $b$ with non-negligible advantage over guessing. Then $\A$, having recorded a copy of the classical descriptions of $\phi^0$ and $\phi^1$, is able to create another copy of $\phi^b$ through \qbuild and compute the state $\Sigma(\phi^b)$ exactly, and consequently win the wqSEM-qCPA game with non-negligible advantage. However, as $\Xi$ generates the same (constant, useless) advice state $\ket{1\ldots 1}$ independently of the encrypted message, no simulator can do better than guessing the plaintext. This concludes the proof.
\endproof
\end{proof}

In this work, we will not explicitly define a notion of quantum semantic security related to qIND. However, we will show in the next chapter how the qIND notion is equivalent to the quantum indistinguishability notion Q-IND (introduced in~\cite{BJ15}) for quantum encryption schemes, when these are obtained by implementing a classical SKES in unitary type-$(2)$ mode. In~\cite{ABF+16}, on the other hand, notions of quantum semantic security are presented, which are proven to be equivalent to Q-IND, and therefore easily adaptable to the case of quantumly-accessible SKES that we consider here.

\subsection{Impossibility Result}\label{sec:impossqIND}

In this section we show how the qIND security notion cannot be achieved by a large class of SKESs: namely, all those schemes which do {\em not} substantially expand the message during encryption. First we formally define what it means for a cipher to expand or keep constant the message size by defining the {\em core function} of a SKES. Intuitively, the definition splits the ciphertext into the randomness and a part carrying the message-dependent information. This definition covers most encryption schemes in the literature.

\begin{definition}[Core Function~{\cite[Definition 6.1]{GHS16}}]\label{def:core}
Let $\E = \E_{\K,\X,\Y}$ be a SKES, and let \R be the randomness space of $\Enc$. Let $\f:\K \times \R \times \X \rightarrow \Y$ be a function such that:
\begin{itemize}
\item for all $k\in\K$ and for all $x \in \X$, $\Enc_k(x)$ can be written as $(r,\f(k,r,x))$, where $r\in\R$ is independent of the message; and
\item there exists a function $\g$ such that for all $k\in\K, r\in\R, x\in\X$ it holds: $\g(k,r,\f(k,x,r)) = x$.
\end{itemize}
Then, we call $\f$ the {\em core function} of the encryption scheme.
\end{definition}

For example, in case of Construction~\ref{constr:goldreich} (where $\Enc_k(x)$ is defined as $(r,\PRF_k(r) \xor x)$ for a PRF $\PRF$) the core function would be $\f(k,r,x) := \PRF_k(r) \xor x$, with associated $\g(k,r,z) := z \xor \PRF_k(r)$.

\begin{definition}[Quasi--Length-Preserving Encryption~{\cite[Definition 6.2]{GHS16}}]
We call a SKES with core function $\f$ {\em quasi--length-preserving} iff:
$$
\foral x \in \X, \foral r \in \R, \foral k \in \K \implies |\f(k,x,r)| = |x|,
$$
i.e., the output of the core function has the same bit length as the plaintext.
\end{definition}

For example, Construction~\ref{constr:goldreich} is quasi--length-preserving.

The crucial observation for our impossibility result is the following: for a quasi--length-preserving encryption scheme, the space of possible input and (core function) output bit strings (with respect to plaintext and ciphertext) coincide, therefore these ciphers act as permutations on these spaces. This means that, if we start with an input state which is a superposition of {\em all} the possible basis states, all of them with the {\em same} amplitude, this state will be left unmodified by the unitary type-$(2)$ encryption operation (because such operator will just `shuffle' in the space of computational basis-states amplitudes which are exactly the same).

\begin{theorem}[{\cite[Theorem 6.3]{GHS16}}]\label{thm:impossqIND}
If a SKES is quasi--length-preserving, then it is not wqIND secure.
\end{theorem}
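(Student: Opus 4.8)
The plan is to exploit the observation emphasized right before the theorem statement: for a quasi--length-preserving SKES, the core function $\f(k,r,\cdot)$ is a permutation of $\bin^{\m}$ for every fixed key $k$ and randomness $r$ (by the existence of the decryption map $\g$ together with the length condition $|\f(k,x,r)|=|x|$). So the type-$(2)$ encryption operator $\Encq$ sends the uniform superposition $\ket{+}^{\otimes\m}=\frac{1}{\sqrt{2^\m}}\sum_{x}\ket{x}$ over plaintexts to a state whose message-carrying register is $\frac{1}{\sqrt{2^\m}}\sum_x\ket{\f(k,r,x)}$, which — since $\f(k,r,\cdot)$ merely permutes the computational basis and all amplitudes are equal — equals $\ket{+}^{\otimes\m}$ again, tensored with the (classical) randomness register $\ket{r}$ (or with a superposition over $r$, depending on how the fresh randomness is modeled; in either case it is message-independent up to that randomness part).

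Concretely, I would build a wqIND (hence also qIND, by Corollary~\ref{cor:qINDtowqIND} in the contrapositive direction) adversary $\A=(\M,\D)$ as follows. The message generator $\M$ outputs classical descriptions $\desc{\phi^0},\desc{\phi^1}$ where $\phi^0 := \ketbra{+}^{\otimes\m}$ is the uniform superposition over all plaintexts and $\phi^1 := \ketbra{0}^{\otimes\m}$ (a fixed basis plaintext, say all-zeros); both admit trivial efficient classical representations (Hadamards on $\ket{0}$, resp. identity), so they are legitimate wqIND challenge plaintexts. The state register $\sigma$ just records which-is-which. On receiving the challenge ciphertext $\psi$, the distinguisher $\D$ applies $H^{\otimes\m}$ to the $\m$-qubit message-carrying portion of $\psi$ and measures it in the computational basis. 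If $b=0$, the message portion of $\psi$ is (up to the randomness register, which $\D$ ignores) exactly $\ket{+}^{\otimes\m}$, so after $H^{\otimes\m}$ it is $\ket{0}^{\otimes\m}$ and the measurement yields $0^\m$ with certainty; $\D$ then outputs $0$. If $b=1$, the message portion is $\frac{1}{\sqrt{2^\m}}\sum_x\ket{\f(k,r,x)}\!$ evaluated on the single input $0^\m$, i.e. a basis state $\ket{\f(k,r,0^\m)}$; after $H^{\otimes\m}$ this is again a uniform superposition, and the measurement yields $0^\m$ with probability only $2^{-\m}$, so with overwhelming probability $\D$ sees a nonzero outcome and outputs $1$. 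Thus $\A$ wins with probability $\geq 1-2^{-\m-1}$, giving non-negligible advantage, and the scheme is not wqIND secure.

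The steps, in order: (i) extract from the quasi--length-preserving hypothesis and Definition~\ref{def:core} that $x\mapsto\f(k,r,x)$ is a bijection on $\bin^{\m}$ for each $k,r$; (ii) verify that $\phi^0,\phi^1$ have efficient classical representations so the attack is valid even in the weaker wqIND model; (iii) carefully track how $\Encq$ acts on the joint plaintext–ancilla–randomness registers, using the convention fixed earlier in the excerpt that the ancilla is initialized to $\ket{0}$ so that $\phi_{x,0}=\Enc_k(x)$, and handle the randomness register (either a fixed $\ket{r}$ or a superposition over $\R$ generated by the challenger) — in both cases the message portion on input $\ket{+}^{\otimes\m}$ is invariant; (iv) compute the two measurement-outcome distributions and the resulting advantage; (v) conclude, and invoke Corollary~\ref{cor:qINDtowqIND} to note this also rules out qIND. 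The main obstacle I anticipate is step (iii): being precise about the randomness register and the ancilla bookkeeping in the type-$(2)$ oracle, so that the claim "$\Encq$ fixes the all-plaintexts uniform superposition in the message register" is rigorously justified rather than hand-waved — in particular one must be sure that the permutation acts only within the $\m$-qubit message space and not across the randomness register, which is exactly what the first bullet of Definition~\ref{def:core} ($r$ independent of the message, carried through verbatim) guarantees.
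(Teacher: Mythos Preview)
Your proposal is correct and follows essentially the same approach as the paper: exploit that the core function acts as a permutation on $\bin^\m$, so $\Encq$ fixes the uniform superposition $H^{\otimes\m}\ket{0^\m}$, and distinguish by applying $H^{\otimes\m}$ and measuring. The only cosmetic difference is the choice of the second challenge state: the paper uses $\ket{\phi^1}=H^{\otimes\m}\ket{1^\m}$, which makes the amplitude of $\ket{0^\m}$ after the final Hadamard exactly zero in the $b=1$ case, whereas your choice $\ket{\phi^1}=\ket{0^\m}$ leaves a residual $2^{-\m}$ probability of the ``wrong'' outcome --- both yield non-negligible advantage and the argument goes through identically.
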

\begin{proof}
Let $(\Gen,\Enc,\Dec)$ be a quasi--length-preserving scheme. We give an explicit, efficient distinguisher attack.
\begin{enumerate}
\item For $\m$-bit message strings, the distinguisher \D sets the two plaintext states for the qIND- game to be: $\ket{\phi^0} = H \ket{0^m}, \ket{\phi_1} = H \ket{1^m}$, where $H$ is the $\m$-fold tensor Hadamard transformation. Notice that both these states admit efficient classical representations, and are thus allowed in the wqIND game.
\item A random bit $b$ is flipped, and the challenge ciphertext state $\ket{\psi} = \Encq \ket{\phi^b}$ is returned to \D.
\item \D applies $H$ to the core-function part of the ciphertext $\ket{\psi}$ and measures it in the computational basis. \D outputs $0$ iff the outcome is $0^\m$, and outputs $1$ otherwise.
\end{enumerate}

Notice that applying $\Encq$ to $H \ket{0^m}$ leaves the state untouched: since the encryption oracle merely performs a permutation in the basis space, and since $\ket{\phi_0}$ is a superposition of every basis element with the same amplitude, it follows that whenever $b$ is equal to $0$, the ciphertext state will be left unchanged. In this case, after applying the self-inverse transformation $H$ again, $\D$ obtains measurement outcome $0^\m$ with probability $1$.

On the other hand, if $b=1$, then $\ket{\phi^1} = \frac{1}{2^{\m/2}}\sum_y (-1)^{y \cdot 1^\m} \ket{y}$ where $a \cdot b$ denotes the bitwise inner product between $a$ and $b$. Hence, $\ket{\phi^1}$ is a superposition of every basis element where (depending on the parity of $y$) half of the elements have a positive amplitude and the other half have a negative one, but all of them will be equal in absolute value. Applying $\Encq$ to this state results in $ \frac{1}{2^{\m/2}}\sum_y (-1)^{y \cdot 1^\m} \ket{\Enc(y)}$. After re-applying $H$, the amplitude of the basis state $\ket{0^\m}$ becomes $\sum_y (-1)^{y \cdot 1^\m + \Enc(y) \cdot 0^\m} = \sum_y (-1)^{\|y\|}$ (where $\|y\|$ is the {\em Hamming weight of $y$}) which is $0$. Hence, the probability for \D of observing $0^\m$ after the measurement is $0$. This gives \D a way of distinguishing between encryptions of the two plaintext states.
\endproof
\end{proof}

Notice that the above attack works also against qIND, because of Theorem~\ref{thm:qINDqCPAtowqINDqCPA}. In particular, Theorem~\ref{thm:impossqIND} shows that Construction~\ref{constr:goldreich}, which is IND-qCPA secure if the used PRF is quantum secure, does not fulfill qIND, nor wqIND. This attack is a consequence of the well-known fact~\cite{QOTP1,QOTP2} that, in order to perfectly (information-theoretically) encrypt a single quantum bit, {\em two} bits of classical information are needed: one to hide the basis bit, and one to hide the phase (i.e., the signs of the amplitudes). The fact that we are restricted to quantum operations of the form  $\Encq$  (that is, quantum instantiations of classical encryptions) means that we cannot afford to hide the phase as well, and this restriction allows for an easy distinguishing procedure in the case of a quasi--length-preserving SKES.

Summing up up, all the semantic security notions presented in this section are summarized in Figure~\ref{fig:QS2sem}.

\begin{figure}[t]
\begin{center}
\includegraphics[width=\textwidth]{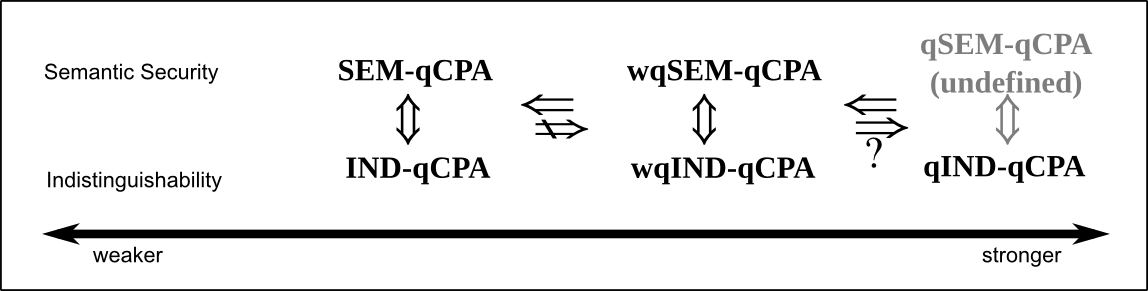}
\end{center}
\caption{Relations for semantic security notions in \QS2.}
\label{fig:QS2sem}
\end{figure}

\subsection{Quantum CCA}

Finally, here we give a brief discussion about the possibility of extending the \QS2 framework of security notions for SKES to the quantum chosen ciphertext attack (qCCA) case. The resulting notions, when applicable, are always stronger than the related qCPA notions, with counterexamples closely matching the classical ones. However, a few issues arise.

The case of quantum CCA1 is straightforward for the classical IND case. The resulting IND-qCCA1 notion is just as the IND-qCPA notion, augmented by a quantum CCA query {\em before} the classical IND query. This is modeled in the security game by giving to the first stage IND adversary oracle access to the quantum decryption oracle $\ket{\Dec_k}$.

The case of wqIND-qCCA1 and qIND-qCCA1 are also straightforward, as the decryption queries only happen before the qIND query. It is just necessary to define the type-$(2)$ decryption oracle $\Decq$, but this is trivial considered that $\Decq=\Encqd$. However, Construction~\ref{constr:extension} will require strong qPRPs in order to be secure under the new notion, as already discussed.

The case of qCCA2, instead, is much more delicate. For the classical IND case,~\cite{BZ13} shows how to correctly define IND-qCCA2 (and how to achieve it), by carefully defining the decryption oracle {\em after} the IND query. For the `fully quantum case' qIND-qCCA2, however, it is unclear whether such a notion is even possible to define. The problem is that in the CCA2 game it is necessary to ensure that the adversary does not ask for a decryption of the challenge ciphertext, leading to a trivial break. While this is easily demanded in the classical world, it raises several issues in the quantum world. What does it mean for a quantum ciphertext state to be different from the challenge ciphertext?
And, more importantly: how can the challenger check? There might be several reasonable ways to solve the first issue but, as long as the queries are not classical, it is not known how to solve the second issue without disturbing the challenge ciphertext and the query states. Defining CCA2 security notions in the quantum world is an outstanding open problem~\cite{GHS16,ABF+16}.

All the indistinguishability notions for classical SKES in the quantum world are summarized in Figure~\ref{fig:QS2relations}.

\begin{figure}[t]
\begin{center}
\includegraphics[width=\textwidth]{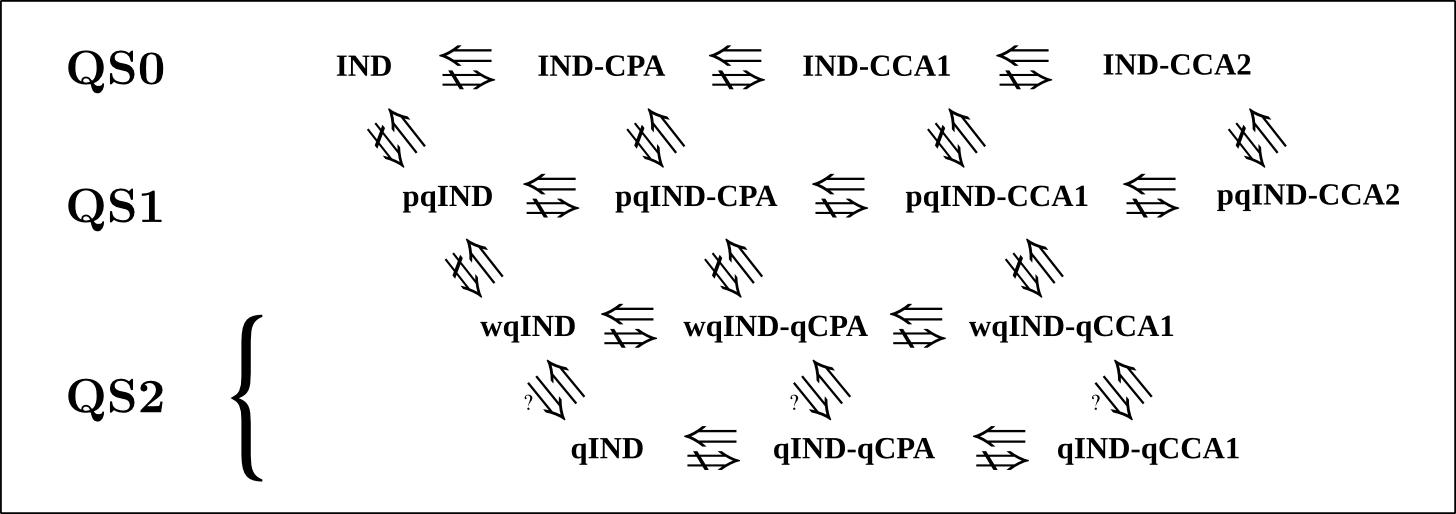}
\end{center}
\caption{Relations for SKES security notions in the quantum world.}
\label{fig:QS2relations}
\end{figure}

\chapter{QS3: Fully Quantum Security}\label{chap:QS3}

In the previous chapters, we studied the security of {\em classical} cryptographic primitives in different quantum scenarios. In this chapter, instead, we focus on the security of {\em quantum} cryptographic primitives, that is, cryptographic primitives which are meant to be natively run on a quantum computing device. The quantum security class \QS3 encompasses all those cryptographic objects which deal mainly with the manipulation and protection of {\em quantum data}. As such, one can see \QS3 as a natural extension of \QS0 to a `fully quantum computing world', that is, a world where quantum computing has become ubiquitous, and honest users have access to quantum devices.

One could consider \QS3 to be somehow `the last step', from a chronological point of view, in the study of computer security, in the sense that the models therein only concern possible future scenarios, somehow far away from the contemporary era of classical devices. However, such interpretation has not to be taken too literally. \QS3 is about {\em security of cryptographic primitives which natively deal with quantum information}, and this does not necessarily involve computation performed on some futuristic, fully-fledged quantum computer. As an example, {\em quantum key distribution (QKD)}~\cite{BB84} is a well-studied area in modern cryptography, where honest parties want to establish a shared secret by using quantum communication channels\footnote{Remarkably, most often than not, the term `quantum cryptography' is (incorrectly) used a synonym for `QKD' in scientific literature.}. As such, QKD perfectly fits in the \QS3 domain; however it is far from being futuristic: commercial implemementations of large-scale QKD systems have been available for a few years already~\cite{idquantique}, and have been deployed in many real-world scenarios.

Despite this, in the rest of this chapter we will focus on the study of the quantum security of cryptographic primitives natively designed to run on a fully-fledged quantum computer. We will first introduce the concept of {\em quantum encryption} (that is, cryptographic schemes meant to protect quantum data), and then we will see an application by extending ORAMs to the case where the database to be protected is composed of quantum data.

\subsection{My Scientific Contributions in this Chapter}

Regarding quantum encryption, most of the material from sections~\ref{sec:QS3ske} and~\ref{sec:QS3pke} first appears in~\cite{ABF+16}, which is a joint work with Gorjan Alagic, Anne Broadbent, Bill Fefferman, Christian Schaffner, and Michael St. Jules.

The part about QORAMs in Section~\ref{sec:QORAM} first appeared in~\cite{GKK17}, which is a joint work with Nikolaos P. Karvelas and Stefan Katzenbeisser.

\section{Secret-Key Quantum Encryption}\label{sec:QS3ske}

In this section, we study the computational security of {\em quantum encryption schemes}, that is, schemes which are meant to protect quantum data. In this sense, plaintexts and ciphertexts are pure quantum states from Hilbert spaces of appropriate dimension, or mixed states of such. In fact, the schemes described in this section are meant to work on {\em arbitrary} quantum states, even those who might be entangled with external systems, therefore it is crucial to use the density matrix formalism. Accordingly, (families of) classical plaintext and ciphertext spaces \X and \Y are replaced with quantum operator spaces \QX and \QY respectively, where $\Hilbert_\X$ and $\Hilbert_\Y$ are (families of) complex Hilbert spaces of dimension $\card{\X} = 2^\m$ and $\card{\Y} = 2^\c$ respectively, for functions $\m$ and $\c$ polynomial in the security parameter \secpar.

However, the encryption keys used will still be classical. This is actually a feature, as these schemes require for honest parties to be able to encrypt and decrypt several times with the same keys, and classical keys can be stored and managed more easily.

\subsection{Definitions, and the Quantum One-Time Pad}

We start by defining {\em secret-key quantum encryption schemes (SKQES)}, as introduced in~\cite{ABF+16}. We assume that the secret-key space is defined as $\K = \family{\K} := \bin^\secpar$, so that the key-length is~$\secpar$ bits. Later, we will define an additional Hilbert space $\HEnv$ (the {\em environment space}) in order to model auxiliary information used by some adversary. Encryption accepts a classical key and a quantum plaintext, and outputs a quantum ciphertext; decryption accepts a classical key and a quantum ciphertext, and outputs a quantum plaintext. The correctness guarantee is that plaintexts are preserved (up to negligible error) under encryption followed by decryption under the same key.

\begin{definition}[Secret-Key Quantum Encryption Scheme (SKQES)]\label{def:qskes}
A {\em se\-cret-key quantum encryption scheme (SKQES)} with plaintext space \QX, ciphertext space \QY, and (classical) key space \K is a tuple of \QPT algorithms $\E := \E_{\K,\QX,\QY} := (\KGen,\QEnc,\QDec)$:
\begin{enumerate}
\item $\KGen: \to \K$;
\item $\QEnc: \K \times \QX \to \QY$;
\item $\QDec: \K \times \QY \to \QX$;
\end{enumerate}
such that $\left| \QDec_k \circ \QEnc_k - \Id_\HX \right|_\diamond \leq \negl$ for all $k \from \KGen$.
\end{definition} 

As usual, we denote by $\QEnc_k$ the action of $\QEnc$ on a specific, fixed key $k \from \KGen$, and analogously for $\QDec_k$. However, unlike in the case of Definition~\ref{def:skes}, for simplicity we will omit the possibility that the decryption algorithm answers (a quantum analogue of) $\bot$ to some decryption queries. One of the most basic examples of SKQES is the {\em quantum one-time pad (QOTP)}. The QOTP takes as input an \secpar-qubit plaintext spaces and a $2\secpar$-bit secret key. Every pair of bits from the key selects one over four possible single-qubit Pauli operators $\Id,X,Y,Z$ as $X^{\text{(first bit)}} Z^{\text{(second bit)}}$. Thus, the secret key defines a sequence of \secpar independent single-qubit Pauli operators, each of them to be applied separately to each of the \secpar qubits of the plaintext (that is, the key defines an element of the $\secpar$-qubit Pauli group), resulting in the ciphertext. Since Pauli operators are self-adjoint, decryption just applies the same procedure to the ciphertext state.

\begin{construction}[Quantum One-Time Pad (QOTP)\cite{QOTP1,QOTP2}]\label{constr:qotp}
Let $\HX = \HY$ of dimension $\bin^\secpar$, and let $\K = \bin^{2\secpar}$. Define the {\em quantum one-time pad (QOTP) on \secpar qubits} $\QOTP:=(\KGen,\QEnc,\QDec)$ as the SKQES with key space $\K$, plaintext space \QX, and ciphertext space \QY, defined as:
\begin{enumerate}
\item $\KGen \to k$, with  $k \rand \K$;
\item $\QEnc_k(\phi) := P(k) \phi P(k)^\dagger$;
\item $\QDec_k(\rho) := P(k) \psi P(k)^\dagger$,
\end{enumerate}
where $P(k) := \prod_{j=1}^{\secpar} X_j^{k_{2j-1}} Z_j^{k_{2j}} \in \pauli_\secpar$, and $k_j$ is the $j$-th bit of $k$.
\end{construction}

Notice how {\em two} bits of key are needed for every qubit of plaintext. The QOTP is known~\cite{QOTP1,QOTP2} to be quantum information-theoretically secure, as long as the key is completely random and only used once.

\subsection{Quantum Indistinguishability}

We use a definition of {\em computational quantum indistinguishability} introduced in~\cite{BJ15}, which we relabel here as QIND for our purposes (notice the capital `Q', unlike Definition~\ref{def:qIND}), and which is the analogue of the classical IND notion, by keeping in mind that a quantum adversary for a SKQES could try to distinguish states that he has previously entangled with the environment. Intuitively, the adversary produces a tripartite system, composed of two plaintext states and an environment state. The environment state is passed to the second stage adversary, who also receives an encryption of one of the two other states, selected at random, while the other one is traced out. As usual, the goal of the adversary is to guess which one of the two plaintext system was selcted for encryption. Formally, we define the following.

\begin{experiment}[$\gameQIND$]\label{expt:QIND}
Let $\E$ be a SKQES, and $\A:=(\M,\D)$ a QIND adversary as from Definition~\ref{def:qINDadv}. The {\em QIND experiment} proceeds as follows:
\begin{algorithmic}[1]
\State \textbf{Input:} $\secpar \in \NN$
\State $k \from \KGen$
\State $(\phi^0,\phi^1,\sigma) \from \M$
\State $b \rand\bin$
\State $\psi \from \QEnc(\phi^b)$
\State trace out $\phi^{1-b}$
\State $b' \from \D(\psi,\sigma)$
\If{$b = b'$}
	\State \textbf{Output:} $1$
\Else
	\State \textbf{Output:} $0$
\EndIf
\end{algorithmic}
The {\em advantage of \A} is defined as:
$$
\advQIND := \Pr \left[ \gameQIND \to 1 \right] - \half .
$$
\end{experiment}

\begin{definition}[Indistinguishability of Quantum Ciphertexts (QIND)]\label{def:QIND}
A SKQES $\E$ has {\em indistinguishable quantum encryptions (or, it is QIND secure)} iff, for any QIND adversary $\A$ it holds that: $\advQIND \leq \negl$.
\end{definition}

Notice how this definition and the related experiment are exactly the same as Experiment~\ref{expt:qIND} and Definition~\ref{def:qIND}, even the adversarial model is the same as in the qIND case from Chapter~\ref{chap:QS2}. This is not incidental: historically, notions of computational indistinguishability for encrypted quantum states have been introduced in~\cite{BJ15} and~\cite{GHS16} as concurrent and independent works (although~\cite{BJ15} was published earlier), but for different purposes and with slightly different flavors. What we call here QIND was originally called q-IND-CPA-2 in~\cite{BJ15} (minus the CPA part), while qIND was originally called $(\C Q n 2 e)$-IND in~\cite{GHS16}. However, the former notion was given in the context of {\em fully homomorphic quantum encryption} (which, according to our framework, belongs to the \QS3 setting), while the latter was given in the context of {\em superposition-resistant quantum encryption} (as we mean it in the \QS2 sense). Further developments on the topic appeared in~\cite{ABF+16} and in the proceedings version of~\cite{GHS16}, which led to the conclusion that this indistinguishability model for quantum encryption is virtually the same, which can be used both in the setting of classical encryption resistant to quantum queries (\QS2) or `fully' quantum encryption (\QS3). In this work, in the attempt of providing a unified notation to work with, we use respectively `qIND' and `QIND' (with different capitalization of the first letter) in order to highlight the specific domain we are talking about, but making clear that, technically, it is the same model.

As usual, we can extend the QIND notion to CPA and non-adaptive CCA attacks. Since we are in the \QS3 domain, it is not ambiguous to write (e.g.) QIND-CPA instead of QIND-QCPA, because the plaintexts we are considering are inherently quantum, so a CPA notion in this scenario {\em must} be quantum. Hence, without need of specifying further, we call the resulting notions QIND-CPA and QIND-CCA1. This is also useful in order to understand `at first glance' that we are talking about a \QS3 notion.

\begin{experiment}[$\gameQINDCPA$]\label{expt:QINDCPA}
Let $\E$ be a SKQES, and $\A:=(\M,\D)$ a QIND adversary. The {\em QIND-CPA experiment} proceeds as follows:
\begin{algorithmic}[1]
\State \textbf{Input:} $\secpar \in \NN$
\State $k \from \KGen$
\State $(\phi^0,\phi^1,\sigma) \from \M^\QEnc$
\State $b \rand\bin$
\State $\psi \from \QEnc(\phi^b)$
\State trace out $\phi^{1-b}$
\State $b' \from \D^\QEnc(\psi,\sigma)$
\If{$b = b'$}
	\State \textbf{Output:} $1$
\Else
	\State \textbf{Output:} $0$
\EndIf
\end{algorithmic}
The {\em advantage of \A} is defined as:
$$
\advQINDCPA := \Pr \left[ \gameQINDCPA \to 1 \right] - \half .
$$
\end{experiment}

\begin{definition}[Indistinguishability of Quantum Ciphertexts Under Chosen Plaintext Attack (QIND-CPA)]\label{def:QINDCPA}
A SKQES $\E$ has {\em indistinguishable quantum encryptions under chosen plaintext attack (or, it is QIND-CPA secure)} iff, for any QIND adversary $\A$ it holds: $\advQINDCPA \leq \negl$.
\end{definition}

Clearly, QIND-CPA is at least as strong as QIND.

\begin{theorem}[QIND-CPA $\implies$ QIND]\label{thm:QINDCPAtoQIND}
If a SKQES is QIND-CPA secure, then it is also QIND secure.
\end{theorem}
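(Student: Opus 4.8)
The plan is to prove the contrapositive-style implication directly by a simple reduction: any QIND adversary is a special case of a QIND-CPA adversary that makes no encryption oracle queries. Concretely, suppose $\E$ is QIND-CPA secure, and let $\A = (\M,\D)$ be an arbitrary QIND adversary as in Definition~\ref{def:qINDadv} (which is the same adversarial model used in both experiments). I would construct a QIND-CPA adversary $\A' = (\M',\D')$ that ignores its quantum encryption oracle $\QEnc$ entirely: $\M'$ runs $\M$ verbatim to produce $(\phi^0,\phi^1,\sigma)$, and $\D'$ runs $\D$ verbatim on $(\psi,\sigma)$. Since $\A'$ is a legal QIND-CPA adversary (it simply never invokes the oracle it is granted), $\advQINDCPA[\A'] \leq \negl$ by hypothesis.

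The key step is then to observe that the two experiments coincide on this adversary. Comparing Experiment~\ref{expt:QIND} and Experiment~\ref{expt:QINDCPA} line by line: both sample $k \from \KGen$, both run the message generator to get $(\phi^0,\phi^1,\sigma)$, both flip $b \rand \bin$, both compute $\psi \from \QEnc(\phi^b)$, both trace out $\phi^{1-b}$, and both run the distinguisher on $(\psi,\sigma)$ and output $1$ iff $b=b'$. The only difference is that in the CPA experiment $\M$ and $\D$ have oracle access to $\QEnc$; since $\M'$ and $\D'$ make no such queries, the induced distributions on the output bit are identical. Hence $\Pr[\gameQIND[\A] \to 1] = \Pr[\gameQINDCPA[\A'] \to 1]$, so $\advQIND = \advQINDCPA[\A'] \leq \negl$. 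As $\A$ was arbitrary, $\E$ is QIND secure.

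There is essentially no obstacle here — this is the standard ``more oracles can only help the adversary'' argument, and the mild care needed is only to check that $\A'$ is genuinely a well-formed QIND-CPA adversary (it is, since the security definition quantifies over \emph{all} QIND adversaries with the given oracle, including those that use it trivially) and that the tracing-out and key-generation steps are literally identical in the two games (they are, by inspection of the experiments). If one wanted to be slightly more formal, one could phrase this as: $\gameQIND = \gameQINDCPA[\A^{\emptyset}]$ where $\A^{\emptyset}$ denotes $\A$ equipped with but not using the encryption oracle, mirroring the remarks elsewhere in the text such as $\gameINDqCPA = \gameIND[\A^{\ket{\Enc_k}}]$. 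The only thing I would double-check is that no implicit nontriviality condition (e.g. a requirement that the CPA adversary query the oracle at least once) is imposed in Definition~\ref{def:QINDCPA} — there is none, so the reduction goes through unchanged.
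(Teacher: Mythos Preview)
Your proposal is correct and is exactly the standard argument the paper has in mind: the paper does not give a proof at all, merely stating ``Clearly, QIND-CPA is at least as strong as QIND'' before the theorem. Your reduction (a QIND adversary is a QIND-CPA adversary that ignores its oracle) is precisely the justification for that ``clearly''.
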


However, the converse is not necessarily true. For example, the QOTP (Construction~\ref{constr:qotp}) is information-theoretically secure for random, unrelated keys, and thus it is also QIND. However, as in the classical OTP analogue, security is compromised if the same key is used more than once.

\begin{theorem}[QIND $\nimplies$ QIND-CPA]\label{thm:impossQIND}
There exist SKQES which are QIND secure, but not QIND-QCPA secure.
\end{theorem}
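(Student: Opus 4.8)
The plan is to prove Theorem~\ref{thm:impossQIND} by exhibiting a concrete SKQES which is QIND secure but not QIND-CPA secure, in direct analogy to the classical separation between IND and IND-CPA given in Theorem~\ref{thm:INDnotoINDCPA}. The natural candidate is the quantum one-time pad (Construction~\ref{constr:qotp}) itself. First I would recall that the QOTP is quantum information-theoretically secure whenever the key is uniformly random and used only once: for a single encryption query, the ciphertext state $\QEnc_k(\phi^b) = P(k)\phi^b P(k)^\dagger$ averaged over $k\rand\bin^{2\secpar}$ equals the totally mixed state $\tau$ on $\secpar$ qubits, independently of $\phi^b$ and of any entanglement with the environment register $\sigma$ held by the adversary. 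This immediately gives $\advQIND \leq \negl$ (in fact $=0$), so the QOTP is QIND secure by Definition~\ref{def:QIND}. This part is essentially a citation of \cite{QOTP1,QOTP2} together with the observation that the twirl over the full Pauli group is a perfect randomizing channel, so it requires no real work.

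Next I would exhibit an explicit QIND-CPA distinguisher. The key point is that in the QIND-CPA game (Experiment~\ref{expt:QINDCPA}) the same secret key $k$ is reused across all the CPA oracle queries and the challenge query, exactly as in the classical IND-CPA attack against the OTP. The message generator $\M^{\QEnc}$ picks two fixed, distinct single-qubit (or $\secpar$-qubit) basis states, say $\phi^0 = \ketbra{0^\secpar}$ and $\phi^1 = \ketbra{1^\secpar}$, and before outputting them queries the encryption oracle on $\phi^0$ (and perhaps on $\phi^1$), recording the classical outcome of measuring the returned ciphertext in the computational basis --- this reveals the $X$-part of the key, i.e.\ the bits $k_{2j-1}$. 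It passes this recorded information to $\D^{\QEnc}$ via the environment register $\sigma$. Upon receiving the challenge ciphertext $\psi = \QEnc_k(\phi^b)$, the distinguisher $\D$ measures $\psi$ in the computational basis, XORs the outcome with the recorded $X$-part bits to strip off the Pauli-$X$ masking (the $Z$-part acts only as a phase and is invisible to a computational-basis measurement on a basis-state plaintext), and thereby recovers $b$ with certainty. Hence $\advQINDCPA$ is non-negligible (indeed close to $\half$), so the QOTP is not QIND-CPA secure by Definition~\ref{def:QINDCPA}.

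I expect the main obstacle to be purely expository rather than mathematical: one must be careful to present the attack in a way that is manifestly compatible with the formal QIND-CPA game, in particular that all the `learning' happens through legitimate oracle queries to $\QEnc$ and that the information transported through $\sigma$ is classical (so there is no subtlety about no-cloning or about the adversary needing to keep entanglement). A secondary subtlety worth a sentence is the role of randomized versus deterministic encryption: the QOTP is deterministic once $k$ is fixed, which is precisely why key reuse is fatal --- this mirrors the remark after Theorem~\ref{thm:INDnotoINDCPA} that no deterministic scheme can be CPA secure, and one could even phrase the proof as ``the QOTP is deterministic for fixed $k$, hence a single CPA query on a known plaintext lets the adversary learn the effective Pauli mask and break the challenge''. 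I would close by noting that this also shows the separation is tight in the sense that QIND-CPA genuinely adds power over QIND, just as in the classical hierarchy of Figure~\ref{fig:QS0relations}, and that the same example separates QIND-CPA from QIND-CCA1 with the analogous classical-style argument.
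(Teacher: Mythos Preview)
Your proposal is correct and takes essentially the same approach as the paper: the paper's argument is simply the sentence preceding the theorem, namely that the QOTP (Construction~\ref{constr:qotp}) is information-theoretically secure for a single use and hence QIND, but key reuse across CPA queries breaks it just as for the classical OTP. Your explicit attack recovering the $X$-part of the Pauli key via a computational-basis measurement is a correct and welcome elaboration of what the paper leaves implicit. One small slip in your closing remark: the QOTP cannot separate QIND-CPA from QIND-CCA1, since it already fails QIND-CPA; that separation (Theorem~\ref{thm:QINDCPAnotoQINDCCA1}) needs a different counterexample, mirroring the classical Theorem~\ref{thm:INDCPAnotoINDCCA1}.
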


As usual, extending the above security notion to the QCCA1 case is straightforward.

\begin{experiment}[$\gameQINDCCA$]\label{expt:QINDCCA}
Let $\E$ be a SKQES, and $\A:=(\M,\D)$ a QIND adversary. The {\em QIND-CCA1 experiment} proceeds as follows:
\begin{algorithmic}[1]
\State \textbf{Input:} $\secpar \in \NN$
\State $k \from \KGen$
\State $(\phi^0,\phi^1,\sigma) \from \M^{\QEnc,\QDec}$
\State $b \rand\bin$
\State $\psi \from \QEnc(\phi^b)$
\State trace out $\phi^{1-b}$
\State $b' \from \D^\QEnc(\psi,\sigma)$
\If{$b = b'$}
	\State \textbf{Output:} $1$
\Else
	\State \textbf{Output:} $0$
\EndIf
\end{algorithmic}
The {\em advantage of \A} is defined as:
$$
\advQINDCCA := \Pr \left[ \gameQINDCCA \to 1 \right] - \half .
$$
\end{experiment}

\begin{definition}[Indistinguishability of Quantum Ciphertexts Under Non\-Adaptive Chosen Ciphertext Attack (QIND-CCA1)]\label{def:QINDCCA}
A SKQES $\E$ has {\em indistinguishable quantum encryptions under non-adaptive chosen ciphertext attack (or, it is QIND-CCA1 secure)} iff, for any QIND adversary $\A$ it holds:
$$
\advQINDCCA \leq \negl.
$$
\end{definition}

As in the classical case, in a completely specular way to Theorems~\ref{thm:INDCCA1toINDCPA} and~\ref{thm:INDCPAnotoINDCCA1}, one can show that QIND-CCA1 is strictly stronger than QIND-CPA.

\begin{theorem}[QIND-CCA1 $\implies$ QIND-CPA]\label{thm:QINDCCA1toQINDCPA}
If a SKQES is QIND-CCA1 secure, then it is also QIND-CPA secure.
\end{theorem}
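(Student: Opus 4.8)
The statement to prove is Theorem~\ref{thm:QINDCCA1toQINDCPA}: QIND-CCA1 security implies QIND-CPA security for secret-key quantum encryption schemes. The plan is to give a direct reduction, exploiting the fact that the QIND-CPA experiment is essentially the QIND-CCA1 experiment with strictly fewer oracles available to the adversary. The key observation is that any QIND-CPA adversary $\A = (\M,\D)$ is already a legitimate QIND adversary (in the sense of Definition~\ref{def:qINDadv}), and the message generator $\M$ in the QIND-CPA game has access only to $\QEnc$, while in the QIND-CCA1 game it has access to both $\QEnc$ and $\QDec$; the distinguisher $\D$ has access to $\QEnc$ in both games.

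The reduction proceeds as follows. First I would assume towards contradiction that $\E$ is QIND-CCA1 secure but not QIND-CPA secure, so there exists a QIND adversary $\A = (\M, \D)$ with $\advQINDCPA$ non-negligible. Next I would construct a QIND-CCA1 adversary $\A' = (\M', \D')$ that simply ignores its extra power: $\M'$ runs $\M$, forwarding $\M$'s queries to $\QEnc$ to its own $\QEnc$ oracle and never querying $\QDec$ at all, and outputs whatever $(\phi^0, \phi^1, \sigma)$ that $\M$ outputs; likewise $\D'$ runs $\D$ with the challenge ciphertext $\psi$ and environment $\sigma$, forwarding $\QEnc$ queries, and outputs $\D$'s bit $b'$. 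Then I would observe that the view of $\A$ inside the simulated QIND-CPA experiment is identical to the view it would have in a genuine QIND-CPA experiment: the key generation, the random choice of $b$, the encryption $\psi \from \QEnc(\phi^b)$, the tracing-out of $\phi^{1-b}$, and all oracle answers are distributed exactly the same way. Hence $\Pr[\gameQINDCCA \to 1] = \Pr[\gameQINDCPA \to 1]$ for this pair, so $\advQINDCCA[\A'] = \advQINDCPA$, which is non-negligible, contradicting QIND-CCA1 security of $\E$.

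There is essentially no hard part here — this is the standard "strictly more oracles means strictly stronger notion" argument, transported verbatim to the quantum setting, and it is the exact analogue of Theorem~\ref{thm:INDCCA2toINDCCA1} and Theorem~\ref{thm:INDCCA1toINDCPA} in the classical \QS0 chapter. The only point requiring a sentence of care is noting that $\A'$ is a valid QIND-CCA1 adversary: it must be a pair of \QPT algorithms with the right oracle-access pattern, and since $\M'$ and $\D'$ only wrap $\M$ and $\D$ with trivial oracle-forwarding (and in particular $\M'$ never touches $\QDec$, which is harmless — using fewer oracles is always allowed), this is immediate. One might also remark that, as in the classical case, this containment is strict, with the QOTP reused on a fresh key not being the right separating example (that separates QIND from QIND-CPA, Theorem~\ref{thm:impossQIND}); a proper separation would mirror the classical Construction~\ref{constr:goldreich}-style counterexample, but proving strictness is not required by the theorem statement and I would not pursue it.

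\begin{proof}
Assume towards contradiction that $\E$ is QIND-CCA1 secure but not QIND-CPA secure. Then there exists a QIND adversary $\A = (\M,\D)$ (in the sense of Definition~\ref{def:qINDadv}) such that $\advQINDCPA$ is non-negligible.

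We construct a QIND-CCA1 adversary $\A' = (\M',\D')$ as follows. The message generator $\M'$ has oracle access to $\QEnc$ and $\QDec$; it runs $\M$ internally, answering each of $\M$'s queries to $\QEnc$ by forwarding it to its own $\QEnc$ oracle, and never queries $\QDec$. When $\M$ halts with output $(\phi^0,\phi^1,\sigma)$, $\M'$ outputs the same tuple. The distinguisher $\D'$ has oracle access to $\QEnc$; on input $(\psi,\sigma)$ it runs $\D$, forwarding $\D$'s $\QEnc$ queries to its own oracle, and outputs the bit $b'$ that $\D$ outputs. Both $\M'$ and $\D'$ are \QPT, since they only wrap the \QPT algorithms $\M$ and $\D$ with oracle-forwarding, so $\A'$ is a valid QIND-CCA1 adversary (in particular, not using the $\QDec$ oracle is permitted).

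In the experiment $\gameQINDCCA[\A']$, the key $k \from \KGen$, the bit $b \rand \bin$, the challenge ciphertext $\psi \from \QEnc(\phi^b)$, the tracing-out of $\phi^{1-b}$, and all answers to $\QEnc$ queries are produced exactly as in $\gameQINDCPA$; moreover $\A$ is never affected by the $\QDec$ oracle since $\M'$ does not invoke it. Therefore the joint distribution of the view of $\A$ and of the output bit is identical in $\gameQINDCCA[\A']$ and in $\gameQINDCPA$, so
$$
\Pr \left[ \gameQINDCCA[\A'] \to 1 \right] = \Pr \left[ \gameQINDCPA \to 1 \right],
$$
and hence $\advQINDCCA[\A'] = \advQINDCPA$, which is non-negligible. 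This contradicts the QIND-CCA1 security of $\E$, completing the proof.
\endproof
\end{proof}
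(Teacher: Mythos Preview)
Your proof is correct and follows the standard approach. The paper does not actually give an explicit proof of this theorem; it simply states that the result holds ``in a completely specular way to Theorems~\ref{thm:INDCCA1toINDCPA} and~\ref{thm:INDCPAnotoINDCCA1}'', which is precisely the trivial oracle-ignoring reduction you have written out.
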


\begin{theorem}[QIND-CPA \nimplies QIND-CCA1]\label{thm:QINDCPAnotoQINDCCA1}
There exists a SKQES which is QIND-CPA secure, but not QIND-CCA1 secure.
\end{theorem}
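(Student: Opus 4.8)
\textbf{Proof plan for Theorem~\ref{thm:QINDCPAnotoQINDCCA1} (QIND-CPA \nimplies QIND-CCA1).}
The plan is to mirror the classical separation of Theorem~\ref{thm:INDCPAnotoINDCCA1} in the quantum setting. Starting from an arbitrary QIND-CPA secure SKQES $\E = (\KGen,\QEnc,\QDec)$ with plaintext space \QX and ciphertext space \QY, I would construct a modified scheme $\E' = (\KGen',\QEnc',\QDec')$ that embeds a ``trapdoor'' into the decryption procedure which is triggered only by ciphertexts of a special syntactic form that an honest encryption never produces, but that a CCA1 adversary can forge. Concretely, $\KGen'$ samples $k \from \KGen$ together with a secret tag string $t \rand \bin^\secpar$ (alternatively, a fixed special plaintext register), and $\QEnc'_k$ appends to each ciphertext an auxiliary classical register set to a fixed ``non-trapdoor'' flag, e.g. $0^\secpar$. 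Decryption $\QDec'_k$ behaves as $\QDec_k$ on inputs carrying the flag $0^\secpar$, but on any input whose auxiliary register (when measured) equals some designated value $1^\secpar$ it instead outputs (a quantum state encoding) the secret key $k$ — or, to keep plaintext/ciphertext spaces fixed, it leaks $k$ into the output register. I would measure the auxiliary classical register at the start of $\QDec'$, so that this branching is a legitimate CPTP map and does not disturb honestly-formed ciphertexts.

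The first main step is to verify that $\E'$ remains QIND-CPA secure. This follows by a reduction to the QIND-CPA security of $\E$: given a QIND-CPA adversary $\A'$ against $\E'$, build $\A$ against $\E$ that simulates $\A'$'s encryption-oracle queries by forwarding them to its own $\QEnc$ oracle and appending the flag $0^\secpar$, and relays the challenge ciphertext with the flag appended. Since in the CPA game no decryption oracle is available, $\A'$ never sees the trapdoor behaviour, so its view is identically distributed to a real QIND-CPA$_{\E'}$ game; hence $\advQINDCPA[\A'] = \advQINDCPA[\A] \leq \negl$. The second main step is the explicit CCA1 attack on $\E'$: the adversary \A queries $\QEnc'_k$ on any fixed plaintext state to obtain a valid ciphertext $\psi$, swaps the auxiliary register of $\psi$ to carry $1^\secpar$, submits the result to the decryption oracle $\QDec'_k$ during the (non-adaptive) learning phase, and thereby recovers the secret key $k$. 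With $k$ in hand, \A wins the subsequent QIND challenge with certainty by decrypting the challenge ciphertext itself, so $\advQINDCCA = \half$, which is non-negligible.

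The step I expect to require the most care is phrasing the trapdoor branch as a genuinely valid quantum operation and arguing it does not break \emph{correctness} of $\E'$ in the sense of Definition~\ref{def:qskes}. Since honest ciphertexts always carry the flag $0^\secpar$ in a computational-basis register, measuring that register is harmless on honest inputs (it is already classical there), and the trapdoor branch is never activated, so $\left| \QDec'_k \circ \QEnc'_k - \Id_\HX \right|_\diamond \leq \negl$ inherits directly from $\E$. One should also confirm that fixing the plaintext/ciphertext Hilbert space dimensions (by padding rather than literally changing \QX,\QY) does not interfere with either the reduction or the attack; this is routine. A minor subtlety is that $\QDec'$ must ``leak $k$'' within a fixed-size output register — encoding the $\secpar$-bit string $k$ as a computational-basis state in (part of) the plaintext register suffices, since the adversary can then measure it classically. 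Altogether this gives a SKQES that is QIND-CPA secure but trivially broken under QIND-CCA1, completing the separation.
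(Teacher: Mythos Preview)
Your proof is correct. The paper does not spell out an argument here, stating only that it follows ``in a completely specular way'' to the classical Theorem~\ref{thm:INDCPAnotoINDCCA1}. That classical construction places the trapdoor in the \emph{encryption} algorithm: a secret special message $\overline{m}$ is part of the key, encrypting $\overline{m}$ leaks $k$, and the CCA1 attack proceeds by encrypting an arbitrary message to obtain $(\Enc_k(m),\Enc_k(\overline{m}))$, decrypting the swapped pair to learn $\overline{m}$, then encrypting $\overline{m}$ to recover $k$. Your construction instead places the trapdoor in the \emph{decryption} algorithm via a classical flag register appended to every ciphertext. This is simpler and, more to the point, sidesteps a genuine nuisance that a direct quantum port of the paper's classical construction would face: testing whether a quantum plaintext ``equals'' a fixed $\overline{m}$ requires a measurement that may disturb superposition inputs, so one would have to argue carefully that this does not spoil correctness or the CPA reduction. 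Your flag-based variant avoids this entirely, since the branching in $\QDec'$ occurs on a register that is always in a computational-basis state on honest inputs. One minor remark: your CCA1 attack does not actually need the preliminary $\QEnc'$ query --- the adversary can directly submit any state carrying the flag $\ket{1^\secpar}$ to $\QDec'_k$ --- but this only streamlines the attack and does not affect the argument.
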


\subsection{Secure Construction}

QIND-CCA1 secure SKQES can be constructed given the existence of pqPRF (and hence from pqOWF, as from Corollary~\ref{cor:pqOWFiffpqPRF}), as shown in~\cite{ABF+16}. The idea of the construction is analogous to the one for Construction~\ref{constr:goldreich}: a (classical) randomness is processed by the keyed pqPRF, and the output is used as a key for the QOTP; the ciphertext is composed by the output of the QOTP, plus the classical randomness.

\begin{construction}[{\cite[Scheme 1]{ABF+16}}]\label{constr:qskes1}
Let $\PRF:\K \times \bin^{2\secpar} \to \bin^{2\secpar}$ be a pqPRF as from Definition~\ref{def:pqPRF}, and let $\HX$ be a complex Hilbert space of dimension $2^\secpar$. Define $\E = \E_{\K,\QX,\QX}:=(\KGen,\QEnc,\QDec)$ as the SKQES with key space $\K$, plaintext and ciphertext space \QX, as follows:
\begin{enumerate}
\item $\KGen \to k$, with  $k \rand \K$;
\item $\QEnc_k(\phi) \to \psi \otimes \ketbra{r}$, with $\psi := \QOTP[\PRF_k(r)](\phi)$, where $r \rand \bin^{2\secpar}$; 
\item $\QDec_k(\rho) \to \QOTP[\PRF_k(s)](\sigma)$, where $s$ is obtained by measuring the last $2\secpar$ qubits of $\rho$, while $\sigma$ is the reduced state left after such a measurement.
\end{enumerate}
\end{construction}

The above construction is QIND-CCA1 secure.

\begin{theorem}[{\cite[Lemma 14]{ABF+16}}]\label{thm:QCCA1constr}
Let \E be the SKQES from Construction~\ref{constr:qskes1} built using a pqPRF \PRF. Then \E is QIND-CCA1 secure.
\end{theorem}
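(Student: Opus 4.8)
The plan is to reduce \textsf{QIND-CCA1} security of the scheme $\E$ in Construction~\ref{constr:qskes1} to the post-quantum security of the underlying $\PRF$, via a short game-hopping argument that mirrors the classical reduction for Construction~\ref{constr:goldreich} (Theorems~\ref{thm:GoldreichINDCPA} and~\ref{thm:GoldreichINDCCA1}) while taking care of the purely quantum components. First I would set up $\game_0$ as the original \textsf{QIND-CCA1} experiment $\gameQINDCCA$. In $\game_1$ I would replace every invocation of $\PRF_k$ --- both inside the learning-phase encryption and decryption oracles $\QEnc,\QDec$, and inside the challenge encryption --- by a truly random function $\h \rand \bin^{2\secpar \to 2\secpar}$, using lazy sampling since all queries here are classical (the randomness $r$, and the measured string $s$ in decryption, are classical). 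The key point is that the whole interaction between $\A$ and its oracles can be simulated by a \QPT distinguisher $\D$ that has (classical) oracle access to either $\PRF_k$ or $\h$: $\D$ simply runs $\A$, answers each \QEnc query by sampling fresh $r$, querying its oracle to get a pad $p$, applying $\QOTP[p]$, and appending $\ketbra{r}$; answers each \QDec query by measuring the last $2\secpar$ qubits to get $s$, querying its oracle for $p'$, applying $\QOTP[p']$; and finally outputs whatever $\A$ outputs. Hence $\left|\Pr[\A \text{ wins } \game_0] - \Pr[\A \text{ wins } \game_1]\right| \leq \negl$ by Definition~\ref{def:pqPRF} (this is exactly where a \emph{post-quantum}, not necessarily superposition-secure, PRF suffices, because $\D$'s queries stay classical).

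Next, in $\game_2$ I would argue that conditioned on the event that the challenge randomness $r^\star$ was never used (and will never be used) in any learning-phase encryption query, and was never the measured string $s$ from a learning-phase decryption query, the value $\h(r^\star)$ is a uniformly random, independent $2\secpar$-bit string from $\A$'s view. Since $\A$ is \QPT it makes only $\poly(\secpar)$ queries, each fixing at most one value of $r$ (the encryption queries) or revealing at most one value of $s$ (the decryption queries, which happen only \emph{before} the challenge in the CCA1 model), and since $r^\star$ is drawn uniformly from a set of size $2^{2\secpar}$, the bad event has negligible probability. Conditioned on its complement, the challenge ciphertext is $\QOTP[u](\phi^b) \otimes \ketbra{r^\star}$ for a uniformly random fresh key $u$, which by the information-theoretic security of the QOTP (Construction~\ref{constr:qotp}, \cite{QOTP1,QOTP2}) is the totally mixed state on $\HX$ tensored with $\ketbra{r^\star}$ --- completely independent of $b$. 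Therefore $\A$'s advantage in $\game_2$ is $0$, and chaining the bounds gives $\advQINDCCA \leq \negl$.

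One subtlety I would be careful about is the CCA1 decryption oracle: in $\game_1$ a \QDec query first \emph{measures} the classical register of the submitted ciphertext to extract $s$, so even though $\A$ may submit a superposition over that register, the reduction $\D$ only ever needs a classical query to $\PRF_k$ (resp.\ $\h$); this is what keeps the whole argument inside the post-quantum regime and avoids needing superposition access to the PRF. I would also note explicitly that since decryption queries are forbidden after the challenge in the CCA1 model, I never have to reason about $\A$ feeding the challenge ciphertext (or a perturbation of it) back to $\QDec$, which sidesteps the open problems around defining quantum CCA2 discussed at the end of Chapter~\ref{chap:QS2}. The main obstacle --- and the only place real care is needed --- is the probabilistic bookkeeping in $\game_2$: making precise that across all $\poly(\secpar)$ learning queries the randomness $r^\star$ collides with a "touched" value only with negligible probability, and that $\h(r^\star)$ is genuinely fresh conditioned on no collision, including the randomness consumed inside the QOTP key expansion. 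Everything else is a routine transcription of the classical reduction composed with the one-time security of the QOTP, exactly as in~\cite[Lemma 14]{ABF+16}.
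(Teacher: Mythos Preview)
Your proposal is correct and takes essentially the same approach as the paper: replace $\PRF_k$ by a truly random function via the pqPRF security (noting that all oracle queries to the PRF are classical because $r$ is sampled classically and $s$ is obtained by measurement), then argue that conditioned on the challenge randomness $r^\star$ not colliding with any learning-phase value, the QOTP key is fresh and the challenge ciphertext is information-theoretically independent of $b$. The paper's exposition differs only cosmetically (it first analyzes the ideal random-function case via a slightly stronger adversary and then gives the PRF reduction, rather than your explicit game-hopping), but the substance is the same.
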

\begin{proof}
First, we analyze the security of the scheme in an idealized scenario where \PRF is replaced by a function $\f: \bin^{2\secpar} \to \bin^{2\secpar}$ selected truly at random. We show that, in this case, \A correctly guesses the challenge state in the QIND-CCA1 game with probability at most $\half + \negl$. In fact, this bound holds for a stronger adversary $\A^*$, who has access to a classical oracle for~$\f$ prior to the challenge, and access to polynomially-many pairs $(r_i, \f(r_i))$ where $r_i \rand \bin^{2\secpar}$ for $1 = 1,\ldots, q= \poly(\secpar)$, after the challenge. This adversary is stronger than $\A$ since it can simulate~$\A$ by implementing the oracles $\Enc_\f$ and $\Dec_\f$ using its $\f$ oracles. Since the input $r$ into $\f$ in the challenge ciphertext is uniformly random, the probability that any of the polynomially-many oracle calls of $\A^*$ uses the same $r$ is negligible. In the case that no oracle calls use $r$, the mixtures of the inputs to $\A^*$ (including the pairs $(r_i, \f(r_i))$) are the same for any of the two original challenge states. This fact can be verified by first averaging over the values of $\f(r)$: since $\f$ is uniformly random, $\f(r)$ is also uniformly random as well as independent of the other values of $\f$. In both cases, applying the quantum one-time pad results in the state:
\begin{equation*}
\dfrac{1}{2^n} \Id \otimes \ketbra{r} \otimes  \sigma \otimes \ketbra{r_1} \otimes \ketbra{\f(r_1)} \otimes \dots \otimes \ketbra{r_q} \otimes \ketbra{\f(r_q)},
\end{equation*}
where $\sigma$ is the state in the `environment register' of $\A^*$ (communication channel in Experiment~\ref{expt:QINDCCA}), and hence indistinguishability follows.

Next, we consider the case that $\f$ is replaced by a post-quantum pseudorandom function $\PRF_k$ for a random key $k$. We show that a successful QIND-CCA1 adversary $\A$ (i.e., one that distinguishes challenges with probability at least $\half + \epsilon$ for non-negligible $\epsilon$) can be used to construct a successful adversary $\B$ for the pqPRF, i.e., one that distinguishes $\PRF_k$ from random with non-negligible advantage over guessing. The adversary $\B$ is a \QPT algorithm with classical oracle access to a function $\h : \bin^{2\secpar} \to \bin^{2\secpar}$, and his goal is to output $0$ if $\h$ is selected perfectly at random, and $1$ if $\h = \PRF_k$ for some $k$. Define the simulated oracles:
\begin{equation*}
\begin{aligned}
& \QEnc_\h : \phi \mapsto \QOTP[\h(r)](\phi) \otimes \ketbra{r} \text{ for } r \rand \bin^{2\secpar}; \quad \text{and} \\
& \QDec_\h : \psi \otimes \ketbra{r'} \mapsto \QOTP[\h(r')](\psi),
\end{aligned}
\end{equation*}
where, as before, we assume that $\QDec_\h$ measures the second register before decrypting the first one. Note that if $\h =\PRF_k$ then these are exactly the encryption and decryption oracles (with key $k$) of the real SKQES scheme.

The algorithm $\B$ proceeds as follows. First, it executes $\A$, and replies to \A's encryption queries with $\QEnc_\h$ and to \A's decryption queries with $\QDec_\h$. When \A performs the QIND challenge query with plaintext states $\phi^0$ and $\phi^1$, $\B$ replies with the encryption of either of the two, each with probability \half, and traces out the other one. Then \B keeps answering \A's encryption queries as before with his simulated oracle. If eventually \A correctly guesses the plaintext selcted by \B, then \B outputs $1$; otherwise it outputs a random bit. If $\h = \PRF_k$ then we have exactly simulated the QIND-CCA1 game with adversary $\A$; otherwise, \B still correctly distinguishes the PRF from random with probability \half. So, the overall success probability of \B is $\half + \frac{\epsilon}{2}$, which is non-negligible over guessing. This concludes the proof.
\endproof
\end{proof}

Notice how the security of Construction~\ref{constr:qskes1} only relies on the post-quantum security of the PRF, in the \QS1 sense. In particular, from Corollary~\ref{cor:pqOWFiffpqPRF}, this gives a construction of QIND-CCA1 secure SKQES from the existence of pqOWF.

\begin{corollary}[of Theorem~\ref{thm:QCCA1constr} and Corollary~\ref{cor:pqOWFiffpqPRF}]\label{cor:QINDCCA1frompqOWF}
If pqOWF exist, then QIND-CCA1 SKQES exist.
\end{corollary}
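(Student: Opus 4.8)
This final statement is Corollary~\ref{cor:QINDCCA1frompqOWF}: if post-quantum one-way functions exist, then QIND-CCA1 secure SKQES exist. Since it is explicitly labelled as a corollary of Theorem~\ref{thm:QCCA1constr} and Corollary~\ref{cor:pqOWFiffpqPRF}, the proof is essentially a two-line chaining argument, and my plan is simply to make that chaining explicit.

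The plan is as follows. First I would invoke Corollary~\ref{cor:pqOWFiffpqPRF}, which states that pqOWF exist if and only if pqPRF exist; in particular, assuming the existence of a pqOWF $\F$, we obtain a pqPRF $\PRF: \K \times \bin^{2\secpar} \to \bin^{2\secpar}$ of exactly the shape required by Construction~\ref{constr:qskes1} (if the pqPRF delivered by Corollary~\ref{cor:pqOWFiffpqPRF} has a different domain/range, one can compose it with a length-adjusting post-quantum-secure transformation, e.g., truncation or concatenated evaluations, which preserves pseudorandomness against \QPT distinguishers since no oracle access mode is altered). Next I would instantiate Construction~\ref{constr:qskes1} with this $\PRF$, obtaining a SKQES $\E$. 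Finally I would apply Theorem~\ref{thm:QCCA1constr}, which asserts precisely that the scheme of Construction~\ref{constr:qskes1} built on a pqPRF is QIND-CCA1 secure. Composing these three facts yields a QIND-CCA1 secure SKQES under the sole assumption that pqOWF exist, which is the claim.

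There is essentially no genuine obstacle here, since all the heavy lifting was done in Theorem~\ref{thm:QCCA1constr} (the reduction to the post-quantum security of the PRF) and in the cited result from~\cite{ZhandryPRF} underlying Corollary~\ref{cor:pqOWFiffpqPRF}. The only point worth a sentence of care is the bookkeeping on domain and range sizes: Construction~\ref{constr:qskes1} needs the PRF's output length to be $2\secpar$ bits (one QOTP key pair per plaintext qubit, for an $\secpar$-qubit plaintext space $\HX$ of dimension $2^\secpar$), so one must note that a generic pqPRF can be stretched or shrunk to this parameter regime without affecting post-quantum pseudorandomness. Everything else is a direct substitution into previously established statements, so the corollary follows immediately.

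\begin{proof}
By Corollary~\ref{cor:pqOWFiffpqPRF}, the existence of pqOWF implies the existence of pqPRF. Up to standard domain- and range-adjusting transformations (which preserve pseudorandomness against \QPT distinguishers, as no oracle access mode is involved), we may assume a pqPRF $\PRF:\K \times \bin^{2\secpar} \to \bin^{2\secpar}$ as required by Construction~\ref{constr:qskes1}. Instantiating that construction with $\PRF$ yields a SKQES $\E$, which is QIND-CCA1 secure by Theorem~\ref{thm:QCCA1constr}. Hence, if pqOWF exist, then QIND-CCA1 SKQES exist.
\endproof
\end{proof}
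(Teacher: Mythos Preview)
Your proposal is correct and matches the paper's intent: the corollary is stated without an explicit proof precisely because it follows immediately by chaining Corollary~\ref{cor:pqOWFiffpqPRF} (pqOWF $\Rightarrow$ pqPRF) with Theorem~\ref{thm:QCCA1constr} (Construction~\ref{constr:qskes1} with a pqPRF is QIND-CCA1). Your extra remark about domain/range adjustment is harmless bookkeeping but not something the paper bothers to mention.
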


Another way to build secure SKQES is to rely on the security of some (classical) SKES in \QS2, and `lift' the SKES construction to the \QS3 scenario through the use of type-$(2)$ operators. The following theorem is not found in the literature, but is a direct consequence of~\cite[Theorem 3.4]{GHS16} and the observation made after Definition~\ref{def:QIND}, i.e., the adversarial model (and corresponding security notions) for (\QS2) qIND and (\QS3) QIND are basically the same.

\begin{theorem}\label{thm:QINDfromqIND}
Let $\E = \E_{\K,\X,\Y} := (\KGen,\Enc,\Dec)$ be a SKES, and let $\E' = \E'_{\K,\QX,\QY} := (\KGen',\QEnc,\QDec)$ be a SKQES constructed as follows:
\begin{enumerate}
\item $\KGen' \to k$, with  $k \from \KGen$;
\item $\QEnc_k(\phi) \to \Encq \phi \Encqdr$; 
\item $\QDec_k(\psi) \to \Decq \phi \Decqdr$,
\end{enumerate}
where $\Encq,\Decq$ are type-$(2)$ unitary operators associated to $\Enc,\Dec$. If \E is qIND(-qCPA/qCCA1), then $\E'$ is QIND(-CPA/CCA1).
\end{theorem}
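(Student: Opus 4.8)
The plan is to reduce the \QS3 statement directly to the \QS2 result \cite[Theorem 3.4]{GHS16}, which establishes that the qIND-qCPA (resp. qCCA1) security of a classical SKES \E is equivalent to the Q-IND notion for the quantum encryption scheme obtained by implementing \E in unitary type-$(2)$ mode. The key observation, already noted after Definition~\ref{def:QIND}, is that the QIND adversarial model of Experiment~\ref{expt:QIND} (a QIND adversary $\A=(\M,\D)$ as in Definition~\ref{def:qINDadv}, producing a tripartite plaintext--plaintext--environment system, receiving an encryption of one of the two states under \QEnc while the other is traced out) is literally the same model as the one underlying the qIND notion of Experiment~\ref{expt:qIND}. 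Therefore the theorem is not a new result but a matter of recognising that the scheme $\E'$ built in the statement is exactly the type-$(2)$ quantum instantiation of \E that \cite[Theorem 3.4]{GHS16} is about.

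Concretely, first I would verify that $\E'$ as defined is a well-formed SKQES according to Definition~\ref{def:qskes}: the key generation is inherited from \E, and $\QEnc_k$, $\QDec_k$ are the channels $\phi \mapsto \Encq\,\phi\,\Encqdr$ and $\psi \mapsto \Decq\,\psi\,\Decqdr$ induced by the type-$(2)$ unitaries $\Encq,\Decq$ associated to $\Enc_k,\Dec_k$. Since $\Decq = \Encqd$ (as observed in the discussion of type-$(2)$ oracles, because encryption acts as a bijection on bit strings), the composition $\QDec_k \circ \QEnc_k$ is the identity channel on $\HX$ up to the negligible correctness error of the underlying \E, so the diamond-norm correctness condition holds. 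Second, I would spell out the correspondence of security games: a QIND(-CPA/CCA1) adversary against $\E'$ is, verbatim, a qIND(-qCPA/qCCA1) adversary against \E, since the challenge oracle $\QEnc$ is $\Encq$, any CPA oracle access to $\QEnc$ is access to $\Encq$, and any (non-adaptive) CCA1 oracle access to $\QDec$ is access to $\Decq = \Encqd$ — which is precisely the type-$(2)$ decryption oracle available in the qCCA1 game. The advantage functions therefore coincide identically, and the implication "\,\E\ qIND(-qCPA/qCCA1) $\implies$ $\E'$ QIND(-CPA/CCA1)\," is immediate (in fact an equivalence, though only the stated direction is needed).

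The main thing to be careful about — rather than a genuine obstacle — is the bookkeeping around the ancilla registers and message-expansion factors. For a randomized or length-expanding \E, the type-$(2)$ operator $\Encq$ acts on $|x| + (|\Enc_k(x)|-|x|)$ qubits, mapping $\ket{x,0}\mapsto\ket{\Enc_k(x)}$, and one must check that the honest quantum encryption channel initialises the ancilla in $\ket{0}$ exactly as in the \QS2 convention, so that $\QEnc_k(\phi)$ for a plaintext $\phi$ on $\HX$ really is the type-$(2)$-encrypted state used in Experiment~\ref{expt:qIND}. This is guaranteed because the challenger (not the adversary) performs the encryption, and by assumption honest execution uses the $\ket{0}$-initialised ancilla; the same remark handles the $\QDec$ direction via $\Decq=\Encqd$. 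Once this alignment is made explicit, the proof is essentially a one-line citation of \cite[Theorem 3.4]{GHS16} together with the identification of adversarial models; I would present it as a short argument rather than a computation. \endproof
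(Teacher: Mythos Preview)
Your proposal is correct and follows essentially the same approach as the paper: both argue that the qIND and QIND experiments are literally the same adversarial model once the encryption/decryption channels of $\E'$ are recognised as the type-$(2)$ operators $\Encq,\Decq$, so the only thing to verify is that $\E'$ is a well-formed SKQES, which follows from $\Decq\Encq\phi\Encqdr\Decqdr=\phi$. Your treatment is slightly more explicit about the ancilla-register bookkeeping and the CCA1 oracle identification $\QDec=\Decq=\Encqd$, but the structure and content of the argument match the paper's proof.
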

\begin{proof}[Proof (sketch)]
The proof follows from~\cite[Appendix C]{GHS16}, but it basically boils down to what already discussed after Definition~\ref{def:QIND}. Namely, the experiments for qIND-qCCA1 and QIND-CCA1 are fundamentally the same, the only difference is that in the qIND- version, encryption and decryption oracles are specifically type-$(2)$ operators derived from classical SKES. So the only thing left to show is that the scheme defined by such encryption/decryption operators as in the statement of the theorem is actually a SKQES. This is trivially shown by observing that:
$$
\left( \QDec_k \circ \QEnc_k \right) (\phi) = \Decq \Encq \phi \Encqdr \Decqdr = \phi
$$
so that \QEnc and \QDec respect Definition~\ref{def:qskes}.
\end{proof}

The above is a typical example of what discussed in {\em Reason \#4} of Section~\ref{sec:whysuperposition}, about the necessity of superposition-based quantum security for composition results in fully quantum scenarios. Notice in fact that in the above theorem it is {\em crucial} that \E is a scheme secure in the \QS2 sense: a `simply' post-quantum \E (in the \QS1 sense) would not work, because the same impossibility result described in Section~\ref{sec:impossqIND} would apply.

\section{Public-Key Quantum Encryption}\label{sec:QS3pke}

When we move to the public-key scenario for quantum encryption schemes, intuitively we want the same kind of functionality offered by classical PKES, but with the possibility of encrypting arbitrary quantum states. As usual, we assume classical public/private key pairs $(\pk,\sk)$, where w.l.o.g. we assume that, for security parameter \secpar, public keys are of bit size $\p(n)$, while secret keys are of bit size $\s(n)$, for polynomial functions \p,\s. Under this notation, we identify the keyspace $\K$ as $\family{\K} = \family{\Kp} \times \family{\Ks} =: \Kp \times \Ks \subset \bin^{\p(n)} \times \bin^{\s(n)}$. We define a {\em quantum public-key encryption scheme (PKQES)} as in~\cite{ABF+16}, in the following way.

\begin{definition}[Public-Key Quantum Encryption Scheme (PKQES)]\label{def:pkqes}
A {\em public-key quantum encryption scheme (PKQES)} with plaintext space \QX, ciphertext space \QY, and key space $\K := \Kp \times \Ks$ is a tuple of \QPT algorithms $\E := \E_{\K,\QX,\QY} := (\KGen,\QEnc,\QDec)$:
\begin{enumerate}
\item $\KGen: \to \K$;
\item $\QEnc: \Kp \times \QX \to \QY$;
\item $\QDec: \Ks \times \QY \to \QX$;
\end{enumerate}
such that $\left| \QDec_\sk \circ \QEnc_\pk - \Id_\HX \right|_\diamond \leq \negl$ for all $(\pk,\sk) \from \KGen$.
\end{definition} 

For the security model, as usual, we use the same QIND indistinguishability notion for SKQES, but recalling that (as explained in Section~\ref{sec:pke} for classical SKES) in the public-key scenario the minimum meaningful security notion is QIND-CPA as from Definition~\ref{def:QINDCPA}.

\subsection{Secure Construction}

QIND-CPA secure PKQES can be constructed given the existence of pqOWTP, as shown in~\cite{ABF+16}. The idea of the construction is analogous to the one for Construction~\ref{constr:PKESfromOWTP}: a (classical) randomness is sampled, and used as an input to the Goldreich-Levin PRNG to generate a key for the QOTP on the plaintext state; then the pqOWTP is applied to that randomness, and the result appended to the output of the QOTP. For the decryption, the trapdoor of the pqOWTP is used to recover the randomness, and hence the key for the QOTP, inverting the encryption. Assume for simplicity that $\X = \bin^\secpar$. Then we define the following.

\begin{construction}[PKQES from pqOWTP]\label{constr:PKQESfrompqOWTP}
Let $\P:=(\Gen,\Eval,\Invert)$ be a pqOWTP on $\X^2$, with index and trapdoor spaces \I and \T respectively, and let $\PRNG_\P:\X^2 \to \X^2$ be the Goldreich-Levin PRNG for \P (seen as a OWF with hard-core predicates). Define $\E = \E_{\K,\QX,\states{\HX^{\otimes 3}}}:=(\KGen,\QEnc,\QDec)$ as a PKQES with (public,private) key space $\K  = \Kp \times \Ks$ (where $\Kp := \I$ and $\Ks := \T$, plaintext space \QX, and ciphertext space $\states{\HX^{\otimes 3}}$, in the following way:
\begin{enumerate}
\item $\KGen \to (\pk,\sk)$, with  $(\pk,\sk) := (i,t) \from \Gen$;
\item $\QEnc_\pk(\phi) \to \psi \otimes \ketbra{z}$,\\with $\psi := \QOTP[\PRNG_\P(r)](\phi)$ and $z \from \Eval(\pk,r)$, where $r \rand \X^2$;
\item $\QDec_\sk(\rho) \to \QOTP[\PRNG_\P(s)](\sigma)$,\\with $s \from \Invert(\pk,\sk,z)$, where $z$ is obtained by measuring the last $2\secpar$ qubits of $\rho$, while $\sigma$ is the reduced state left after such a measurement.
\end{enumerate}
\end{construction}

The above construction is a simplified version of~\cite[Scheme 2]{ABF+16}, and it can be shown to be QIND-CPA secure.

\begin{theorem}[{\cite[Lemma 14]{ABF+16}}]\label{thm:QCPAconstr}
Construction~\ref{constr:PKESfromOWTP} is a QIND-CPA secure PKQES.
\end{theorem}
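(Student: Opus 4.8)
The plan is to mirror the structure of the proof of Theorem~\ref{thm:QCCA1constr}, but without any decryption oracle and with the pqPRF replaced by the Goldreich--Levin PRNG applied to the output of the pqOWTP. First I would observe that Construction~\ref{constr:PKQESfrompqOWTP} packages a plaintext $\phi$ as $\QOTP[\PRNG_\P(r)](\phi) \otimes \ketbra{z}$ with $z \from \Eval(\pk,r)$, so the only `leakage' about the QOTP key is through $z$ together with public knowledge of $\pk$. The argument then has two reductions chained together. In the first step, I replace the pseudorandom pad $\PRNG_\P(r)$ by a truly uniform $2\secpar$-bit string $u \rand \bin^{2\secpar}$, independent of $z$; in this idealized experiment the ciphertext is $\QOTP[u](\phi^b) \otimes \ketbra{z}$, and since $u$ is uniform and independent of everything the adversary sees, the QOTP perfectly hides $\phi^b$: averaging over $u$ the state $\QOTP[u](\phi^b)$ equals the totally mixed state $\tau = \Id/2^\secpar$ regardless of $b$ (this is exactly the information-theoretic security of the QOTP from Construction~\ref{constr:qotp}, invoked on the plaintext half). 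Hence in the idealized game, even an adversary who also receives $\pk$, $r$, and $z$ cannot distinguish, so $\advQINDCPA \leq \negl$ there; note that in the CPA (not CCA) setting, every encryption oracle query can be simulated by the reduction itself using $\pk$, so giving the adversary the encryption oracle costs nothing.

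The second step is to show that if some QIND-CPA adversary $\A$ against the real scheme had non-negligible advantage $\epsilon$, then the gap between the real and idealized games would contradict the one-wayness (via the hard-core / PRNG property) of $\P$. Here I would build a distinguisher $\B$ that, on input $(\pk, z, y)$ where $y$ is either $\PRNG_\P(\Invert(\pk,\sk,z))$ or a fresh uniform string, simulates the QIND-CPA challenger for $\A$: it answers encryption queries using $\pk$ directly, receives the challenge pair $(\phi^0,\phi^1,\sigma)$, picks $b \rand \bin$, and returns $\QOTP[y](\phi^b) \otimes \ketbra{z}$, traces out $\phi^{1-b}$, continues answering encryption queries, and finally outputs $1$ iff $\A$'s guess matches $b$. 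When $y = \PRNG_\P(r)$ for the $r$ underlying $z$, this is exactly the real QIND-CPA game, so $\B$ outputs $1$ with probability $\half + \epsilon$; when $y$ is uniform, this is the idealized game, so $\B$ outputs $1$ with probability at most $\half + \negl$. Thus $\B$ distinguishes $\PRNG_\P$ output from uniform with advantage essentially $\epsilon$, contradicting Theorem~\ref{thm:OWFtoPRNG} (in its post-quantum form, which holds because the Goldreich--Levin proof makes no assumption on the adversary's query behavior, cf. Theorem~\ref{thm:pqOWFtopqPRNG}) applied to the pqOWF $\Eval(\pk,\cdot)$ obtained from the pqOWTP by Proposition~\ref{prop:pqOWTPtopqOWF}. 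Combining the two steps yields $\advQINDCPA \leq \negl$, proving QIND-CPA security.

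The main obstacle I anticipate is a subtlety in how the distinguisher $\B$ obtains a correctly-distributed challenge input $(\pk,z,y)$: the PRNG security game for $\PRNG_\P$ gives $\B$ a seed-image pair only in the form where $\B$ knows the seed, whereas here $z$ must be $\Eval(\pk,r)$ while $B$ must \emph{not} know $r$ (else it could trivially strip the pad). The clean way around this is to sample $r \rand \X^2$ oneself inside the reduction, set $z := \Eval(\pk,r)$, and then feed $r$ as the seed into the pqPRNG challenger to receive $y$ which is either $\PRNG_\P(r)$ or uniform --- so $\B$ does know $r$, but the point is that in the idealized branch $y$ is independent of $r$, and by the QOTP argument the simulated challenge ciphertext is then independent of $b$ regardless of whether $\B$ knows $r$. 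I would need to state this carefully and check that the reduction's running time and the negligibility bounds compose correctly; I expect this to be the only place requiring genuine care, the rest being a direct transcription of the $\QS0$ argument for Construction~\ref{constr:PKESfromOWTP} (Theorem~\ref{thm:PKESfromOWTP}) with classical plaintexts replaced by quantum ones and the OTP replaced by the QOTP.
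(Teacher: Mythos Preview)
Your approach is essentially the same as the paper's: the paper gives only a one-paragraph sketch saying the proof is identical to that of Theorem~\ref{thm:PKESfromOWTP}, with the QOTP replacing the OTP and the post-quantum hard-core bits of $\P$ guaranteeing that $\PRNG_\P(r)$ looks uniform even given $z = \Eval(\pk,r)$. Your two-step hybrid (replace the pad by a truly uniform string, then argue the swap is undetectable) is exactly the unfolding of that sketch, and your observation that the encryption oracle is simulable from $\pk$ alone is correct.

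There is, however, a genuine slip in your ``main obstacle'' paragraph. You propose that $\B$ samples $r$ itself, computes $z := \Eval(\pk,r)$, and then ``feeds $r$ as the seed into the pqPRNG challenger''. This does not match the pqPRNG game of Definition~\ref{dfn:pqPRNG}: there the seed is sampled by the challenger and hidden from the distinguisher, so $\B$ cannot choose it; and conversely, if $\B$ knows $r$ it can compute $\PRNG_\P(r)$ on its own and the challenger is vacuous. The abstract pqPRNG security statement (which hides the seed entirely) is simply not the right reduction target here, because the scheme \emph{does} publish information about the seed, namely $z$. The correct target---and this is what the paper's sketch points to---is the hard-core predicate property of the pqOWTP: given $(\pk, z)$ with $z = \Eval(\pk,r)$ for a uniformly random and hidden $r$, the string $\PRNG_\P(r) = \hc_\P(r) \| \hc_{\P^2}(r) \| \cdots$ is computationally indistinguishable from uniform, which follows from (iterated) Goldreich--Levin and reduces to the one-wayness of $\P$. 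Your distinguisher $\B$ should therefore be phrased against \emph{that} game (receive $(\pk,z,y)$ from a challenger who samples $r$ and hides it), not against the bare pqPRNG game. Once you make that adjustment, the rest of your argument goes through verbatim.
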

\begin{proof}[Proof (sketch)]
The proof is as in Theorem~\ref{thm:PKESfromOWTP}: recall that the QOTP is information-theoretically secure for independent keys sampled uniformly at random, and hence computationally secure for keys output by the pqPRNG in the construction here. Then, the only way for an adversary to attack the scheme would be to extract information about $r$ by looking at the OWTP image $z$ obtained through \Eval, but this is impossible because \P is a pqOWTP family, and $\PRNG_\P$ only outputs (post-quantum) hard-core bits.
\endproof
\end{proof}

\section{Quantum ORAM}\label{sec:QORAM}

In this section we study {\em quantum ORAMs (QORAM)}, that is, ORAM constructions operating on {\em quantum data}. This new cryptographic primitive defined in~\cite{GKK17} considers the same scenario as in the ORAM case, but where all the parties have quantum computing and communication capabilities. As we will see, many difficulties arise in modeling this scenario.

In the QORAM model, the client \C and the server \S are both \QPT algorithms, sharing a {\em quantum communication channel} (quantum register) $\Psi$. Since such a quantum channel can also be used to share classical information, we assume without loss of generality that \A and \S also share a classical channel $\Xi$. In the following, if not otherwise stated, we will always assume that all the classical communication between \A and \S happens through $\Xi$, and all the quantum communication happens through $\Psi$. In this scenario, a computationally limited \C wants to outsource a {\em quantum database (QDB)} to the more powerful \S, and perform operations on the QDB in a secure way, as in the ORAM case.

We have first to define what it means to have a `quantum database'. In our case, this will be a structure of {\em quantum blocks}. A quantum block is a $\bsize$-qubit quantum state $\psi \in \states{\Hilbert_\bsize}$ for a fixed parameter $\bsize \in \NN$ which depends on \C's and \S's architectures. A {\em quantum database} (QDB) of size $\dbsize \in \NN$ is a quantum register of \S which stores $\dbsize$ quantum blocks. It is important to notice that we impose no restriction on the nature of the states stored in the quantum blocks, i.e., these states could be mixed or entangled, amongst them or with states stored in other, external registers. As explained in the preliminaries, in the following, for simplicity, we abuse notation and denote such multipartite system with a tuple of quantum blocks $(\psi_1,\ldots,\psi_\dbsize)$. Since we assume this quantum register to reside on the server's side, we will denote it as $\S.\QDB$. As in the ORAM case, the precise way this system of quantum blocks is represented in the quantum database is unspecified, and left to the exact implementation of the QORAM scheme taken into account. As usual, we will abuse notation and write that $\S.\QDB(i) = \psi$ if $\psi$ is the state obtained by tracing out all but the $i$-th subsystem of $\S.\QDB$, and that $\psi \in \S.\QDB$ if $\S.\QDB(i) = \psi$ for some $i \in \NN$.

A quantum block encodes (usually in an encrypted form) a {\em quantum data unit}, which is another quantum state representing the information that the client actually wants to access or modify, and possibly additional (quantum or classical) auxiliary information. Formally, a quantum data unit is a quantum state $\phi \in \states{\Hilbert_\dsize}$ of $\dsize$ qubits, where $\dsize \leq \bsize$ depends on \C's and \S's architecture. As before, no assumption is made about the nature of these quantum states. Every quantum block can encode a single quantum data unit, therefore at any given time $t$ it is defined a CPTP map $\QData_t: \S.\QDB \to \states{\Hilbert_\dsize}$. With abuse of notation, we will denote by $\QData(\psi)$ the quantum data unit encoded in the block $\psi$ at a certain time. The client \C can operate on the quantum database through {\em quantum data requests}.

\begin{definition}[Quantum Data Request]
A {\em quantum data request} to a database $\S.\QDB$ of size \dbsize is a tuple of the form $\qdr = (\op, i, \phi)$, where $\op\in\{\text{read},\text{write}\}, i \in \set{1,\ldots,\dbsize}$, and $\phi \in \states{\Hilbert_\dsize}$ is a quantum data unit ($\phi$ can also be $\ket{\bot}$ if $\op=\text{read}$).
\end{definition}

Finally, we define the meaning of a {\em quantum communication transcript} during an execution of a QORAM protocol. As in the ORAM case, we will use the following definition.

\begin{definition}[Quantum Communication Transcript]
A {\em quantum communication transcript} $\qcom$ at time $t$ is the content of the communication registers $(\Xi,\Psi)$ at time $t$ of the protocol's execution.
\end{definition}

As in the ORAM case, in the following we will consider $\qcom$ as a discrete function of the round $1,2,\ldots$ of the protocol. Notice the following difference from the classical case: as this time \C and \S are also allowed to exchange quantum data through $\Psi$, it might not be possible for an adversary to obtain a full transcript of $\qcom$ without disturbing the protocol. We will address this issue in the next section about security.

From now on, $\bsize$ and $\dsize$ will be fixed constants (the {\em quantum block size}, and {\em quantum data unit size}, resp.) As in the classical case, we assume that a server's QDB is always initialized empty (that is, with randomized encryptions of $\ket{0\ldots0}$ as data), and it is left up to the client the task of `populating' the database. We are now ready to define a QORAM as follows.

\begin{definition}[QORAM]\label{def:qoram}
Let $\maxsize \in \NN, \msize \geq \dsize \in \NN$, and $\E=(\KGen,\QEnc,\QDec)$ be a SKQES scheme mapping $\msize$-qubit plaintext states to $\bsize$-qubit ciphertext states. A {\em QORAM (quantum oblivious random access machine)} $\qoram_{\E}$ with parameters $(\maxsize,\dsize,\E)$ is a pair of two-party interactive \QPT algorithms $(\qinit,\qaccess)$, such that:
\begin{itemize}
\item $\qinit(\secpar,\dbsize) \rightarrow (\C,\S)$ in the following way:
	\begin{enumerate}
	\item \secpar is the security parameter, $\dbsize \leq \maxsize$;
	\item $k \from \KGen(\secpar)$ is generated by \C;
	\item \S includes a QDB $\S.\QDB=(\psi_1,\ldots,\psi_\dbsize)$,\\where $\forall i \implies \psi_i \from \QEnc_k(\ketbra{0})$;
	\end{enumerate}
\item $\qaccess(\C,\S,\qdr) \rightarrow (\C',\S',\qcom)$ in the following way:
	\begin{enumerate}
	\item \C issues a quantum data request $\qdr$;
	\item \C and \S communicate via $(\Xi,\Psi)$ and produce the quantum communication transcript~$\qcom$.
	\end{enumerate}
\end{itemize}
\end{definition}

The same considerations about soundness hold as in the classical case.

\subsection{QORAM Security}

We now look at the security model for QORAMs. As in the classical model, security will be given in terms of adaptive access pattern indistinguishability.

Our threat model considers a quantum adversary \A, which we identify as \S himself, and who wants to compromise \C's privacy by having access to the communication channel $(\Xi,\Psi)$ and \S's internal memory, but who is not allowed to modify the content of the channel against the soundness of the protocol. Without loss of generality, we assume that the only meaningful changes in the database area $\S.\QDB$ only happen between the beginning and the end of a $\qaccess$ execution.

As it often happens in the quantum world, there is a caveat here: it is unclear what a `honest-but-curious' quantum adversary is. 
In fact, the problem is even more general: we do not have a notion of `read-only' for quantum channels, as the mere act of observing the data in transit through $\Psi$ can destroy such data. For example, suppose that a quantum state $\phi$ is sent through $\Psi$. 
Because of the No-Cloning Theorem, \S cannot store a local copy of $\phi$; at the same time, measuring $\phi$ in transit through $\Psi$ without any knowledge of such state, would disturb it with high probability. Therefore, it seems hard to justify the inclusion of the state $\phi$ in the adversarial view (the {\em quantum access pattern}) of a honest-but-curious quantum adversary.

Nevertheless, it is important to allow the adversary \A to know some information about the quantum state $\phi$. There are many reasons for this choice. First of all, remember that we are defining QORAMs in a very abstract and general way, and the exact details of how the communication and storage of quantum information works is left to the particular QORAM construction. For example, there might be constructions which only use quantum states from a finite, fixed set of orthogonal states, or which only use subsets of quantum states admitting efficient classical representations (and encoding them in a classical way during the communication). Moreover, it might be possible that the adversary \A at some point obtains access to some side-information which allows him to know 
something about the content of the database or the data transferred in a sound way, e.g., by applying some quantum operation or partial measurement which does not disturb the state too much. As we need to cover all these possibilities, the option of not including at all the quantum data in the access pattern would be too restrictive. On the other hand, the adversary \A should not be able to modify too much (from \C's point of view) any quantum state, as this would go beyond the notion of honest-but-curious adversary usually considered in the ORAM scenario.

We solve this issue by introducing a {\em safe extractor}. The intuition behind this technique is to allow our adversary to extract any kind of (quantum) information he wants from a certain physical system, {\em as long as such extraction is hardly noticeable by any other party}. In this case we say that the action of the adversary on the physical system is {\em computationally undetectable}, meaning that no \QPT algorithm can reliably distinguish whether a quantum operation takes place or not by just looking at the processed quantum state, even in presence of auxiliary information such as, e.g., additional entangled registers. More 
formally we define the following.

\begin{definition}[Computational Undetectability of Quantum Action]
Let $\Hilbert_\Lambda,\Hilbert_\Sigma, \HEnv$ be Hilbert spaces of dimension polynomial in $2^\secpar$ associated to quantum register $\Lambda,\Sigma,\Env$ respectively, and let $\phi_\Sigma$ be an arbitrary quantum state on register $\Sigma$. A quantum algorithm $\B:\states{\Hilbert_\Lambda \otimes \Hilbert_\Sigma} \to \states{\Hilbert_\Lambda \otimes \Hilbert_\Sigma}$ acting on registers $\Lambda$ and $\Sigma$ has {\em computationally undetectable action on $\phi_\Sigma$} iff for any bipartite state $\phi_{\Sigma\Env}$ such that $(\phi_{\Sigma\Env})_\Sigma = \phi_\Sigma$, and for any \QPT algorithm $\D$ acting on registers $\Sigma$ and $\Env$ and outputting $0$ or $1$, it holds:
$$
\left| \Pr \left[ \D \left( \phi_{\Sigma\Env} \right) \to 1 \right] - \Pr \left[ \D \left( \left( \B \otimes \Id_\HEnv \right) \left( \ketbra{0}_\Lambda \otimes \phi_{\Sigma\Env} \right)_{\Sigma\Env} \right) \to 1 \right] \right| \leq \negl.
$$
\end{definition}

\begin{definition}[Safe Extractor]\label{dfn:extractor}
Let $\phi_\Sigma \in \states{\Hilbert_\Sigma}$ be the quantum state contained in a quantum register $\Sigma$. A {\em safe extractor} for $\Sigma$ in the state $\phi_\Sigma$ is a \QPT algorithm $\safex$ with additional classical input $x$ of size polynomial in $\secpar$, acting on $\Sigma$ and outputting a quantum state $\psi$ of qubit size polynomial in $\secpar$, and such that the action of $\safex$ on $\phi_\Sigma$ is computationally undetectable.
\end{definition}

Notice that Definition~\ref{dfn:extractor} depends on the state contained in the quantum register considered. That is, $\safex$ might be a safe extractor for a given quantum register if that register is in a certain state, but not in a different one. Of course one could define $\safex$ to be a safe extractor for a register {\em `tout-court'} if it is a safe extractor for {\em any} state of that register according to Definition~\ref{dfn:extractor}, but this would considerably reduce the power of the adversary. Instead, this definition allows the adversary to use $\safex$ adaptively, only at certain points of his execution, when he knows that the action of $\safex$ on the current state of the QORAM will be computationally undetectable. The additional classic input to $\safex$ serves a useful purpose here, as it can be seen as a way for the adversary to communicate instructions to $\safex$ about how to perform the extraction in a safe way (for example, \A might encode a certain measurement basis through this classical input.) With abuse of notation, and without loss of generality, we will write $\psi \from \safex(\qcom,\S.\QDB)$ to denote that $\safex$ performs the following:
\begin{itemize}
\item as a classical input, $\safex$ gets the classical part of a quantum communication transcript $\qcom$ (that is, the content of the classical channel $\Xi$) and additional classical information by the adversary $\A$;
\item $\safex$ acts on the quantum registers $\Psi$ and $\S.\QDB$;
\item finally, $\safex$ produces a quantum output $\psi$.
\end{itemize}

The intuition of a safe extractor is that we need a way to formalize the (quantum or classical) information that an adversary is able to extract by observing the changes in the quantum database and communication channel. However, we still require that such extraction does not lead to a meaningful deviation from the `regular' execution of the QORAM protocol. Computational undetectability of quantum action is a {\em strong} guarantee, because if such action is undetectable, in particular it means that such action cannot modify the QORAM soundness. The converse does not hold: it might be the case that an adversary manipulates the quantum channel or database in such a way that it is {\em theoretically possible} to detect this manipulation (for some distinguisher \D), but {\em not} for any QORAM client, and therefore the QORAM soundness would be still preserved. However, for our purposes the above restriction on the power of the QORAM adversary is sufficient to define meaningful notions of security, and it is analogous to the (classical) restriction of a honest-but-curious adversary in the ORAM case commonly used in the literature.

More formally, we define a QORAM adversary as follows.

\begin{definition}[QORAM Adversary]\label{dfnqadv}
Let $\Hilbert_\QDB, \Hilbert_\Psi, \Hilbert_\Lambda$ be complex Hilbert spaces associated to quantum registers \QDB (the quantum database), $\Psi$ (the quantum communication channel) and $\Lambda$ (the quantum access pattern register). A {\em QORAM adversary} is a \QPT algorithm $\A^\safex$ with quantum oracle access to a CPTP map $\safex: \Xi \times \states{\Hilbert_\QDB \otimes \Hilbert_\Psi} \to \states{\Hilbert_\Lambda}$, such that:
\begin{enumerate}
\item\label{item:safex} \safex is a {\em safe extractor} for the joint register $(\QDB,\Psi)$ for any of its states during any invocation of \safex by $\A$;
\item\label{item:sound} $\A^\safex$ is computationally indistinguishable from an honest server \S for every QORAM client \C.
\end{enumerate}
\end{definition}

As already discussed notice that, in the definition above, conditions~\ref{item:safex} and~\ref{item:sound} are independent: if \safex is {\em not} a safe extractor during the execution, it means that there exists {\em some} quantum distinguisher \D able to detect \safex's action on the joint register $(\QDB,\Psi)$, but $\A^\safex$ might still remain indistinguishable from an honest server for any honest quantum client. On the other hand, \A might be a misbehaving adversary which deviates `too much' from the execution of an honest server (and therefore might compromise the QORAM's soundness), even if \safex behaves always as a safe extractor. For a meaningful notion of security akin to the \QS0 case, we require that a QORAM adversary respects both conditions.

We are now able to define {\em quantum access patterns}, as the outputs of the safe extractor before and after the execution of a quantum data request.

\begin{definition}[Quantum Access Pattern]
Given QORAM client and server \C and \S, a quantum data request $\qdr$, and a QORAM adversary $\A = \A^\safex$, the {\em quantum access pattern observed by $\A$}, denoted by $\qap_\A(\qdr)$, is the pair of quantum states $(\psi,\psi')$, where:
\begin{itemize}
\item $\psi \from \safex(\qcom,\S.\QDB)$;
\item $(\C',\S',\qcom') \leftarrow \qaccess(\C,\S,\qdr)$
\item $\psi' \from \safex(\qcom',\S'.\QDB)$.
\end{itemize}
\end{definition}

Notice that, since the action of the safe extractor is computationally undetectable, running it on two consecutive quantum data requests does not allow, in any case, to clone arbitrary quantum states. We define the new security game as follows.

\begin{experiment}[\gameQORAM]\label{expt:gameQORAM}
Let $\qoram=(\qinit,\qaccess)$ be a QORAM construction with parameters $(\maxsize,\dsize,\E)$, \secpar a security parameter and $\A=\A^\safex$ a QORAM adversary. The {\em computational indistinguishability of quantum access patterns under adaptive chosen query attack game} $\gameQORAM$ proceeds as follows:
\begin{algorithmic}[1]
\State \textbf{Input:} $\secpar \in \NN$
\State $\A \to (\A_0,\qdr_1,\dbsize \leq \maxsize)$
\State $(\C_0,\S_0) \leftarrow \qinit(\secpar,\dbsize)$
\Loop{ for $i=1,\ldots,q_1 \in \NN$:} \Comment{first quantum CQA phase}
\State $\qaccess(\C_{i-1},\S_{i-1},\qdr_i) \to (\C_i,\S_i,\qap_i)$
\State $\A_{i-1}(\qap_i,\S_i) \to (\A_i,\qdr_{i+1})$
\EndLoop
\State $\A_{q_1}(\qdr_{q_1+1}) \to (\A',\qdr^0,\qdr^1)$ 
\State $b \rand \bin$
\State $\access(\C_{q_1},\S_{q_1},\qdr^b) \to (\C_{q_1+1},\S_{q_1+1},\qap_{q_1+1})$ \Comment{QAP-IND challenge}
\State trace out the quantum data contained in $\qdr^{1-b}$
\State $\A'(\ap_{q_1+1},\S_{q_1+1}) \to (\A_{q_1+1},\qdr_{q_1+2})$
\Loop{ for $i=q_1+2,\ldots,q_2 \geq q_1+2$:} \Comment{second quantum CQA phase}
\State $\access(\C_{i-1},\S_{i-1},\qdr_i) \to (\C_i,\S_i,\qap_i)$
\State $\A_{i-1}(\qap_i,\S_i) \to (\A_i,\qdr_{i+1})$
\EndLoop
\State $\A_{q_2}(\qdr_{q_2+1}) \to b' \in \bin$
\If{$b = b'$}
	\State \textbf{Output:} $1$
\Else
	\State \textbf{Output:} $0$
\EndIf
\end{algorithmic}
The {\em advantage of \A} is defined as:
$$
\advQORAM := \Pr \left[ \gameQORAM \to 1 \right] - \half .
$$
\end{experiment}

The idea of the above game follows specularly the classical intuition: the adversary is first allowed to enforce (adaptively) the execution of quantum data requests of his choice, and to observe the related access patterns. Then he issues the challenge query, composed of two different quantum data requests, one of which is executed, and the other discarded. After that, the adversary is allowed another adaptive learning phase, and finally he has to output a bit indicating the challenge data request which was executed. We are now ready to define the security notion for QORAMs.

\begin{definition}[Quantum Access Pattern Indistinguishability Under Adaptive Chosen Query Attack]
A QORAM construction $\qoram$ has {\em computationally indistinguishable quantum access patterns under adaptive chosen query attack} (or, it is QAP-IND-CQA-secure) iff for any QORAM adversary $\A^\safex$ it holds: $\advQORAM \leq \negl$.
\end{definition}

\subsection{PathQORAM}

In this section we describe the construction for a novel QAP-IND-CQA-secure QORAM scheme, which we call {\em $\pathqoram$}, and which has the interesting property that read and write operations are {\em inherently equivalent}. The idea is to modify \pathoram with the SKQES from Construction~\ref{constr:qskes1}, but we need some additional care for ensuring soundness. In fact, we have the following problem. Suppose the client issues a quantum data request for block $i$. This will be translated to a leaf in \S's quantum database, and the resulting tree branch $\qbranch$ will be sent to \C. Now \C knows that the data he is looking for is encoded in one of $\qbranch$'s nodes, but he does not know which one. Classically, \C would proceed by decrypting and inspecting every node in $\qbranch$ until he finds what he is looking for, then he would perform some operation on that element, before re-encrypting it again, and then complete the re-randomization of $\qbranch$ before re-sending the whole branch to \S. This operation might be problematic in the quantum world though: inspecting an unknown quantum state will destroy it with high probability. We have therefore to find a way to signal \C when he reaches the right node in the path without disturbing the quantum data unit itself.

The solution is to notice that, in our formalization of PathORAM, the client stores the classical identifier $i$ together with the data unit in the block. In the quantum version \pathqoram, this identifier is still classical, and of a fixed length $\ksize$. Once a node in $\qbranch$ is decrypted, it will be transformed to $\ket{i}\bra{i} \otimes \phi$. The first register can then be measured in the computational basis without being disturbed, and without disturbing the state $\phi$ (which is not entangled with $\ket{i}$). So the trick for \C is to find out when he is decrypting the right element by {\em only} measuring the first $\ksize$ qubits of the decrypted block, and then only act on the quantum data unit when the right identifier is found. Notice how other different approaches used classically to instantiate PathORAM, such as identifying blocks by storing a local table with the hash values of the data units, might not work so smoothly when translated to the quantum world.

More concretely, we give here a full description of PathQORAM (which from now on we denote as \pathqoram) according to our new formalism. The meaning of the parameters is as in Definition~\ref{dfn:pathoram}. 

\begin{construction}[\pathqoram~{\cite[Definition 36]{GKK17}}]\label{constr:pathqoram}
For fixed parameters $\dsize,\maxsize \in \NN$, let $\ksize = \lceil \log_2 \maxsize \rceil,\zsize \in \NN,\msize = \dsize + \ksize$, and $\bsize \geq \msize$. Let $\PRNG$ be a pqPRNG outputting $\ksize$-bit pseudorandom values, and let $\E=(\KGen,\QEnc,\QDec)$ be a QIND-CPA  SKQES with $\msize$-qubit plaintexts and $\bsize$-qubit ciphertexts. We define a QORAM construction called $\pathqoram = \pathqoram_{\E,\G}$ as follows:
\begin{itemize}
\item $\qinit(\secpar,\dbsize) \rightarrow (\C,\S)$ in the following way:
	\begin{algorithmic}[1]
	\State $\C$ generates a secret key $k\from\KGen$
	\State set $\tsize := \lceil \log_2 \dbsize \rceil$
	\State $\C$ initializes a lookup table (the \em{position map}) of the form\newline $\left((1,r_1),\ldots,(\dbsize,r_\dbsize)\right)$, where $r_i$ are $\tsize$-bit values generated by truncating the first $\ksize-\tsize$ bits of $\PRNG$'s output
	\State $\S.\QDB$ is stored in a binary tree of height $\tsize$, with root $\qtreeroot$ and leaves $\qleaf_0,\ldots,\qleaf_{2^\tsize-1}$, and such that:
		\begin{enumerate}
		\item each node of the tree stores up to $\zsize$ quantum blocks;
		\item every quantum block of every node is initialized\newline to $\QEnc_k(\ketbra{0^\msize})$.
		\end{enumerate}
	\end{algorithmic}
\item If $\qdr = (\op, i, \phi)$, then $\qaccess(\C,\S,\qdr) \rightarrow (\C',\S',\qcom)$ in the following way:
	\begin{algorithmic}[1]
	\State \C reads $r_i$ from his position map and sends it to \S
	\State $\S$ sends to $\C$ the quantum system containing the path $\qbranch$ from $\qtreeroot$ to $\qleaf_{r_i}$
	\State remap $(i,r_i)$ to $(i,r'_i)$ in the position map of $\C$, where $r'_i$ is a fresh pseudorandom $\tsize$-bit value (generated by truncating the first\newline $\ksize-\tsize$ bits of $\G$'s output), obtaining~$\C'$
	\ForAll{quantum block $\psi$ contained in $\qbranch$}
		\State $\C'$ decrypts $\QDec_k(\psi) \to (\ketbra{j} \otimes \sigma)$, \newline \mbox{\ \ \ \ \ \ where $\ket{j} \in \Hilbert_\ksize$, and $\sigma \in \states{\Hilbert_\dsize}$}
		\State $\C'$ measures the first $\ksize$ qubits of the decrypted state in the\newline \mbox{\ \ \ \ \ \ computational 
			basis, obtaining $j$}
		\If{$j=i$}
			\State swap $\sigma$ with $\phi$			
		\EndIf
		\State $\C'$ re-encrypts (re-randomizing) the current quantum block,\newline \mbox{\ \ \ \ \ \ obtaining $\psi'$}
		\State find in $\qbranch$ the common parent node $\qnode$ between \newline \mbox{\ \ \ \ \ \ $\qleaf_{r_i}$ and $\qleaf_{r_j}$, closer to the leaf level}
		\State set $b_\mathsf{swap} := \text{`false'}$
		\ForAll{$\rho$ in $\qnode$}\label{alg:qrepeatloop}
			\State $\C'$ decrypts $\QDec_k(\rho) \to (\ketbra{j'} \otimes \sigma')$ 
			\State $\C'$ re-encrypts (re-randomizing) $\rho' \from \QEnc_k(\ketbra{j'} \otimes \sigma')$ 
			\If{$j'=0\ldots0$} \Comment{$\rho'$ is empty, can be used}
				\State swap $\psi'$ and $\rho'$
				\State set $b_\mathsf{swap} := \text{`true'}$
			\EndIf
		\EndFor
		\If{$b_\mathsf{swap} = \text{`false'}$} \Comment{no empty blocks in current $\qnode$}
			\If{$\qnode \neq \qtreeroot$}
				\State set $\qnode$ to be one level up in the tree
				\State go to step~\ref{alg:qrepeatloop}
			\Else
				\State store the current quantum block in the $\qstash$
			\EndIf
		\EndIf
	\EndFor
	\State $\C'$ sends back the updated tree branch, $\newqbranch$, to $\S$
	\State update $\S.\QDB$ with $\newqbranch$, obtaining $\S'$
	\State produce $\qcom$, which contains $r_i, \qbranch, \newqbranch$
	\end{algorithmic}
\end{itemize}
\end{construction}

Notice that the following interesting property holds: the operations of `write' and `read' have the {\em same} effect. Namely: since qubits from the server's database cannot be copied, and cannot be removed or added (otherwise this would compromise indistinguishability), the action of a read or write operation is simply to swap a state in the database with a state in \C's memory. In fact, $\qaccess$ swaps $\phi$ known by \C with $\sigma$ stored in $\S$. Also notice how $\qcom$ containing $\qbranch, \newqbranch$ would imply a cloning of quantum states. This is just a formal artifice, because in the case of QORAMs as we defined them, $\qcom$ is only used in respect to a safe extractor $\safex$, which processes \newqbranch only after \C has processed \qbranch, so information is never copied. For the soundness of the $\pathqoram$ construction we have left unexplained the use of a {\em quantum stash \qstash}. This is an area of quantum memory basically used as the classical stash of $\pathoram$, but every time an element is `written' in the stash, it is actually `swapped' with an empty block in the tree. The security of the construction follows from the QIND-CPA security of the SKQES $\E$, and from the security of the pqPRNG $\PRNG$.

\begin{theorem}[{\cite[Theorem 34]{GKK17}}]\label{thm:pathqoram}
Let $\E$ be a QIND-CPA SKQES, and let $\PRNG$ be a pqPRNG. Then, \pathqoram instantiated using $\E$ and $\PRNG$ is a QAP-IND-CQA secure QORAM.
\end{theorem}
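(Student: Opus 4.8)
The plan is to mirror, step by step, the classical security proof of \pathoram given in Theorem~\ref{thm:pathoramsec}, lifting each argument to the quantum setting and carefully verifying that every step remains valid when data is quantum and the adversary is a QORAM adversary $\A^\safex$. First I would invoke the security of the pqPRNG $\PRNG$: since $\PRNG$ is post-quantum, the $\tsize$-bit values used to populate and update the position map are computationally indistinguishable from uniformly random strings, even for \QPT adversaries. Hence, as in the classical proof, one may replace $\PRNG$ with a true source of randomness throughout the analysis, losing only a negligible amount in the distinguishing advantage. This step is essential: as Theorem~\ref{thm:pqORAMsep} shows, a quantumly predictable PRNG breaks the scheme, so post-quantum security here is not optional.

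Next I would set up the reduction to the QIND-CPA security of the SKQES $\E$. Assume towards contradiction a QORAM adversary $\A^\safex$ winning $\gamePQORAM$ with non-negligible advantage $\ell$. I would construct a QIND-CPA distinguisher $\D$ for $\E$ which simulates a \pathqoram client $\C$ playing the QAP-IND-CQA game against $\A^\safex$ (w.l.o.g.\ $\A$ itself simulates the server $\S$, else $\D$ simulates it too). As in the classical case, $\D$ keeps a local `mirror' copy of $\S.\QDB$, but now in \emph{decrypted} form — that is, $\D$ stores the plaintext quantum blocks $\ketbra{j}\otimes\sigma$ rather than their encryptions. This is legitimate because $\D$ is only simulating $\C$ and is not bound by the client's storage constraints. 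Crucially, $\D$ never generates a secret key for $\E$; instead it obtains oracle access to $\QEnc_k$ from the QIND-CPA game. Then $\D$ perfectly simulates a valid client by: (i) mirroring branch downloads; (ii) simulating decryptions $\QDec_k$ by reading off the corresponding plaintext blocks from the mirror (measuring the first $\ksize$ qubits of the mirrored block to recover the classical identifier $j$, which does not disturb the quantum data unit since it is unentangled from $\ket{j}$); (iii) performing swaps of quantum data units directly on the mirror registers; (iv) re-encrypting via the $\QEnc_k$ oracle; and (v) updating the mirrored position map. For each quantum data request issued by $\A^\safex$ in the learning phases, $\D$ thereby produces a correctly-formed quantum communication transcript $\qcom$ and updated branch $\newqbranch$, so the safe extractor $\safex$ receives a faithful quantum access pattern.

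For the challenge step I would follow the classical structure: $\A^\safex$ produces two quantum data requests $\qdr^0=(\op^0,i^0,\phi^0)$ and $\qdr^1=(\op^1,i^1,\phi^1)$. As in the classical proof, I argue the challenge must be `meaningful' — if the two requests induce the same distribution of transcripts and the same effect on the database (modulo the totally-random position-map values), then the resulting quantum access patterns are statistically indistinguishable and $\A$ cannot win with non-negligible advantage; so I may assume the two requests differ in an observable way. Then $\D$ forms the plaintext states $m^a := \ketbra{i^a}\otimes\sigma^a$ (where $\sigma^a$ is $\phi^a$ in the write case, or the current mirrored data unit for identifier $i^a$ in the read case), submits $(m^0,m^1)$ as its QIND-CPA challenge plaintexts, receives back $\psi := \QEnc_k(m^b)$ for secret bit $b$, picks a guess bit $b^\star\rand\bin$, and simulates the execution of $\qdr^{b^\star}$ injecting $\psi$ as the updated block for identifier $i^{b^\star}$. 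After the second learning phase $\A$ outputs $\hat b$; if $\hat b = b^\star$ then $\D$ outputs $b^\star$, else $\D$ outputs a fresh random bit. The conditional-probability bookkeeping then gives $\Pr[\gameQINDCPA[\D]=1 \mid b=b^\star] \geq \half + \ell$ and $\Pr[\gameQINDCPA[\D]=1 \mid b\neq b^\star] \geq \half$, yielding overall success $\geq \half + \frac{\ell}{2}$, a contradiction.

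The main obstacle, and the place where the proof genuinely departs from the classical one, is justifying that $\D$'s simulation is faithful in the quantum setting \emph{in the presence of the safe extractor $\safex$}. Three points need care. First, no-cloning: $\D$ cannot keep a copy of the quantum data units in transit, but this is handled by the mirror being maintained in lockstep — $\D$ only ever processes a given register once, exactly as an honest client would, and the transcript $\qcom$ is, as remarked after Construction~\ref{constr:pathqoram}, a formal artifice consumed by $\safex$ only after the client has processed its part, so nothing is actually duplicated. Second, I must check that the `mirror' of $\S.\QDB$ stays consistent: since each $\qaccess$ call merely \emph{swaps} a data unit in the database with one in the client's memory (the read/write equivalence noted for \pathqoram), the mirror is updated by the corresponding swap on plaintext registers, and consistency is preserved exactly. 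Third — the subtle part — one must verify that the action of $\safex$ on the simulated registers is still a computationally undetectable action, so that the quantum access patterns $\D$ feeds to $\A$ are distributed as in the real game; this follows because $\D$'s encryptions are genuine $\QEnc_k$ outputs and the position-map values are (indistinguishable from) truly random, so the joint state of $(\QDB,\Psi)$ at each $\safex$ invocation has the same mixture as in the real execution, whence $\safex$'s computational undetectability guarantee (Definition~\ref{dfn:extractor}) transfers. Once these three points are nailed down, the game-hopping and reduction go through as sketched, and since the resulting reduction is semi-classical in spirit (no quantum oracle access to $\E$ is needed, only classical use of $\QEnc_k$ and the pqPRNG), the security guarantee is strong.
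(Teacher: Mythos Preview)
Your proof has a genuine gap precisely at the point where you believe no-cloning is harmless. You propose that $\D$ keeps a mirrored tree of \emph{plaintext quantum blocks} $\ketbra{j}\otimes\sigma$ and, whenever a branch must be returned to $\A$, re-encrypts those mirror registers via the $\QEnc_k$ oracle. But $\QEnc_k$ is a CPTP map that \emph{consumes} its quantum input: once $\D$ feeds $\ketbra{j}\otimes\sigma$ into the oracle, the ciphertext $\psi$ is sent off to $\A$, and the plaintext register in $\D$'s mirror is gone. At the very next access touching that block, the mirror is empty and $\D$ can no longer simulate the client. Your ``lockstep'' argument does not save this: you need the same quantum data unit to exist simultaneously (i) as a ciphertext held by the server $\A$ across rounds and (ii) as a plaintext in $\D$'s mirror for future decryption-simulation. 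That is exactly cloning. This also breaks your challenge step, since the states $\sigma^a$ you want to use as QIND-CPA challenge messages are, by that point in the game, no longer available to $\D$.

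The paper resolves this by \emph{not} attempting to mirror the quantum data units at all. The key observation is that $\D$'s mirror needs only the \emph{classical identifiers} $j$ (which can be freely copied), and that whenever the honest client would produce $\QEnc_k(\ketbra{j}\otimes\sigma)$, the simulator instead produces $\QEnc_k(\ketbra{j}\otimes\ketbra{0^{\dsize}})$. The adversary never sees decrypted blocks, so by the QIND-CPA security of $\E$ this substitution is undetectable (up to a hybrid over polynomially many encryptions). Two consequences follow that differ from your write-up: first, the simulation is no longer perfect but only computationally indistinguishable, and this is where QIND-CPA is actually invoked during the learning phases; second, since all data units have been zeroed out and since in \pathqoram read and write are the same operation, a meaningful challenge pair must differ on the identifiers $i^0\neq i^1$, and the QIND-CPA challenge plaintexts are $\ketbra{i^a}\otimes\ketbra{0^{\dsize}}$ rather than your $\ketbra{i^a}\otimes\sigma^a$. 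With this change the rest of your reduction (the guess bit $b^\star$, the conditional probability bookkeeping yielding advantage $\geq \ell/2$) goes through as you wrote it.
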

\begin{proof}
The proof follows step-by-step the proof of Theorem~\ref{thm:pathoramsec} with some important differences. First of all, \D cannot store a local mirrored tree of plaintexts of the form $(\ket{i}\bra{i} \otimes \sigma)$ because of the No-Cloning Theorem, so he cannot simulate \C perfectly. But he can store a mirrored tree which contains {\em only} the classical identifiers $i$, at the right positions of every block throughout the execution of the protocol. 

At this point, \D can simulate a decryption oracle for a certain block $\psi$ in a downloaded branch by fetching the cleartext identifier $i$ found at the corresponding position in the `mirrored' tree, and creating a `simulated' plaintext of the form $(\ket{i}\bra{i} \otimes \ket{0^\dsize}\bra{0^\dsize})$, i.e., replacing the `real' quantum data unit $\sigma$ with a zero state. Since \A never `sees' a decrypted block, this substitution is not immediately apparent to him. Moreover, whenever \C would create a block by encrypting $\psi \from \QEnc_k(\ket{i}\bra{i} \otimes \sigma)$, \D can simulate this by doing $\psi \from \QEnc_k(\ket{i}\bra{i} \otimes \ket{0^\dsize}\bra{0^\dsize})$. By the QIND-CPA security of $\E$, \A cannot detect this substitution with more than negligible advantage over guessing. Therefore, \D can still simulate \C (with overwhelming, albeit not $100\%$, probability) at any data request.

Another issue appears during the challenge phase, as this time the concept of {\em non-meaningful} challenge must be redefined. For the same argument as above, from \A's perspective it does not matter whether two data requests lead to two `different' quantum data units $\sigma^0,\sigma^1$ (the analogue of data units $\data^0,\data^1$ in the classical proof) or not. Therefore, \D can ignore the quantum data units at all. Moreover, as discussed above, in \pathqoram there is no difference between `read' and `write' operations. It follows, from the same argument as in the proof of Theorem~\ref{thm:pathoramsec}, that the two challenge quantum data requests $\qdr^0,\qdr^1$ must differ on the identifiers $i^0,i^1$. Then, \D plays the QIND-CPA game with challenge plaintexts $\phi^a = \ket{i^a}\bra{i^a} \otimes \ket{0^\dsize}\bra{0^\dsize}$ for $a \in \bin$, following the same strategy as in the classical case (by guessing a bit, injecting the challenge ciphertext, and observing \A's output), with only a negligible loss in the success probability because he is simulating fake plaintexts. This concludes the proof.
\endproof
\end{proof}

\backmatter
\bibliographystyle{myalphasort}
\bibliography{local}

\end{document}